\newcommand{\remove}[1]{}
\newcommand{\argmin}{\operatornamewithlimits{argmin}}
\newtheorem{theorem}{Theorem}
\newtheorem{corollary}{Corollary}
\newtheorem{lemma}{Lemma}
\newtheorem{definition}{Definition}
\newtheorem{remark}{Remark}
\newtheorem{claim}{Claim}
\newtheorem{algorithm}{Algorithm}
\newtheorem{assumption}{Assumption}
\newcommand{\qed}{\hfill \ensuremath{\Box}}
\begin{document}

\title{Sequential Decision Algorithms for Measurement-Based Impromptu Deployment of a Wireless Relay Network along a Line
\thanks{The contents of this paper have been arXived in \cite{chattopadhyay-etal15measurement-based-impromptu-deployment-arxiv-v1}.}
\thanks{Arpan Chattopadhyay and Anurag Kumar are with the Department of ECE, 
Indian Institute of Science, Bangalore, India; email: arpanc.ju@gmail.com, anurag@ece.iisc.ernet.in. 
Marceau Coupechoux is with Telecom ParisTech and CNRS LTCI, 
Dept. Informatique et Reseaux, 23, avenue d'Italie, 
75013 Paris, France; email: marceau.coupechoux@telecom-paristech.fr.}
 \thanks{The research 
reported in this paper was supported by a Department of Electronics and Information 
Technology (DeitY, India) and NSF (USA) funded project on Wireless 
Sensor Networks for Protecting Wildlife and Humans, by an Indo-French Centre for 
Promotion of Advance Research (IFCPAR) funded project, and by the Department of Science and Technology (DST, India), 
 via  J.C. Bose Fellowship.} \thanks{{\bf All appendices are provided in the supplementary material.}}}


\newcounter{one}
\setcounter{one}{1}
\newcounter{two}
\setcounter{two}{2}



\author{
Arpan~Chattopadhyay, Marceau~Coupechoux, and Anurag~Kumar\\
}

\maketitle
\thispagestyle{empty}

\begin{abstract}
We are motivated by the need, in some applications, for impromptu or as-you-go deployment of wireless sensor networks. 
A person walks  along a line, starting from a sink node (e.g., a base-station), and 
proceeds towards a source node (e.g., a sensor) which is at an a priori unknown location. 
At equally spaced locations, he makes link quality measurements to the previous relay, 
and deploys relays at some of these locations, with the aim to connect the source to the sink by a multihop wireless path. 
In this paper, we consider two approaches for impromptu deployment: (i) the deployment agent can only 
move forward (which we call {\em a pure as-you-go} approach), and (ii) the deployment agent can make 
measurements over several consecutive steps before selecting a placement location among them 
(the {\em explore-forward} approach). We consider a very light traffic regime, and
formulate the problem as 
a Markov decision process, where the trade-off is among the power used by the nodes, the outage 
probabilities in the links, and the number of relays placed per unit distance. 
We obtain the structures of the 
optimal policies for  the {\em pure as-you-go} approach as well as for the {\em explore-forward} approach. 
We also consider natural heuristic algorithms, for comparison. 
Numerical examples show that the explore-forward approach significantly outperforms the 
pure as-you-go approach in terms of network cost. 
Next, we propose two learning algorithms for the explore-forward approach, based on Stochastic Approximation, 
which asymptotically converge to the set of optimal policies, without
using any knowledge of the radio propagation model.
We demonstrate numerically that 
the learning algorithms can converge (as deployment progresses)  to the set of optimal policies reasonably fast 
and, hence, can be practical model-free algorithms for deployment over large regions.  Finally, we demonstrate the 
end-to-end traffic carrying capability of such networks via field deployment.
\end{abstract}

\vspace{-6mm}
\section{Introduction}\label{Introduction}
\vspace{-2mm}

A wireless sensor network (WSN) typically comprises sensor nodes (sources of measurements), 
a base station (or sink), and wireless relays for multihop communication between the sources and the sink. 
There are situations in which a WSN  needs to be deployed 
(i.e., the relays and the sensors need to be placed) in an {\em impromptu or as-you-go} fashion. 
One such situation is in 
emergencies, e.g., situational awareness networks deployed by first-responders such as fire-fighters 
or anti-terrorist squads. As-you-go deployment is also of interest when deploying multihop wireless networks for sensor-sink 
interconnection 
over large terrains, such as forest trails (see \cite{dyo-etal10wildlife-wsn}  for 
an application of multi-hop WSNs in wildlife 
monitoring, and \cite[Section~$5$]{alkhatib14review-forest-fire} for application of WSN in forest fire detection), 
where it may be difficult to make exhaustive 
measurements at all possible deployment locations before placing the relay nodes. 
As-you-go deployment would be particularly useful when the  network is temporary and needs to be quickly 
redeployed at a different place (e.g., to monitor a moving phenomenon such as groups of wildlife).\footnote{In remote 
places, cellular network coverage may not be available or  practicable. Hence, a multi-hop WSN is required 
for monitoring purposes.}

Our work is motivated by the need for as-you-go deployment of a WSN over large terrains, 
such as forest trails, where planned deployment (requiring exhaustive measurements over the deployment region) 
would be time consuming and difficult. 
Abstracting the above-mentioned problems, we  consider the problem of  deployment of relay nodes
along a line, between a sink node (e.g., the WSN base-station) and a source node (e.g., a sensor) (see
Figure~\ref{fig:why-impromptu}), where a {\em single deployment agent} (the person who is carrying out the deployment)
starts from the sink node, places relay nodes along the line, and
places the source node where required.  In applications, the location at which sensor placement 
is required might only be discovered as the deployment agent walks (e.g., in an animal monitoring application, 
by finding a concentration of pugmarks, or a watering hole).

\begin{figure}[!t]
\begin{center}
\includegraphics[height=1.7cm, width=9cm]{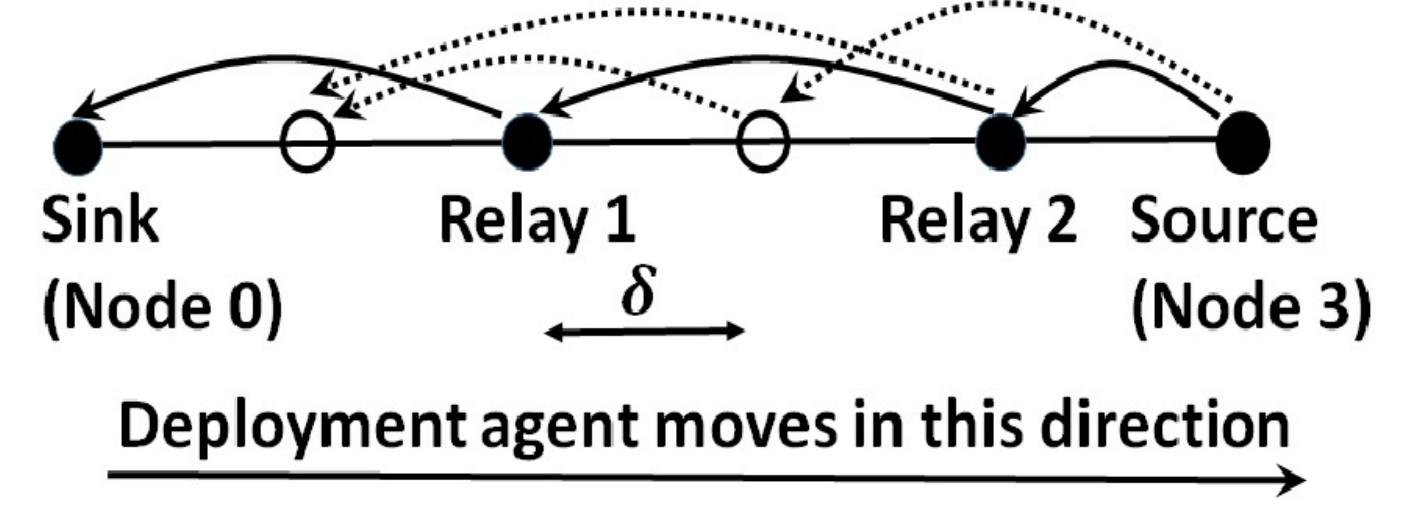}
\end{center}
\vspace{-6mm}
\caption{A wireless relay network, placed along a line, connecting a source to a sink. The dots (filled and unfilled) 
denote potential locations for node placement, and are successively $\delta$ meters apart. 
The deployed network comprises two relays (filled dots) placed at two of the potential locations; the solid 
arrows show the path from the source to the sink. {\em The dotted arrows show some more 
possible links between pairs of potential  locations.}}
\label{fig:why-impromptu}
\vspace{-6mm}
\end{figure}

In the perspective of an optimal \emph{planned} 
deployment, we would need to place relay nodes at \emph{all} potential locations 
(for example, with reference to Figure~\ref{fig:why-impromptu}, this would mean placing relays at all the four dots 
in between the source and the sink) and  measure the 
qualities of all possible links  
in order to decide where to place the relays. 
This approach would provide the global optimal solution, but the time and effort required 
might not be acceptable in the applications mentioned earlier. 
With {\em impromptu} 
deployment, the next relay
placement locations depend on the radio link qualities to the
previously placed nodes; these link qualities and also the source location are discovered as the agent walks along the line. 
Such an approach 
requires fewer measurements compared to planned deployment, but, in general, is suboptimal.

In this paper, we mathematically formulate the  problems of
impromptu deployment  along a line as {\em optimal sequential decision problems}. 
The cost 
of a deployment is evaluated as a linear combination of three components: the sum transmit 
power along the path, the sum outage 
probability along the path, and the number of relays deployed; we provide a motivation for this cost structure.  
We formulate relay placement problems to  
minimize the expected average cost per-step. 
Our channel model accounts for path-loss, 
shadowing, and fading.  

We explore deployment with two approaches: (i) the {\em pure as-you-go }  approach
and (ii) the {\em explore-forward} approach. 
In the pure as-you-go approach, the deployment agent can only move forward; this approach is a necessity 
if the deployment needs to be quick. Due to shadowing, 
the path-loss over a link of a given length is random, and a more efficient deployment can be expected if 
link quality measurements at several  locations along the line are compared and an optimal choice is made among 
these; we call this approach 
{\em  explore-forward}.  
Explore-forward would require the deployment agent to retrace his steps; but this 
might provide a good compromise between deployment 
speed and deployment efficiency.  

We formulate each of these problems as a Markov decision process (MDP), obtain the optimal policy structures, 
illustrate their performance numerically 
and compare with reasonable heuristics. 
Next, we propose several learning algorithms and 
prove that each of them asymptotically converges to the optimal policy if we seek to minimize the long run average cost per unit 
distance. We also demonstrate 
the convergence rate of the learning algorithms via numerical exploration.  Finally, we demonstrate
the end-to-end traffic carrying capability of such networks via
field deployment.

\vspace{-4mm}
\subsection{Related Work}\label{subsec:related_work}
\vspace{-1mm}
Until recently, problems of impromptu deployment of wireless networks 
have been addressed primarily by heuristics and by experimentation. 
Howard et al., in \cite{howard-etal02incremental-self-deployment-algorithm-mobile-sensor-network}, 
provide heuristic algorithms for incremental deployment of
sensors in order to cover the deployment area; their problem is related to that
of self-deployment of autonomous robot teams.  
Souryal et al., in \cite{souryal-etal07real-time-deployment-range-extension}, address the
problem of impromptu wireless network deployment by experimental study of indoor RF link quality variation; 
a similar approach is taken in \cite{aurisch-tlle09relay-placement-emergency-response} also. 
The authors of \cite{liu-etal10breadcrumb} describe a {\em breadcrumbs} system for aiding firefighting inside buildings. 
Their work addresses the same class of problems as ours, with the requirement that the deployment agent has to 
stay connected to $k$ 
previously placed nodes in the deployment process. Their work considers the trade-off between link qualities 
and the deployment rate, but does not provide any optimality guarantee of their deployment schemes. 
Their next 
work \cite{liu-etal11multiuser-breadcrumb} provides a reliable multiuser {\em breadcrumbs} system.  
Bao and Lee, in \cite{bao-lee07rapid-deployment-wireless}, study the scenario where a group of first-responders, 
starting from a command centre, enter a large area where there is no communication infrastructure, and as they walk they 
place relays at suitable locations in order to stay connected among themselves and with the command centre. 
However, these approaches are based on heuristic algorithms, rather than 
on rigorous formulations; hence they do not provide any provable performance guarantee.

In our work we have formulated impromptu deployment as a sequential decision problem, and  have derived optimal deployment policies.  
Recently, Sinha et al. (\cite{sinha-etal12optimal-sequential-relay-placement-random-lattice-path}) 
have provided an algorithm based on an MDP formulation in order to establish a multi-hop 
network between a sink and an unknown source location, 
by placing relay nodes along a random lattice path. Their model uses a deterministic mapping between power and wireless 
link length, and, hence, does not consider statistical variability (due to shadowing) of the transmit power 
required to maintain the link quality over links having the same length. 
The statistical variation of link qualities over space requires measurement-based 
deployment, in which the deployment agent makes placement decisions at a point based on the measurement of 
the power required to establish a link (with a given quality) 
to the previously placed node.

 We view the current paper as a continuation of 
our   papers \cite{chattopadhyay-etal13measurement-based-impromptu-placement_wiopt} 
(which provides the first theoretical formulation of measurement-based impromptu deployment) and  
\cite{chattopadhyay-etal14deployment-experience} 
(which provides field deployment results using our algorithms).

\vspace{-3mm}
\subsection{Organization}\label{subsec:organization}
\vspace{-2mm}
The system model and notation have been described in 
Section~\ref{sec:system_model_and_notation}.  
Impromptu deployment with a pure as-you-go approach  
has been discussed in Section~\ref{section:no_backtracking_average_cost}. 
Section~\ref{sec:backtracking_average_cost} presents our work on the  explore-forward approach. 
A numerical comparison between these two approaches are made in 
Section~\ref{section:numerical_results_comparison_average_cost_backtracking_no_backtracking}. 
Section~\ref{section:learning_backtracking_given_xio_xir} 
and Section~\ref{section:learning_backtracking_adaptive_with_outage_cost} 
describe the learning algorithms for the explore-forward approach approach. 
Numerical results are provided in Section~\ref{section:convergence_speed_learning_algorithms} 
on the rate of convergence of the learning algorithms. Experimental results demonstrating the traffic carrying 
capability of the deployed networks are provided in Section~\ref{section:real_deployment}, followed by the conclusion.

\vspace{-3mm}
\section{System Model and Notation}\label{sec:system_model_and_notation}
\vspace{-1mm}

The line is discretized into steps of length $\delta$ 
(Figure~\ref{fig:why-impromptu}), starting from the sink.  
Each point, located at a distance of an integer multiple of $\delta$ from the sink, 
is considered to be a potential location where a 
relay can be placed. 
As the {\em single} deployment agent walks along the line, at each step 
or at some subset of steps, he measures the link quality from the 
current location to the previous node; 
these measurements are used to decide the location and transmit power of the next relay.

As shown in Figure~\ref{fig:why-impromptu}, the sink is called Node $0$, the relay 
closest to the sink is called Node $1$, and the relays are enumerated 
as nodes $\{1,2,3,\cdots\}$ as we walk away from the sink. 
The link whose transmitter is Node $i$ and receiver is Node 
$j$ is called link $(i,j)$. A generic link is denoted by $e$. 
The length of each link is an integer multiple of $\delta$.

\begin{figure}[!t]
\begin{center}
\includegraphics[height=1.8cm, width=9cm]{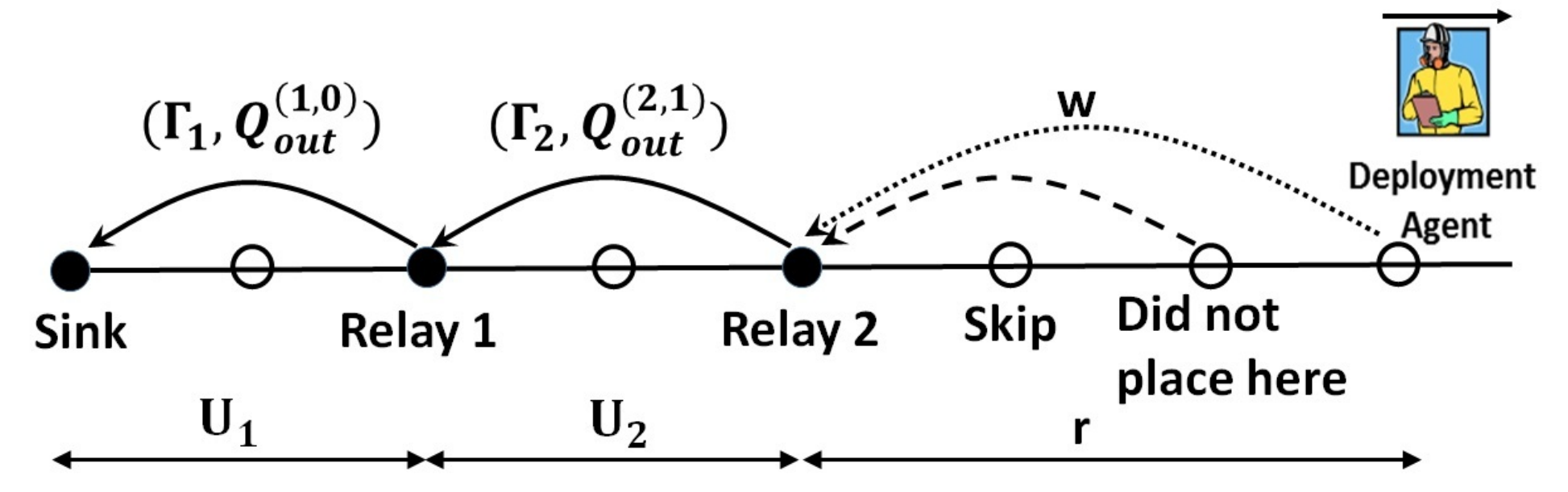}
\end{center}
\vspace{-4mm}
\caption{ Illustration of pure as-you-go deployment with $A=1$ and $B=3$. 
In this ``snap-shot'' of the deployment process, the deployment agent has already placed Relay~$1$ and Relay~$2$ 
at distances $U_1$ and $U_2$, has set their transmit powers to $\Gamma_1$ and $\Gamma_2$, 
thereby achieving outage probabilities $Q_{out}^{(1,0)}$ and $Q_{out}^{(2,1)}$ 
(links shown by solid arrows). Having placed Relay~$2$, 
he skips the next location (since $A=1$); based on measurements made at the next location (dashed arrow), 
the algorithm advises him to not place a relay and move on. The diagram shows the agent in the process of 
evaluating the next location  at $r=3 \delta$ distance from Relay~$2$ (dotted arrow). 
Based on these measurements, the deployment agent will decide whether to place a relay at $r=3 \delta$; if a relay is not 
placed here, it must be placed at the next location, since $B=3$.}
\label{fig:pure-as-you-go}
\vspace{-7mm}
\end{figure}

\vspace{-3mm}
\subsection{Channel Model and Outage Probability}\label{subsection:channel_model}
\vspace{-1mm}
We consider the usual 
aspects of path-loss, shadowing, and fading to model the wireless channel. 
The received power of a packet (say the $k$-th packet, $k \geq 1$) in a particular link (i.e., a transmitter-receiver pair)  
of length $r$ is given by:
\begin{equation}
 P_{rcv,k}=P_T c (\frac{r}{r_0×})^{-\eta}H_kW \label{eqn:channel_model}
\end{equation}
where $P_T$ is the transmit power, $c$ is the path-loss at the reference distance $r_0$, 
$\eta$ is the path-loss exponent, $H_k$ denotes the fading random variable seen by the $k$-th packet (e.g., it 
is an exponentially distributed  
random variable for the Rayleigh fading model), and $W$ 
denotes the shadowing random variable. 
$H_k$ captures the variation of the received power over time, and it takes  
independent values over different coherence times. 

The path-loss between a transmitter and a receiver at a given distance can have a large spatial  
variability around the mean path-loss (averaged over fading), as the transmitter is moved over different points at the 
same distance from the receiver; this is called shadowing. 
Shadowing is usually modeled as a log-normally distributed, 
random, multiplicative path-loss factor; in dB, shadowing is distributed with values of standard 
deviation as large as $8$ to $10$~dB. Also, shadowing is spatially uncorrelated over distances that depend on 
the sizes of the objects in the propagation environment (see \cite{agarwal-patwari07correlated-shadow-fading-multihop}); 
{\em our measurements in a 
forest-like region of our Indian Institute of Science (IISc) campus established log-normality of the shadowing 
and gave a shadowing decorrelation distance of $6$~meters 
(see \cite{chattopadhyay-etal14deployment-experience})}. In this paper, 
{\em we assume that the shadowing at any two different links in the network 
are independent, i.e., $W_{(e_1)}$ is independent of $W_{(e_2)}$ for $e_1 \neq e_2$.} This is a 
reasonable assumption 
if $\delta$ is chosen to be at least the decorrelation distance 
(see \cite{agarwal-patwari07correlated-shadow-fading-multihop}) of the shadowing. Thus, from our experiments 
in the forest-like region in the IISc campus, we can safely assume independent shadowing at 
different potential locations if $\delta$ is greater than $6$~m. In this paper, 
$W$ is assumed to take values from a set $\mathcal{W}$. We will denote by $p_W(w)$ 
the probability mass function or probability density function of $W$, depending on 
whether $\mathcal{W}$ is a countable set or an uncountable set (e.g.,  log-normal shadowing).

A link is considered to be in \emph{outage} if  the received 
signal power (RSSI) drops (due to fading) below $P_{rcv-min}$  (e.g., below $-88$~dBm, a 
figure that we have obtained via experimentation for the popular TelosB ``motes,'' 
see \cite{bhattacharya-etal13smartconnect-comsnets}). 
Since practical radios can only be set to transmit at a finite set of power levels, 
the transmit power of each node can be chosen from a discrete set, $\mathcal{S}:=\{P_1, P_2, \cdots, P_M \}$, where 
$P_1 \leq P_2 \leq \cdots \leq P_M $.  
For a link of length $r$, a transmit power $\gamma$ and any particular realization of shadowing $W=w$, 
the outage probability is denoted by $Q_{out}(r,\gamma,w)$, which is increasing in $r$ and decreasing 
in $\gamma$, $w$ (according to (\ref{eqn:channel_model})). 

$Q_{out}(r,\gamma,w)$ depends on the 
fading statistics. For a link with shadowing realization $w$, if the transmit power 
is $\gamma$, the received power of a packet will be $P_{rcv}=\gamma c (\frac{r}{r_0})^{-\eta}wH$. 
Outage is 
the event $P_{rcv} \leq P_{rcv-min}$. If $H$ is exponentially distributed with mean $1$ 
(i.e., for Rayleigh fading), then we have, 
$Q_{out}(r,\gamma,w)=\mathbb{P}( \gamma c (\frac{r}{r_0})^{-\eta}wH \leq P_{rcv-min} )=
1-e^{-\frac{P_{rcv-min}(\frac{r}{r_0})^{\eta}}{\gamma c w}}$. 
The outage probability of a randomly chosen link of given length and given transmit power is a 
random variable, where the randomness 
comes from shadowing $W$.  {\em Outage probability can be measured by sending a 
sufficiently large number of packets over a link and calculating the percentage of packets whose RSSI is below 
$P_{rcv-min}$. }

\vspace{-5mm}
\subsection{Deployment Process and Related Notation}\label{subsection:deployment_process_notation}
\vspace{-1mm}

\begin{figure}[!t]
\centering
\includegraphics[height=1.8cm, width=9cm]{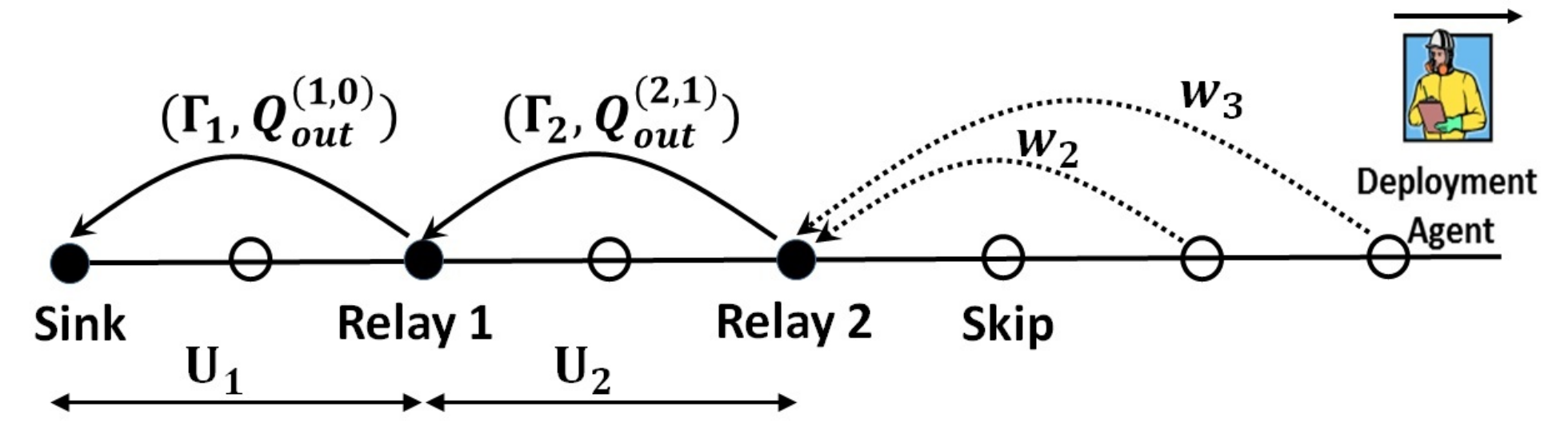}
\vspace{-7mm}
\caption{Illustration of explore-forward deployment with $A=1$ and $B=2$. 
In this ``snap-shot'' of the deployment process, the deployment agent has already placed 
Relay~$1$ and Relay~$2$ at distances $U_1$ and $U_2$, has set their transmit powers to $\Gamma_1$ and $\Gamma_2$, 
thereby achieving outage  $Q_{out}^{(1,0)}$ and $Q_{out}^{(2,1)}$ 
(links shown by  solid arrows). 
Having placed Relay~$2$, 
he skips the next location (since $A=1$). The agent then 
evaluates the next two locations (dotted arrows) ($B=2$). 
Then, based on the measurements at these two locations, 
the algorithm determines which of them to place the relay at and which power level to use.}
\label{fig:backtracking_illustration}
\vspace{-7mm}
\end{figure}

In this paper, we consider two approaches for deployment.

{\em Pure as-you-go deployment:} After placing a relay, the agent skips the next $A$~steps, and 
sequentially measures the outage probabilities from locations $(A+1),(A+2),\cdots,(A+B)$ to the previously placed node, 
at all transmit power levels   $\gamma \in \mathcal{S}$. 
As the agent explores the locations 
$(A+1),\cdots,(A+B-1)$ and makes link quality measurements.\footnote{At a 
distance $r$ from the previous node, he measures the outage probabilities $\{Q_{out}(r,\gamma,w)\}_{\gamma \in \mathcal{S}}$ 
from the current location to the previous node, where $w$ is the realization of the shadowing in the link being evaluated.}  
At each step he decides whether to 
place a relay there, and if the decision is to place a relay, then he 
also decides the transmit power for the placed relay. This   has been depicted in 
Figure~\ref{fig:pure-as-you-go}. 
In this process, if he has walked $(A+B)$ steps away from the previous 
relay, or if he encounters the source location, then he must place a node. 
$A$ and $B$ will be fixed before deployment begins. \qed

{\em Explore-forward deployment:}  
After placing a node, 
the deployment agent skips the next $A$ locations ($A \geq 0$) 
and measures the outage probabilities  
to the previous node from locations $(A+1), \cdots,(A+B)$, at each power level 
from the set $\mathcal{S}$. 
Then, based on these $B |\mathcal{S}|$ measurements\footnote{Let us denote by $w_u$ the realization of shadowing in the potential link 
between the $u$-th location (starting from the previously placed node) and the previous node 
(see Figure~\ref{fig:backtracking_illustration}). The agent measures the outage 
probabilities $\{Q_{out}(u,\gamma,w_u)\}_{A+1 \leq u \leq A+B, \gamma \in \mathcal{S}}$ in order to make a placement decision.}  of the outage probability values, 
he places the relay at location $u^* \in \{A+1, \cdots,A+B\}$, sets its transmit power to 
$\gamma^* \in \mathcal{S}$, and repeats the same process 
for placing the next relay. This procedure 
is illustrated in Figure~\ref{fig:backtracking_illustration}. 
If the source location is encountered within $(A+B)$ steps from the previous node, 
then the source is placed. \qed

{\em Choice of $A$ and $B$:} If the propagation environment is very good, 
or if we need to place a limited number of relays over a long 
line, it is very unlikely that a relay will be placed within the first few locations from the previous node. In such cases, 
we can skip measurements at locations $1,2,\cdots,A$ and make measurements from locations 
$(A+1),\cdots,(A+B)$. 
In general, the choice of $A$ and $B$ will depend on the constraints 
and requirements for the deployment. Larger   $A$ will 
result in faster exploration, but very large $A$ will result in very high outage 
in each link. For a fixed $A$, a large  $B$ results in more measurements, but we can expect a better performance.

\vspace{-4mm}
\subsection{Traffic Model}\label{subsection:traffic_model}
\vspace{-1mm}

In order to develop the problem formulation, we assume that  the traffic is so
low that there is only one packet in the network at a time; we call
this the ``lone packet model.''  Hence, there are no simultaneous
transmissions to cause interference. This permits us to easily write down the
communication cost on a path over the deployed relays. However, this assumption does not trivialize the deployment problem, 
since the deployment must still take into account the stochastic shadowing and fading in the links, 
and the effects of these factors on the number of nodes deployed and the powers they use.

The lone packet traffic model is realistic for sensor networks that carry low duty cycle 
measurements, or just carry an occasional alarm packet. For example, recently 
there has been an effort to design passive infra-red (PIR) sensor platforms that 
can detect intrusion of a human or animal, and also can classify whether the intruder is a human or an animal 
(\cite{raviteja-etal15animation-intrusion-classification-PIR}). The data rate generated by such a platform 
deployed in a forest  will be very low. The authors in 
\cite[Section~3.2]{dyo-etal10wildlife-wsn} 
use only a $1.1\%$ duty cycle for a multi-hop wireless sensor network used for the purpose 
of wildlife monitoring. The sensors gather data from RFID collars on the animals; hence, the traffic to be 
supported by the network is light.  Lone packet model is also realistic for condition monitoring/industrial telemetry applications 
(\cite{aghaei11wsn-water-gas}) as well, where the time between  successive measurements is very large.   
Infrequent data model is  
common in machine-to-machine communication (\cite{adame-etal14m2m}). 
Table~$1$ and Table~$3$ of 
\cite{mainwaring-etal02wsn-habitat} illustrate sensors whose  sampling rate and the size of the sampled data packets 
are small; it shows data rate requirement as small as several   bytes per second for habitat monitoring.

Even though the network 
is designed for the lone packet traffic, it will be able to carry some amount of 
positive traffic. See Section~\ref{section:real_deployment} 
for experimental evidence of this claim; a five-hop line network deployed 
using one algorithm proposed in this paper, over a $500$~m long line in a forest-like 
environment, was able to carry $127$~byte packets at a rate of $4$~packets per second, with end-to-end packet loss 
probability less than $1 \%$, which is  sufficient for the applications mentioned above.

Lone packet model is also valid when interference-free communication 
is achieved via multi-channel access. 
Recently there 
have been efforts to use multiple channels available in $802.15.4$ radio in a network; 
see \cite{lohier-etal12multichannel-wsn}, \cite{abdeddaim-etal12multichannel-cluster-tree-wsn}, 
\cite{toscano-bello12multichannel-superframe-scheduling-wsn}, 
\cite{bardella-etal10experimental-multichannel-transmission-wsn}. In a line topology, this reduces to frequency 
reuse after certain hops, which, in turn, mitigates interference in the network. 
Thus,  as with the lone  packet assumption,  the availability of multiple channels,  
and appropriate channel allocation over the network,  eliminates the need to optimize over link schedules.

It has been proved that design with the lone packet model can be the starting point for a design with desired positive traffic 
(see \cite{bhattacharya-kumar12qos-aware-survivable-network-design}). Network design for carrying a given positive 
traffic rate is left as a future research work.

\vspace{-2mm}
\subsection{Network Cost Structure}\label{subsection:network_cost}
\vspace{-1mm}

In this section we develop the cost that we use to evaluate the performance of a given deployment policy. 
Given the current location of the deployment
agent with respect to the previous relay, and given the measurements made 
to the previous relay, a policy will provide the placement decision 
(in the case of pure as-you-go deployment, whether or not to place the relay, and if place then at what power, 
and in the case of explore-forward deployment, where among the $B$ locations to place the relay and at which power).

Let us denote the number of placed relays up to $x$~steps (i.e., $x \delta$ meters) from the sink by $N_x$ 
($ \leq x$); define $N_0 = 0$. Since deployment decisions are based on measurements to already placed relays, and 
since the path-loss over a link is a random variable (due to shadowing), we see that $\{N_x\}_{x \geq 1}$ 
is a random process. In this paper we have assumed that each node forwards each packet to the 
immediately previously placed relay (e.g., with reference to Figure~\ref{fig:why-impromptu}, 
the source forwards all packets to Relay~$2$, 
which, in turn, forwards all packets to Relay~$1$, etc.). 
See \cite{chattopadhyay-etal13measurement-based-impromptu-placement_wiopt} 
for the considerably more  complex possibility of relay skipping while forwarding packets.

When the node $i$ is placed, the deployment policy also prescribes the transmit power that this node should use, 
say, $\Gamma_i$; then the outage probability over the link $(i, i-1)$, so created, is denoted by $Q_{out}^{(i, i-1)}$ 
(see Figure~\ref{fig:pure-as-you-go} and Figure~\ref{fig:backtracking_illustration}). 
We evaluate the cost of the deployed network, up to $x  \delta$ steps, as a linear combination of three cost measures: 

\begin{enumerate}[label=(\roman{*})]
\item The number of relays placed, i.e., $N_x$.
\item The sum outage, i.e., $\sum_{i=1}^{N_x} Q_{out}^{(i, i-1)}$. 
The motivation for this measure is that, for small values of $Q_{out}$, the sum-outage is
approximately the probability that a packet sent from the point $x$ to the source
encounters an outage along the path from the point $x$ back to the sink.

\item The sum power over the hops, i.e., $\sum_{i=1}^{N_x} \Gamma_i$. 

\end{enumerate}

These three costs are combined into one cost measure by combining them linearly and taking expectation (under a policy $\pi$), 
as follows:

\begin{equation}
\mathbb{E}_{\pi} ( \sum_{i=1}^{N_x} \Gamma_i + \xi_{out} \sum_{i=1}^{N_x}Q_{out}^{(i,i-1)}+ \xi_{relay} N_x )
\label{eqn:cost_function_sum_power_sum_outage}
\end{equation}

The multipliers $\xi_{out} \geq 0$ and $\xi_{relay} \geq 0$ can be viewed as capturing the emphasis we wish 
to place on the corresponding measure of cost. For example, a large value of $\xi_{out}$ will aim 
for a network deployment with smaller end-to-end expected outage. We can view $\xi_{relay}$ as the cost 
of placing a relay.

{\em A Motivation for the Sum Power Objective:} 
In case all the nodes have {\em wake-on radios,} the nodes normally stay in sleep mode, and each sleeping node 
draws a very small current from the battery 
(see \cite{vodel-hardt13energy-efficient-communication-distributed-embedded-systems}). 
When a node has a packet, it sends a wake-up tone to the intended receiver. 
The receiver wakes up and the sender transmits the 
packet. The receiver sends an ACK packet in reply. Clearly, the energy spent in transmission and reception of data packets 
governs the lifetime of a node, given that the ACK size is negligible.  
We assume that a fixed modulation scheme is used, so that the transmission bit rate over all 
links is the same (e.g., in IEEE~$802.15.4$ radios, that are commonly used for sensor networking, 
the standard modulation scheme provides a bit rate of $250$~Kbps). We also assume a fixed packet length. 
Let $t_p$ be the transmission duration of a packet over a link, 
and suppose that the node~$i$ ($1 \leq i \leq N_x$) uses power $\Gamma_i$ during transmission. 
Let $P_{r}$ denote the packet reception power expended in the electronics at any receiving node. 
If the packet generation rate $\zeta$ at the source is very small, 
the lifetime of the $k$-th node ($1 \leq k \leq N_x$) is $T_k:=\frac{E}{\zeta(\Gamma_k+P_{r})t_p }$ seconds 
($E$ is the total energy in a fresh battery). Hence, the rate at 
which we have to replace the batteries in the network from the sink up to distance $x$~steps is given by 
$\sum_{k=1}^{N_x} \frac{1}{T_k}=\sum_{k=1}^{N_x} \frac{\zeta(\Gamma_k+P_{r})t_p }{E}$. 
The term $\frac{\zeta P_{r}t_p }{E}$ can be absorbed 
into $\xi_{relay}$. 
Hence, the battery depletion rate is 
proportional to $\sum_{k=1}^{N_x} \Gamma_k$.

\vspace{-4mm}
\subsection{Deployment Objective}\label{subsection:deployment_objective}
\vspace{-1mm}

We assume  that  the distance $L$ to the source from the sink 
is a priori unknown, and its distribution is also unknown. 
Hence, we assume that $L=\infty$ (deployment along a line of infinite length) and 
develop deployment policies that seek to minimize the average cost per step. 
{\em This setting can  be useful in practice when $L$ is large (e.g., a long forest trail). 
Also, if we seek to create networks along multiple trails in a forest, and if deployment is done serially 
along multiple trails, then this is effectively equivalent to deployment along a single long line, provided 
that the trails have statistically identical radio propagation environment. 
Note that, in case we deploy serially along multiple lines but 
use this formulation, it means that we seek to optimize the per-step cost averaged over multiple lines.}

\subsubsection{Unconstrained Problem}\label{subsubsection:the_unconstrained_problem}
 
Motivated by the cost structure   and the $L=\infty$ 
model, we seek to solve the following:

\footnotesize
\begin{eqnarray}
 \inf_{\pi \in \Pi} \limsup_{x \rightarrow \infty} \frac{\mathbb{E}_{\pi}\sum_{i=1}^{N_x}(\Gamma_i+\xi_{out}Q_{out}^{(i,i-1)}+\xi_{relay})}{x}
\label{eqn:unconstrained_problem_average_cost_with_outage_cost}
\end{eqnarray}
\normalsize
where $\pi$ is a placement {\em policy}, and 
$\Pi$ is the set of all possible placement policies (to be formalized later). 
We formulate (\ref{eqn:unconstrained_problem_average_cost_with_outage_cost}) as a 
long-term average cost Markov decision process (MDP). 
\subsubsection{Connection to a Constrained Problem}\label{subsubsection:connection_between_constrained_unconstrained_problem}
Note that, (\ref{eqn:unconstrained_problem_average_cost_with_outage_cost}) is the relaxed version of the following 
constrained problem  where we seek to minimize 
the mean power per step subject to a constraint on the mean outage per step and a constraint on the 
mean number of relays per step:

\footnotesize
\begin{eqnarray}
&& \inf_{\pi \in \Pi} \limsup_{x \rightarrow \infty} \frac{\mathbb{E}_{\pi}\sum_{i=1}^{N_x}\Gamma_i}{x} \nonumber\\
&s.t.& \, \limsup_{x \rightarrow \infty} \frac{\mathbb{E}_{\pi}\sum_{i=1}^{N_x}Q_{out}^{(i,i-1)}}{x} \leq \overline{q}  
\text{ and } \limsup_{x \rightarrow \infty} \frac{\mathbb{E}_{\pi}N_x}{x} \leq \overline{N} \nonumber\\
&& \label{eqn:constrained_problem_average_cost_with_outage_cost}
\end{eqnarray}
\normalsize

The following standard result tells us how to choose the {\em Lagrange multipliers} 
$\xi_{out}$ and $\xi_{relay}$ (see \cite{beutler-ross85optimal-policies-controlled-markov-chains-constraint}, 
Theorem~$4.3$):

\begin{theorem}\label{theorem:how-to-choose-optimal-Lagrange-multiplier}
 Consider the constrained problem (\ref{eqn:constrained_problem_average_cost_with_outage_cost}). If there exists a pair 
$\xi_{out}^* \geq 0$, $\xi_{relay}^* \geq 0$ and a policy $\pi^*$ 
such that $\pi^*$ is the optimal policy of the unconstrained problem 
(\ref{eqn:unconstrained_problem_average_cost_with_outage_cost}) under $(\xi_{out}^*, \xi_{relay}^*)$ and the constraints in 
(\ref{eqn:constrained_problem_average_cost_with_outage_cost}) are met with equality under $\pi^*$, 
then $\pi^*$ is an optimal policy for (\ref{eqn:constrained_problem_average_cost_with_outage_cost}) also.\qed
\end{theorem}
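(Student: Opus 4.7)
The plan is to use the standard Lagrangian argument for constrained Markov decision processes, which essentially amounts to comparing the Lagrangian value of $\pi^*$ to that of any feasible policy. Let me introduce the shorthand $\bar{P}(\pi) := \limsup_{x \to \infty} \frac{1}{x}\mathbb{E}_\pi \sum_{i=1}^{N_x} \Gamma_i$, $\bar{Q}(\pi) := \limsup_{x \to \infty} \frac{1}{x}\mathbb{E}_\pi \sum_{i=1}^{N_x} Q_{out}^{(i,i-1)}$, and $\bar{R}(\pi) := \limsup_{x \to \infty} \frac{1}{x}\mathbb{E}_\pi N_x$, so that the unconstrained objective in (\ref{eqn:unconstrained_problem_average_cost_with_outage_cost}) becomes $\bar{P}(\pi) + \xi_{out}\bar{Q}(\pi) + \xi_{relay}\bar{R}(\pi)$ (modulo a minor subtlety about whether the $\limsup$ of the sum equals the sum of the $\limsup$s, which I address below).

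First I would fix an arbitrary feasible policy $\pi$ for the constrained problem (\ref{eqn:constrained_problem_average_cost_with_outage_cost}), i.e., one satisfying $\bar{Q}(\pi) \leq \overline{q}$ and $\bar{R}(\pi) \leq \overline{N}$. Since $\pi^*$ is optimal for the unconstrained problem under $(\xi_{out}^*, \xi_{relay}^*)$, I can write
\begin{equation*}
\bar{P}(\pi^*) + \xi_{out}^* \bar{Q}(\pi^*) + \xi_{relay}^* \bar{R}(\pi^*) \;\leq\; \bar{P}(\pi) + \xi_{out}^* \bar{Q}(\pi) + \xi_{relay}^* \bar{R}(\pi).
\end{equation*}
Because $\xi_{out}^*, \xi_{relay}^* \geq 0$ and $\pi$ is feasible, the right-hand side is at most $\bar{P}(\pi) + \xi_{out}^* \overline{q} + \xi_{relay}^* \overline{N}$. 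Because $\pi^*$ meets the constraints with equality, the left-hand side equals $\bar{P}(\pi^*) + \xi_{out}^* \overline{q} + \xi_{relay}^* \overline{N}$. Cancelling the common constant terms yields $\bar{P}(\pi^*) \leq \bar{P}(\pi)$, which is precisely the optimality of $\pi^*$ for the constrained problem.

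The only nontrivial point is the reduction from the $\limsup$ of the combined cost in (\ref{eqn:unconstrained_problem_average_cost_with_outage_cost}) to a genuine sum $\bar{P} + \xi_{out}\bar{Q} + \xi_{relay}\bar{R}$, since in general $\limsup(a_x + b_x) \leq \limsup a_x + \limsup b_x$. I would handle this by noting that under the policies of interest here the Markov decision process will be shown (in the subsequent sections) to admit a stationary optimal policy under which the per-step cost averages exist as genuine limits (e.g., by a renewal-reward or ergodic argument on the regenerative inter-placement epochs), so all three $\limsup$s are actual limits and combine additively. This is the one step that would need the most care; once it is in place, the Lagrangian manipulation above is essentially mechanical and follows the template of Beutler and Ross \cite{beutler-ross85optimal-policies-controlled-markov-chains-constraint}, Theorem~4.3.
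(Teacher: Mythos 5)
Your argument is correct, and it is exactly the standard Lagrangian comparison that the paper itself does not reproduce: the paper states this result as known and defers entirely to \cite{beutler-ross85optimal-policies-controlled-markov-chains-constraint}, Theorem~4.3, so there is no in-paper proof to diverge from. One refinement on the point you flag as needing the most care: the $\limsup$-additivity issue only has to be resolved for $\pi^*$, not for the arbitrary feasible competitor $\pi$. For any $\pi$, subadditivity gives
\begin{equation*}
\limsup_{x\to\infty}\frac{\mathbb{E}_{\pi}\sum_{i=1}^{N_x}(\Gamma_i+\xi_{out}^*Q_{out}^{(i,i-1)}+\xi_{relay}^*)}{x}\;\leq\;\bar{P}(\pi)+\xi_{out}^*\bar{Q}(\pi)+\xi_{relay}^*\bar{R}(\pi),
\end{equation*}
which is already the direction your chain of inequalities requires on the right-hand side. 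The only place you genuinely need the three $\limsup$s to combine additively (i.e., the reverse inequality) is for $\pi^*$ itself, so that the unconstrained value of $\pi^*$ is at least $\bar{P}(\pi^*)+\xi_{out}^*\overline{q}+\xi_{relay}^*\overline{N}$; there the regenerative (renewal-reward) structure of the stationary optimal policy makes all three averages genuine limits, as you say. With that localization the proof is complete and needs no assumption on the competitor policies.
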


\vspace{-3mm}
\section{Pure As-You-Go Deployment}\label{section:no_backtracking_average_cost}
\vspace{-2mm}
\subsection{Markov Decision Process (MDP) Formulation}\label{subsection:mdp_formulation_sum_power_sum_outage_no_backtracking}
Here we seek to solve problem~(\ref{eqn:unconstrained_problem_average_cost_with_outage_cost}), for 
the pure as-you-go approach. When the agent is $r$ steps away from the previous node ($A+1 \leq r \leq A+B$), 
he measures the outage probabilities $\{Q_{out}(r,\gamma,w)\}_{\gamma \in \mathcal{S}}$ 
on the link from the current location to the previous node, where $w$ is the realization of the shadowing 
random variable in the link being evaluated. 
He uses the knowledge of $r$ and the outage probabilities to decide whether to place a node at his current location, 
and what transmit power 
$\gamma \in \mathcal{S}$ to use if he places a relay. 
In this case, we formulate the impromptu deployment problem as a  Markov 
Decision Process (MDP) with state space $\{A+1,\cdots,A+B\} \times \mathcal{W}$. 
At state $(r,w), (A+1) \leq r \leq (A+B-1), w \in \mathcal{W}$, the action is either to 
place a relay and select a transmit power, or not to place. When $r=A+B$, the only 
feasible action is to place and select a transmit power $\gamma \in \mathcal{S}$. 
If, at state $(r,w)$, a relay is placed and it is set to use transmit power $\gamma$, 
a hop-cost of $\gamma+\xi_{out}Q_{out}(r,\gamma,w)+\xi_{relay}$ is incurred.
\footnote{We have taken $(r,w)$ as a typical state for simplicity of representation;  
so long as the channel model given by (\ref{eqn:channel_model}) is valid, we can also take  
$(r,\{Q_{out}(r,\gamma,w)\}_{\gamma \in \mathcal{S}})$ as a typical state. This happens because the cost of an action 
depends on the state $(r,w)$ only via the outage probabilities.}

A deterministic Markov policy $\pi$ is a sequence of mappings $\{\mu_k\}_{k \geq 1}$ 
from the state space to the action space, and it is called a stationary policy if $\mu_k=\mu$ for all $k$. 
Given the state (i.e., the measurements), the placement decision is made according to the policy. 

\vspace{-2mm}
\subsection{Formulation for $L \sim Geometric(\theta)$}\label{subsection:mdp_formulation_sum_power_sum_outage_no_backtracking_initial_geometric_length}
Under the pure as-you-go approach, we will first 
minimize the expected total cost for $L \sim Geometric(\theta)$, and then take 
$\theta \rightarrow 0$; this approach provides the policy structure for the 
average cost problem (see \cite{bertsekas07dynamic-programming-optimal-control-2}, Chapter~4).

In the $L \sim Geometric(\theta)$ case, the deployment process regenerates (probabilistically) after placing a relay, 
because of the memoryless property of the geometric distribution, and because of the fact that deployment of 
a new node will involve measurement of qualities of new links not measured before, and the new links have i.i.d. shadowing 
independent of the previously measured links. The (special) state of the system at such regeneration points is denoted by $\mathbf{0}$ 
(apart from the  states of the form $(r,w)$). 
When the source is placed at the end of the line, the process terminates. 
Suppose $N$ is the (random) number of relays placed, and 
node $N+1$ is the source node (as shown in Figure~\ref{fig:why-impromptu}). 
We first seek to solve the following:
\begin{equation}
 \min_{\pi \in \Pi} \mathbb{E}_{\pi} \bigg( \sum_{i=1}^{N+1}\Gamma_i+ 
\xi_{out} \sum_{i=1}^{N+1}Q_{out}^{(i,i-1)} +\xi_{relay} N \bigg)
\label{eqn:sum_power_discounted_no_backtracking}
\end{equation}

We will first investigate this approach assuming finite $\mathcal{W}$.

\subsection{Bellman Equation}\label{subsection:bellman_equation_sum_power_sum_outage_no_backtracking}
\vspace{-1mm} 

Let us denote the optimal expected cost-to-go at state $(r,w)$ and at 
state $\mathbf{0}$ be $J(r,w)$ and $J(\mathbf{0})$ respectively. 
Note that here we have an infinite horizon
total cost MDP with a finite state space and finite action
space. The assumption P of Chapter~$3$ in \cite{bertsekas07dynamic-programming-optimal-control-2} is satisfied, 
since the single-stage costs are nonnegative. 
Hence, by the theory developed in \cite{bertsekas07dynamic-programming-optimal-control-2}, 
we can focus on the class of stationary deterministic
Markov policies. 

By Proposition~$3.1.1$ of \cite{bertsekas07dynamic-programming-optimal-control-2}, the optimal value function $J(\cdot)$
satisfies the Bellman equation which is given by, for all $(A+1) \leq r \leq (A+B-1)$,

\footnotesize
\begin{eqnarray}
 J(r,w)&=&\min \bigg\{ \min_{\gamma \in \mathcal{S}} ( \gamma+\xi_{out} Q_{out}(r,\gamma,w) )+\xi_{relay} + J(\mathbf{0}), \nonumber\\
&& \theta \mathbb{E}_W \min_{\gamma \in \mathcal{S}} (\gamma+ \xi_{out}  Q_{out} (r+1,\gamma,W)) \nonumber\\
&& + (1-\theta)\mathbb{E}_W J(r+1,W)  \bigg\},\nonumber\\
J(A+B, w)&=&\min_{\gamma \in \mathcal{S}} (\gamma+\xi_{out} Q_{out}(A+B,\gamma,w)+\xi_{relay} ) + J(\mathbf{0}) \nonumber\\
J(\mathbf{0})&=& \sum_{k=1}^{A+1} (1-\theta)^{k-1}\theta \mathbb{E}_W \min_{\gamma \in \mathcal{S}} (\gamma+ \xi_{out} Q_{out} (k,\gamma,W))\nonumber\\
&& +(1-\theta)^{A+1}\mathbb{E}_W J(A+1,W)\label{eqn:bellman_equation_sum_power_sum_outage_no_backtracking}
\end{eqnarray}
\normalsize

These equations are understood as follows. If the current state is $(r,w), (A+1) \leq r \leq (A+B-1)$ and the line has not ended yet, we can either place a relay 
and set its transmit power to $\gamma \in \mathcal{S}$, or we may not place. If we place, 
the cost $\min_{\gamma \in \mathcal{S}} (\gamma+\xi_{out} Q_{out}(r,\gamma,w)+\xi_{relay} )$ is incurred at the current step, 
and the 
cost-to-go from there is $J(\mathbf{0})$. If we do not 
place a relay, the line will end with probability $\theta$ in the next step, in which case a 
cost $\mathbb{E}_W \min_{\gamma \in \mathcal{S}} (\gamma+ \xi_{out} Q_{out} (r+1,\gamma,W))$ will be 
incurred. If the line does not end in the next step, the next state will be a random state $(r+1,W)$ 
and a mean cost of 
$\mathbb{E}_W J(r+1,W)$ will be incurred. At state $(A+B,w)$ the only possible decision is to place a relay. 
At state $\mathbf{0}$, the deployment agent starts walking until he encounters the source location or location $(A+1)$; if the line 
ends at step $k, 1 \leq k \leq A+1$ (with probability $(1-\theta)^{k-1}\theta$), a cost of  
$\mathbb{E}_W \min_{\gamma \in \mathcal{S}} (\gamma+ \xi_{out} Q_{out} (k,\gamma,W))$ is incurred. If the line does not end within 
$(A+1)$ steps (this event has probability $(1-\theta)^{A+1}$), the next state will be $(A+1,W)$.

\vspace{-4mm}
\subsection{Value Iteration}\label{subsection:value_iteration_sum_power_sum_outage_no_backtracking}

The value iteration for (\ref{eqn:sum_power_discounted_no_backtracking}) is obtained by 
replacing $J(\cdot)$ in (\ref{eqn:bellman_equation_sum_power_sum_outage_no_backtracking}) 
by $J^{(k+1)}(\cdot)$ on the L.H.S (left hand side) and by $J^{(k)}(\cdot)$ on the R.H.S 
(right hand side), and by taking $J^{(0)}(\cdot)=0$ for all states. 
The standard MDP theory says that $J^{(k)}(\cdot) \uparrow J(\cdot)$ for all states as $k \rightarrow \infty$.

\vspace{-2mm}
\subsection{Policy Structure: OptAsYouGo Algorithm}\label{subsection:policy_structure_sum_power_sum_outage_no_backtracking}

\begin{lemma}\label{lemma:value_function_properties_sum_power_sum_outage_no_backtracking}
 $J(r,w)$ is increasing in $r$, $\xi_{out}$ and $\xi_{relay}$, decreasing in $w$, and jointly concave in 
$\xi_{out}$ and $\xi_{relay}$. $J(\mathbf{0})$ is increasing and jointly concave in $\xi_{out}$ and $\xi_{relay}$.
\end{lemma}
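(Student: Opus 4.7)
The plan is to combine value-iteration induction (for monotonicity in $r$ and $w$) with a direct policy-cost infimum argument (for monotonicity and concavity in $(\xi_{out},\xi_{relay})$). Starting from $J^{(0)}\equiv 0$ at every state, the value iterates $J^{(k)}$ converge monotonically upward to $J$; since pointwise monotone limits preserve both coordinatewise monotonicity and concavity, it suffices to verify each property either for every $J^{(k)}$ or, for the concavity claims, directly from the representation $J = \inf_\pi V_\pi$.

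For monotonicity of $J(r,w)$ in $r$ (with $w$ fixed), I would proceed by induction on $k$. Assume $J^{(k)}(r',w)$ is nondecreasing in $r'$. For $A+1 \leq r \leq A+B-1$, inspect the two branches inside the outer $\min$ of the Bellman recursion for $J^{(k+1)}(r,w)$: the ``place now'' branch is nondecreasing in $r$ because $Q_{out}(r,\gamma,w)$ is nondecreasing in $r$, and the ``do not place'' branch is nondecreasing in $r$ because both $\mathbb{E}_W\min_\gamma(\gamma + \xi_{out}Q_{out}(r+1,\gamma,W))$ and $\mathbb{E}_W J^{(k)}(r+1,W)$ are nondecreasing in $r$ (the second by the induction hypothesis). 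The minimum of two nondecreasing functions is nondecreasing, so $J^{(k+1)}(\cdot,w)$ inherits the property. A separate, easy comparison between $r=A+B-1$ and $r=A+B$ (where the action set shrinks to ``place'' only) uses that placing with the same transmit power at a strictly smaller hop length reduces $Q_{out}$. Decreasingness of $J(r,w)$ in $w$ follows by an identical induction using that $Q_{out}(r,\gamma,w)$ is decreasing in $w$.

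For monotonicity and joint concavity in $(\xi_{out},\xi_{relay})$, and for the same claims about $J(\mathbf{0})$, I would bypass value iteration and use the following observation. For any stationary deterministic Markov policy $\pi\in\Pi$, the expected total cost starting from state $(r,w)$ (respectively $\mathbf{0}$) can be written as
\[ V_\pi(\,\cdot\,;\xi_{out},\xi_{relay}) \;=\; \mathbb{E}_\pi\!\Big[\sum_{i=1}^{N+1}\Gamma_i\Big] \;+\; \xi_{out}\,\mathbb{E}_\pi\!\Big[\sum_{i=1}^{N+1} Q_{out}^{(i,i-1)}\Big] \;+\; \xi_{relay}\,\mathbb{E}_\pi[N], \]
which is affine in $(\xi_{out},\xi_{relay})$. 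The three expectations are finite because transmit powers are bounded by $P_M$, outage probabilities by $1$, and $N$ is stochastically dominated by $L\sim\mathrm{Geometric}(\theta)$. Since $J$ equals the pointwise infimum of $V_\pi$ over $\pi\in\Pi$, $J$ is concave as an infimum of affine functions. The coefficients of $\xi_{out}$ and $\xi_{relay}$ in $V_\pi$ are nonnegative, so each $V_\pi$ is coordinatewise nondecreasing; taking the infimum over $\pi$ preserves coordinatewise monotonicity (if $V_\pi(\xi)\le V_\pi(\xi')$ for every $\pi$, then $\inf_\pi V_\pi(\xi)\le\inf_\pi V_\pi(\xi')$), establishing the stated monotonicity of $J$. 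The same representation applied to state $\mathbf{0}$ yields the claims for $J(\mathbf{0})$.

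The only subtlety is the boundary at $r=A+B$, where the admissible action set strictly shrinks, so the generic ``pick the same action at the smaller $r$'' coupling used in the induction step must be replaced by the direct comparison noted above. Everything else is routine, and passing $k\to\infty$ preserves the required monotonicity and concavity properties.
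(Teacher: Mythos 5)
Your proof is correct. For monotonicity in $r$ and $w$ you run essentially the same value-iteration induction as the paper (pointwise minimum of two branches, each inheriting the property from $Q_{out}$ and the induction hypothesis, plus the easy boundary comparison at $r=A+B$). Where you genuinely diverge is in the $(\xi_{out},\xi_{relay})$ claims: the paper handles these inside the same induction, observing that the single-stage costs are linear increasing in the multipliers and that a pointwise minimum of jointly concave increasing functions is jointly concave and increasing, then passes to the limit $J^{(k)}\uparrow J$. You instead bypass value iteration entirely and write $J=\inf_\pi V_\pi$ with each $V_\pi$ affine in $(\xi_{out},\xi_{relay})$ with nonnegative coefficients (since a fixed policy's induced distribution of $(N,\Gamma_i,Q_{out}^{(i,i-1)})$ does not depend on the multipliers), so the infimum is concave and coordinatewise nondecreasing. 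This is the same style of argument the paper itself deploys later for Theorem~\ref{theorem:smdp-cost-vs-xi}, and it is arguably cleaner here: it makes the affine dependence on the Lagrange multipliers explicit and avoids having to carry concavity through the Bellman operator. The paper's route has the mild advantage of establishing all five properties in a single uniform induction. Both arguments are sound; the only implicit ingredient in yours is that $J$ is indeed attained as the infimum of policy costs over a class for which the affine representation holds, which the paper's appeal to the nonnegative-cost MDP theory (Assumption~P of Bertsekas) already guarantees.
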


\begin{proof}
 See Appendix~\ref{appendix:average_cost_sum_outage_no_backtracking}.
\end{proof}

{\em Next, we propose an optimal algorithm 
OptAsYouGo (Optimal algorithm with pure As-You-Go approach).}

\begin{algorithm}\label{algorithm:OptAsYouGo}
{\em (OptAsYouGo Algorithm)}  At state $(r,w)$ (where $A+1 \leq r \leq A+B-1$), place a relay if and only if  
$\min_{\gamma \in \mathcal{S}} (\gamma+\xi_{out} Q_{out}(r,\gamma,w) ) \leq c_{th}(r)$, where 
 $c_{th}(r):=\theta \mathbb{E}_W \min_{\gamma \in \mathcal{S}} (\gamma+ \xi_{out}  Q_{out} (r+1,\gamma,W)) 
 + (1-\theta)\mathbb{E}_W J(r+1,W) -(\xi_{relay} + J(\mathbf{0}))$ is a 
threshold increasing in $r$. If the decision is to place a relay, the optimal power to be selected is given by 
$\argmin_{\gamma \in \mathcal{S}} (\gamma+\xi_{out} Q_{out}(r,\gamma,w))$. At state $(A+B,w)$, 
select  transmit power $\argmin_{\gamma \in \mathcal{S}} (\gamma+\xi_{out} Q_{out}(A+B,\gamma,w))$.\qed
\end{algorithm}

\begin{theorem}\label{theorem:policy_structure_sum_power_sum_outage_no_backtracking}
Under the pure as-you-go approach, Algorithm~\ref{algorithm:OptAsYouGo} provides the optimal policy for 
Problem~(\ref{eqn:unconstrained_problem_average_cost_with_outage_cost}).
\end{theorem}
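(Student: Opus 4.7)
The plan is to exploit the construction in Section~\ref{subsection:mdp_formulation_sum_power_sum_outage_no_backtracking_initial_geometric_length}: first establish the threshold structure for the $L\sim\mathrm{Geometric}(\theta)$ total-cost problem in (\ref{eqn:sum_power_discounted_no_backtracking}), then take $\theta\downarrow 0$ (a ``vanishing discount'' argument) to obtain the average-cost problem (\ref{eqn:unconstrained_problem_average_cost_with_outage_cost}). Throughout, I will assume $\mathcal{W}$ is finite (as in Section~\ref{subsection:mdp_formulation_sum_power_sum_outage_no_backtracking_initial_geometric_length}); the extension to Borel $\mathcal{W}$ follows by the usual measurable-selection machinery.

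First, I would read the threshold structure directly off the Bellman equation (\ref{eqn:bellman_equation_sum_power_sum_outage_no_backtracking}). For $A+1\le r\le A+B-1$ the agent places at state $(r,w)$ iff the ``place'' cost does not exceed the ``continue'' cost, i.e.\ iff
\begin{equation*}
\min_{\gamma\in\mathcal{S}}\bigl(\gamma+\xi_{out}Q_{out}(r,\gamma,w)\bigr)\;\le\; c_{th}^{(\theta)}(r),
\end{equation*}
where $c_{th}^{(\theta)}(r):=\theta\,\mathbb{E}_W\min_{\gamma}\bigl(\gamma+\xi_{out}Q_{out}(r+1,\gamma,W)\bigr)+(1-\theta)\,\mathbb{E}_W J(r+1,W)-\xi_{relay}-J(\mathbf{0})$. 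The fact that $c_{th}^{(\theta)}(r)$ is nondecreasing in $r$ is exactly where Lemma~\ref{lemma:value_function_properties_sum_power_sum_outage_no_backtracking} is invoked: $J(r+1,W)$ is increasing in $r$ because $J$ is monotone in its first argument, and $Q_{out}(r+1,\gamma,W)$ is increasing in $r$ by the channel model of Section~\ref{subsection:channel_model}. At $r=A+B$ the only feasible action is to place, and in both cases the power minimization inside the braces gives $\gamma^{\star}(r,w)=\arg\min_{\gamma\in\mathcal{S}}\bigl(\gamma+\xi_{out}Q_{out}(r,\gamma,w)\bigr)$. This establishes Algorithm~\ref{algorithm:OptAsYouGo} as optimal for every $\theta>0$.

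Next I would pass to the $\theta\downarrow 0$ limit. The key observations are: (i) the per-step cost is nonnegative and uniformly bounded ($\gamma\le P_M$, $Q_{out}\in[0,1]$, plus $\xi_{relay}$); (ii) the state space $\{A+1,\dots,A+B\}\times\mathcal{W}\cup\{\mathbf{0}\}$ is finite, and the embedded chain at placement epochs is positive recurrent with a bounded expected inter-placement distance (at most $A+B$). Under these conditions, the standard vanishing-discount argument (see Bertsekas~\cite{bertsekas07dynamic-programming-optimal-control-2}, Chapter~4, together with the fact that $(1-\theta)$ plays the role of a discount factor here) gives: the differences $J^{(\theta)}(r,w)-J^{(\theta)}(\mathbf{0})$ remain bounded as $\theta\downarrow0$, they have limit points $h(r,w)$ satisfying the Average Cost Optimality Equation (ACOE), and any measurable selector that attains the minimum in the ACOE is optimal for~(\ref{eqn:unconstrained_problem_average_cost_with_outage_cost}). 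Since the $\theta$-optimal selector is the threshold rule described above, the limiting selector is also a threshold rule with $c_{th}(r):=\lim_{\theta\downarrow 0}c_{th}^{(\theta)}(r)$, and monotonicity in $r$ is preserved in the limit.

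Finally, I would verify that Algorithm~\ref{algorithm:OptAsYouGo} is unambiguously described by the limiting ACOE: the transmit-power selection rule is independent of $\theta$ (it only involves $\gamma+\xi_{out}Q_{out}(r,\gamma,w)$), so it carries over unchanged; the placement rule is the threshold test with the limit threshold $c_{th}(r)$. The main obstacle I anticipate is the vanishing-discount step --- specifically, proving that $J^{(\theta)}(r,w)-J^{(\theta)}(\mathbf{0})$ stays bounded uniformly in $\theta$ and that a limit point of the threshold sequence $\{c_{th}^{(\theta)}(r)\}_{\theta}$ indeed yields the optimal threshold for the average-cost problem. Boundedness follows from the fact that, starting from any state, the system reaches $\mathbf{0}$ in at most $A+B$ steps and the per-step cost is uniformly bounded, so standard arguments (Proposition~4.1.7 of \cite{bertsekas07dynamic-programming-optimal-control-2}) apply; optimality of the limiting selector follows from Fatou's lemma applied to the ACOE. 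Apart from this, the remaining work is bookkeeping that I would relegate to Appendix~\ref{appendix:average_cost_sum_outage_no_backtracking}.
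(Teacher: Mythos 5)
Your proposal follows essentially the same route as the paper: read the threshold rule and the power-selection rule directly off the Bellman equation (\ref{eqn:bellman_equation_sum_power_sum_outage_no_backtracking}) for the $L\sim\mathrm{Geometric}(\theta)$ problem, use the monotonicity of $Q_{out}$ and of $J(r,w)$ in $r$ (Lemma~\ref{lemma:value_function_properties_sum_power_sum_outage_no_backtracking}) to get $c_{th}(r)$ increasing, and then pass to the average-cost problem by the vanishing-discount argument with discount factor $(1-\theta)$ exactly as in Section~\ref{subsection:average-cost-no-backtracking-limit approach}. The extra detail you supply on uniform boundedness of $J^{(\theta)}(r,w)-J^{(\theta)}(\mathbf{0})$ matches what the paper invokes for the Borel-$\mathcal{W}$ case, so the argument is correct and not materially different.
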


\begin{proof}
 See Appendix~\ref{appendix:average_cost_sum_outage_no_backtracking}.
\end{proof}

{\em Remark:} The trade-off in the impromptu deployment problem is that if we place relays far apart, the cost due to outage 
increases, but the cost of placing the relays decreases. 
The intuition behind the threshold structure of the policy is that if at distance $r$ we get a good link with the combination of 
power and outage less than a threshold, 
then we should accept that link because moving forward is unlikely to yield a better link.
 $c_{th}(r)$ is increasing in $r$.   
Since $Q_{out}(r,\gamma,w)$ is increasing in $r$ for any $\gamma,w$, and since shadowing is i.i.d across links, 
the probability of a link (to the previous node) having desired QoS decreases as we move away from the previous 
node. Hence, the optimal policy will try to place relays as soon as possible if $r$ is large, and 
this explains why  $c_{th}(r)$ is increasing in $r$. 
Note that the threshold $c_{th}(r)$ does not depend on $w$, due to the fact that 
shadowing is i.i.d. across links.

\vspace{-2mm}
\subsection{Computation of the Optimal Policy}\label{subsection:policy_computation_sum_power_sum_outage_no_backtracking}
\vspace{-2mm}
Let us write $V(r):=\mathbb{E}_W J \left(r, W \right)=\sum_{w \in \mathcal{W}} p_W(w) J \left(r, w \right)$, and  
$V(\mathbf{0}):=J(\mathbf{0})$. Also, for each stage $k \geq 0$ of the value iteration, 
define $V^{(k)}(r):=\mathbb{E}_W J^{(k)}\left(r, W \right)$ and 
$V^{(k)}(\mathbf{0}):=J^{(k)}(\mathbf{0})$. Multiplying both sides of the value iteration by 
$p_W(w)$ and summing over $w \in \mathcal{W}$, we obtain an iteration in terms of $V^{(k)}(\cdot)$ and 
this iteration does not involve $J^{(k)}(\cdot)$. Since $J^{(k)}(r,w) \uparrow J(r,w)$ for each $r$, $w$ and 
$J^{(k)}(\mathbf{0}) \uparrow J(\mathbf{0})$ as $k \uparrow \infty$, we can argue that 
$V^{(k)}(r)\uparrow \mathbb{E}_W J(r, W)=V(r)$ for all 
$r$ (by Monotone Convergence Theorem) and  
$V^{(k)}(\mathbf{0})\uparrow J(\mathbf{0})=V(\mathbf{0})$. Then  
we can compute $c_{th}(r)$ by knowing $V(\cdot)$ itself 
(see the expression of $c_{th}(r)$ in Algorithm~\ref{algorithm:OptAsYouGo}); 
we need not keep track of the cost-to-go values 
$J^{(k)}(r, w)$ for each state $(r,w)$, at each stage $k$. Here we simply need to keep track of $V^{(k)}(\cdot)$. 
Similar iterations were proposed in 
\cite{chattopadhyay-etal13measurement-based-impromptu-placement_wiopt} (Section~III-A-5).

\vspace{-2mm}
\subsection{Average Cost Problem: Optimality of OptAsYouGo}
\label{subsection:average-cost-no-backtracking-limit approach}
\vspace{-2mm}
Note that the problem 
(\ref{eqn:sum_power_discounted_no_backtracking}) can be considered as an infinite 
horizon discounted cost problem with discount factor 
$(1-\theta)$. Hence, keeping in mind that we have finite state and action spaces, we observe that  
for the discount factor sufficiently close to $1$, i.e., for $\theta$ sufficiently close to $0$, 
{\em the optimal policy for problem in (\ref{eqn:sum_power_discounted_no_backtracking}) is optimal for the 
problem in (\ref{eqn:unconstrained_problem_average_cost_with_outage_cost})} 
(see \cite[Proposition~4.1.7]{bertsekas07dynamic-programming-optimal-control-2}). In particular, the optimal average 
cost per step with pure as-you-go approach, $\lambda^*$, is given by $\lambda^*=\lim_{\theta \rightarrow 0}\theta J_{\theta}(\mathbf{0})$ 
 (see \cite[Section~4.1.1]{bertsekas07dynamic-programming-optimal-control-2}), 
where $J_{\theta}(\mathbf{0})$ is the optimal cost for problem (\ref{eqn:sum_power_discounted_no_backtracking})  
under pure as-you-go with the probability of the line ending in the next step is $\theta$.

In case $\mathcal{W}$ is a Borel subset of $\mathbb{R}$, we still have a finite action 
space, and bounded, nonnegative cost per step. By 
\cite[Theorem~$5.5.4$]{lerma-lasserre96mdp-book}, one can show that 
the optimal average cost per step is again $\lambda^*=\lim_{\theta \rightarrow 0}\theta J_{\theta}(\mathbf{0})$. 
As $\theta \downarrow 0$, we will obtain a sequence of optimal policies (i.e., mappings from the state 
space to the action space), and a limit point of them will be an average cost optimal policy.

\vspace{-3mm}
\subsection{HeuAsYouGo: A Suboptimal Pure As-You-Go Heuristic}\label{subsection:HeuAsYouGo}
\vspace{-1mm}

\begin{algorithm}\label{algorithm:HeuAsYouGo}
{\em (HeuAsYouGo)} The power used by the relays is set to a 
fixed value. At each potential location, the deployment
agent checks whether the outage to the previous relay meets a
certain predetermined target with this fixed transmit power level. After placing a relay, the next relay is placed
at the last location where the target outage is met; or place
at the $(A+1)$-st location (after the previously placed relay) in the
unlikely situation where the target outage is violated in the
$(A+1)$-st location itself. If the agent reaches the $(A+B)$-th step and if all previous locations violate the outage 
target, he must place the next relay at step $(A+B)$. \qed
\end{algorithm}

HeuAsYouGo is a modified version of the heuristic deployment algorithm proposed in 
\cite{souryal-etal07real-time-deployment-range-extension}. HeuAsYouGo is not exactly a pure as-you-go algorithm 
since it sometimes requires the agent to move one step back in case the outage target is violated.

\vspace{-3mm}
\section{Explore Forward Deployment}\label{sec:backtracking_average_cost}
\vspace{-1mm}

\subsection{Semi-Markov Decision Process (SMDP) Formulation}\label{subsection:smdp_formulation_sum_power_sum_outage_backtracking}
\vspace{-1mm}

Let us recall explore-forward deployment from Section~\ref{subsection:deployment_process_notation}; 
we denote by $w_u$ the realization of shadowing in the potential link 
between the $u$-th location (starting from the previously placed node) and the previous node 
(see Figure~\ref{fig:backtracking_illustration}). The agent measures the outage 
probabilities $\{Q_{out}(u,\gamma,w_u)\}_{A+1 \leq u \leq A+B, \gamma \in \mathcal{S}}$ from locations 
$(A+1),\cdots,(A+B)$ at all available transmit power levels from the set $\mathcal{S}$, in order to make a placement decision.

Here we seek to solve the unconstrained problem~(\ref{eqn:unconstrained_problem_average_cost_with_outage_cost}). 
We formulate our problem as a 
Semi-Markov Decision Process (SMDP) with state space $\mathcal{W}^B$ and action space 
$\{A+1,A+2,\cdots,A+B\} \times \mathcal{S}$. 
The vector $\underline{w}:=(w_{A+1},w_{A+2},\cdots,w_{A+B})$, i.e., the shadowing from $B$ locations,  
is the state in our SMDP. 
In the state $\underline{w}$, an action $(u,\gamma) \in \{A+1,A+2,\cdots,A+B\} \times \mathcal{S}$ is taken 
where $u$ is the distance of the next relay (from the previous relay) that would be placed and 
$\gamma$ is the transmit power that this relay will use. In this case, 
a hop-cost of $\gamma+\xi_{out}Q_{out}(u,\gamma,w_u)+\xi_{relay}$ is incurred. 
After placing a relay, the next state becomes 
$\underline{w}^{'}:=(w_{A+1}^{'},w_{A+2}^{'},\cdots,w_{A+B}^{'})$ with probability 
$g(\underline{w}^{'}):=\prod_{r=A+1}^{A+B}p_{W_{r}}(w_{r}^{'})$ (since shadowing is i.i.d. across links). 

Let us denote, by the vector $\underline{W}(k)$, the (random) state at the $k$-th decision instant, and by 
$\mu_k(\underline{W}(k))$ the action at the $k$-th decision instant. 
For a deterministic Markov policy $\pi := \{\mu_k\}_{k \geq 1}$, let us define the functions 
$\mu_k^{(1)}:\mathcal{W}^B \rightarrow \{A+1,A+2,\cdots,A+B\}$ 
and $\mu_k^{(2)}:\mathcal{W}^B \rightarrow \mathcal{S}$ as follows: if $\mu_k(\underline{w})=(u,\gamma)$, then 
$\mu_k^{(1)}(\underline{w})=u$ and $\mu_k^{(2)}(\underline{w})=\gamma$. 

\vspace{-2mm}
\subsection{Policy Structure: Algorithm OptExploreLim}\label{subsec:smdp-policy-structure}
 \vspace{-2mm}

Note that, $\underline{W}(k)$ is  
i.i.d across $k, k \geq 1$. The state space is a Borel space and the action space is finite. 
The hop cost and hop length (in number of steps) are uniformly bounded across all state-action pairs. 
Hence, we can work with stationary deterministic policies (see \cite{tijms03stochastic-models} for 
finite state space, i.e., finite $\mathcal{W}$, and \cite{amaya-vasquez00average-cost-optimality-SMDP-borel-spaces} 
for a general Borel state space, i.e., when $\mathcal{W}$ is a Borel set). Under our current scenario, 
the optimal average cost per step, $\lambda^{*}$, exists (in fact, the limit exists) 
and is same for all states  $\underline{w} \in \mathcal{W}^B$. 
{\em For simplicity, we work with finite $\mathcal{W}$, but the policy structure holds for 
Borel state space also.}

We next present a deployment algorithm called ``OptExploreLim,'' an optimal algorithm for limited exploration.

\begin{algorithm}\label{algorithm:policy_structure_smdp_backtracking}
{\em (OptExploreLim Algorithm:)} In the state 
$\underline{w}$ which is captured by the measurements $\{Q_{out}(u,\gamma,w_u)\}$ for $A+1 \leq u \leq A+B$, $\gamma \in \mathcal{S}$, 
place the new relay according to the stationary policy $\mu^*$ as follows: 

\vspace{-3mm}

\begin{footnotesize}
\begin{eqnarray}
 \mu^*(\underline{w})= \argmin_{u, \gamma} \bigg( \gamma + \xi_{out} Q_{out}(u,\gamma,w_u) + \xi_{relay} -\lambda^{*}u \bigg)  
\label{eqn:smdp-optimal-policy}
\end{eqnarray}
\end{footnotesize}

\vspace{-3mm}

where $\lambda^*$ (or $\lambda^*(\xi_{out},\xi_{relay})$) 
is the optimal average cost per step for the Lagrange multipliers $(\xi_{out},\xi_{relay})$. \qed
\end{algorithm}

\begin{theorem}\label{theorem:policy_structure_smdp_backtracking}
  The policy $\mu^*$ given by Algorithm~\ref{algorithm:policy_structure_smdp_backtracking} 
is optimal for the problem (\ref{eqn:unconstrained_problem_average_cost_with_outage_cost}) 
under the explore-forward approach.
\end{theorem}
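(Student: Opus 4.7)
The plan is to verify the semi-Markov decision process (SMDP) average-cost optimality equation (ACOE) and exhibit $\mu^*$ as its pointwise minimizer. The pivotal structural observation is that, because shadowing is i.i.d.\ across links, the post-action state $\underline{W}'\sim g$ is independent of both the current state $\underline{w}$ and the action $(u,\gamma)$, so the expected future value appearing in the ACOE is a state- and action-independent constant that drops out of the state-wise minimization.

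First I would check the SMDP regularity hypotheses. The action set $\{A{+}1,\ldots,A{+}B\}\times\mathcal{S}$ is finite; each hop cost lies in $[0,\,P_M+\xi_{out}+\xi_{relay}]$; each sojourn time in steps is bounded between $A{+}1$ and $A{+}B$. For finite $\mathcal{W}$, standard SMDP theory (the Tijms reference) yields a scalar $\lambda^*\geq 0$ and a bounded relative value function $h:\mathcal{W}^B\to\mathbb{R}$ satisfying the ACOE, with the optimal average cost per step the same at every state. For Borel $\mathcal{W}$, the Amaya--Vasquez result cited in the text gives the same conclusion, as already asserted in the paragraph preceding the algorithm.

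Next I would write the SMDP ACOE in the form
\[
h(\underline{w}) \;=\; \min_{(u,\gamma)}\Bigl\{\gamma + \xi_{out}Q_{out}(u,\gamma,w_u) + \xi_{relay} - \lambda^* u + \mathbb{E}\bigl[h(\underline{W}')\mid \underline{w},(u,\gamma)\bigr]\Bigr\}
\]
and exploit the i.i.d.\ transition law to rewrite $\mathbb{E}[h(\underline{W}')\mid\underline{w},(u,\gamma)] = \int h\,dg =: K$, a constant. The ACOE then collapses to
\[
h(\underline{w}) - K \;=\; \min_{(u,\gamma)}\bigl\{\gamma + \xi_{out}Q_{out}(u,\gamma,w_u) + \xi_{relay} - \lambda^* u\bigr\},
\]
whose minimizer is precisely the policy $\mu^*$ defined in Algorithm~\ref{algorithm:policy_structure_smdp_backtracking}. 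Since $\mu^*$ attains the state-wise minimum in the ACOE, it is average-cost optimal for problem~(\ref{eqn:unconstrained_problem_average_cost_with_outage_cost}) under explore-forward.

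As a cross-check and a self-contained route, I would also invoke the renewal reward theorem: because the inter-epoch state $\underline{W}$ is i.i.d.\ and each sojourn is bounded, the decision epochs form a renewal process, so any stationary deterministic $\mu$ has long-run average cost equal to $\mathbb{E}_{\underline{W}}[c(\underline{W},\mu(\underline{W}))]\big/\mathbb{E}_{\underline{W}}[\mu^{(1)}(\underline{W})]$; minimizing this ratio via Dinkelbach scalarization with multiplier $\lambda^*$ recovers the pointwise rule~(\ref{eqn:smdp-optimal-policy}) and verifies that $\lambda^*$ is indeed the value. The main technical obstacle is justifying the ACOE and the existence of a bounded (or, in the Borel case, equicontinuous) relative-value function $h$; this is exactly what the uniform boundedness of hop cost and sojourn time, together with the cited existence theorems, delivers.
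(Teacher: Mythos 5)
Your proof is correct and follows essentially the same route as the paper: write the SMDP average-cost optimality equation, observe that the i.i.d.\ transition law makes the expected future value $\sum_{\underline{w}'} g(\underline{w}') v^*(\underline{w}')$ a constant independent of the action $(u,\gamma)$, and conclude that the pointwise minimizer is $\mu^*$. The extra care you take in verifying the SMDP regularity hypotheses and the renewal-reward cross-check are welcome but do not change the substance of the argument.
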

 
\begin{proof}
 The optimality equation for the SMDP is given by 
(see \cite{tijms03stochastic-models}, Equation~7.2.2):
\begin{eqnarray}
 v^*(\underline{w}) &=& \min_{u, \gamma}\bigg\{ \gamma+  \xi_{out} Q_{out}(u,\gamma,w_u)+\xi_{relay} \nonumber\\
&& -\lambda^{*}u+\sum_{\underline{w}^{'} \in \mathcal{W}^B}g(\underline{w}^{'}) v^*(\underline{w}^{'}) \bigg\}\label{eqn:optimality-smdp}
\end{eqnarray}
$v^*(\underline{w})$ is the optimal differential cost corresponding to state $\underline{w}$. 
The structure of the optimal policy is obvious 
from (\ref{eqn:optimality-smdp}), since 
$\sum_{\underline{w}^{'} \in \mathcal{W}^B } g(\underline{w}^{'}) v^*(\underline{w}^{'})$ does not depend on $(u,\gamma)$.
\end{proof}

{\em Later we will also use the notation $\pi^*$ or $\pi^*(\xi_{out},\xi_{relay})$ to denote the 
OptExploreLim policy under the pair $(\xi_{out},\xi_{relay})$, since here $\pi^*=\{\mu^*, \mu^*, \mu^*, \cdots\}$.}

\begin{remark}
The same optimal policy structure will hold for a Borel state space, 
 by the theory presented in \cite{amaya-vasquez00average-cost-optimality-SMDP-borel-spaces}.
\end{remark}

\begin{remark}
 {\em The optimal decision depends on  state $\underline{w}$ only via 
the outage probabilities which can be easily measured.} 
\end{remark}

\begin{remark}
For an action $(u,\gamma)$, a cost 
$( \gamma +  \xi_{out} Q_{out}(u,\gamma,w_u)+ \xi_{relay})$ will be incurred. 
On the other hand, $\lambda^*u$ is the reference cost over  $u$ steps. 
The policy  minimizes the difference between these two for each link. 
\end{remark}

\begin{remark}
The policy requires the deployment agent to know $\lambda^*$, and computation of $\lambda^*$ will require 
perfect knowledge of propagation environment (e.g., the path-loss exponent $\eta$ in 
(\ref{eqn:channel_model}), the distribution of shadowing, etc.); see Section~\ref{subsection:smdp-policy-computation}. 

\end{remark}

\begin{theorem}\label{theorem:smdp-cost-vs-xi}
$\lambda^{*}(\xi_{out},\xi_{relay})$ is jointly concave, increasing and 
continuous in   $\xi_{out}$ and $\xi_{relay}$.
\end{theorem}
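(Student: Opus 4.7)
\textbf{Proof plan for Theorem~\ref{theorem:smdp-cost-vs-xi}.}
The plan is to exhibit $\lambda^*(\xi_{out},\xi_{relay})$ as a pointwise infimum of an explicit family of affine functions indexed by deployment policies, and then read off concavity, monotonicity and (with one extra step) continuity.

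First, I would fix any stationary deterministic policy $\pi$ on the SMDP of Section~\ref{subsec:smdp-policy-structure} that does \emph{not} depend on $(\xi_{out},\xi_{relay})$. Since $\underline{W}(k)$ is i.i.d.\ across decision epochs $k$, each inter-placement hop is a renewal cycle with hop length $U_\pi(\underline{W})=\mu^{(1)}(\underline{W})$ and hop cost $\Gamma_\pi(\underline{W})+\xi_{out}Q_{out}(U_\pi,\Gamma_\pi,W_{U_\pi})+\xi_{relay}$. By the renewal-reward theorem (or by specializing the SMDP optimality equation~(\ref{eqn:optimality-smdp}) to a fixed $\pi$), the long-run average cost per step under $\pi$ equals
\begin{equation*}
\lambda_{\pi}(\xi_{out},\xi_{relay}) \;=\; a_\pi + b_\pi\,\xi_{out} + c_\pi\,\xi_{relay},
\end{equation*}
with $a_\pi=\mathbb{E}[\Gamma_\pi(\underline{W})]/\mathbb{E}[U_\pi(\underline{W})]$, $b_\pi=\mathbb{E}[Q_{out}(U_\pi,\Gamma_\pi,W_{U_\pi})]/\mathbb{E}[U_\pi]$ and $c_\pi=1/\mathbb{E}[U_\pi]$, all of which are nonnegative and independent of $(\xi_{out},\xi_{relay})$.

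Next, since Theorem~\ref{theorem:policy_structure_smdp_backtracking} shows that it suffices to optimize over stationary deterministic Markov policies,
\begin{equation*}
\lambda^{*}(\xi_{out},\xi_{relay}) \;=\; \inf_{\pi}\,\lambda_{\pi}(\xi_{out},\xi_{relay}),
\end{equation*}
which is a pointwise infimum of a family of affine functions on $\mathbb{R}^{2}_{\geq 0}$. This immediately yields joint concavity. Monotonicity in each argument follows because every $\lambda_\pi$ is componentwise nondecreasing (as $b_\pi,c_\pi\geq 0$) and the infimum of a family of nondecreasing functions is nondecreasing. Finiteness of $\lambda^*$ is clear, e.g.\ by taking the policy ``always place at step $A+1$ with $P_M$,'' which gives $\lambda^*\leq (P_M+\xi_{out}+\xi_{relay})/(A+1)$.

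Finally, for continuity, a finite concave function on a convex subset of $\mathbb{R}^{2}$ is automatically continuous on the relative interior, which handles every point of $(0,\infty)^{2}$. The one place that needs a separate argument, and is mildly the main obstacle, is continuity at the boundary of $[0,\infty)^{2}$, where concavity alone would only give lower semi-continuity (from monotonicity). I would close this gap as follows: fix any boundary point $\underline{\xi}^{0}=(\xi_{out}^{0},\xi_{relay}^{0})$ and let $\pi_{0}$ attain the infimum at $\underline{\xi}^{0}$ (existence of such $\pi_0$ is guaranteed by Theorem~\ref{theorem:policy_structure_smdp_backtracking}). Then for any $\underline{\xi}=(\xi_{out},\xi_{relay})\geq\underline{\xi}^{0}$,
\begin{equation*}
\lambda^{*}(\underline{\xi})\;\leq\;\lambda_{\pi_{0}}(\underline{\xi})\;=\;\lambda^{*}(\underline{\xi}^{0})+b_{\pi_{0}}(\xi_{out}-\xi_{out}^{0})+c_{\pi_{0}}(\xi_{relay}-\xi_{relay}^{0}),
\end{equation*}
so $\limsup_{\underline{\xi}\downarrow\underline{\xi}^{0}}\lambda^{*}(\underline{\xi})\leq\lambda^{*}(\underline{\xi}^{0})$; combined with the reverse inequality from monotonicity, this gives continuity at $\underline{\xi}^{0}$ and completes the proof.
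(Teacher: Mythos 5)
Your proposal is correct and follows essentially the same route as the paper: write the average cost $\lambda_{\mu}$ of each fixed stationary deterministic policy via renewal--reward as an affine, componentwise nondecreasing function of $(\xi_{out},\xi_{relay})$, and obtain $\lambda^*$ as the pointwise infimum of this family, whence joint concavity and monotonicity are immediate. The one place you genuinely add value is the continuity step: the paper closes with ``any increasing, concave function is continuous,'' which is not literally true at the boundary of the domain (e.g.\ $f(0)=0$, $f(x)=1$ for $x>0$ is increasing and concave on $[0,\infty)$ but discontinuous at $0$), and boundary points such as $\xi_{out}=0$ do matter later in the paper; your fixed-policy upper bound $\lambda^*(\underline{\xi})\le\lambda_{\pi_0}(\underline{\xi})$ correctly supplies the missing upper semicontinuity there. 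One small residual gap: your boundary argument only treats approach with $\underline{\xi}\ge\underline{\xi}^0$ componentwise, whereas a sequence can approach $(0,c)$ with the second coordinate from below. The cleanest repair is to note that $b_\pi\le 1/(A+1)$ and $c_\pi\le 1/(A+1)$ for every policy (since $Q_{out}\le 1$ and $U_\pi\ge A+1$), so the family $\{\lambda_\pi\}$ is uniformly Lipschitz and hence its pointwise infimum is Lipschitz, giving continuity on all of $[0,\infty)^2$ in one stroke.
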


\begin{proof}
See Appendix~\ref{appendix:backtracking_average_cost}.
\end{proof}

Let us consider a sub-class of stationary deployment policies (parameterized by $\lambda \geq 0$, 
$\xi_{out} \geq 0$ and $\xi_{relay} \geq 0$) given by: 

\vspace{-3mm}

\begin{footnotesize}
\begin{eqnarray}
 \mu(\underline{w})= \argmin_{u, \gamma} \bigg( \gamma + \xi_{out} Q_{out}(u,\gamma,w_u) + \xi_{relay} -\lambda u \bigg)  
\label{eqn:smdp-optimal-policy_general_form}
\end{eqnarray}
\end{footnotesize}

\vspace{-5mm}

where $\lambda$ is not necessarily equal to $\lambda^*(\xi_{out}, \xi_{relay})$.

Under the class of policies given by (\ref{eqn:smdp-optimal-policy_general_form}), 
let $(U_k,\Gamma_k,Q_{out}^{(k,k-1)}), k \geq 1,$ denote the sequence of inter-node distances, transmit  
powers and link outage probabilities 
that the optimal policy yields during the deployment process. By the assumption of i.i.d. shadowing across links, 
it follows that $(U_k,\Gamma_k,Q_{out}^{(k,k-1)}), k \geq 1,$ is an i.i.d. sequence. 

Let $\overline{\Gamma}(\lambda,\xi_{out},\xi_{relay})$, $\overline{Q}_{out}(\lambda,\xi_{out},\xi_{relay})$ 
and $\overline{U}(\lambda,\xi_{out},\xi_{relay})$ 
denote the mean power per link, mean outage 
per link and mean placement distance (in steps) respectively, under the policy given 
by (\ref{eqn:smdp-optimal-policy_general_form}), where $\lambda$ is not necessarily equal to 
$\lambda^*(\xi_{out},\xi_{relay})$. Also, let 
$\overline{\Gamma}^*(\xi_{out},\xi_{relay})$, $\overline{Q}_{out}^*(\xi_{out},\xi_{relay})$ and $\overline{U}^*(\xi_{out},\xi_{relay})$ 
denote the optimal mean power per link, the optimal mean outage 
per link and the optimal mean placement distance (in steps) respectively, under the OptExploreLim algorithm 
(i.e., policy $\pi^*(\xi_{out},\xi_{relay})$ when $\lambda$ in (\ref{eqn:smdp-optimal-policy_general_form}) is replaced by 
$\lambda^*(\xi_{out},\xi_{relay})$). 
By the Renewal-Reward theorem, the optimal mean power per step, the optimal mean outage per step, 
and the optimal mean number of relays per step are given by 
$\frac{\overline{\Gamma}^*(\xi_{out},\xi_{relay})}{\overline{U}^*(\xi_{out},\xi_{relay})}$, 
$\frac{\overline{Q}_{out}^*(\xi_{out},\xi_{relay})}{\overline{U}^*(\xi_{out},\xi_{relay})}$ 
and $\frac{1}{\overline{U}^*(\xi_{out},\xi_{relay})}$.

\begin{theorem}\label{theorem:outage_decreasing_with_xio_placement_rate_decreasing_with_xir}
 For a given $\xi_{out}$, the mean number of relays per step under the OptExploreLim algorithm  
(Algorithm~\ref{algorithm:policy_structure_smdp_backtracking}), 
$\frac{1}{\overline{U}^*(\xi_{out},\xi_{relay})}$, decreases with $\xi_{relay}$. Similarly, 
for a given $\xi_{relay}$, the mean outage probability per step, $\frac{\overline{Q}_{out}^*(\xi_{out},\xi_{relay})}{\overline{U}^*(\xi_{out},\xi_{relay})}$, 
decreases with $\xi_{out}$ under the optimal policy.
\end{theorem}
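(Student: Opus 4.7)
\textbf{Proof proposal for Theorem~\ref{theorem:outage_decreasing_with_xio_placement_rate_decreasing_with_xir}.}

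The plan is to use a standard policy-interchange argument for Lagrangian SMDPs, exploiting the ratio characterization of $\lambda^{*}$ given by the Renewal-Reward theorem. Fix $\xi_{out}$, pick any $\xi_{relay,1}<\xi_{relay,2}$, and let $\pi^{*}_j:=\pi^{*}(\xi_{out},\xi_{relay,j})$ be the OptExploreLim policy for $j\in\{1,2\}$. Denote by $\overline{\Gamma}_j,\overline{Q}_{out,j},\overline{U}_j$ the mean transmit power per hop, mean outage per hop and mean hop-length (in steps) that $\pi^{*}_j$ induces on the i.i.d.\ hop-statistics described just before the theorem. By the Renewal-Reward theorem and Algorithm~\ref{algorithm:policy_structure_smdp_backtracking}, for $j\in\{1,2\}$,
\begin{equation*}
\lambda^{*}(\xi_{out},\xi_{relay,j})=\frac{\overline{\Gamma}_j+\xi_{out}\overline{Q}_{out,j}+\xi_{relay,j}}{\overline{U}_j},
\end{equation*}
and $\pi^{*}_j$ minimizes this ratio over all stationary policies at parameters $(\xi_{out},\xi_{relay,j})$.

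The first step is to evaluate the policy $\pi^{*}_2$ at the parameter $\xi_{relay,1}$ and vice-versa. Optimality of $\pi^{*}_1$ at $\xi_{relay,1}$ and of $\pi^{*}_2$ at $\xi_{relay,2}$ give the two inequalities
\begin{align*}
(\overline{\Gamma}_1+\xi_{out}\overline{Q}_{out,1}+\xi_{relay,1})\overline{U}_2 &\leq (\overline{\Gamma}_2+\xi_{out}\overline{Q}_{out,2}+\xi_{relay,1})\overline{U}_1,\\
(\overline{\Gamma}_2+\xi_{out}\overline{Q}_{out,2}+\xi_{relay,2})\overline{U}_1 &\leq (\overline{\Gamma}_1+\xi_{out}\overline{Q}_{out,1}+\xi_{relay,2})\overline{U}_2.
\end{align*}
Adding these two inequalities causes the $\overline{\Gamma}_j$- and $\overline{Q}_{out,j}$-terms to cancel, leaving
\begin{equation*}
(\xi_{relay,2}-\xi_{relay,1})(\overline{U}_2-\overline{U}_1)\geq 0.
\end{equation*}
Hence $\overline{U}^{*}(\xi_{out},\xi_{relay})$ is non-decreasing in $\xi_{relay}$, so $1/\overline{U}^{*}(\xi_{out},\xi_{relay})$ is non-increasing in $\xi_{relay}$, which is exactly the first assertion.

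The second step repeats the same interchange after fixing $\xi_{relay}$ and choosing $\xi_{out,1}<\xi_{out,2}$. The analogous pair of inequalities, after clearing denominators and cancelling the $\overline{\Gamma}_j$- and $\xi_{relay}$-terms, yields
\begin{equation*}
(\xi_{out,2}-\xi_{out,1})\bigl(\overline{Q}_{out,1}\overline{U}_2-\overline{Q}_{out,2}\overline{U}_1\bigr)\geq 0,
\end{equation*}
so $\overline{Q}_{out,1}/\overline{U}_1\geq \overline{Q}_{out,2}/\overline{U}_2$, i.e., the mean outage per step under OptExploreLim is non-increasing in $\xi_{out}$.

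The mechanics of the proof are routine; the only genuine obstacle is a measurability/existence concern. We must ensure that the quantities $\overline{\Gamma}_j,\overline{Q}_{out,j},\overline{U}_j$ are well defined for \emph{some} stationary optimal policy at each parameter value; this is guaranteed by the SMDP framework of Section~\ref{subsec:smdp-policy-structure} (using \cite{tijms03stochastic-models} for finite $\mathcal{W}$ and \cite{amaya-vasquez00average-cost-optimality-SMDP-borel-spaces} for Borel $\mathcal{W}$), and the statement is then understood as the existence of a monotone selection of optimal policies. A cleaner alternative, if one wishes to avoid the interchange bookkeeping, is to appeal to the concavity of $\lambda^{*}$ established in Theorem~\ref{theorem:smdp-cost-vs-xi}: at points of differentiability an envelope-theorem calculation gives $\partial\lambda^{*}/\partial\xi_{relay}=1/\overline{U}^{*}$ and $\partial\lambda^{*}/\partial\xi_{out}=\overline{Q}_{out}^{*}/\overline{U}^{*}$, and concavity forces these partial derivatives to be non-increasing; the interchange argument above is essentially the non-smooth version of this observation and avoids any differentiability hypothesis.
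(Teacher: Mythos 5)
Your proposal is correct and is essentially the paper's own argument: the paper also writes the two optimality inequalities obtained by evaluating each of $\pi^*(\xi_{out},\xi_{relay})$ and $\pi^*(\xi_{out}+\kappa,\xi_{relay})$ at the other's multiplier, adds them, and cancels the common terms to get $\frac{\overline{Q}_{out}^*(\xi_{out}+\kappa,\xi_{relay})}{\overline{U}^*(\xi_{out}+\kappa,\xi_{relay})}\leq\frac{\overline{Q}_{out}^*(\xi_{out},\xi_{relay})}{\overline{U}^*(\xi_{out},\xi_{relay})}$ (proving the $\xi_{relay}$ part "similarly," which you carry out explicitly). Your closing envelope-theorem remark is a nice complementary observation but is not needed, and the interchange bookkeeping you give matches the paper's proof.
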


\begin{proof}
 See Appendix~\ref{appendix:backtracking_average_cost}.
\end{proof}

{\em Remark:} The proof of Theorem~\ref{theorem:outage_decreasing_with_xio_placement_rate_decreasing_with_xir} is 
quite general; the results hold for the pure as-you-go approach also.

\begin{theorem}\label{theorem:optimality-condition-required-for-backtracking-average-cost-learning}
For Problem~(\ref{eqn:unconstrained_problem_average_cost_with_outage_cost}), 
under the optimal policy (with explore-forward approach)  
characterized by $\lambda^*$ (i.e., under the OptExploreLim algorithm), we have 
$\mathbb{E}_{\underline{W}} \min_{u, \gamma}(\gamma+ \xi_{out} Q_{out}(u,\gamma,W_u)+\xi_{relay}-\lambda^*u )=0$.
\end{theorem}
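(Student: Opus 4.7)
The plan is to derive the identity as an immediate consequence of the SMDP optimality equation used in the proof of Theorem~\ref{theorem:policy_structure_smdp_backtracking}. The essential structural feature exploited is that, after any placement action, the next state $\underline{W}'$ is drawn i.i.d.\ from the product distribution $g$, \emph{independent of the current state and of the action taken}. Consequently, the continuation term in the SMDP optimality equation collapses to a state-independent constant, which is exactly what allows the identity to pop out.

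Starting from the optimality equation
\begin{equation*}
v^*(\underline{w}) = \min_{u,\gamma}\Bigl\{\gamma + \xi_{out} Q_{out}(u,\gamma,w_u) + \xi_{relay} - \lambda^* u\Bigr\} + \sum_{\underline{w}'}g(\underline{w}')v^*(\underline{w}'),
\end{equation*}
I would let $C := \sum_{\underline{w}'} g(\underline{w}')v^*(\underline{w}') = \mathbb{E}_{\underline{W}} v^*(\underline{W})$ and rearrange to write
\begin{equation*}
v^*(\underline{w}) - C = \min_{u,\gamma}\Bigl\{\gamma + \xi_{out} Q_{out}(u,\gamma,w_u) + \xi_{relay} - \lambda^* u\Bigr\}.
\end{equation*}
Taking expectation over $\underline{w} \sim g$ on both sides, the left side vanishes by the definition of $C$, leaving exactly the claimed equality.

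As a sanity check, I would also note a renewal-reward derivation: under the OptExploreLim policy, the sequence $(U_k, \Gamma_k, Q_{out}^{(k,k-1)})_{k\ge 1}$ is i.i.d., so by the Renewal-Reward theorem $\lambda^* \, \mathbb{E}[U_1] = \mathbb{E}[\Gamma_1 + \xi_{out} Q_{out}^{(1,0)} + \xi_{relay}]$, i.e.\ $\mathbb{E}[\Gamma_1 + \xi_{out} Q_{out}^{(1,0)} + \xi_{relay} - \lambda^* U_1] = 0$; since $\mu^*$ chooses the action that attains the pointwise minimum of $\gamma + \xi_{out} Q_{out}(u,\gamma,W_u) + \xi_{relay} - \lambda^* u$, taking expectation recovers the claim.

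The only potentially subtle point is the validity of the optimality equation when $\mathcal{W}$ is a Borel set (e.g.\ for log-normal shadowing), but this has already been invoked in Theorem~\ref{theorem:policy_structure_smdp_backtracking} via the results of \cite{amaya-vasquez00average-cost-optimality-SMDP-borel-spaces}, together with the uniform boundedness of per-hop costs and hop lengths noted in Section~\ref{subsec:smdp-policy-structure}. Hence no additional technical machinery is needed beyond what has already been established.
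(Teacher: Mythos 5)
Your proposal is correct and follows essentially the same route as the paper: take the SMDP optimality equation (\ref{eqn:optimality-smdp}), average both sides over $\underline{w}\sim g$, and cancel the common term $\sum_{\underline{w}}g(\underline{w})v^*(\underline{w})$ to obtain the identity. The renewal--reward sanity check is a nice consistency observation but is not needed; the main argument matches the paper's proof exactly.
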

\begin{proof}
See Appendix~\ref{appendix:backtracking_average_cost}.
\end{proof}

\vspace{-6mm}
\subsection{Policy Computation}\label{subsection:smdp-policy-computation}
\vspace{-2mm}
We adapt a policy iteration (from \cite{tijms03stochastic-models}) based algorithm 
to calculate $\lambda^{*}$. The algorithm generates a sequence of stationary  policies 
$\{\mu_k\}_{k \geq 1}$ (note that the notation $\mu_k$ was used for a different purpose in 
Section~\ref{subsection:smdp_formulation_sum_power_sum_outage_backtracking}; here each $\mu_k$ is a stationary, deterministic, Markov policy), 
such that for any $k \geq 1$, 
$\mu_k(\cdot):\mathcal{W}^B \rightarrow \{A+1,\cdots,A+B\} \times \mathcal{S}$ maps a state into some action. 
Define the sequence $ \{\mu^{(1)}_k, \mu^{(2)}_k \}_{k \geq 1}$ of functions as 
follows: if $\mu_k(\underline{w})=(u,\gamma)$, then 
$\mu^{(1)}_k (\underline{w})=u$ and $\mu^{(2)}_k (\underline{w})=\gamma$.

\begin{algorithm}\label{algorithm:policy_iteration_smdp}
The policy iteration  algorithm is as follows:

{Step~$0$ (Initialization):} Start with an initial   policy $\mu_0$. 

{Step~$1$ (Policy Evaluation):} Calculate the average cost $\lambda_{k}$ corresponding to the policy $\mu_{k}$, for $k \geq 0$. 
$\lambda_k$ is equal to the following quantity (by the Renewal Reward Theorem): 

\footnotesize
\begin{eqnarray*}
\frac{  \xi_{relay}+\sum_{\underline{w}} g(\underline{w}) \bigg(\mu^{(2)}_k(\underline{w})+ \xi_{out} Q_{out}(\mu^{(1)}_k(\underline{w}),\mu^{(2)}_k(\underline{w}),w_{\mu^{(1)}_k(\underline{w})})\bigg) } 
{ \sum_{\underline{w}} g(\underline{w}) \mu^{(1)}_k(\underline{w})  } 
\end{eqnarray*}
\normalsize

{Step~$2$ (Policy Improvement):} Find a new policy $\mu_{k+1}$ by solving the following:
\footnotesize
\begin{eqnarray}
 \mu_{k+1}(\underline{w})=\argmin_{(u,\gamma) } \bigg( \gamma+Q_{out}(u,\gamma,w_u)+\xi_{relay} 
-\lambda_{k}u \bigg)\label{eqn:smdp-policy-improvement}
\end{eqnarray}
\normalsize

If $\mu_{k}$ and $\mu_{k+1}$ are  same  (i.e., if $\lambda^{(k-1)}=\lambda_k$), then stop and declare $\mu^{*}=\mu_{k}$, $\lambda^{*}=\lambda_{k}$. Otherwise, go to 
Step~$1$.\qed
\end{algorithm}

{\em Remark:} {\em By the theory in \cite{tijms03stochastic-models}, this 
policy iteration will converge (to $\lambda^*$) in a finite number of iterations, for 
finite state and action spaces.} 
For a general Borel state space (e.g., for log-normal shadowing), 
only asymptotic convergence to $\lambda^*$ can be guaranteed.

{\em Computational Complexity:} The finite state space has cardinality $|\mathcal{W}|^B$. Then, $O(|\mathcal{W}|^B)$ 
addition operations are required to compute $\lambda_{k}$ from the policy evaluation step. 
However, careful manipulation leads 
to a drastic reduction in this computational requirement, as shown by the following theorem.

\begin{theorem}\label{theorem:complexity-reduction-smdp-policy-iteration}
 In the policy evaluation step in Algorithm~\ref{algorithm:policy_iteration_smdp}, we can reduce 
the number of computations in each iteration from $|\mathcal{W}|^B$ to $O(B^2 M^2 |\mathcal{W}|^2)$. 
\end{theorem}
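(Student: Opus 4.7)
The plan is to exploit two structural features: (i) the policy $\mu_{k+1}$ produced by the policy improvement step (\ref{eqn:smdp-policy-improvement}) depends on the state $\underline{w}=(w_{A+1},\dots,w_{A+B})$ only through the per-link statistics $h(u,\gamma,w_u):=\gamma+\xi_{out}Q_{out}(u,\gamma,w_u)+\xi_{relay}-\lambda_k u$, and (ii) by Section~\ref{subsection:independent_shadowing_across_links}, the components of $\underline{W}$ are independent. These two facts together let us replace the sum over $|\mathcal{W}|^B$ joint states in the policy evaluation formula by sums over marginals.

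\emph{First step.} Introduce the precomputed tables $h(u,\gamma,w)$ for all $(u,\gamma,w)\in\{A+1,\dots,A+B\}\times\mathcal{S}\times\mathcal{W}$, together with $C_u(w):=\min_{\gamma\in\mathcal{S}}h(u,\gamma,w)$ and $\gamma^{*}(u,w):=\argmin_{\gamma\in\mathcal{S}}h(u,\gamma,w)$. Building these tables costs $O(BM|\mathcal{W}|)$ operations. In terms of these tables, policy $\mu_{k+1}$ picks the pair $(u_0,\gamma_0)$ that achieves $\min_{u,\gamma}h(u,\gamma,w_u)$, and the value $u_0=\mu_{k+1}^{(1)}(\underline{w})$ is the link-index that achieves $\min_u C_u(w_u)$.

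\emph{Second step.} Using independence, I would write, for each candidate triple $(u_0,\gamma_0,w_0)$,
\begin{equation*}
\mathbb{P}\bigl(\mu_{k+1}(\underline{W})=(u_0,\gamma_0),\,W_{u_0}=w_0\bigr)=p_W(w_0)\,\mathbf{1}\{\gamma_0=\gamma^{*}(u_0,w_0)\}\prod_{u\neq u_0}\alpha_u\bigl(h(u_0,\gamma_0,w_0)\bigr),
\end{equation*}
where $\alpha_u(c):=\mathbb{P}(C_u(W)>c)=\sum_{w\in\mathcal{W}}p_W(w)\mathbf{1}\{C_u(w)>c\}$ (with a fixed tie-breaking rule absorbed into the indicator). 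The numerator and denominator of the ratio defining $\lambda_k$ in Step~1 of Algorithm~\ref{algorithm:policy_iteration_smdp} then decompose as single sums
\begin{equation*}
\sum_{u_0,\gamma_0,w_0}\bigl[\gamma_0+\xi_{out}Q_{out}(u_0,\gamma_0,w_0)\bigr]\,p_W(w_0)\,\mathbf{1}\{\gamma_0=\gamma^{*}(u_0,w_0)\}\prod_{u\neq u_0}\alpha_u\bigl(h(u_0,\gamma_0,w_0)\bigr),
\end{equation*}
and an analogous expression for the denominator, each having $O(BM|\mathcal{W}|)$ terms.

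\emph{Third step (counting operations).} The thresholds that can arise as arguments of $\alpha_u(\cdot)$ lie in the set $\{h(u_0,\gamma_0,w_0):u_0,\gamma_0,w_0\}$, which has $O(BM|\mathcal{W}|)$ elements. For each link $u$ and each such threshold $c$, computing $\alpha_u(c)$ by a scan over $\mathcal{W}$ costs $O(|\mathcal{W}|)$. Tabulating all $\alpha_u(\cdot)$ values therefore costs $O(B\cdot BM|\mathcal{W}|\cdot|\mathcal{W}|)=O(B^2M|\mathcal{W}|^2)$. Assembling the numerator/denominator then uses $O(BM|\mathcal{W}|)\cdot O(B)=O(B^2M|\mathcal{W}|)$ further multiplications. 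The whole policy-evaluation step thus fits within $O(B^2M^2|\mathcal{W}|^2)$, which is the claimed bound (the extra factor of $M$ provides slack for the $\argmin_\gamma$ lookups and for any re-indexing by $(u_0,\gamma_0)$ pairs).

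\emph{Main obstacle.} The only delicate point is the handling of ties in $\argmin_{u,\gamma}$, because ties can cause double-counting when one converts the argmin event into the product form above. I would dispose of this by fixing a lexicographic tie-break on $(u,\gamma)$ and defining $\alpha_u(c)$ as $\mathbb{P}(C_u(W)>c)+\mathbb{P}(C_u(W)=c,\,u>u_0)$ (and similarly within a fixed $u$ for ties in $\gamma$); this keeps the marginal computation in $O(|\mathcal{W}|)$ per threshold and does not affect the complexity.
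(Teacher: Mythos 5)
Your proposal is correct and follows essentially the same route as the paper: both exploit the i.i.d.\ shadowing to factor the probability that a given $(u,\gamma,w)$ triple is selected into a product of $B-1$ marginal tail probabilities (your $\alpha_u(\cdot)$ is exactly the paper's per-location comparison probability, with the same asymmetric strict/non-strict handling of ties), and both then reduce the policy-evaluation sum to $O(BM|\mathcal{W}|)$ terms each costing $O(BM|\mathcal{W}|)$ or less. Your operation count is in fact slightly tighter ($O(B^2M|\mathcal{W}|^2)$ for the main tabulation), which still sits within the claimed $O(B^2M^2|\mathcal{W}|^2)$ bound.
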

\begin{proof}
 See Appendix~\ref{appendix:backtracking_average_cost}.
\end{proof}

\vspace{-4mm}
\subsection{HeuExploreLim: An Intuitive but Suboptimal Heuristic}\label{subsection:HeuExploreLim}
\vspace{-2mm}
A natural heuristic for (\ref{eqn:unconstrained_problem_average_cost_with_outage_cost}) under 
the explore-forward approach is the following HeuExploreLim Algorithm (Heuristic Algorithm for Limited Explore-Forward):

\begin{algorithm}
{\em (HeuExploreLim Algorithm)} Under the explore-forward setting as discussed in Section~\ref{sec:backtracking_average_cost}, at state $\underline{w}$, 
make the decision according to the following rule:
\begin{equation*}
 (u^*,\gamma^*)=\argmin_{u, \gamma } \frac{\gamma+\xi_{out} Q_{out}(u,\gamma,w_u)+\xi_{relay}}{u}\,\,\,\,\qed
\end{equation*}
\end{algorithm}

Under any stationary deterministic policy $\mu$, 
let us denote the cost of a link by $C_{\mu}$ (a random variable) and the length 
of a link by $U_{\mu}$ (under any stationary deterministic policy $\mu$, the 
deployment process regenerates at the placement points). 

\begin{lemma}\label{lemma:what_is_the_heuristic_doing}
 HeuExploreLim solves $\inf_{\mu}\mathbb{E}_{\mu}( \frac{C_{\mu}}{U_{\mu}} )$.
\end{lemma}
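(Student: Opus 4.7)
The plan is to exploit the fact that, under a stationary deterministic policy, the per-hop cost $C_\mu$ and the per-hop length $U_\mu$ are deterministic functions of the observed state $\underline{W}$ (whose distribution $g(\cdot)$ does not depend on the policy, since shadowing is i.i.d.\ across links and the state regenerates after each placement). So I would first rewrite the objective, for a stationary deterministic policy $\mu=(\mu^{(1)},\mu^{(2)})$, as
\begin{equation*}
\mathbb{E}_{\mu}\!\left[\frac{C_\mu}{U_\mu}\right]
= \sum_{\underline{w}\in \mathcal{W}^B} g(\underline{w})\,
\frac{\mu^{(2)}(\underline{w}) + \xi_{out}\, Q_{out}\!\bigl(\mu^{(1)}(\underline{w}),\mu^{(2)}(\underline{w}),w_{\mu^{(1)}(\underline{w})}\bigr) + \xi_{relay}}{\mu^{(1)}(\underline{w})}.
\end{equation*}

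Next, I would observe that the right-hand side is a sum of non-negative terms, each depending on the policy only through its action $(\mu^{(1)}(\underline{w}),\mu^{(2)}(\underline{w}))$ at the single state $\underline{w}$. In particular, the action at one state has no effect on the summand for any other state. Hence the infimum over $\mu$ commutes with the sum over $\underline{w}$, and the pointwise (state-by-state) minimizer gives the global minimum over stationary deterministic policies:
\begin{equation*}
(u^*(\underline{w}),\gamma^*(\underline{w})) \in \argmin_{u,\gamma}\,\frac{\gamma+\xi_{out}\,Q_{out}(u,\gamma,w_u)+\xi_{relay}}{u},
\end{equation*}
which is exactly the HeuExploreLim rule.

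To finish, I would note that restricting the infimum to stationary deterministic Markov policies is without loss of generality in our setting: because the state $\underline{W}(k)$ is i.i.d.\ across regenerations and the action space is finite, randomization (or history-dependence) can only average ratios of the above form and therefore cannot beat the pointwise minimum. Thus the HeuExploreLim policy attains $\inf_\mu \mathbb{E}_\mu[C_\mu/U_\mu]$, proving the lemma. The proof is essentially a pointwise-minimization argument, and there is no real obstacle; the only subtlety is articulating why it suffices to optimize within stationary deterministic policies, which was already established earlier (per the discussion preceding Algorithm~\ref{algorithm:policy_structure_smdp_backtracking}) via the results of \cite{tijms03stochastic-models} and \cite{amaya-vasquez00average-cost-optimality-SMDP-borel-spaces}.
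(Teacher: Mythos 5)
Your proposal is correct and is essentially the paper's own argument: both rest on the observation that, for every realization $\underline{w}$ of the measured state, the per-link ratio $\frac{\gamma+\xi_{out}Q_{out}(u,\gamma,w_u)+\xi_{relay}}{u}$ achieved by HeuExploreLim is pointwise no larger than that of any other stationary deterministic policy, and that the state distribution $g(\cdot)$ does not depend on the policy, so the pointwise dominance passes to the expectation. The paper phrases this as a sample-path coupling of the shadowing realizations followed by an appeal to the i.i.d.\ structure, while you write the expectation out as a policy-independent mixture and minimize term by term; these are the same argument in two notations.
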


\begin{proof}
See Appendix~\ref{appendix:backtracking_average_cost}.
\end{proof}

{\em Remark:} This  heuristic is not optimal. Our optimal policy given in Theorem~(\ref{theorem:policy_structure_smdp_backtracking}) solves 
$\inf_{\mu} \frac{\mathbb{E}_{\mu}(C_{\mu})}{\mathbb{E}_{\mu}(U_{\mu})}$. However, HeuExploreLim solves 
$\inf_{\mu}\mathbb{E}_{\mu}( \frac{C_{\mu}}{U_{\mu}} )$, which is, in general, different from 
$\inf_{\mu} \frac{\mathbb{E}_{\mu}(C_{\mu})}{\mathbb{E}_{\mu}(U_{\mu})}$. Note that 
$\mathbb{E}_{\mu}( \frac{C_{\mu}}{U_{\mu}} )=\frac{\mathbb{E}_{\mu}(C_{\mu})}{\mathbb{E}_{\mu}(U_{\mu})}$ 
{\em if and only if} the variance of $U_{\mu}$ is zero. 
But this does not happen  due to the variability in shadowing over space.

\vspace{-2mm}
\section{Comparison between Explore-Forward and Pure As-You-Go Approaches}
\label{section:numerical_results_comparison_average_cost_backtracking_no_backtracking}
\vspace{-2mm}

Let us denote the optimal average cost per step (for a given $\xi_{out}$ and $\xi_{relay}$) under 
the explore-forward and pure as-you-go approaches by $\lambda_{ef}^*$ and $\lambda_{ayg}^*$.
\begin{theorem}\label{theorem:comparison-backtracking-no-backtracking}
 $\lambda_{ef}^* \leq \lambda_{ayg}^*$.
\end{theorem}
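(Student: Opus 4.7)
The plan is to show that every pure as-you-go policy can be simulated by an explore-forward policy that incurs the same expected cost per step. Since $\lambda_{ef}^*$ is the infimum of the long-run average cost per step over a class of policies that (after this embedding) contains the pure as-you-go class, the inequality $\lambda_{ef}^* \leq \lambda_{ayg}^*$ follows immediately.

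First, I would set up the correspondence between state spaces. Under pure as-you-go, when the agent is $r$ steps from the previous relay ($A+1 \leq r \leq A+B$), the decision is based on $(r, w_r)$, with the measurements $w_{A+1},\ldots,w_r$ revealed sequentially. Under explore-forward, the entire vector $\underline{w} = (w_{A+1},\ldots,w_{A+B})$ is available before any placement decision is made. The key observation is that explore-forward has strictly more information at decision time: any function of $(r, w_{A+1},\ldots,w_r)$ is a function of $\underline{w}$.

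Second, given any stationary pure as-you-go policy $\pi_{ayg}$, I would define a corresponding explore-forward policy $\pi_{ef}(\pi_{ayg})$ as follows: given $\underline{w}$, scan $r = A+1, A+2, \ldots, A+B$ in order, and at the smallest $r$ for which $\pi_{ayg}$ would place a relay (based on $(r, w_r)$), place the relay at distance $r$ with the same transmit power that $\pi_{ayg}$ would select; if $\pi_{ayg}$ never places within $B$ steps (which forces placement at $A+B$ under as-you-go), place at $A+B$ with the same power. By construction, the inter-relay distance, chosen transmit power, and resulting outage under $\pi_{ef}(\pi_{ayg})$ match those under $\pi_{ayg}$ sample-path by sample-path, given identical shadowing realizations at all potential locations. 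Since the deployment process regenerates at each placed relay (because shadowing is i.i.d.\ across links) under both approaches, the expected hop cost and expected hop length match exactly, and hence the long-run average cost per step (by the Renewal-Reward Theorem) is identical.

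Finally I would conclude: the optimal average cost per step under explore-forward satisfies
\begin{equation*}
\lambda_{ef}^* = \inf_{\pi_{ef}} \text{cost}(\pi_{ef}) \;\leq\; \inf_{\pi_{ayg}} \text{cost}(\pi_{ef}(\pi_{ayg})) \;=\; \inf_{\pi_{ayg}} \text{cost}(\pi_{ayg}) \;=\; \lambda_{ayg}^*,
\end{equation*}
which is the desired inequality. The only minor subtlety is ensuring that the embedding is consistent with the action space of the SMDP formulation in Section~\ref{sec:backtracking_average_cost}; since the constructed $\pi_{ef}(\pi_{ayg})$ is a (measurable) deterministic function $\underline{w} \mapsto (u, \gamma)$, it lies in the policy class over which $\lambda_{ef}^*$ is defined, and this is the main (and easy) step to verify.
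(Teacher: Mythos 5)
Your proposal is correct and follows essentially the same route as the paper: both arguments embed pure as-you-go into explore-forward by scanning the measured locations in order and placing at the first location where the as-you-go policy would place, so that the resulting explore-forward policy reproduces the as-you-go cost and the infimum over explore-forward policies can only be smaller. The only cosmetic difference is that the paper applies this embedding just to the optimal (threshold) as-you-go policy, whereas you embed the entire as-you-go class before taking the infimum; both yield the same conclusion.
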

\begin{proof}
 See Appendix~\ref{appendix:comparison_average_cost_backtracking_no_backtracking}. 
\end{proof}

Next, we numerically compare various deployment algorithms, in order 
to select the best algorithm  for  deployment.

\vspace{-6mm}
\subsection{Parameter Values  Used in the Numerical Comparisons}\label{subsection:parameter_values}
\vspace{-2mm}
We consider deployment for a given $\xi_{out}$ and a given $\xi_{relay}$, for the objective in 
(\ref{eqn:unconstrained_problem_average_cost_with_outage_cost}). 
We provide numerical results for deployment with iWiSe motes (\cite{iwise}) (based on the 
Texas Instrument (TI) CC2520 
which implements the IEEE~$802.15.4$ PHY in the $2.4$~GHz ISM band, yielding a 
bit rate of $250$~Kbps, with a CSMA/CA medium access control (MAC)) equipped with $9$~dBi antennas. 
The set of transmit power levels $\mathcal{S}$ is taken to be $\{-18,-7,-4,0,5\}$~dBm, 
which is a subset of the transmit power levels available in the chosen device. 
For the channel model as in 
(\ref{eqn:channel_model}), our measurements 
in a forest-like environment inside the Indian Institute of Science Campus gave path-loss exponent 
$\eta=4.7$ and $c=10^{0.17}$ (i.e., $1.7$~dB); 
see \cite{chattopadhyay-etal14deployment-experience}. 
Shadowing $W$ was found to be log-normal; $W=10^{\frac{Y}{10}}$ with 
$Y \sim \mathcal{N}(0, \sigma^2)$, where $\sigma=7.7$~dB. Shadowing decorrelation distance was found to be $6$~meters. 
Fading is assumed to be Rayleigh; $H \sim Exponential(1)$.

We define outage to be the event when the received signal 
power of a packet falls below $P_{rcv-min}=10^{-9.7}$~mW (i.e., $-97$~dBm); for a commercial implementation of the PHY/MAC of
IEEE~$802.15.4$, $-97$~dBm received power corresponds to a $2\%$ packet
loss probability for $127$~byte packets for iWiSe motes, as per our measurements.

We consider deployment along a line with step size $\delta=20$~meters, $A=0$, $B=5$. 
Given $A=0$, we chose $B$ is the following way. 
Define a link to be good if its outage probability is less than
$3\%$, and choose $B$ to be the largest integer such that the
probability of finding a good link of length $B \delta$ is more than
$20\%$, when the highest transmit power is used (this will ensure that the agent does not measure very long links having 
poor outage probabilities). For the parameters $\eta=4.7$, $c=10^{0.17}$, 
$\sigma=7.7$~dB, and $5$~dBm transmit power, $B$ turned out to be $5$. If $B$ is increased further, 
the probability of getting a good link will be very small. \qed

\vspace{-4mm}
\subsection{Numerical Comparison Among Deployment Policies}\label{subsection:numerical_comparison_among_policies}
\vspace{-2mm}

Assuming these parameter values, we computed (by MATLAB) 
the mean power per step (in mW), mean outage per step,  mean placement distance (in steps), mean cost per step and 
mean number of measurements made per step, for the four deployment algorithms presented so far. The results are shown  
in Table~\ref{table:comparison_among_algorithms}.      
In order to make a fair comparison, {\em we used the mean power per node 
for OptAsYouGo as the fixed node transmit power for HeuAsYouGo, and the mean outage per link of OptAsYouGo 
as the  target outage for HeuAsYouGo.} The mean number of measurements per step is defined as the ratio of the mean 
number of links evaluated for deployment of one node and the mean placement distance (in steps). The numerator of this ratio is 
$B=5$ for explore-forward algorithms (since $A=0$). OptAsYouGo makes one measurement per step, but HeuAsYouGo makes more than one measurements 
per step since the agent often evaluates a bad link, takes one step back and places the relay. 
\footnote{For planned deployment, we will have to 
evaluate all possible potential links; from each potential location, we need to measure link quality to $B=5$ preceding 
potential locations, which is not feasible.}

{\em We notice that the average per-step cost (COST in Table~\ref{table:comparison_among_algorithms}) of 
OptExploreLim (OEL) is the least. 
OEL uses the least mean power per step (POW column), 
places nodes the widest apart (DIST column), and the mean outage per step (OUT column) 
is second to lowest. On the other hand, OEL requires about twice as many measurements per 
step as compared to OptAsYouGo.\footnote{A more detailed comparison among the algorithms can be found in 
Appendix~\ref{appendix:comparison_average_cost_backtracking_no_backtracking}, along with elaborate discussion.} 
Hence, we can conclude that the algorithms based on the 
explore-forward approach significantly outperform the algorithms based on the pure-as-you-go approach, at the cost 
of slightly more  measurements per step. 
Hence, for applications that do not require very rapid deployment, such as
deployment along a long forest trail for wildlife monitoring, 
explore-forward is a better approach to take. Thus, for the learning
algorithms presented later, we will consider only the explore-forward approach.  
However, under the requirement of fast deployment (e.g., emergency deployment by first responders), pure as-you-go or deployment 
without measurements (as in \cite{sinha-etal12optimal-sequential-relay-placement-random-lattice-path}) might be more 
suitable.}

\begin{table}[t!]
\footnotesize
\centering
\begin{tabular}{|c |c |c |c |c |c |}
\hline
Algorithm & POW (mW) & OUT   & DIST (steps) & COST  & MEAS  \\ 
\hline
OEL & 0.1955  &  0.001969  &  2.2859  & 0.8312  & 2.1873 \\
\hline
HEL & 0.2432  & 0.002507   &  2.673  & 0.8684  &  1.8706 \\
\hline
OAYG & 0.2904  &  0.003607  &  1.5  & 1.265  &  1\\
\hline
HAYG & 0.3318  &  0.001752  & 1.313   & 1.267  &  1.6969\\
\hline
\end{tabular}
\vspace{-0mm}
\caption{Numerical comparison among various algorithms for $\xi_{out}=100$ and $\xi_{relay}=1$. 
Abbreviations: OEL-OptExploreLim, 
HEL-HeuExploreLim,   OAYG-OptAsYouGo, HAYG-HeuAsYouGo. POW-Mean power (in mW unit) per step, OUT- Mean outage  
per step, DIST-Mean placement distance, COST-Mean cost per step, MEAS-Mean number of measurements 
per step.}
\vspace{-0mm}
\label{table:comparison_among_algorithms}
\vspace{-8mm}
\end{table}
\normalsize
\vspace{-0mm}

\vspace{-4mm}
\section{OptExploreLimLearning: Learning with Explore-Forward, for Given 
$\xi_{out}$ and $\xi_{relay}$}\label{section:learning_backtracking_given_xio_xir}
\vspace{-1mm}
Based on the discussion in Section~\ref{section:numerical_results_comparison_average_cost_backtracking_no_backtracking}, 
we proceed, in the rest of this paper, 
with developing learning algorithms based on the  policy   OptExploreLim (to solve 
problem (\ref{eqn:unconstrained_problem_average_cost_with_outage_cost})). 
We observe that the optimal policy 
(given by Algorithm~\ref{algorithm:policy_structure_smdp_backtracking}) can be completely specified 
by the optimal average cost per step $\lambda^*$, for given values of $\xi_{out}$ and $\xi_{relay}$. 
But the computation of $\lambda^*$ 
requires policy iteration. Policy iteration requires the channel model parameters $\eta$ and $\sigma$, and it is  
computationally intensive. 
In practice, these parameters of the channel model might not be available. Under this situation, 
the agent measures $\{Q_{out}(u,\gamma,w_u):A+1 \leq u \leq A+B, \gamma \in \mathcal{S} \}$ before deploying each relay, but 
he has to {\em learn} the optimal average cost per step in the process of deployment, and, 
use the corresponding updated policy each time he places a new relay. In order 
to address this requirement, we propose 
an algorithm which will maintain a running estimate of $\lambda^*$, and update it 
each time a relay is placed. The algorithm is motivated by  
the theory of Stochastic Approximation (see \cite{borkar08stochastic-approximation-book}), and it uses, as input, the 
measurements made for each placement, in order to improve the estimate of $\lambda^*$. 
We prove that, as the number of deployed relays goes to infinity, the running estimate of average network cost per step 
converges to 
$\lambda^*$ almost surely. 

After 
the deployment is over, let us denote the length, transmit power and outage values of the link between node~$k$ and 
node~$(k-1)$ by $u_k$, $\gamma_k$ and $Q_{out}^{(k,k-1)}$. After placing the $(k-1)$-st node, we will place 
node $k$, and consequently $u_k$, $\gamma_k$ and $Q_{out}^{(k,k-1)}$ will be decided  
by the following algorithm.

\begin{algorithm}\label{algorithm:learning_backtracking_given_xio_xir_general_step_size}
{\em (OptExploreLimLearning)} Let $\lambda^{(k)}$ be the estimate of the optimal average cost per step after placing the $k$-th relay 
(sink is  node~$0$), and 
let $\lambda^{(0)}$ be the initial estimate. In the process of placing relay $(k+1)$, 
if the measured outage probabilities are $\{Q_{out}(u,\gamma,w_u):A+1 \leq u \leq A+B, \gamma \in \mathcal{S} \}$, 
then place relay $(k+1)$ using the following policy:
\vspace{-2mm}

 \begin{footnotesize}
\begin{eqnarray*}
 (u_{k+1},\gamma_{k+1})= \argmin_{u , \gamma } \bigg( \gamma + \xi_{out} Q_{out}(u,\gamma,w_u) + \xi_{relay} -\lambda^{(k)} u \bigg) 
\label{eqn:learning_backtracking_given_xio_xir_decision_part}
\end{eqnarray*}
\end{footnotesize}
After placing relay $(k+1)$, update $\lambda^{(k)}$ as follows (using the measurements made in the 
process of placing relay $(k+1)$): 

\begin{footnotesize}
\begin{eqnarray}
&& \lambda^{(k+1)} \nonumber\\
&=& \lambda^{(k)}+ a_{k+1} \min_{u, \gamma } \bigg( \gamma + \xi_{out} Q_{out}(u,\gamma,w_u) + \xi_{relay} -\lambda^{(k)} u \bigg)\nonumber\\
&=& \lambda^{(k)}+ a_{k+1} \bigg( \gamma_{k+1} + \xi_{out} Q_{out}^{(k+1,k)} + \xi_{relay} -\lambda^{(k)} u_{k+1} \bigg) \label{eqn:learning_backtracking_given_xio_xir_update_part}
\end{eqnarray}
\end{footnotesize}
$\{a_k\}_{k \geq 1}$ is a decreasing sequence such that $a_k >0 \, \forall \, k \geq 1$, $\sum_k a_k=\infty$ and $\sum_k a_k^2< \infty$. 
One example is $a_k=\frac{1}{k}$.\qed
\end{algorithm}

\begin{theorem}\label{theorem:learning_backtracking_given_xio_xir_general_step_size}
If we employ Algorithm~\ref{algorithm:learning_backtracking_given_xio_xir_general_step_size} in the deployment process, 
we will have $\lambda^{(k)} \rightarrow \lambda^*$ almost surely.
\end{theorem}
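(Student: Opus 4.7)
The plan is to view the recursion in (\ref{eqn:learning_backtracking_given_xio_xir_update_part}) as a standard single-timescale stochastic approximation scheme and invoke the ODE method (e.g., Borkar \cite{borkar08stochastic-approximation-book}, Chapter 2). Define
\begin{equation*}
h(\lambda,\underline{w}) \Define \min_{u,\gamma}\bigl(\gamma + \xi_{out} Q_{out}(u,\gamma,w_u) + \xi_{relay} - \lambda u\bigr),
\end{equation*}
and $\bar{h}(\lambda) \Define \mathbb{E}_{\underline{W}} h(\lambda,\underline{W})$. Since shadowing is i.i.d.\ across links and a new link-vector $\underline{W}^{(k+1)}$ is independent of $\mathcal{F}_k=\sigma(\lambda^{(0)},\underline{W}^{(1)},\dots,\underline{W}^{(k)})$, the recursion rewrites as $\lambda^{(k+1)} = \lambda^{(k)} + a_{k+1}\bigl[\bar{h}(\lambda^{(k)}) + M_{k+1}\bigr]$ where $M_{k+1}:= h(\lambda^{(k)},\underline{W}^{(k+1)}) - \bar{h}(\lambda^{(k)})$ is a martingale difference with respect to $\{\mathcal{F}_k\}$.

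The next step is to establish the structural properties of $\bar{h}$. For each fixed $\underline{w}$, $h(\cdot,\underline{w})$ is the pointwise minimum of the finitely many affine functions $\lambda\mapsto \gamma+\xi_{out}Q_{out}(u,\gamma,w_u)+\xi_{relay}-\lambda u$ with slopes $-u\in\{-(A+1),\dots,-(A+B)\}$; hence $h(\cdot,\underline{w})$ is concave, Lipschitz (with constant at most $A+B$), and strictly decreasing in $\lambda$. These properties are preserved under expectation, so $\bar{h}$ is concave, Lipschitz, and strictly decreasing on $\mathbb{R}$. Theorem~\ref{theorem:optimality-condition-required-for-backtracking-average-cost-learning} gives $\bar{h}(\lambda^*)=0$, and strict monotonicity makes $\lambda^*$ the unique zero. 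Consequently, for the ODE $\dot\lambda = \bar{h}(\lambda)$, $\lambda < \lambda^*$ forces $\dot\lambda>0$ while $\lambda>\lambda^*$ forces $\dot\lambda<0$, so $V(\lambda)=(\lambda-\lambda^*)^2$ is a strict Lyapunov function and $\lambda^*$ is globally asymptotically stable.

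Boundedness of the iterates $\{\lambda^{(k)}\}$ is the main issue to address carefully, especially for log-normal shadowing where $\mathcal{W}$ is unbounded. I would argue as follows. Because $Q_{out}\in[0,1]$ and $\gamma\in[P_1,P_M]$, we have the deterministic envelope $P_1-\lambda(A+B)\le h(\lambda,\underline{w}) \le P_M+\xi_{out}+\xi_{relay}-\lambda(A+1)$, so $h(\lambda,\cdot)$ is uniformly bounded on compact sets of $\lambda$ and grows only linearly in $|\lambda|$. This yields the standard moment bound $\mathbb{E}[M_{k+1}^2\mid\mathcal{F}_k]\le K(1+(\lambda^{(k)})^2)$. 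Moreover, for $\lambda$ larger than $\lambda^+:=(P_M+\xi_{out}+\xi_{relay})/(A+1)$, $h(\lambda,\underline{w})<0$ for every $\underline{w}$, and for $\lambda<0$, $h(\lambda,\underline{w})\ge\xi_{relay}>0$. Combined with $a_k\downarrow 0$, this ``mean-reverting'' property confines the iterates to a compact interval almost surely; equivalently, the scaled limit function $h_\infty(\lambda):=\lim_{r\to\infty}\bar{h}(r\lambda)/r = -\mathbb{E}[u^*(\lambda)]\lambda$ has $0$ as a globally asymptotically stable equilibrium, so the Borkar--Meyn stability criterion (\cite{borkar08stochastic-approximation-book}, Chapter 3) applies and delivers $\sup_k|\lambda^{(k)}|<\infty$ a.s.

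With (i) Lipschitz $\bar{h}$, (ii) the Robbins--Monro step-size conditions on $\{a_k\}$, (iii) the martingale-difference property with conditional variance of linear growth, and (iv) almost-sure boundedness of $\{\lambda^{(k)}\}$ in hand, Theorem~2 of Chapter~2 of \cite{borkar08stochastic-approximation-book} asserts that $\{\lambda^{(k)}\}$ tracks the ODE $\dot\lambda=\bar{h}(\lambda)$ almost surely and converges to its (unique) globally asymptotically stable equilibrium $\lambda^*$. The hard part of the write-up will not be the SA invocation itself but the clean verification of the a.s.\ boundedness step in the log-normal (unbounded state space) case; the envelope bound above, together with the Borkar--Meyn scaled-ODE criterion, is what makes the argument go through uniformly in the distribution of $W$.
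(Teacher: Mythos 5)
Your proposal is correct and follows essentially the same route as the paper's proof: the same rewriting of the recursion as $\lambda^{(k+1)}=\lambda^{(k)}+a_{k+1}(\bar h(\lambda^{(k)})+M_{k+1})$, the same structural facts (the mean field is concave, strictly decreasing, Lipschitz with constant $A+B$, with its unique zero identified as $\lambda^*$ via Theorem~\ref{theorem:optimality-condition-required-for-backtracking-average-cost-learning}), and the same invocation of Theorem~2, Chapter~2 of \cite{borkar08stochastic-approximation-book} after checking the four standard conditions. The only genuine divergence is in how almost-sure boundedness of the iterates is verified: you propose the Borkar--Meyn scaled-ODE criterion, whereas the paper gives a short, self-contained induction showing $\lambda^{(k)}\in[-d,d]$ for $d>P_M+\xi_{out}+\xi_{relay}$, using $\lambda^{(k+1)}\le(1-a_{k+1}u_{k+1})\lambda^{(k)}+a_{k+1}u_{k+1}(P_M+\xi_{out}+\xi_{relay})$ once $a_k(A+B)<1$. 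Both work; the paper's induction is more elementary and sidesteps computing $h_\infty$ (your expression $-\mathbb{E}[u^*(\lambda)]\lambda$ is slightly imprecise --- the scaled limit is $-(A+B)\lambda$ for $\lambda>0$ and $-(A+1)\lambda$ for $\lambda<0$, and the minor claim $h(\lambda,\underline{w})\ge\xi_{relay}>0$ for $\lambda<0$ should read $h\ge P_1>0$ since $\xi_{relay}$ may be zero --- but the conclusion that $0$ is globally asymptotically stable for the scaled ODE, and hence boundedness, is unaffected).
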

\begin{proof}
By Theorem~\ref{theorem:optimality-condition-required-for-backtracking-average-cost-learning}, 
under OptExploreLim, we have 
$\mathbb{E}_{\underline{W}} \min_{u,\gamma} ( \gamma+ \xi_{out} Q_{out} (u,\gamma,W_u)+ \xi_{relay}-\lambda^*u )=0$; 
this   leads to the stochastic approximation update 
in Algorithm~\ref{algorithm:learning_backtracking_given_xio_xir_general_step_size}. The detailed proof can be 
found in Appendix~\ref{appendix:learning_backtracking_given_xio_xir}.
\end{proof}

While Algorithm~\ref{algorithm:learning_backtracking_given_xio_xir_general_step_size} utilizes the general stochastic 
approximation update, Algorithm~\ref{algorithm:learning_backtracking_given_xio_xir_special_step_size} 
ensures that the iterate  $\lambda^{(k)}$  
is the actual average network cost per step up to the $k$-th relay.

\begin{algorithm}\label{algorithm:learning_backtracking_given_xio_xir_special_step_size}
 Start with any $\lambda^{(0)}>0$. Let, for $k \geq 1$, $\lambda^{(k)}$ be the average cost per step 
for the portion of the network already deployed between the sink and the $k$-th relay, i.e., 
\begin{equation*}
 \lambda^{(k)}=\frac{\sum_{i=1}^k (\gamma_i+\xi_{out}Q_{out}^{(i,i-1)}+\xi_{relay})}{\sum_{i=1}^k u_i}
\end{equation*}

Place the $(k+1)$-st relay according to the following policy:

 \begin{footnotesize}
\begin{eqnarray*}
 (u_{k+1},\gamma_{k+1})= \argmin_{u , \gamma } \bigg( \gamma + \xi_{out} Q_{out}(u,\gamma,w_u) + \xi_{relay} -\lambda^{(k)} u \bigg) 
\label{eqn:learning_backtracking_given_xio_xir_decision_part}\qed
\end{eqnarray*}
\end{footnotesize}
\end{algorithm}

\begin{corollary}\label{corollary:learning_backtracking_given_xio_xir_special_step_size}
 
Under Algorithm~\ref{algorithm:learning_backtracking_given_xio_xir_special_step_size} in the deployment process, 
we will have $\lambda^{(k)} \rightarrow \lambda^*$ almost surely.
\end{corollary}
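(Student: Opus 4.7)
The plan is to show that Algorithm~\ref{algorithm:learning_backtracking_given_xio_xir_special_step_size} is in fact a special case of Algorithm~\ref{algorithm:learning_backtracking_given_xio_xir_general_step_size} for a particular (random, state-dependent) choice of step size, and then to invoke Theorem~\ref{theorem:learning_backtracking_given_xio_xir_general_step_size} after verifying that this step size meets the required summability conditions.

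First I would rewrite the running average $\lambda^{(k)} = S_k/T_k$ with $S_k := \sum_{i=1}^{k}(\gamma_i + \xi_{out} Q_{out}^{(i,i-1)} + \xi_{relay})$ and $T_k := \sum_{i=1}^{k} u_i$ in recursive form. A short algebraic manipulation gives
\begin{equation*}
\lambda^{(k+1)} - \lambda^{(k)} = \frac{S_{k+1} T_k - S_k T_{k+1}}{T_k T_{k+1}} = \frac{1}{T_{k+1}}\bigl(\gamma_{k+1} + \xi_{out} Q_{out}^{(k+1,k)} + \xi_{relay} - \lambda^{(k)} u_{k+1}\bigr),
\end{equation*}
so that Algorithm~\ref{algorithm:learning_backtracking_given_xio_xir_special_step_size} coincides exactly with Algorithm~\ref{algorithm:learning_backtracking_given_xio_xir_general_step_size} driven by the step size $a_{k+1} = 1/T_{k+1}$. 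Since the placement rule is the very same $\argmin$ in both algorithms, the sample paths of the two iterates are identical once we make this identification.

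Next I would verify the three hypotheses on $\{a_k\}$. Positivity and monotonicity are immediate since $T_k$ is strictly increasing in $k$. Because $u_i \in \{A+1,\dots,A+B\}$ almost surely, we have the deterministic sandwich $(A+1)k \le T_k \le (A+B)k$, which gives $\frac{1}{(A+B)k} \le a_k \le \frac{1}{(A+1)k}$ pathwise, so that $\sum_k a_k = \infty$ and $\sum_k a_k^2 < \infty$ hold almost surely along every sample path.

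The main obstacle is that $a_{k+1}=1/T_{k+1}$ is random and adapted to the deployment history rather than deterministic, so one must check that the stochastic approximation argument used to prove Theorem~\ref{theorem:learning_backtracking_given_xio_xir_general_step_size} carries through. However, inspecting that argument (it relies only on the mean-zero property $\mathbb{E}_{\underline{W}} f(\underline{W},\lambda^*)=0$ established in Theorem~\ref{theorem:optimality-condition-required-for-backtracking-average-cost-learning}, boundedness of the noise owing to the finite action set $\mathcal{S}$ and bounded $u \in \{A+1,\dots,A+B\}$, and the step-size conditions) one sees that the same ODE/martingale argument goes through verbatim as long as $a_k$ is $\mathcal{F}_k$-measurable and the deterministic envelope conditions above hold; $T_{k+1}$ is measurable with respect to the history up to the $(k+1)$-st placement and bounded in the required way, so standard results from \cite{borkar08stochastic-approximation-book} apply and we conclude $\lambda^{(k)} \rightarrow \lambda^*$ almost surely.
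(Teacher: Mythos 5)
Your proposal follows essentially the same route as the paper: both identify Algorithm~\ref{algorithm:learning_backtracking_given_xio_xir_special_step_size} as the instance of Algorithm~\ref{algorithm:learning_backtracking_given_xio_xir_general_step_size} with the random step size $a_k = 1/\sum_{i=1}^k u_i$, verify $\sum_k a_k = \infty$ and $\sum_k a_k^2 < \infty$ almost surely via the bounds $(A+1)k \le \sum_{i=1}^k u_i \le (A+B)k$, and then appeal to Theorem~\ref{theorem:learning_backtracking_given_xio_xir_general_step_size}. Your version is, if anything, slightly more careful than the paper's, since you explicitly flag (and argue around) the fact that this step size is random and history-dependent rather than deterministic, a point the paper's proof passes over silently.
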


\begin{proof}
 See Appendix~\ref{appendix:learning_backtracking_given_xio_xir}.
\end{proof}

\vspace{-5mm}
\section{OptExploreLimAdaptiveLearning with Constraints on Outage Probability and Relay Placement Rate}
\label{section:learning_backtracking_adaptive_with_outage_cost}
\vspace{-2mm}
 In Section~\ref{section:learning_backtracking_given_xio_xir}, 
we provided a stochastic approximation algorithm for relay deployment, 
with given multipliers $\xi_{out}$ and $\xi_{relay}$, without knowledge of the propagation 
parameters. 
Let us recall that Theorem~\ref{theorem:how-to-choose-optimal-Lagrange-multiplier} tells us how to choose the 
Lagrange multipliers $\xi_{out}$ and $\xi_{relay}$ (if they exist) 
in (\ref{eqn:unconstrained_problem_average_cost_with_outage_cost}) 
in order to solve the problem given in (\ref{eqn:constrained_problem_average_cost_with_outage_cost}). 
However, we need to know the radio propagation parameters (e.g., $\eta$ and $\sigma$) in order to 
compute an optimal pair $(\xi_{out}^*, \xi_{relay}^*)$ (if it exists) so that both constraints in 
(\ref{eqn:constrained_problem_average_cost_with_outage_cost}) are met with equality. In real deployment scenarios, 
these propagation parameters might not be known.  
Hence, in this section, we provide a sequential placement and learning algorithm such that, as the relays are placed, 
the placement policy iteratively converges to the set of optimal policies for the constrained problem displayed 
in (\ref{eqn:constrained_problem_average_cost_with_outage_cost}). 
The policy is of the OptExploreLim type, and the cost of the deployed network converges 
to the optimal cost. We modify the 
OptExploreLimLearning algorithm so that a running estimate $(\lambda^{(k)},\xi_{out}^{(k)},\xi_{relay}^{(k)})$ gets 
updated each time a new relay is placed. 
The objective is to make sure that the running estimate $(\lambda^{(k)},\xi_{out}^{(k)},\xi_{relay}^{(k)})$ eventually converges to 
the set of optimal $(\lambda^*(\xi_{out},\xi_{relay}),\xi_{out},\xi_{relay})$ tuples as the deployment progresses. 
Our approach is via two time-scale stochastic approximation (see \cite[Chapter~$6$]{borkar08stochastic-approximation-book}).

 \vspace{-4mm}
\subsection{OptExploreLim: Effect of Multipliers $\xi_{out}$ and $\xi_{relay}$}
 \vspace{-2mm}
Consider the constrained problem in (\ref{eqn:constrained_problem_average_cost_with_outage_cost}) and its relaxed 
version in (\ref{eqn:unconstrained_problem_average_cost_with_outage_cost}). 
We will seek a policy for the problem in (\ref{eqn:constrained_problem_average_cost_with_outage_cost}) 
in the class of OptExploreLim policies (see (\ref{eqn:smdp-optimal-policy})).  
Clearly, there exists at least one tuple 
$(\overline{q},\overline{N})$ for which there exists a pair $\xi_{out}^*>0, \xi_{relay}^*>0$ such that, under 
the optimal policy $\pi^*(\xi_{out}^*, \xi_{relay}^*)$, both constraints are met with equality.  
In order to see this, choose any $\xi_{out}>0,\xi_{relay}>0$ 
and consider the corresponding optimal policy $\pi^*(\xi_{out}, \xi_{relay})$ (provided by OptExploreLim). 
Suppose that the mean outage per step and mean number of relays per step, under the policy 
$\pi^*(\xi_{out}, \xi_{relay})$, are $q_0$ and $n_0$, respectively. Now, if we set the constraints 
$\overline{q}=q_0$ and $\overline{N}=n_0$ in 
(\ref{eqn:constrained_problem_average_cost_with_outage_cost}), 
we obtain one instance of such a tuple $(\overline{q},\overline{N})$. 

On the other hand, there exist $(\overline{q},\overline{N})$ pairs which are not feasible. One 
example is the case $\overline{N}=\frac{1}{A+B}$ (i.e., inter-node distance is always $(A+B)$), 
along with $\overline{q}< \frac{\mathbb{E}_W Q_{out}(A+B,P_M,W)}{A+B}$, where $P_M$ is the 
maximum available transmit power level at each node. In this case, the outage constraint cannot be satisfied 
while meeting the constraint on the mean number of relays per step, since 
even use of the highest transmit power $P_M$ at each node will not satisfy the per-step outage constraint.

\begin{definition}
Let us denote the optimal mean power per step for problem 
(\ref{eqn:constrained_problem_average_cost_with_outage_cost}) by $\gamma^*$, for a given 
$(\overline{q},\overline{N})$. The set 
$\mathcal{K}(\overline{q},\overline{N})$ is defined as follows:

\footnotesize
\begin{eqnarray*}
&& \mathcal{K}(\overline{q},\overline{N}) := \bigg\{(\lambda^*(\xi_{out},\xi_{relay}),\xi_{out},\xi_{relay}): \\
&& \frac{\overline{\Gamma}^*(\xi_{out},\xi_{relay})}{\overline{U}^*(\xi_{out},\xi_{relay})}=\gamma^* , 
\frac{ \overline{Q}_{out}^*(\xi_{out},\xi_{relay}) }{ \overline{U}^*(\xi_{out},\xi_{relay}) } \leq \overline{q} \\
&& \frac{1}{ \overline{U}^*(\xi_{out},\xi_{relay})} \leq \overline{N}, 
\xi_{out} \geq 0, \xi_{relay} \geq 0 \bigg\}
\end{eqnarray*}
\normalsize

where  the optimal average cost per step of the 
unconstrained problem (\ref{eqn:unconstrained_problem_average_cost_with_outage_cost}) under 
OptExploreLim is $\lambda^*(\xi_{out},\xi_{relay})$.\qed 
\end{definition}

$\mathcal{K}(\overline{q},\overline{N})$ can possibly be empty (in case 
$(\overline{q},\overline{N})$ is not a feasible pair). 
{\em Hence, we make the following assumption which ensures the non-emptiness of $\mathcal{K}(\overline{q},\overline{N})$.} 

\begin{assumption}\label{assumption:existence_of_xio_xir}
 The constraint parameters $\overline{q}$ and $\overline{N}$ in 
(\ref{eqn:constrained_problem_average_cost_with_outage_cost}) are such that there exists 
at least one pair $ \xi_{out}^* \geq 0, \xi_{relay}^* \geq 0$ for which  
$(\lambda^*(\xi_{out}^*,\xi_{relay}^*),\xi_{out}^*,\xi_{relay}^*) \in \mathcal{K}(\overline{q},\overline{N})$.\qed
\end{assumption}

{\em Remark:} Assumption~\ref{assumption:existence_of_xio_xir} implies that the constraints are consistent 
(in terms of achievability). If $\xi_{out}^*>0, \xi_{relay}^*>0$, it would imply that both of the constraints   
are active. If $\xi_{out}^*=0$, it would imply that we can keep the mean outage per step strictly 
less than $\overline{q}$ by using the minimum available power at each node, while meeting the constraint on the 
relay placement rate. The optimal policy in Algorithm~\ref{algorithm:policy_structure_smdp_backtracking},  
under $\xi_{out}=0$, will place relays with inter-relay distance 
$(A+B)$ steps, and use the minimum available power level at each node. 
$\xi_{out}^*=\infty$ implies that the outage constraint 
cannot be met even with the highest power level at each node, under the relay placement rate constraint. Similar 
arguments apply to $\xi_{relay}^*$. \qed

We now establish some structural properties of $\mathcal{K}(\overline{q},\overline{N})$.

\begin{theorem}\label{theorem:structure_of_mathcal_K_q_n}
 If $\mathcal{K}(\overline{q},\overline{N})$ is non-empty, then:
 \begin{itemize}
  \item Suppose that there exists $\xi_{out}^*>0$, $\xi_{relay}^*>0$ such that the policy 
  $\pi^*(\xi_{out}^*,\xi_{relay}^*)$ satisfies both constraints in 
  (\ref{eqn:constrained_problem_average_cost_with_outage_cost}) with equality. Then, there does not 
  exist $\xi_{out}'\geq 0,\xi_{relay}' \geq 0$ satisfying (i) 
  $(\lambda^*(\xi_{out}',\xi_{relay}'),\xi_{out}',\xi_{relay}') \in \mathcal{K}(\overline{q},\overline{N})$,  
  and (ii) $\frac{\overline{Q}_{out}^*(\xi_{out}',\xi_{relay}')}{\overline{U}^*(\xi_{out}',\xi_{relay}')} < \overline{q}$ or 
  $\frac{1}{\overline{U}^*(\xi_{out}',\xi_{relay}')} < \overline{N}$.
  \item If there exists a $\xi_{relay}' \geq 0$ such that 
  $(\lambda^*(0,\xi_{relay}'),0,\xi_{relay}') \in \mathcal{K}(\overline{q},\overline{N})$, then, $\forall \xi_{relay} \geq 0$, 
  we have $(\lambda^*(0,\xi_{relay}),0,\xi_{relay}) \in \mathcal{K}(\overline{q},\overline{N})$.\qed
 \end{itemize}
\end{theorem}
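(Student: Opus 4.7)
\textbf{Proof proposal for Theorem~\ref{theorem:structure_of_mathcal_K_q_n}.}

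The plan is to prove the two bullets separately, the first via the zero-mean optimality characterization in Theorem~\ref{theorem:optimality-condition-required-for-backtracking-average-cost-learning} and the second by exploiting the degenerate form of the OptExploreLim rule when $\xi_{out}=0$.

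For the first bullet, write $\pi^* = \pi^*(\xi_{out}^*,\xi_{relay}^*)$ with $\xi_{out}^* > 0, \xi_{relay}^* > 0$, and let $\lambda^* := \lambda^*(\xi_{out}^*,\xi_{relay}^*)$. Membership $(\lambda^*,\xi_{out}^*,\xi_{relay}^*) \in \mathcal{K}(\overline{q},\overline{N})$ combined with tightness of both constraints under $\pi^*$ gives, via the Renewal Reward Theorem, the identity $\lambda^* = \gamma^* + \xi_{out}^*\overline{q} + \xi_{relay}^*\overline{N}$. Now suppose for contradiction that some $(\xi_{out}', \xi_{relay}') \ge 0$ yields $\pi' = \pi^*(\xi_{out}', \xi_{relay}')$ with $(\lambda^*(\xi_{out}', \xi_{relay}'),\xi_{out}',\xi_{relay}') \in \mathcal{K}(\overline{q},\overline{N})$ but with at least one of the two constraints in (\ref{eqn:constrained_problem_average_cost_with_outage_cost}) strict under $\pi'$. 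Theorem~\ref{theorem:optimality-condition-required-for-backtracking-average-cost-learning} says $\mathbb{E}_{\underline{W}} \min_{u,\gamma}(\gamma + \xi_{out}^* Q_{out}(u,\gamma,W_u) + \xi_{relay}^* - \lambda^* u) = 0$, hence evaluating the expression under the (generally suboptimal for these multipliers) action rule of $\pi'$ and averaging over shadowing yields
\begin{equation*}
\overline{\Gamma}^*(\xi_{out}',\xi_{relay}') + \xi_{out}^* \overline{Q}_{out}^*(\xi_{out}',\xi_{relay}') + \xi_{relay}^* - \lambda^* \overline{U}^*(\xi_{out}',\xi_{relay}') \ge 0.
\end{equation*}
Dividing through by $\overline{U}^*(\xi_{out}',\xi_{relay}')$, using $\overline{\Gamma}^*(\xi_{out}',\xi_{relay}')/\overline{U}^*(\xi_{out}',\xi_{relay}') = \gamma^*$ (from $\pi' \in \mathcal{K}$), and substituting the identity for $\lambda^*$, the inequality collapses to
\begin{equation*}
\xi_{out}^* \Big(\tfrac{\overline{Q}_{out}^*(\xi_{out}',\xi_{relay}')}{\overline{U}^*(\xi_{out}',\xi_{relay}')} - \overline{q}\Big) + \xi_{relay}^* \Big(\tfrac{1}{\overline{U}^*(\xi_{out}',\xi_{relay}')} - \overline{N}\Big) \ge 0.
\end{equation*}
Both parenthesized terms are nonpositive because $\pi' \in \mathcal{K}$, and $\xi_{out}^*, \xi_{relay}^* > 0$; hence the sum is also $\le 0$, forcing both terms to vanish and contradicting the hypothesized strict inequality.

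For the second bullet, observe that setting $\xi_{out}=0$ reduces the OptExploreLim rule in (\ref{eqn:smdp-optimal-policy}) to $\mu^*(\underline{w}) = \argmin_{u,\gamma}(\gamma - \lambda^*(0,\xi_{relay}) u)$, which is \emph{independent} of $\underline{w}$. The renewal-reward consistency equation then gives $\lambda^*(0,\xi_{relay}) = (P_1 + \xi_{relay})/(A+B) > 0$, so the unique minimizer is $\gamma = P_1$, $u = A+B$ for every $\xi_{relay} \ge 0$. Consequently $\overline{\Gamma}^*(0,\xi_{relay})/\overline{U}^*(0,\xi_{relay})$, $\overline{Q}_{out}^*(0,\xi_{relay})/\overline{U}^*(0,\xi_{relay})$, and $1/\overline{U}^*(0,\xi_{relay})$ do not depend on $\xi_{relay}$. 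Therefore membership of the triple in $\mathcal{K}(\overline{q},\overline{N})$ for one $\xi_{relay}'$ at once implies membership for every $\xi_{relay} \ge 0$.

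The main obstacle will be the careful renewal-reward translation in the first bullet: Theorem~\ref{theorem:optimality-condition-required-for-backtracking-average-cost-learning} is stated at the hop level as an expectation of a pointwise minimum, and one must verify that plugging in an arbitrary stationary policy $\pi'$ (with $\underline{w}$-dependent actions) produces an inequality whose terms are \emph{precisely} the mean per-link quantities $\overline{\Gamma}^*(\xi_{out}',\xi_{relay}')$, $\overline{Q}_{out}^*(\xi_{out}',\xi_{relay}')$, $\overline{U}^*(\xi_{out}',\xi_{relay}')$ entering the definition of $\mathcal{K}(\overline{q},\overline{N})$. This hinges on regeneration of the deployment process at placement points together with the i.i.d.\ structure of shadowing across consecutive link-blocks, both of which are built into the SMDP formulation of Section~\ref{subsection:smdp_formulation_sum_power_sum_outage_backtracking}.
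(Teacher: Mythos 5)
Your proof is correct, but your argument for the first bullet takes a genuinely different route from the paper's. The paper argues by exchange: it sets up an auxiliary constrained problem (minimize mean outage per step subject to the optimal power per step and to $\overline{N}$), invokes Theorem~\ref{theorem:how-to-choose-optimal-Lagrange-multiplier} to certify $\pi^*(\xi_{out}^*,\xi_{relay}^*)$ as optimal for that auxiliary problem, and then exhibits $\pi^*(\xi_{out}',\xi_{relay}')$ as a strictly better feasible policy --- a contradiction (and symmetrically for the relay-rate case). You instead work directly from the average-cost optimality characterization of Theorem~\ref{theorem:optimality-condition-required-for-backtracking-average-cost-learning}: evaluating the hop-level expression at the actions of $\pi^*(\xi_{out}',\xi_{relay}')$ and dividing by $\overline{U}^*(\xi_{out}',\xi_{relay}')$ collapses, after substituting $\lambda^*=\gamma^*+\xi_{out}^*\overline{q}+\xi_{relay}^*\overline{N}$, to a single signed sum of constraint slacks that must vanish term by term since $\xi_{out}^*,\xi_{relay}^*>0$. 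Your version is arguably cleaner --- it handles both constraints simultaneously, makes the role of strict positivity of the multipliers explicit, and needs only the renewal-reward identity plus the fact that any stationary policy's relative cost is nonnegative (which is really just $\lambda_{\pi'}\geq\lambda^*$, so even Theorem~\ref{theorem:optimality-condition-required-for-backtracking-average-cost-learning} is more than you need); the paper's version is more modular in that it reuses the Lagrangian sufficiency theorem as a black box. Your second bullet coincides with the paper's one-line argument ($\pi^*(0,\xi_{relay})$ always places at distance $A+B$ with power $P_1$, independently of $\xi_{relay}$), though note the logic there is slightly cleaner if you first observe $\lambda^*(0,\xi_{relay})>0$ (all single-stage costs are positive), conclude the minimizer is $(A+B,P_1)$, and only then compute $\lambda^*$ by renewal-reward, rather than invoking the renewal-reward value before identifying the policy.
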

\begin{proof}
 See Appendix~\ref{appendix:learning_backtracking_adaptive_with_outage_cost}, 
 Section~\ref{subsection:proof_of_structure_of_mathcal_K_q_n}.
\end{proof}

\begin{assumption}\label{assumption:shadowing_continuous_random_variable}
 The shadowing random variable $W$ has a continuous probability density 
function (p.d.f.) over $(0,\infty)$; for any $w \in (0,\infty)$, 
$\mathbb{P}(W=w)=0$. 
One example could be log-normal shadowing.\qed
\end{assumption}

\begin{theorem}\label{theorem:placement_rate_mean_outage_per_step_continuous_in_xio_and_xir}
Suppose that Assumption~\ref{assumption:shadowing_continuous_random_variable} holds. Under the 
OptExploreLim algorithm,  
the optimal mean power per step $\frac{\overline{\Gamma}^*(\xi_{out},\xi_{relay})}{\overline{U}^*(\xi_{out},\xi_{relay})}$, 
the optimal mean number of relays per step $\frac{1}{\overline{U}^*(\xi_{out},\xi_{relay})}$ 
and the optimal mean outage per step $\frac{\overline{Q}_{out}^*(\xi_{out},\xi_{relay})}{\overline{U}^*(\xi_{out},\xi_{relay})}$, 
are continuous in $\xi_{out}$ and $\xi_{relay}$.
\end{theorem}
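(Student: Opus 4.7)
The plan is to reduce the continuity of the three ratio-quantities to the continuity in $(\xi_{out},\xi_{relay})$ of the three numerators $\overline{\Gamma}^*$, $\overline{Q}_{out}^*$, and $\overline{U}^*$ (each of which is an expectation under the law of $\underline{W}$), together with the already established fact that $\overline{U}^*\ge A+1>0$. Since the action set $\{A+1,\ldots,A+B\}\times\mathcal{S}$ is finite and uniformly bounded, the Dominated Convergence Theorem reduces the task to showing that, for almost every realization of $\underline{W}$, the optimal action under OptExploreLim is itself continuous in $(\xi_{out},\xi_{relay})$.

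Fix $(\xi_{out},\xi_{relay})$ and take $(\xi_{out}^n,\xi_{relay}^n)\to(\xi_{out},\xi_{relay})$. First I would invoke Theorem~\ref{theorem:smdp-cost-vs-xi} to obtain $\lambda^*(\xi_{out}^n,\xi_{relay}^n)\to\lambda^*(\xi_{out},\xi_{relay})$. Define the hop-score
\[
F(\underline{w};u,\gamma;\xi_{out},\xi_{relay})\;=\;\gamma+\xi_{out}Q_{out}(u,\gamma,w_u)+\xi_{relay}-\lambda^*(\xi_{out},\xi_{relay})\,u.
\]
For each $(u,\gamma)$, $F$ is continuous in $(\xi_{out},\xi_{relay})$ (using continuity of $\lambda^*$). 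Because there are only finitely many actions, if the minimizer $(u^*,\gamma^*)(\underline{w};\xi_{out},\xi_{relay})$ is \emph{strict} at the limiting parameters, then for all $n$ large enough the strict ordering persists, so the minimizer at $(\xi_{out}^n,\xi_{relay}^n)$ coincides with the minimizer at $(\xi_{out},\xi_{relay})$. Thus pointwise convergence of the optimal action (and therefore of $u^*$, $\gamma^*$, and $Q_{out}(u^*,\gamma^*,w_{u^*})$) holds on the set $\mathcal{U}(\xi_{out},\xi_{relay})$ of shadowing vectors at which the minimizer is strict.

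The key step, and the main obstacle, is showing $\mathbb{P}(\underline{W}\in\mathcal{U}(\xi_{out},\xi_{relay}))=1$. A tie between two distinct actions $(u_1,\gamma_1)\ne(u_2,\gamma_2)$ is the event
\[
\xi_{out}\bigl[Q_{out}(u_1,\gamma_1,W_{u_1})-Q_{out}(u_2,\gamma_2,W_{u_2})\bigr]
=(\gamma_2-\gamma_1)-\lambda^*(\xi_{out},\xi_{relay})(u_1-u_2).
\]
If $u_1=u_2$, this is a single equation in the single continuous variable $W_{u_1}$; since $Q_{out}(u,\gamma,\cdot)$ is strictly monotone (from~\eqref{eqn:channel_model}) and $\gamma_1\ne\gamma_2$, the map $w\mapsto Q_{out}(u,\gamma_1,w)-Q_{out}(u,\gamma_2,w)$ is strictly monotone, so the solution set is at most a singleton and thus has probability zero by Assumption~\ref{assumption:shadowing_continuous_random_variable}. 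If $u_1\ne u_2$, the equation involves only $W_{u_1}$ and $W_{u_2}$, which are independent and continuously distributed, so conditioning on $W_{u_2}$ reduces it to a single equation in one continuous variable, again of probability zero. Taking a union over the finitely many pairs of actions shows that ties occur with probability $0$, hence $\mathbb{P}(\underline{W}\in\mathcal{U}(\xi_{out},\xi_{relay}))=1$.

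Combining the almost-sure pointwise convergence of the optimal actions with the uniform bound on the hop quantities (the transmit powers lie in the finite set $\mathcal{S}$, the hop lengths in $\{A+1,\ldots,A+B\}$, and the outage in $[0,1]$), the Dominated Convergence Theorem yields $\overline{\Gamma}^*(\xi_{out}^n,\xi_{relay}^n)\to\overline{\Gamma}^*(\xi_{out},\xi_{relay})$, and likewise for $\overline{Q}_{out}^*$ and $\overline{U}^*$. Since $\overline{U}^*\ge A+1>0$, the three ratios in the statement are continuous in $(\xi_{out},\xi_{relay})$, completing the argument.
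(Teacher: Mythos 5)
Your proposal is correct and follows essentially the same route as the paper: both arguments reduce the problem to almost-sure convergence of the optimal action as $(\xi_{out},\xi_{relay})$ varies, which hinges on (a) continuity of $\lambda^*$ from Theorem~\ref{theorem:smdp-cost-vs-xi} so that strict orderings among the finitely many hop-scores persist under perturbation, (b) the tie events having probability zero under Assumption~\ref{assumption:shadowing_continuous_random_variable}, and (c) the Dominated Convergence Theorem. The only routing difference is minor: the paper establishes continuity of the joint law $g(r,\gamma)$ of $(U,\Gamma)$ and then extracts continuity of $\overline{Q}_{out}^*$ \emph{indirectly} from the renewal-reward identity $\lambda^*=(\overline{\Gamma}^*+\xi_{out}\overline{Q}_{out}^*+\xi_{relay})/\overline{U}^*$ together with continuity of $\lambda^*$, whereas you obtain it directly from the a.s.\ convergence of $Q_{out}(u^*,\gamma^*,w_{u^*})$; your version is, if anything, more self-contained.

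One sub-claim is inaccurate, though not fatally so: for the equal-length tie ($u_1=u_2=u$, $\gamma_1\ne\gamma_2$) you assert that $w\mapsto Q_{out}(u,\gamma_1,w)-Q_{out}(u,\gamma_2,w)$ is strictly monotone. Under the Rayleigh model this difference is $e^{-a/w}-e^{-b/w}$ with $a<b$, which vanishes at both ends of $(0,\infty)$ and is positive in between, so it is not monotone. The probability-zero conclusion still holds (the function is real-analytic and non-constant, so its level sets are discrete), but the justification should be repaired; note that the paper itself asserts this step with only a loose appeal to continuity of $Q_{out}$ in $w$, so neither argument is fully airtight at exactly this point.
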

\begin{proof}
  See Appendix~\ref{appendix:learning_backtracking_adaptive_with_outage_cost}, 
  Section~\ref{subsection:proof_of_placement_rate_mean_outage_per_step_continuous_in_xio_and_xir}.
\end{proof}

{\em Remark:} Note that, by Theorem~\ref{theorem:placement_rate_mean_outage_per_step_continuous_in_xio_and_xir}, 
we need not do any randomization (see \cite{ma-makowski88steering-policies-recurrence-condition} for 
reference) among deterministic policies in order to meet the constraints with equality.

\vspace{-1mm}
\subsection{OptExploreLimAdaptiveLearning Algorithm}\label{subsection:learning_backtracking_adaptive_with_outage_cost_algorithm}
\vspace{-1mm}

\begin{algorithm}\label{algorithm:learning_backtracking_adaptive_with_outage_cost_algorithm}
  This algorithm iteratively updates $\lambda^{(k)}, \xi_{out}^{(k)}, \xi_{relay}^{(k)}$ after each relay is 
placed. Let $(\lambda^{(k)}, \xi_{out}^{(k)}, \xi_{relay}^{(k)})$ be the iterates  
after placing the $k$-th relay (the sink is called node~$0$), and 
let $(\lambda^{(0)}, \xi_{out}^{(0)}, \xi_{relay}^{(0)})$ be the initial estimates. 
In the process of deploying the $k$-th relay, if the 
shadowing (which is measured indirectly only via $Q_{out}(u,\gamma,w_u)$ for 
$A+1 \leq u \leq A+B$ and $\gamma \in \mathcal{S}$) is $\underline{w}=\{w_{A+1},\cdots,w_{A+B}\}$, 
then place the $k$-th relay according to the following policy:

\vspace{-2mm}

\begin{footnotesize}
\begin{eqnarray}
(u_k,\gamma_k)= \argmin_{u, \gamma} \bigg( \gamma + \xi_{out}^{(k-1)} Q_{out}(u,\gamma,w_u) + \xi_{relay}^{(k-1)} 
-\lambda^{(k-1)} u \bigg)\label{eqn:learning_backtracking_adaptive_with_outage_cost_decision_part}
\end{eqnarray}
\end{footnotesize}

\vspace{-2mm}

After placing the $k$-th relay, let us denote the transmit power, distance (in steps) and outage probability   
from relay $k$ to relay $(k-1)$ by $\gamma_k$, $u_k$ and $Q_{out}(u_k,\gamma_k, w_{u_k})$. 
After placing the $k$-th relay, make the following updates (using the measurements made in the 
process of placing the $k$-th relay): 

\vspace{-2mm}

\begin{footnotesize}
\begin{eqnarray}
 \lambda^{(k)} &=&   \lambda^{(k-1)}+ a_k \min_{u, \gamma } \bigg( \gamma + \xi_{out}^{(k-1)} Q_{out}(u,\gamma,w_u)  \nonumber\\
&& + \xi_{relay}^{(k-1)}-\lambda^{(k-1)} u \bigg) \nonumber\\
\xi_{out}^{(k)}&=& \Lambda_{[0,A_2]}\bigg(\xi_{out}^{(k-1)}+ b_k (Q_{out}(u_k,\gamma_k, w_{u_k})-\overline{q}u_k) \bigg)\nonumber\\
\xi_{relay}^{(k)}&=& \Lambda_{[0,A_3]}\bigg(\xi_{relay}^{(k-1)} + b_k (1-\overline{N}u_k) \bigg)
\label{eqn:learning_backtracking_adaptive_with_outage_cost_update_part}
\end{eqnarray}
\end{footnotesize}

\vspace{-2mm}

where $\Lambda_{[0,A_2]}(x)$ denotes the projection of $x$ on the interval $[0,A_2]$. 
$A_2$ and $A_3$ need to be chosen carefully; the reason is explained in the discussion later in this section 
(along with a brief discussion on how $A_2$ and $A_3$ have to be chosen).

$\{a_k\}_{k \geq 1}$ and $\{b_k\}_{k \geq 1}$ are two decreasing sequences such that $a_k, b_k >0, \forall k \geq 1$, 
$\sum_k a_k=\infty$, $\sum_k a_k^2< \infty$, $\sum_k b_k=\infty$, $\sum_k b_k^2< \infty$ 
and $\lim_{k \rightarrow \infty}\frac{b_k}{a_k}=0$. In particular, we can use $a_k=C_1 k^{-n_1}$ and 
$b_k=C_2 k^{-n_2}$ where $C_1>0$, $C_2>0$, $\frac{1}{2} <n_1 < n_2 \leq 1$.\qed
\end{algorithm}

Note that, for $(\xi_{out},\xi_{relay}) \in [0,A_2] \times [0,A_3]$, we 
have $0 < \lambda^*(\xi_{out},\xi_{relay}) \leq (P_M+A_2+A_3) $. Let us define the set 
$\hat{\mathcal{K}}(\overline{q},\overline{N}):=\mathcal{K}(\overline{q},\overline{N}) \cap ([0, (P_M+A_2+A_3)] \times [0,A_2] \times [0,A_3])$ which 
is a subset of $\mathcal{K}(\overline{q},\overline{N})$.

\begin{theorem}\label{theorem:convergence_learning_backtracking_adaptive_with_outage_cost_algorithm}
 Under Assumption~\ref{assumption:existence_of_xio_xir}, Assumption~\ref{assumption:shadowing_continuous_random_variable}   
 and under proper choice of 
$A_2$ and $A_3$, 
the iterates $(\lambda^{(k)}, \xi_{out}^{(k)}, \xi_{relay}^{(k)})$ in 
Algorithm~\ref{algorithm:learning_backtracking_adaptive_with_outage_cost_algorithm} converge almost surely to 
$\hat{\mathcal{K}}(\overline{q},\overline{N})$ as $k \rightarrow \infty$.
\end{theorem}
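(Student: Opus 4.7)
The plan is to invoke the two-timescale stochastic approximation framework of Borkar (Chapter 6 of \cite{borkar08stochastic-approximation-book}), with $\lambda^{(k)}$ on the fast timescale (step sizes $a_k$) and $(\xi_{out}^{(k)},\xi_{relay}^{(k)})$ on the slow timescale (step sizes $b_k$, with $b_k/a_k\to 0$). Summability and square-summability of both sequences are given. First I would handle the fast timescale: treating $(\xi_{out},\xi_{relay})$ as quasi-static, the conditional mean of the $\lambda$-increment is
\begin{equation*}
h(\lambda;\xi_{out},\xi_{relay}) := \mathbb{E}_{\underline{W}}\Bigl[\min_{u,\gamma}\bigl(\gamma+\xi_{out}Q_{out}(u,\gamma,W_u)+\xi_{relay}-\lambda u\bigr)\Bigr].
\end{equation*}
This is continuous, strictly decreasing in $\lambda$ (because $u \geq A{+}1 > 0$), and by Theorem~\ref{theorem:optimality-condition-required-for-backtracking-average-cost-learning} it has its unique zero at $\lambda^*(\xi_{out},\xi_{relay})$. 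Hence the frozen-parameter ODE $\dot{\lambda}=h(\lambda;\xi_{out},\xi_{relay})$ has $\lambda^*(\xi_{out},\xi_{relay})$ as a globally asymptotically stable equilibrium, uniformly over $(\xi_{out},\xi_{relay})\in[0,A_2]\times[0,A_3]$. Boundedness of $\lambda^{(k)}$ follows from the uniform boundedness of the min-term (the transmit powers lie in $\mathcal{S}$, outage probabilities in $[0,1]$, and $u\in\{A{+}1,\dots,A{+}B\}$). Standard two-timescale tracking then gives $|\lambda^{(k)}-\lambda^*(\xi_{out}^{(k)},\xi_{relay}^{(k)})|\to 0$ almost surely.

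Second, on the slow timescale I would substitute $\lambda^{(k)}\approx\lambda^*(\xi^{(k)})$ into the $\xi$-recursions, which are projected stochastic approximations. When the system is in the stationary regime for the frozen $\xi$, the per-relay increment directions have conditional means $\overline{Q}_{out}^*(\xi)-\overline{q}\,\overline{U}^*(\xi)$ and $1-\overline{N}\,\overline{U}^*(\xi)$, both continuous in $\xi$ by Theorem~\ref{theorem:placement_rate_mean_outage_per_step_continuous_in_xio_and_xir} (this is precisely why Assumption~\ref{assumption:shadowing_continuous_random_variable} is invoked: it guarantees unique minimizers almost surely and prevents jump discontinuities at ties). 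The associated projected ODE is
\begin{equation*}
\dot{\xi}_{out}=\overline{Q}_{out}^*(\xi)-\overline{q}\,\overline{U}^*(\xi)+z_{out},\qquad \dot{\xi}_{relay}=1-\overline{N}\,\overline{U}^*(\xi)+z_{relay},
\end{equation*}
with $z_{out},z_{relay}$ the reflection terms that keep the trajectory in $[0,A_2]\times[0,A_3]$. The key observation is that this is a positively-scaled projected gradient ascent on the concave dual $D(\xi):=\lambda^*(\xi)-\xi_{out}\overline{q}-\xi_{relay}\overline{N}$ (concavity is by Theorem~\ref{theorem:smdp-cost-vs-xi}): by the envelope theorem applied to the SMDP value, the subgradients of $\lambda^*$ at $\xi$ equal $\overline{Q}_{out}^*(\xi)/\overline{U}^*(\xi)$ and $1/\overline{U}^*(\xi)$, and the per-relay means differ from the subgradients of $D$ only by the bounded positive factor $\overline{U}^*(\xi)\in[A{+}1,A{+}B]$, which does not affect the qualitative convergence of the projected gradient flow. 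Using $-D$ as Lyapunov function, the projected ODE has the set of constrained maximizers of $D$ on $[0,A_2]\times[0,A_3]$ as its global attractor.

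Third, I would identify this attractor with $\hat{\mathcal{K}}(\overline{q},\overline{N})$. The KKT conditions for maximizing $D$ on the box, combined with complementary slackness, translate into: at an interior optimum both constraints in (\ref{eqn:constrained_problem_average_cost_with_outage_cost}) are met with equality; at a $\xi_{out}=0$ boundary equilibrium the outage constraint is inactive (which is precisely the scenario captured by Theorem~\ref{theorem:structure_of_mathcal_K_q_n}); and similarly for $\xi_{relay}=0$. Thus the attractor coincides with $\hat{\mathcal{K}}(\overline{q},\overline{N})$. Assumption~\ref{assumption:existence_of_xio_xir} guarantees this set is non-empty, and the promised construction of $A_2,A_3$ in the appendix must make them large enough to contain at least one optimal pair in their interior while respecting the projection bound. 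Concatenating fast- and slow-timescale convergence via Borkar's two-timescale theorem yields $(\lambda^{(k)},\xi_{out}^{(k)},\xi_{relay}^{(k)})\to\hat{\mathcal{K}}(\overline{q},\overline{N})$ a.s.

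The main obstacle I anticipate is the slow-timescale convergence when the attractor is a non-singleton or touches the boundary of $[0,A_2]\times[0,A_3]$. The Lyapunov argument shows $D(\xi^{(k)})$ converges, but establishing that $\xi^{(k)}$ itself approaches the KKT set (rather than oscillating along level curves of $D$) requires a LaSalle-type invariance argument for the projected flow, together with Theorem~\ref{theorem:structure_of_mathcal_K_q_n} to rule out spurious equilibria on the boundary. A secondary difficulty is the careful sizing of $A_2,A_3$: too small and $\hat{\mathcal{K}}(\overline{q},\overline{N})$ may be empty or inaccessible; the appendix construction must explicitly leverage Assumption~\ref{assumption:existence_of_xio_xir} and monotone bounds on $\overline{Q}_{out}^*,\overline{U}^*$ to exhibit a valid choice.
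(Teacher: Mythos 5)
Your proposal follows essentially the same route as the paper: fast-timescale tracking of $\lambda^*(\xi_{out},\xi_{relay})$ via Borkar's two-timescale theory, followed by treating the slow-timescale $(\xi_{out},\xi_{relay})$ recursion as a projected stochastic approximation whose mean drift is a positive multiple ($\overline{U}^*$) of the gradient of the concave dual $D(\xi)=\lambda^*(\xi)-\xi_{out}\overline{q}-\xi_{relay}\overline{N}$, which is exactly the function $G$ the paper constructs, with the stationary points then matched to $\hat{\mathcal{K}}(\overline{q},\overline{N})$ via complementary slackness and a careful sizing of $A_2,A_3$ to exclude spurious boundary equilibria. The two obstacles you flag (upgrading the envelope-theorem subgradients to genuine continuous differentiability of $D$, and the convergence of the projected flow to the stationary set rather than mere convergence of $D(\xi^{(k)})$) are precisely the points the paper resolves with its differentiability lemma for $G$ and the Kushner--Clark projected-SA theorem.
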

\begin{proof}
 See Appendix~\ref{appendix:learning_backtracking_adaptive_with_outage_cost}, 
 Section~\ref{subsection:convergence_proof_of_optexplorelimadaptive_learning}.
 \end{proof}

{\em Remark:} Algorithm~\ref{algorithm:learning_backtracking_adaptive_with_outage_cost_algorithm} 
induces a nonstationary  
policy. But 
Theorem~\ref{theorem:convergence_learning_backtracking_adaptive_with_outage_cost_algorithm} 
establishes that the sequence of policies generated by 
Algorithm~\ref{algorithm:learning_backtracking_adaptive_with_outage_cost_algorithm} converges to the set 
of optimal stationary, deterministic policies 
(for  problem (\ref{eqn:constrained_problem_average_cost_with_outage_cost})).

{\bf Discussion of Theorem~\ref{theorem:convergence_learning_backtracking_adaptive_with_outage_cost_algorithm}:}
\begin{enumerate}[label=(\roman{*})]
\item {\em Two timescales:} The update scheme (\ref{eqn:learning_backtracking_adaptive_with_outage_cost_update_part}) 
can be rewritten as a two-timescale 
stochastic approximation (see \cite{borkar08stochastic-approximation-book}, Chapter~$6$). 
Note that, $\lim_{k \rightarrow \infty}\frac{b_k}{a_k}=0$, i.e., $\xi_{out}$ and $\xi_{relay}$ are adapted in a 
{\em slower} timescale compared to $\lambda$ (which is adapted in the {\em faster} timescale). The dynamics behaves as if 
$\xi_{out}$ and $\xi_{relay}$ are updated simultaneously in a slow outer loop, and, between two successive updates of 
$\xi_{out}$ and $\xi_{relay}$, we update $\lambda$ in an inner loop for a long time. 
Thus, the $\lambda$ update equation views $\xi_{out}$ and $\xi_{relay}$ as quasi-static, 
while the $\xi_{out}$ and $\xi_{relay}$ update 
equations view the $\lambda$ update equation as almost equilibrated. 

\item {\em Structure of the iteration:} Note that, $(Q_{out}(u_k,\gamma_k, w_{u_k})-\overline{q}u_k)$ is the excess outage compared 
to the allowed outage $\overline{q}u_k$ for the $k$-th link. If this quantity is positive (resp., negative), the algorithm 
increases (resp., decreases) $\xi_{out}$ in order to reduce (resp., increase) the outage probability in subsequent steps. 
Similarly, if $u_k<\frac{1}{\overline{N}}$, the algorithm increases $\xi_{relay}$ in order to reduce the relay placement rate. 
The goal is to ensure  
$\lim_{k \rightarrow \infty} ( \overline{Q}_{out}^*(\xi_{out}^{(k)}, \xi_{relay}^{(k)})-\overline{q}\overline{U}^*(\xi_{out}^{(k)}, \xi_{relay}^{(k)}) )=0$ 
and $\lim_{k \rightarrow \infty} ( 1-\overline{N}\overline{U}^*(\xi_{out}^{(k)}, \xi_{relay}^{(k)}) )=0$. 
In the faster timescale, our aim is to ensure that  
$\lim_{k \rightarrow \infty} \mathbb{E}_{\underline{W}}\min_{u,\gamma}(\gamma+\xi_{out}^{(k)}Q_{out}(u,\gamma,W_u)+\xi_{relay}^{(k)}-\lambda^{(k)}u)=0$.

\item {\em Outline of the proof:} The proof proceeds in five steps.

We first prove the almost sure boundedness of $\{ \lambda^{(k)} \}_{k \geq 1}$.

Next, we prove 
that the difference between the sequences $\lambda^{(k)}$ and $\lambda^*(\xi_{out}^{(k)},\xi_{relay}^{(k)})$ 
converges to $0$ almost surely; this will  prove the desired convergence in the faster timescale. 
This result has been proved using the 
theory in \cite[Chapter~$6$]{borkar08stochastic-approximation-book}  and 
Theorem~\ref{theorem:learning_backtracking_given_xio_xir_general_step_size}.

{\em In order to ensure   boundedness of the slower timescale iterates, we have used the projection 
operation in the slower timescale. We pose the slower timescale iteration in the same form as a projected stochastic approximation iteration 
(see \cite[Equation~$5.3.1$]{kushner-clark78SA-constrained-unconstrained}).}

In order to prove the desired convergence of the projected 
stochastic approximation, 
we show that our iteration satisfies certain conditions 
given in \cite{kushner-clark78SA-constrained-unconstrained} (see   
\cite[Theorem~$5.3.1$]{kushner-clark78SA-constrained-unconstrained}).

Next, we argue  
(using Theorem~$5.3.1$ of \cite{kushner-clark78SA-constrained-unconstrained}) 
that the slower timescale iterates converge to the set of stationary points of a suitable ordinary differential 
equation (o.d.e.). But, in general, a stationary point on the boundary of the 
closed set $[0,A_2] \times [0,A_3]$ in the $(\xi_{out},\xi_{relay})$ plane may not correspond to a point 
in $\mathcal{K}(\overline{q},\overline{N})$. {\em Hence, we will need to ensure that if $(\xi_{out}',\xi_{relay}')$ 
is a stationary point of the o.d.e., then 
$(\lambda^*(\xi_{out}',\xi_{relay}'),\xi_{out}',\xi_{relay}') \in \mathcal{K}(\overline{q},\overline{N})$. 
In order to ensure this, we need to choose $A_2$ and $A_3$ properly.  
The choice of $A_2$ and $A_3$ is rather technical, and is 
explained in detail in 
Appendix~\ref{appendix:learning_backtracking_adaptive_with_outage_cost}, 
Section~\ref{subsubsection:finishing_two_timescale_adaptive_learning_with_outage_convergence_proof}.  
Here we will just provide the method of choosing $A_2$ and $A_3$, without 
any explanation of why they should be chosen in this way.}  
The number $A_2$ has to be chosen so large 
that under $\xi_{out}=A_2$ and for all $A+1 \leq u \leq A+B$, we will have 
$\mathbb{P}(\argmin_{\gamma \in \mathcal{S}}(\gamma+A_2 Q_{out}(u,\gamma,W))=P_M)>1-\kappa$ 
for some small enough $\kappa>0$. We must also have  
$\frac{\overline{Q}_{out}^*(A_2,0)}{\overline{U}^*(A_2,0)} \leq \overline{q}$. 
The number $A_3$ has to be chosen so large that 
for any $\xi_{out} \in [0,A_2]$, we will have 
$\overline{U}^*(\xi_{out},A_3) > \frac{1}{\overline{N}}$ (provided that $\frac{1}{\overline{N}}<A+B$). 
The numbers $A_2$ and $A_3$ have to be chosen so large that there exists at least one 
$(\xi_{out}',\xi_{relay}') \in [0,A_2] \times [0,A_3]$ such that 
$(\lambda^*(\xi_{out}',\xi_{relay}'),\xi_{out}',\xi_{relay}') \in \mathcal{K}(\overline{q},\overline{N})$.

\item {\em Asymptotic behaviour of the iterates:}
If the pair $(\overline{q},\overline{N})$ is such that one can be met with strict inequality and the other 
can be met with equality while using the optimal mean power per step for this pair $(\overline{q},\overline{N})$, 
then one Lagrange multiplier will converge to $0$. This will happen if 
$\overline{q}> \frac{\mathbb{E}_W Q_{out}(A+B,P_1,W)}{A+B}$;  we 
will have $\xi_{out}^{(k)} \rightarrow 0$ (obvious from OptExploreLim with $\xi_{out}=0$) in this case. Here 
we will place all the relays at the $(A+B)$-th step and use the smallest power level at each node. 
On the other hand, if the constraints are not feasible, then either $\xi_{out}^{(k)} \rightarrow A_2$ 
or $\xi_{relay}^{(k)} \rightarrow A_3$ (since convergence to $\infty$ is not possible 
due to projection) or both will happen.

{\em  $\mathcal{K}(\overline{q},\overline{N})$ may have multiple tuples.  
But simulation results show that it has only one tuple in case it 
is nonempty.}\qed
\end{enumerate}

\vspace{-4mm}
\subsection{Asymptotic Performance of Algorithm~\ref{algorithm:learning_backtracking_adaptive_with_outage_cost_algorithm}}
\label{subsection:asymptotic_performance_adaptive_learning}
\vspace{-2mm}

Let us denote by $\pi_{oelal}$ the (nonstationary) deployment policy induced by 
Algorithm~\ref{algorithm:learning_backtracking_adaptive_with_outage_cost_algorithm}. 
We will now show that $\pi_{oelal}$ is an optimal policy for the constrained problem 
(\ref{eqn:constrained_problem_average_cost_with_outage_cost}).

\begin{theorem}\label{theorem:expected_average_cost_performance_of_optexplorelimadaptivelearning}
 Suppose that Assumption~{\ref{assumption:existence_of_xio_xir}} 
 and Assumption~\ref{assumption:shadowing_continuous_random_variable} hold. Then, under proper choice of $A_2$ and $A_3$, 
 the policy $\pi_{oelal}$ solves the problem (\ref{eqn:constrained_problem_average_cost_with_outage_cost}); i.e., we have:

\footnotesize
\begin{eqnarray*}
&& \limsup_{x \rightarrow \infty} \frac{\mathbb{E}_{\pi_{oelal}}\sum_{i=1}^{N_x}\Gamma_i}{x} = \gamma^* \nonumber\\
&& \, \limsup_{x \rightarrow \infty} \frac{\mathbb{E}_{\pi_{oelal}}\sum_{i=1}^{N_x}Q_{out}^{(i,i-1)}}{x} \leq \overline{q}, \,\,  
\limsup_{x \rightarrow \infty} \frac{\mathbb{E}_{\pi_{oelal}}N_x}{x} \leq \overline{N} \nonumber\\
\end{eqnarray*}
\normalsize
\end{theorem}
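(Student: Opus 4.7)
The plan is to bootstrap Theorem~\ref{theorem:convergence_learning_backtracking_adaptive_with_outage_cost_algorithm}, which guarantees almost sure convergence of the iterates $(\lambda^{(k)}, \xi_{out}^{(k)}, \xi_{relay}^{(k)})$ to the set $\hat{\mathcal{K}}(\overline{q},\overline{N})$, up to the level of time-averaged network cost. Fix a sample path on which this convergence holds and let $(\lambda^*_\infty,\xi_{out}^{*,\infty},\xi_{relay}^{*,\infty})$ denote a limit point. By the definition of $\hat{\mathcal{K}}(\overline{q},\overline{N})$, the stationary OptExploreLim policy $\pi^*(\xi_{out}^{*,\infty},\xi_{relay}^{*,\infty})$ achieves mean power per step exactly $\gamma^*$, mean outage per step at most $\overline{q}$, and relay placement rate at most $\overline{N}$, and produces i.i.d.\ hop triples $(U_k,\Gamma_k,Q_{out}^{(k,k-1)})$ under the i.i.d.\ shadowing assumption.

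The key technical step is to show that the per-hop random triples produced by the nonstationary policy $\pi_{oelal}$ are Cesaro-equivalent to those produced by the limiting stationary policy. Since the shadowing vectors $\underline{W}(k)$ are i.i.d., the only nonstationarity in $\pi_{oelal}$ comes from the slowly changing parameters. Assumption~\ref{assumption:shadowing_continuous_random_variable} together with the strict monotonicity of $Q_{out}(r,\gamma,w)$ in $w$ implies that the set of shadowing tuples $\underline{w}$ on which multiple $(u,\gamma)$-pairs tie in the argmin of (\ref{eqn:learning_backtracking_adaptive_with_outage_cost_decision_part}) at the limiting parameters has probability zero. Consequently, the argmin selector is almost-surely continuous at $(\lambda^*_\infty,\xi_{out}^{*,\infty},\xi_{relay}^{*,\infty})$, so after coupling the two policies to the same shadowing realizations, the decisions of $\pi_{oelal}$ coincide with those of $\pi^*(\xi_{out}^{*,\infty},\xi_{relay}^{*,\infty})$ on all but an asymptotically vanishing fraction of the hops.

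I would then apply the strong law of large numbers to the i.i.d.\ hop triples under the limiting stationary policy and combine with the asymptotic coincidence above to obtain $\frac{1}{n}\sum_{i=1}^{n}\Gamma_i \to \overline{\Gamma}^*(\xi_{out}^{*,\infty},\xi_{relay}^{*,\infty})$ almost surely, and likewise for the outage and hop-length averages. Because $U_k \in [A+1,A+B]$ is bounded and bounded away from zero, the elementary renewal theorem converts per-hop averages into per-step averages, giving $N_x/x \to 1/\overline{U}^*(\xi_{out}^{*,\infty},\xi_{relay}^{*,\infty})$ and analogous ratios for the sum-power and sum-outage per step, all almost surely. Since $\Gamma_k \in [P_1,P_M]$, $Q_{out}^{(k,k-1)} \in [0,1]$ and $U_k \in [A+1,A+B]$, the per-step averages are uniformly bounded, so bounded convergence lifts the almost-sure limits to expectations, and membership of the limit point in $\hat{\mathcal{K}}(\overline{q},\overline{N})$ yields the three claims.

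The main obstacle is making the second paragraph rigorous: that almost-sure convergence of the three scalar parameters is strong enough to imply Cesaro equivalence of the hop-triple sequences under $\pi_{oelal}$ and $\pi^*(\xi_{out}^{*,\infty},\xi_{relay}^{*,\infty})$. This reduces to verifying that the tie-breaking set in $\underline{w}$-space (where the argmin jumps under small parameter perturbations) carries no probability mass, and then to controlling the mismatch probability uniformly along the sample path by a Borel--Cantelli type argument. A secondary subtlety is that $\hat{\mathcal{K}}(\overline{q},\overline{N})$ may contain more than one tuple, so the limit point is sample-path dependent; however, every tuple in $\hat{\mathcal{K}}$ yields mean power $\gamma^*$ and satisfies the two inequality constraints, so the asserted conclusions are uniform in which limit is attained.
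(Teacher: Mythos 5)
There is a genuine gap, and it sits exactly where you flagged your ``main obstacle,'' plus one place you did not flag. Theorem~\ref{theorem:convergence_learning_backtracking_adaptive_with_outage_cost_algorithm} gives convergence of $(\lambda^{(k)},\xi_{out}^{(k)},\xi_{relay}^{(k)})$ to the \emph{set} $\hat{\mathcal{K}}(\overline{q},\overline{N})$, not to a single tuple. Your entire second paragraph is built around comparing $\pi_{oelal}$, via coupling, to \emph{one} limiting stationary policy $\pi^*(\xi_{out}^{*,\infty},\xi_{relay}^{*,\infty})$; if $\hat{\mathcal{K}}$ contains more than one point (the paper explicitly says this cannot be ruled out and that single-point convergence is only conjectured from simulations), the iterates may have several subsequential limits or wander within the set, and there is no single stationary policy whose decisions yours ``eventually coincide with.'' The Cesaro-equivalence step then has nothing to couple against, and your closing remark that ``every tuple in $\hat{\mathcal{K}}$ yields mean power $\gamma^*$'' does not repair this: the SLLN you invoke is for the i.i.d.\ hop triples of a \emph{fixed} stationary policy, and without a fixed comparison policy the argument does not produce an almost-sure limit of the sample averages at all. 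Separately, even granting a single limit point, the tie-set argument only shows that the per-step mismatch probability tends to $0$; turning that into almost-sure Cesaro negligibility of the mismatched hops requires a conditional Borel--Cantelli or an SLLN for non-identically-distributed indicators along the random parameter sequence, which you have not supplied.

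The paper's proof is engineered precisely to avoid both difficulties. It fixes $\epsilon>0$, uses the uniform continuity of $\overline{\Gamma}(\lambda,\xi_{out},\xi_{relay})/\overline{U}(\lambda,\xi_{out},\xi_{relay})$ (and the analogous ratios) on a compact parameter box to enlarge $\hat{\mathcal{K}}$ to a set $\hat{\mathcal{K}}_{\epsilon}$ on which the per-step power is within $\epsilon$ of $\gamma^*$ and the constraints hold within $\epsilon$, and lets $T$ be the a.s.\ finite random time after which the iterates stay in $\hat{\mathcal{K}}_{\epsilon}$. The pre-$T$ hops are killed by Fatou and boundedness; for $i>T$ the key step is the pathwise conditional bound $\mathbb{E}_{\pi_{oelal}}(\Gamma_i\mid\mathcal{F}_{i-1})\le(\gamma^*+\epsilon)\,\mathbb{E}_{\pi_{oelal}}(U_i\mid\mathcal{F}_{i-1})$, which holds for \emph{whatever} tuple in $\hat{\mathcal{K}}_{\epsilon}$ the iterate currently occupies, so no limiting policy, no coupling, and no law of large numbers is needed. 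Summing, taking $\limsup_{x}$, and letting $\epsilon\downarrow 0$ gives $\le\gamma^*$, and equality follows because $\gamma^*$ is the optimal value of the constrained problem whose constraints $\pi_{oelal}$ is shown (by the same argument) to satisfy. If you want to salvage your route, you would first have to prove single-point convergence of the iterates, which the paper deliberately does not claim.
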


\begin{proof}
 See Appendix~\ref{appendix:learning_backtracking_adaptive_with_outage_cost}, 
Section~\ref{subsection:proof_of_expected_average_cost_performance_of_optexplorelimadaptivelearning}.
\end{proof}

\vspace{-2mm}
\section{Convergence Speed of Learning Algorithms: A Simulation Study}\label{section:convergence_speed_learning_algorithms}
\vspace{-2mm}
In this section, we provide a simulation study to demonstrate the convergence rate of 
Algorithm~\ref{algorithm:learning_backtracking_given_xio_xir_special_step_size} 
and  Algorithm~\ref{algorithm:learning_backtracking_adaptive_with_outage_cost_algorithm}. 
The simulations are provided for  $\eta=4.7$, $\sigma=7.7$~dB, $\delta=20$~m, 
$A=0$, $B=5$, 
$c=10^{0.17}$, $P_{rcv-min}=-97$~dBm, $\mathcal{S}=\{-18,-7,-4,0,5\}$~dBm 
(see Section~\ref{sec:system_model_and_notation} for   notation and Section~\ref{subsection:parameter_values} 
for   parameter values).

\vspace{-4mm}
\subsection{OptExploreLimLearning for Given $\xi_{out}$ and $\xi_{relay}$}
\label{subsection:convergence_speed_optexplorelimlearning}
\vspace{-2mm}

Let us choose $\xi_{out}=100$, $\xi_{relay}=1$. We assume that the propagation environment 
in which we are deploying is characterized 
by the parameters as in Section~\ref{subsection:parameter_values} (e.g., 
$\eta=4.7$, $\sigma=7.7$~dB). The optimal average cost per step, under these parameter values, is
$0.8312$ (computed numerically). 

On the other hand, for $\eta = 4$, $\sigma = 7$~dB, $\xi_{out}=100$ and $\xi_{relay}=1$, 
the optimal average cost per step   is $0.4577$, and it is $1.7667$ for $\eta = 5.5$,
$\sigma = 9$~dB. These two cases correspond to two different imperfect estimates of $\eta$ and $\sigma$ available to the agent 
before deployment starts.

 Suppose
that the actual $\eta = 4.7$, $\sigma = 7.7$~dB, but at the time of
deployment we have an initial estimate that $\eta = 4$, $\sigma = 7$~dB;
thus, we start with $\lambda^{(0)}=0.4577$. After placing the $k$-th relay,
the actual average cost per step of the relay network connecting the $k$-th relay to the sink is $\lambda^{(k)}$;
this quantity is a random variable whose realization depends
on the shadowing realizations over the links measured in the
process of deployment up to the $k$-th relay. We ran $10000$ simulations of 
Algorithm~\ref{algorithm:learning_backtracking_given_xio_xir_special_step_size},  
 starting with different seeds for the shadowing random process, and estimating 
$\mathbb{E} (\lambda^{(k)})$ as the average of the samples of $\lambda^{(k)}$ over these $10000$ simulations. We also do the same for 
$\lambda^{(0)}=1.7667$ (optimal cost for $\eta = 5.5$,
$\sigma = 9$~dB).

The estimates of $\mathbb{E} (\lambda^{(k)}), k \geq 1$ as a function of $k$, for the two initial values of $\lambda^{(0)}$, 
are shown in Figure~\ref{fig:optexplorelimlearning_given_xio_xir}. 
Also shown, in Figure~\ref{fig:optexplorelimlearning_given_xio_xir}, is the optimal value $\lambda^* = 0.8312$ for 
the true propagation parameters (i.e., $\eta = 4.7$, $\sigma = 7.7$~dB).  
From Figure~\ref{fig:optexplorelimlearning_given_xio_xir}, 
we observe that $\mathbb{E} (\lambda^{(k)})$ approaches the optimal cost $0.8312$ for the actual propagation 
parameters, as the number of deployed relays increases, and gets to within $10\%$ of the optimal 
cost by the time that $4$ or $5$ relays are placed, starting with two widely different initial guesses of 
the propagation parameters. Thus, OptExploreLimLearning could be useful even when the distance 
 can be covered by only $4$ to $5$ relays.

Note that, each simulation yields one sample path of the deployment process. 
We obtained the estimates of $\mathbb{E} (\lambda^{(k)})$ as a function of $k$ (by 
averaging over $10000$ sample paths); the 
convergence speed will vary across sample paths   even though $\lambda^{(k)} \rightarrow 0.8312$ almost surely as 
$k \rightarrow \infty$.

\begin{figure}[!t]
\centering
\includegraphics[width=9cm, height=3cm]{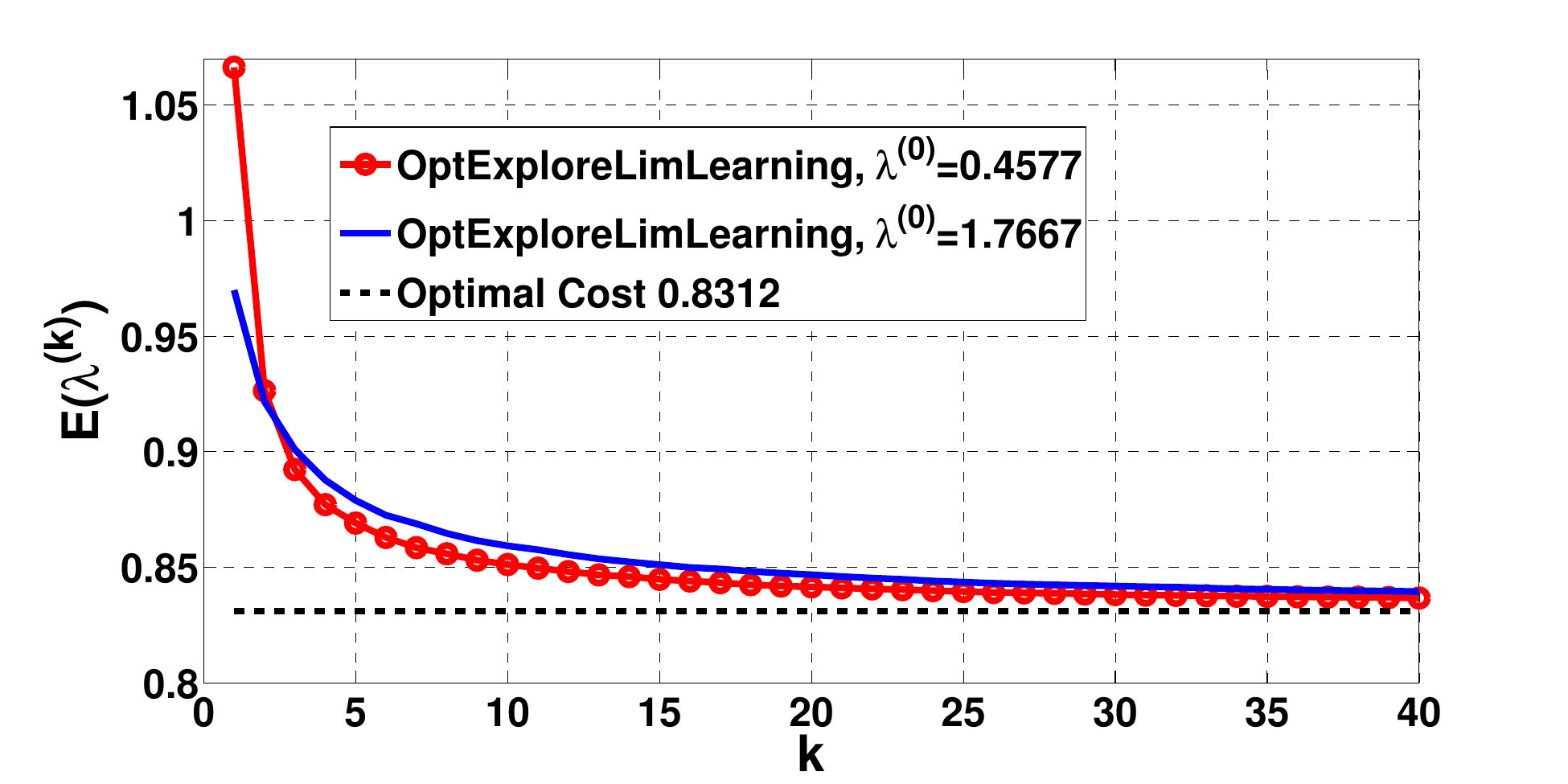}
\vspace{-7mm}
\caption{Demonstration of the convergence of OptExploreLimLearning 
(Algorithm~\ref{algorithm:learning_backtracking_given_xio_xir_special_step_size}) as deployment progresses. 
$\lambda^{(0)}$ has not been included here.}
\label{fig:optexplorelimlearning_given_xio_xir}
\vspace{-7mm}
\end{figure}

\begin{figure*}[t]
\begin{minipage}[r]{0.33\linewidth}
\subfigure{
\includegraphics[width=\linewidth, height=2.4cm]{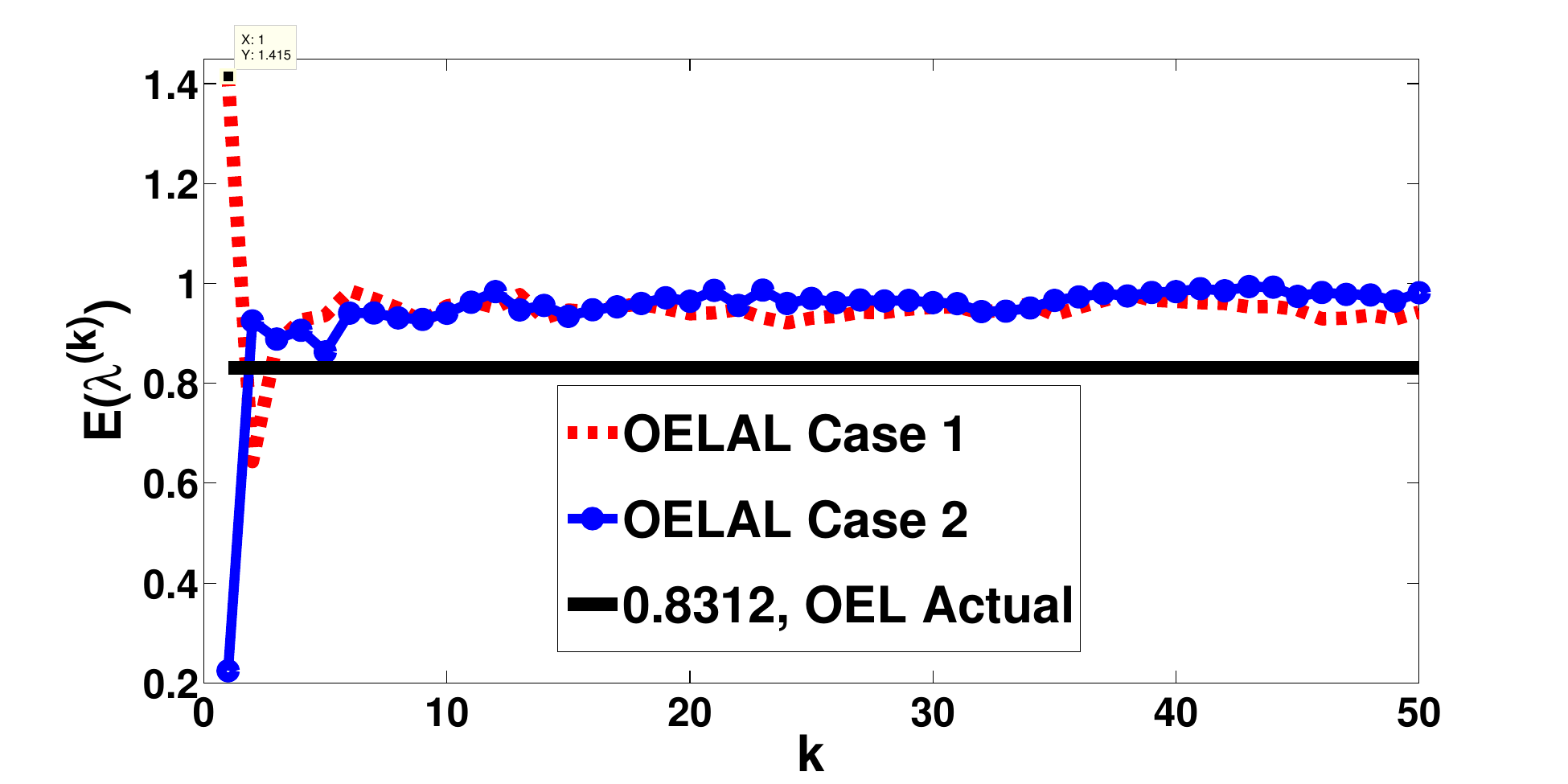}
\includegraphics[width=\linewidth, height=2.4cm]{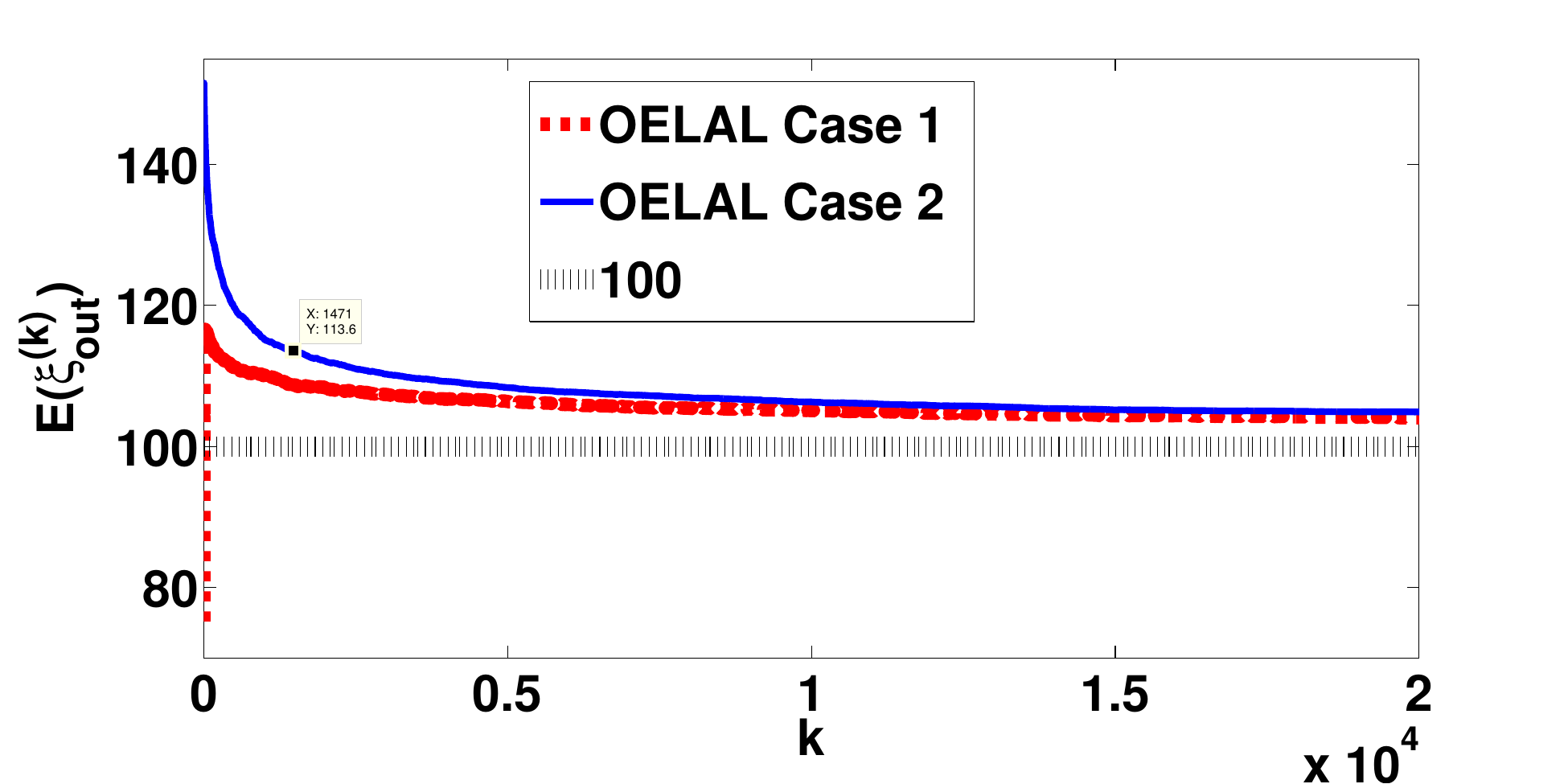}
\includegraphics[width=\linewidth, height=2.4cm]{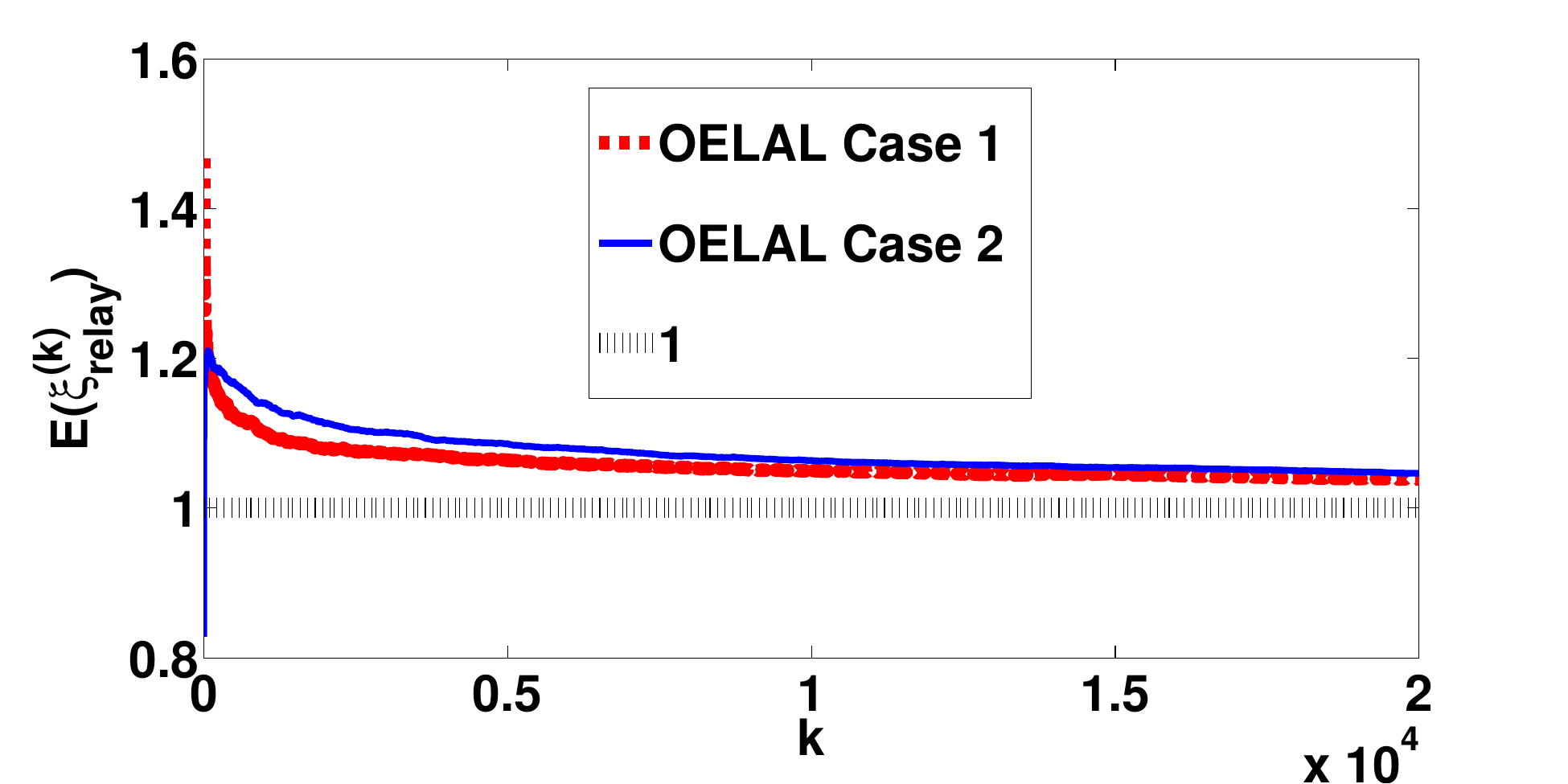}}
\end{minipage}  \hfill
\vspace{-20mm}
\end{figure*}
\begin{figure*}[t]
\begin{minipage}[c]{0.33\linewidth}
\subfigure{
\includegraphics[width=\linewidth, height=2.4cm]{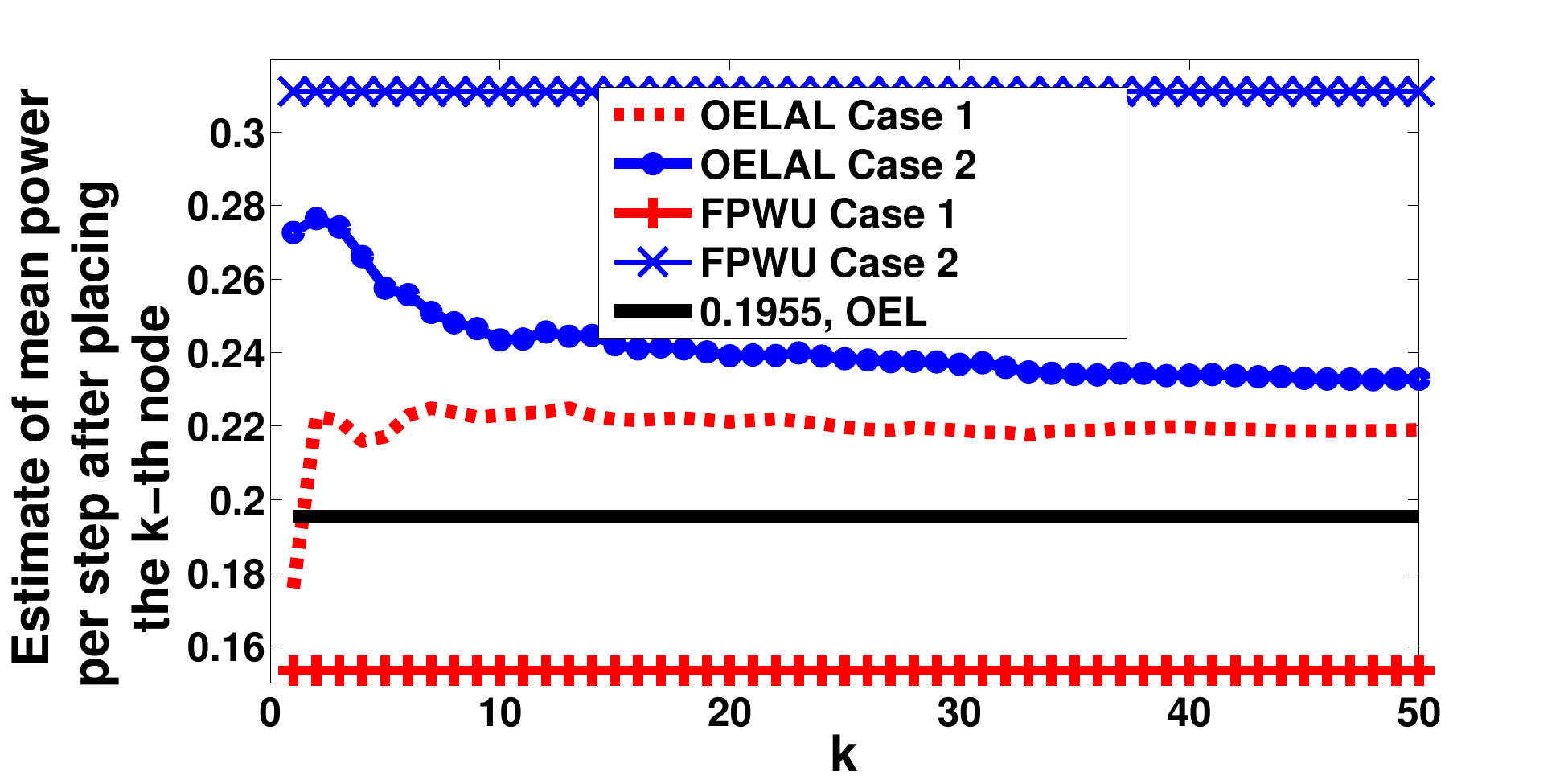}
\includegraphics[width=\linewidth, height=2.4cm]{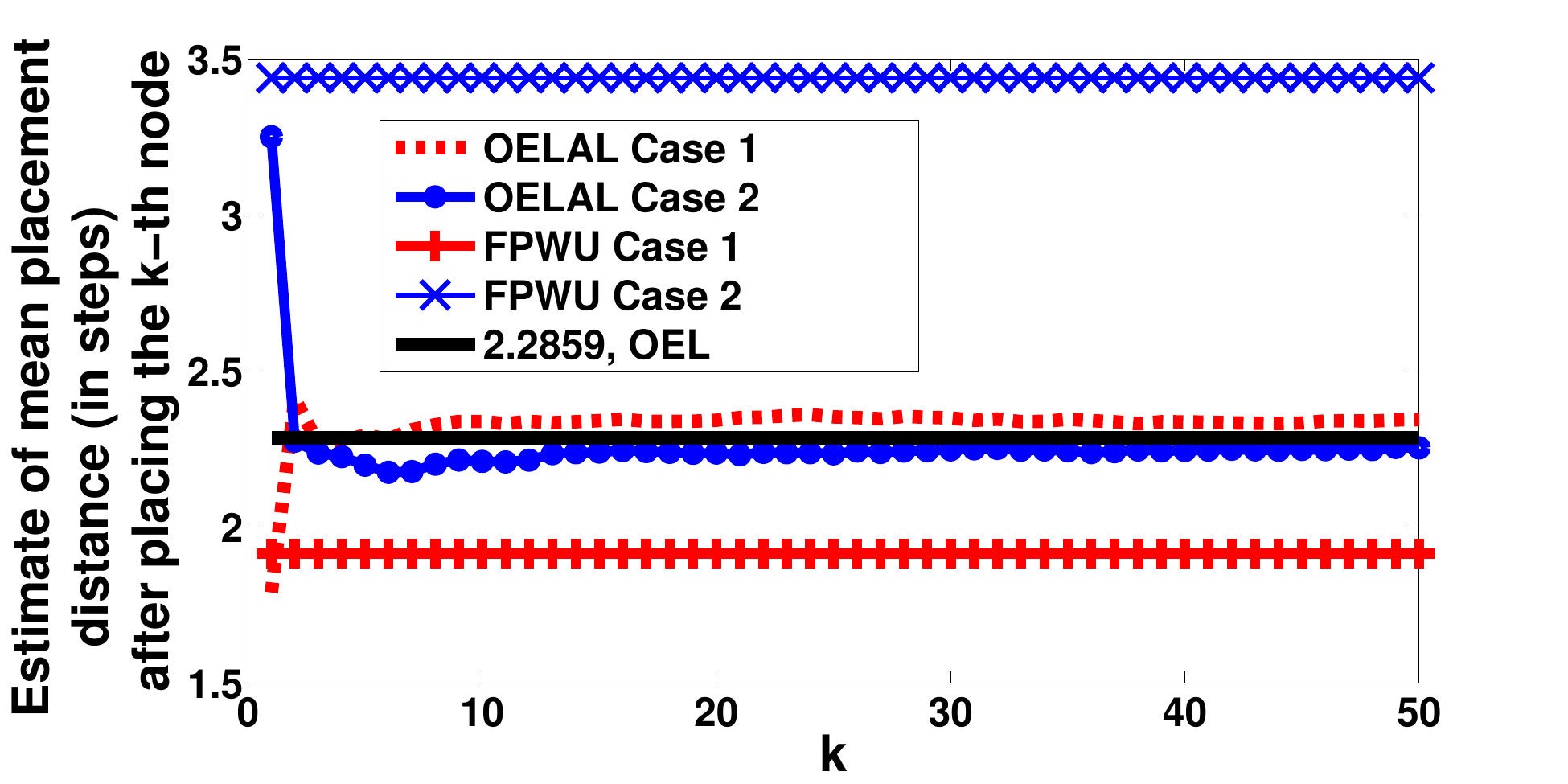}
\includegraphics[width=\linewidth, height=2.4cm]{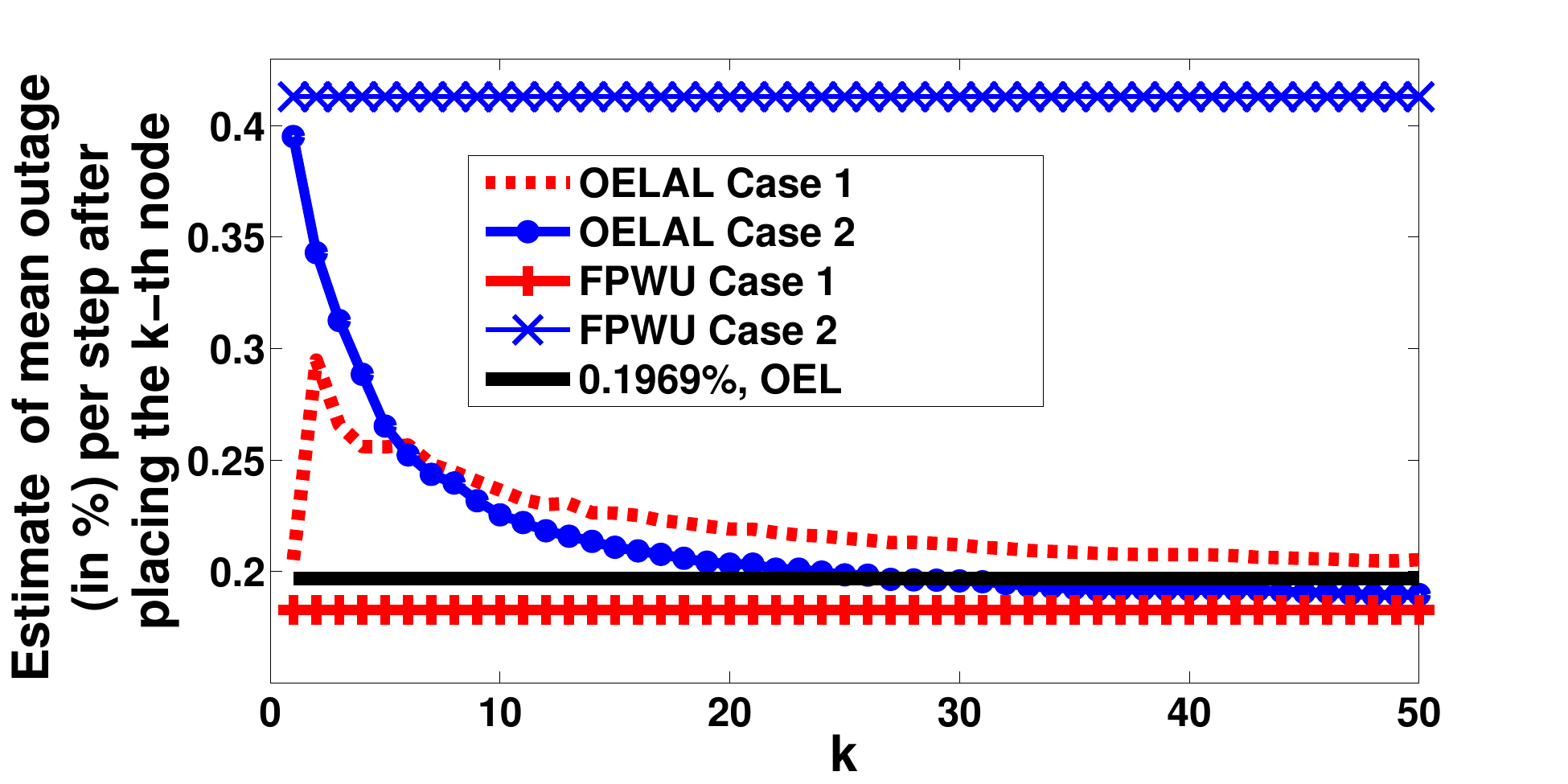}}
\end{minipage} \hfill
\vspace{-15mm}
\caption{Demonstration of the convergence of OptExploreLimAdaptiveLearning as deployment progresses. 
In the legends, ``OEL'' refers to the values that are obtained if OptExploreLim is used; 
these are the target values for OptExploreLimAdaptiveLearning. Note that, we have used 
line styles for $\xi_{out}$ and $\xi_{relay}$ updates, that are different from the line styles of other four 
plots. Also note that, outage probabilities are shown in percentage and not in decimal.}
\label{fig:two_timescale_plots}
\vspace{-7mm}
\end{figure*}

\vspace{-4mm}
\subsection{OptExploreLimAdaptiveLearning}
\label{subsection:convergence_speed_optexplorelimadaptivelearning}
\vspace{-1mm}

In this section, we will discuss how OptExploreLimAdaptiveLearning 
(Algorithm~\ref{algorithm:learning_backtracking_adaptive_with_outage_cost_algorithm}) 
performs for deployment over a 
finite distance under an unknown propagation environment. 
We assume that the true propagation parameters are given in 
Section~\ref{subsection:parameter_values} (e.g., $\eta=4.7$, $\sigma=7.7$~dB). 
If we know the true propagation environment, then, under the choice $\xi_{relay}=1$ and $\xi_{out}=100$, 
the optimal average cost per step will be $0.8312$, and this can be achieved by 
OptExploreLim (Algorithm~\ref{algorithm:policy_structure_smdp_backtracking}). The corresponding mean outage per step 
will be $\frac{0.0045}{2.2859}=0.001969 $ (i.e., $0.1969 \%$) and the mean number of relays per step 
will be $\frac{1}{2.2859}$.

Now, suppose that we wish to solve the constrained problem 
in (\ref{eqn:constrained_problem_average_cost_with_outage_cost}) with 
the targets $\overline{q}=0.001969$ (i.e., $0.1969 \%$) and $\overline{N}=\frac{1}{2.2859}$, 
but we do not know the true propagation 
environment. Hence, the deployment will use 
OptExploreLimAdaptiveLearning 
with some  choice of $\xi_{out}^{(0)}$, $\xi_{relay}^{(0)}$ and $\lambda^{(0)}$. 

We seek to compare among the following three scenarios: (i) $\eta$ and $\sigma$ are 
completely known (we use OptExploreLim with $\xi_{relay}=1$ and $\xi_{out}=100$ in this case), 
(ii) imperfect estimates of $\eta$ and $\sigma$ are available prior to deployment, and OptExploreLimAdaptiveLearning is 
used to learn the optimal policy, and (iii) imperfect estimates of $\eta$ and $\sigma$ are available prior to deployment, but 
a corresponding suboptimal policy is used throughout the deployment without any update. For 
convenience in writing, we introduce the abbreviations OELAL and 
OEL for OptExploreLimAdaptiveLearning and OptExploreLim, respectively. We also use the abbreviation FPWU for 
``Fixed Policy without Update.'' Now, we formally 
introduce the following cases that we consider in our simulations:

\begin{enumerate}[label=(\roman{*})]

{\bf \item OEL:} Here we know 
$\eta=4.7$, $\sigma=7.7$~dB, and use  OptExploreLim (Algorithm~\ref{algorithm:policy_structure_smdp_backtracking}) 
with $\xi_{out}=100$, $\xi_{relay}=1$, $\lambda^*=0.8312$. 
OEL will meet both the constraints with equality, and will minimize 
the mean power per step.

 {\bf \item  OELAL Case~$1$:} OELAL Case~$1$ is the case where the true $\eta$ and $\sigma$ (which are unknown to 
the deployment agent)  are specified by Section~\ref{subsection:parameter_values}, but 
we use OptExploreLimAdaptiveLearning with $\xi_{out}^{(0)}=75$, $\xi_{relay}^{(0)}=1.25$ and 
$\lambda^{(0)}=0.5007$, in order to meet the constraints specified earlier in this subsection.  
Note that, under $\xi_{out}=75$ and $\xi_{relay}=1.25$, 
the optimal mean cost per step is $0.5007$ for $\eta=4$, $\sigma=7$~dB. Hence, we start with a wrong choice of Lagrange 
multipliers, a wrong estimate of $\eta$ and $\sigma$, and an estimate of the optimal average cost per 
step which corresponds to these wrong choices. The goal is to see how fast the variables 
$\lambda^{(k)}$, $\xi_{out}^{(k)}$ and $\xi_{relay}^{(k)}$ converge to the desired target $0.8312$, $100$ and $1$, respectively. 
We also study how close to the desired target values are the quantities 
such as mean power per step, mean outage per step and mean placement distance for the relay network 
between $k$-th relay and the sink node.

 {\bf \item  OELAL Case~$2$:} OELAL Case~$2$ is different from OELAL Case~$1$ only in the aspect that 
$\lambda^{(0)}=1.7679$ is used in OELAL Case~$2$. 
Note that, under $\xi_{out}=75$ and $\xi_{relay}=1.25$, 
the optimal mean cost per step is $1.7679$ for $\eta=5.5$, $\sigma=9$~dB.

{\bf \item FPWU Case~$1$:} In this case, the true $\eta$ and $\sigma$ are unknown to the deployment agent.  
The deployment agent uses $\xi_{out}=75$, $\xi_{relay}=1.25$ and $\lambda^*=0.5007$ throughout the 
deployment process under the algorithm specified by (\ref{eqn:smdp-optimal-policy}). Clearly, he chooses a wrong set of 
Lagrange multipliers $\xi_{out}=75$, $\xi_{relay}=1.25$, and he has a wrong estimate $\eta=4$, $\sigma=7$~dB. 
The optimal average cost per step $\lambda^*=0.5007$ is computed for these wrong choice of parameters, and    
the corresponding suboptimal policy is used throughout the deployment process without any update; this will be used to 
demonstrate the gain in performance by updating the policy under OptExploreLimAdaptiveLearning, w.r.t. the case 
where the suboptimal policy  is used without any online update.

{\bf \item FPWU Case~$2$:} It differs from 
FPWU Case~$1$ only in the aspect that we use $\lambda^*=1.7679$ in FPWU Case~$2$. 
Recall that, under $\xi_{out}=75$ and $\xi_{relay}=1.25$, 
the optimal mean cost per step is $1.7679$ for $\eta=5.5$, $\sigma=9$~dB.
\end{enumerate}

\vspace{-2mm}

For simulation of OELAL, we chose the step sizes as follows. We chose $a_k=\frac{1}{k^{0.55}}$, 
chose $b_k=\frac{10000}{k^{0.8}}$ for the $\xi_{out}$ update and $b_k=\frac{1}{k^{0.8}}$ for the $\xi_{relay}$ update 
(note that, both $\xi_{out}$ and $\xi_{relay}$ are updated in the same timescale). 
We simulated $10000$ independent network deployments (i.e., $10000$ sample paths of the 
deployment process) with OptExploreLimAdaptiveLearning, and 
estimated (by averaging over $10000$ deployments) the expectations of $\lambda^{(k)}$, $\xi_{out}^{(k)}$, $\xi_{relay}^{(k)}$, 
mean power per step $\frac{\mathbb{E}_{\pi_{oelal}}\sum_{i=1}^k \Gamma_i}{\mathbb{E}_{\pi_{oelal}}\sum_{i=1}^k U_i}$, mean outage per step 
$\frac{\mathbb{E}_{\pi_{oelal}}\sum_{i=1}^k Q_{out}^{(i,i-1)}}{\mathbb{E}_{\pi_{oelal}}\sum_{i=1}^k U_i}$ 
and mean placement distance $\frac{\mathbb{E}_{\pi_{oelal}}\sum_{i=1}^k U_i}{k}$, from the sink node 
to the $k$-th placed node. In each simulated network deployment, we placed $20000$~nodes, i.e., $k$ was allowed to go 
up to $20000$. Asymptotically the estimates are supposed to converge to the values provided by 
OEL.

{\bf Observations from the Simulations:} The results of the simulations are summarized in 
Figure~\ref{fig:two_timescale_plots}. We observe that, the estimates of 
the expectations of $\lambda^{(20000)}$, $\xi_{out}^{(20000)}$, $\xi_{relay}^{(20000)}$, 
mean power per step up to the $20000^{th}$ node, mean outage per step up to the $20000^{th}$ node, and 
mean placement distance (in steps) over $20000$ deployed nodes are $0.8551$, $104.0606$, $1.0385$, $0.2005$, $0.2 \%$ 
(i.e., $0.002$) and 
$2.2939$ for the OELAL Case~$1$, whereas 
those quantities are supposed to be equal to $0.8312$, $100$, $1$, $0.1955$, $0.1969\%$ (i.e., 
$0.001969$) and $2.2859$, respectively. 
We found similar results for OELAL Case~$2$ also. Hence, the quantities converge very close to the desired values. 
{\em We have shown convergence only up to $k=50$ deployments in most cases, 
since the convergence rate of the algorithms in the initial phase are most important in 
practice.}

All the quantities except 
expectation of $\xi_{out}^{(k)}$ and $\xi_{relay}^{(k)}$ 
(which are updated in a slower timescale) converge 
reasonably close to the desired values by the time the $50^{th}$ relay is placed, which will cover a distance of roughly  
$2-3.5$~km. distance.

FPWU Case~$1$ and  FPWU Case~$2$ either violate some constraint or 
uses significantly higher per-step power compared to  
OEL. But, by using the OptExploreLimAdaptiveLearning algorithm, 
we can achieve per-step power expenditure close to the optimal  
while (possibly) violating the constraints by small amount; 
even in case the performance of OELAL is not very close to the optimal 
performance, it will be significantly better than the performance under 
FPWU cases (compare OELAL Case~$2$ and 
FPWU Case~$2$ in Figure~\ref{fig:two_timescale_plots}).\qed

The speed of convergence will depend on the choice of the step sizes  
$a_k$ and $b_k$; optimizing the rate of convergence by choosing optimal step sizes is left 
for future endeavours in this direction. Also, note that, the choice of $\xi_{out}^{(0)}$, $\xi_{relay}^{(0)}$ and 
$\lambda^{(0)}$ will have a significant effect on the performance of the network over a finite length; 
the more accurate are the estimates of $\eta$ and $\sigma$, and the better are the initial choice of 
$\xi_{out}^{(0)}$, $\xi_{relay}^{(0)}$ and $\lambda^{(0)}$, the better will be the convergence speed 
of OptExploreLimAdaptiveLearning.

\vspace{-4mm}
\section{Physical Deployment Experiments}\label{section:real_deployment}
 \vspace{-2mm}

For completeness, we briefly summarize experimental results that were reported  in our conference paper \cite{chattopadhyay-etal14deployment-experience}.   
We performed an actual deployment experiment along
a long tree-lined road in our campus (not exactly a straight line, which is the reality in a  forest) with 
iWiSe motes equipped with $9$~dBi antennas. 
We chose $\xi_{out}=100$, $\xi_{relay}=1$, $B=5$~steps, $\delta=50$~meters, and  $\mathcal{S}=\{-7,-4,0,5\}$~dBm. 
We used the 
packet error rate (PER) of a link as a substitute for outage probability; this does not violate the 
assumptions of our formulation. For  $\eta=4$, 
$\sigma=7$~dB, $\xi_{out}=100$, $\xi_{relay}=1$, the optimal average cost per step is $1.0924$
(computed numerically). Taking $\lambda^{(0)}=1.0924$, 
we performed a real deployment experiment  with
OptExploreLimLearning.  The
deployed network (along with  power levels, outage probabilities and link lengths) is shown in Figure~\ref{fig:real_deployment_OptExploreLimLearning}. 
The  sink is denoted by the ``house'' symbol. 
The algorithm placed relays at successive distances of $150$~m, $50$~m, $50$~m, $100$~m, and $150$~m, 
thereby covering $500$~m until the source was placed.  
The two short ($50$~m long) links are created due to  significant path-loss at the turn in the road.  
After deployment, 
we used the last placed node as the source and sent periodic traffic (at 
various rates) from the source to the sink. The end-to-end packet loss probability increases 
with arrival rate (Figure~\ref{fig:real_deployment_OptExploreLimLearning}); this happens due to carrier sense
failures and collisions because of simultaneous transmissions 
from different nodes.  At very low arrival rate, the loss
probability is $0$ (but the sum PER under the lone
packet model is not $0$). This happens since there are 
link level retransmissions and  since the outage
durations are relatively short; in case a packet encounters an outage in a link, 
the retransmission attempts succeed with high probability. 
{\em The results demonstrate that, even though the design was for the lone packet model, the network can carry 
$4$~packets/second (packet size is $127$~bytes) with $P_{loss} \leq 1\%$, which is sufficient for many  applications. Hence, 
network design with the lone packet model assumption is reasonable for those applications.}

\begin{figure}[t!]
\begin{centering}
\includegraphics[height=2.3cm, width=9cm]{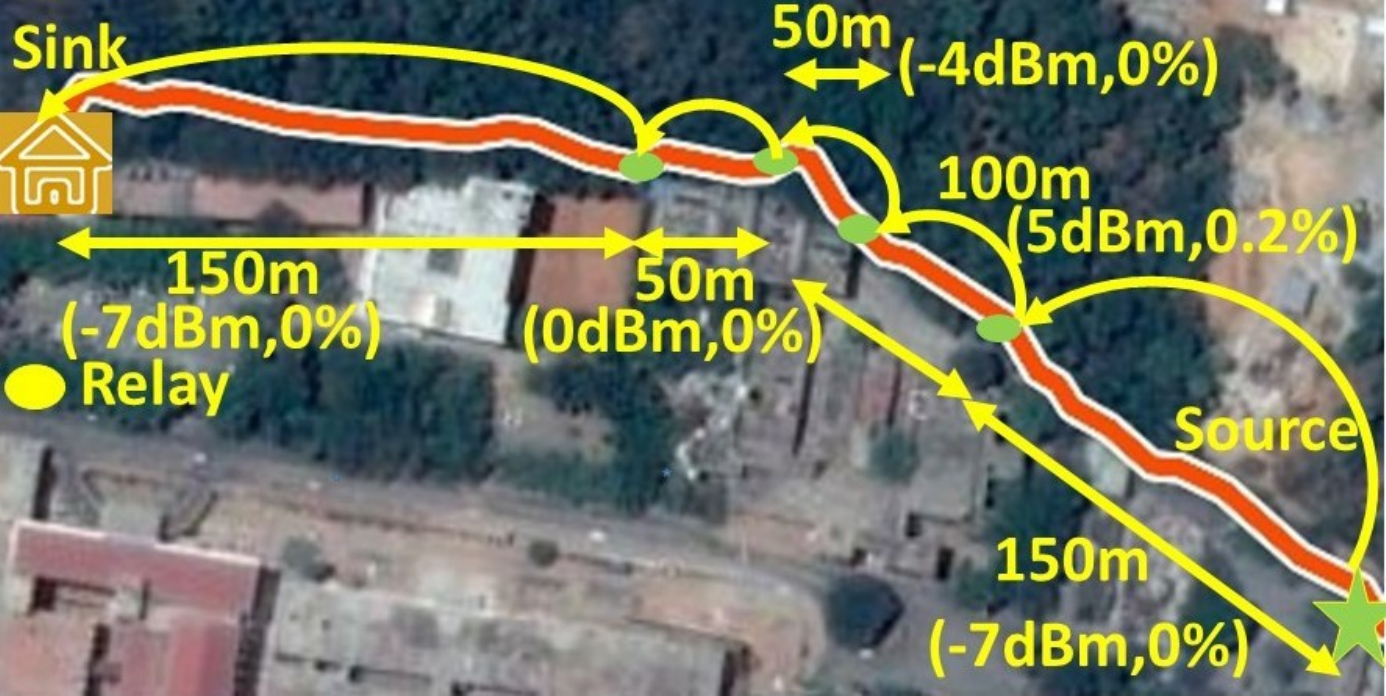}
\includegraphics[height=2.3cm, width=9cm]{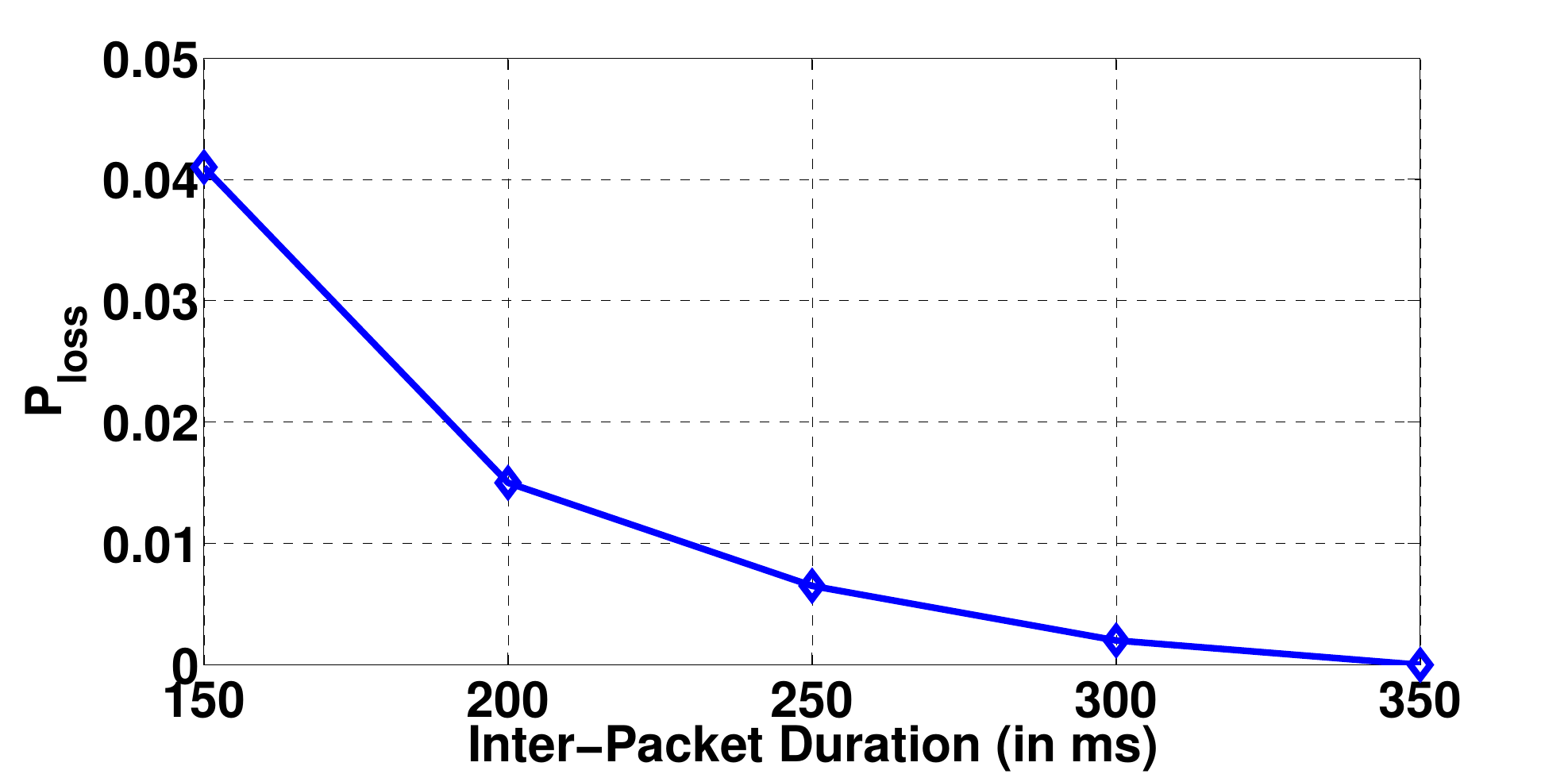}
\end{centering}
\vspace{-6mm}
\caption{Actual deployment along a long tree-lined road in the Indian Institute of Science Campus using OptExploreLimLearning  with iWiSe motes, $\xi_{out}=100$, $\xi_{relay}=1$: 
five nodes (including the source) are placed; link lengths, transmit powers, and $\%$ outage 
probabilities are shown; the plot shows variation of end-to-end loss probability with 
inter-packet duration, for periodic traffic generated from the source. Picture and plot are taken from 
\cite{chattopadhyay-etal14deployment-experience}.}
\label{fig:real_deployment_OptExploreLimLearning}
\vspace{-6mm}
\end{figure}

\vspace{-6mm}
\section{Conclusion}\label{sec:conclusion}
\vspace{-2mm}
We have developed several approaches for as-you-go deployment of wireless relay
networks  using on-line measurements, under a very light traffic assumption. Each problem was 
formulated as an MDP and its optimal policy structure was studied. 
We also studied a few learning algorithms that will asymptotically converge to the 
corresponding optimal policies. 
Numerical and experimental results have been provided to illustrate the performance and trade-offs. 

This work can be extended or modified in several ways: (i) Networks that are robust to node failures and long term link variations 
would either require each relay to have multiple neighbours (i.e., the deployment would  need to be multi-connected),  
or the nodes can choose their transmit powers adaptively as the environment changes.  
(ii) It  would be of interest to develop deployment algorithms for 2 and 3 dimensional regions, 
where a team of agents cooperates to carry out the deployment. 
(iii) We have assumed very light traffic conditions in our design (what we call ``lone packet" traffic), 
but our experiments show that these designs can carry a useful amount of positive traffic. It 
will be of interest, however, to develop deployment algorithms that can provide theoretical guarantees 
to achieve desired traffic rates. 

 \vspace{-4mm}

\bibliographystyle{unsrt}
\bibliography{arpan-techreport}

\vspace{-14mm}

\begin{IEEEbiography}[{\includegraphics[width=0.7in,height=0.7in,clip,keepaspectratio]{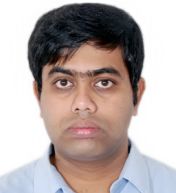}}]{Arpan 
Chattopadhyay} obtained his B.E. in Electronics and Telecommunication Engineering from Jadavpur University, 
Kolkata, India in the year 2008, and M.E. in Telecommunication Engineering from Indian Institute of Science, 
Bangalore, India in the year 2010. He is currently pursuing his PhD in ECE department, Indian Institute of Science, 
Bangalore. His research interest is in   networks.
    \end{IEEEbiography}

    \vspace{-17mm}

   \begin{IEEEbiography}[{\includegraphics[width=0.7in,height=0.7in,clip,keepaspectratio]{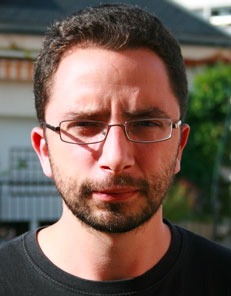}}]
   {Marceau Coupechoux}  is an Associate Professor at Telecom  ParisTech since 2005. He obtained his master from 
   Telecom ParisTech in 1999 and from University of Stuttgart, Germany in  2000, and his Ph.D. from Institut Eurecom, 
   Sophia-Antipolis, France, in 2004. From 2000 to 2005, he  was with Alcatel-Lucent. In the Computer and Network Science  
   department of Telecom ParisTech, he is working on cellular networks,  wireless networks,  
   cognitive networks, focusing mainly on  layer 2 protocols, scheduling and resource management. 
   From August 2011 to August 2012 he was a visiting  scientist at IISc Bangalore.
   \end{IEEEbiography}

   \vspace{-13mm}
   
    \begin{IEEEbiography}[{\includegraphics[width=0.7in,height=0.7in,clip,keepaspectratio]{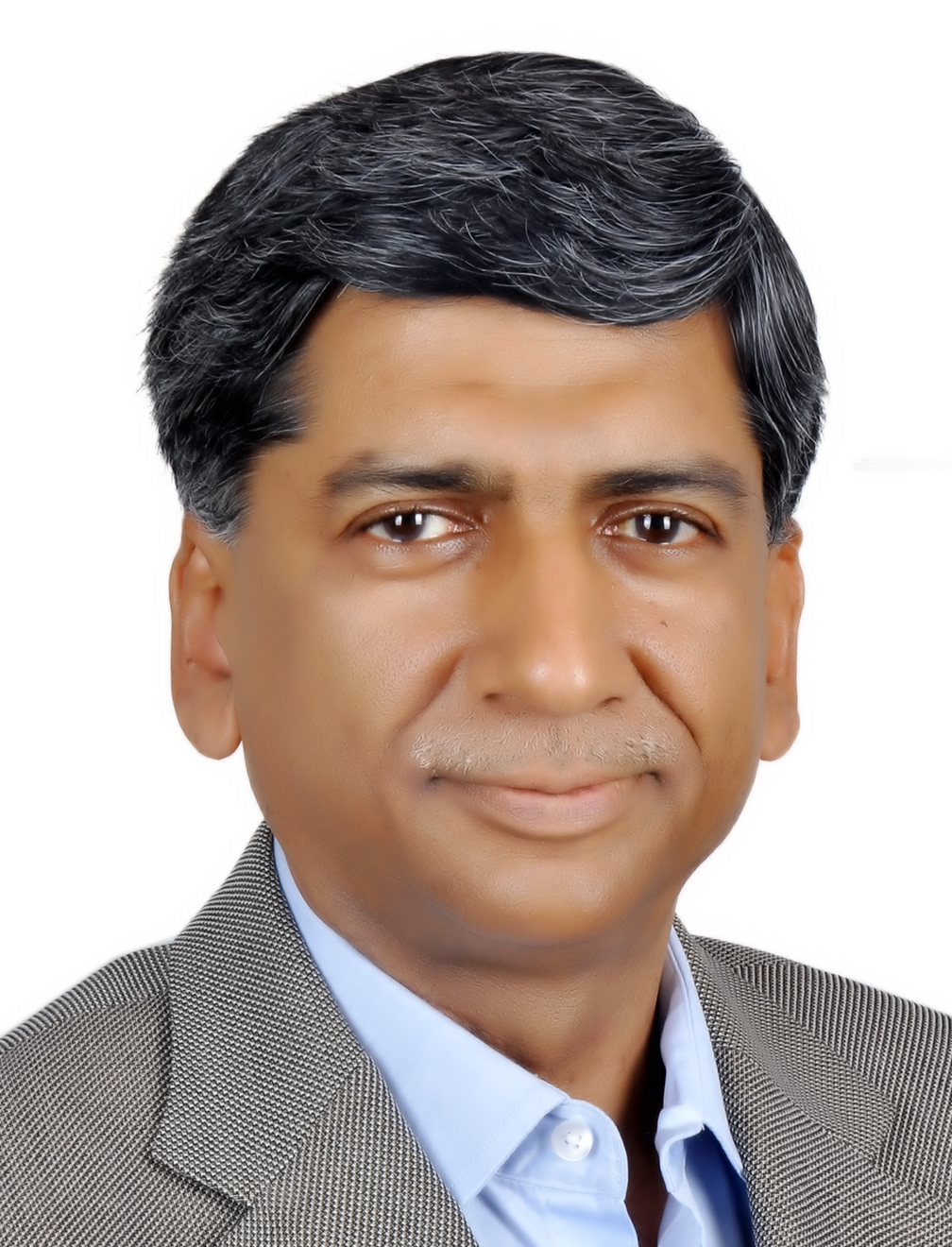}}]
    {Anurag Kumar} obtained his B.Tech. degree from the Indian Institute of 
 Technology at Kanpur, and the PhD degree from Cornell University, 
 both in Electrical Engineering. He was then with Bell Laboratories, 
 Holmdel, N.J., for over 6 years.  Since 1988 he has been on the 
 faculty of the Indian Institute of Science (IISc), Bangalore, in the 
 Department of Electrical Communication Engineering.  He is currently 
 also the Director of the Institute.  From 1988 to 2003 he was the 
 Coordinator at IISc of the Education and Research Network Project 
 (ERNET), India's first wide-area packet switching network.  His area 
 of research is communication networking, specifically, modeling, 
 analysis, control and optimisation problems arising in communication 
 networks and distributed systems. Recently his research has focused 
 primarily on wireless networking.  He is a Fellow of the IEEE, of the 
 Indian National Science Academy (INSA), of the Indian Academy of 
 Science (IASc), of the Indian National Academy of Engineering (INAE), 
 and of The World Academy of Sciences (TWAS). He is a recepient of the 
 Indian Institute of Science Alumni Award for Engineering Research for 
 2008.
 \end{IEEEbiography}

\begin{figure*}[t!]
\centering{\huge{\bf Supplementary Material} }
\end{figure*}

\newpage

\renewcommand{\thesubsection}{\Alph{subsection}}

\appendices

\section{Pure As-You-Go Deployment}\label{appendix:average_cost_sum_outage_no_backtracking}

\textbf{Proof of Lemma~\ref{lemma:value_function_properties_sum_power_sum_outage_no_backtracking}}
Note that the function $J^{(0)}(\cdot):=0$ satisfies all the assertions. Let us assume, as our induction 
hypothesis, that $J^{(k)}(\cdot)$ satisfies 
all the assertions. Now $Q_{out}(r,\gamma,w)$ is increasing in $r$ and decreasing in $w$ (by our 
channel modeling assumptions in Section~\ref{subsection:channel_model}), 
and the single stage 
costs are linear (hence concave) increasing in $\xi_{relay}$, $\xi_{out}$. 
Then from the value iteration, 
$J^{(k+1)}(r,w)$ is pointwise minimum of functions which are increasing in $r$, $\xi_{out}$ and $\xi_{relay}$, 
decreasing in $w$, and jointly concave in $\xi_{out}$ and $\xi_{relay}$. 
Hence, the assertions hold for $J^{(k+1)}(r,w)$. 
Similarly, we can show that the assertions hold for $J^{(k+1)}(\mathbf{0})$. 
Since $J^{(k)}(\cdot) \uparrow J(\cdot)$, the results follow.

\textbf{Proof of Theorem~\ref{theorem:policy_structure_sum_power_sum_outage_no_backtracking}}
Consider the Bellman equation (\ref{eqn:bellman_equation_sum_power_sum_outage_no_backtracking}). We will place a relay at state 
$(r,w)$ iff the cost of placing a relay, i.e., 
$\min_{\gamma \in \mathcal{S}} (\gamma+\xi_{out} Q_{out}(r,\gamma,w) )+\xi_{relay} + J(\mathbf{0})$ is less than or equal to 
the cost of not placing, i.e., $\theta \mathbb{E}_W \min_{\gamma \in \mathcal{S}} (\gamma+ \xi_{out}  Q_{out} (r+1,\gamma,W)) + 
(1-\theta)\mathbb{E}_W J(r+1,W)$. Hence, it is obvious 
that we will place a relay at state $(r,w)$ iff 
$\min_{\gamma \in \mathcal{S}} (\gamma+\xi_{out} Q_{out}(r,\gamma,w) ) \leq c_{th}(r)$ where the threshold $c_{th}(r)$ is given by:

\footnotesize
\begin{eqnarray}
 c_{th}(r)&=&\theta \mathbb{E}_W \min_{\gamma \in \mathcal{S}} (\gamma+ \xi_{out}  Q_{out} (r+1,\gamma,W)) \nonumber\\
&& + (1-\theta)\mathbb{E}_W J(r+1,W) -(\xi_{relay} + J(\mathbf{0})) \label{eqn:c_th_r_expression}
\end{eqnarray}
\normalsize

By Proposition~$3.1.3$ of \cite{bertsekas07dynamic-programming-optimal-control-2}, if there
exists a stationary policy $\{\mu,\mu,\cdots\}$ such that for each state, the action chosen by the policy is the action that
achieves the minimum in the Bellman equation, then that stationary policy will be an optimal policy, i.e., 
the minimizer in Bellman equation gives the optimal action. Hence, if the decision is to place a relay at state 
$(r,w)$, then the power has to be chosen as 
$\argmin_{\gamma \in \mathcal{S}} \bigg(\gamma+\xi_{out} Q_{out}(r,\gamma,w)\bigg)$. 

Since $Q_{out}(r,\gamma,w)$ and $J(r,w)$ is increasing in $r$ for each $\gamma,w$, it is easy to see 
that $c_{th}(r)$ is increasing in $r$.

\section{Explore Forward Deployment}\label{appendix:backtracking_average_cost}

\textbf{Proof of Theorem~\ref{theorem:smdp-cost-vs-xi}}
Let us recall the definition of the functions $\mu^{(1)}$ and $\mu^{(2)}$. Now, 
$\lambda_{\mu}:=\frac{\xi_{relay}+\sum_{\underline{w}} g(\underline{w}) \bigg(\mu^{(2)}(\underline{w})+ \xi_{out} Q_{out}(\mu^{(1)}(\underline{w}),\mu^{(2)}(\underline{w}),w_{\mu^{(1)}(\underline{w})})\bigg)} 
{ \sum_{\underline{w}} g(\underline{w}) \mu^{(1)}(\underline{w}) ×}$ is the average cost of a specific  
stationary deterministic policy $\mu$ (by the Renewal Reward Theorem, 
since the placement process regenerates at each placement point). For each policy $(\mu^{(1)},\mu^{(2)})$, 
the numerator is linear, increasing in  $\xi_{out}$ and $\xi_{relay}$  
and the denominator is independent of  $\xi_{out}$ and $\xi_{relay}$.  Now, 
$\lambda^{*}(\xi_{out},\xi_{relay}) = \inf_{\mu} \lambda_{\mu}$. 
Hence, the proof follows immediately 
since the pointwise infimum of increasing linear functions of  $\xi_{out}$ and $\xi_{relay}$ is increasing and 
jointly concave in  $\xi_{out}$ and $\xi_{relay}$, and since any increasing, concave function is continuous.

\textbf{Proof of Theorem~\ref{theorem:outage_decreasing_with_xio_placement_rate_decreasing_with_xir}:} 
We will prove only the second statement of the theorem since the proof of the first statement is similar.

Consider any $\kappa>0$. 

Now, since the mean cost per step is a linear combination of the mean power per step, mean outage per step and the mean 
number of relays per step, we can write:

\footnotesize
\begin{eqnarray}
&& \lambda^*(\xi_{out}, \xi_{relay}) \nonumber\\
&=& \frac{\overline{\Gamma}^*(\xi_{out},\xi_{relay})+\xi_{out}\overline{Q}_{out}^*(\xi_{out},\xi_{relay})+\xi_{relay}}{\overline{U}^*(\xi_{out},\xi_{relay})} \nonumber\\
& \leq & \frac{\overline{\Gamma}^*(\xi_{out}+\kappa,\xi_{relay})+\xi_{out}\overline{Q}_{out}^*(\xi_{out}+\kappa,\xi_{relay})+\xi_{relay}}{\overline{U}^*(\xi_{out}+\kappa,\xi_{relay})} \nonumber\\
&& \label{eqn:inequality1}
\end{eqnarray}
\normalsize

and 

\footnotesize
\begin{eqnarray}
&& \lambda^*(\xi_{out}+\kappa, \xi_{relay}) \nonumber\\
&=& \frac{\overline{\Gamma}^*(\xi_{out}+\kappa,\xi_{relay})+(\xi_{out}+\kappa)\overline{Q}_{out}^*(\xi_{out}+\kappa,\xi_{relay})+\xi_{relay}}{\overline{U}^*(\xi_{out}+\kappa,\xi_{relay})} \nonumber\\
& \leq & \frac{\overline{\Gamma}^*(\xi_{out},\xi_{relay})+(\xi_{out}+\kappa)\overline{Q}_{out}^*(\xi_{out},\xi_{relay})+\xi_{relay}}{\overline{U}^*(\xi_{out},\xi_{relay})} \nonumber\\
&& \label{eqn:inequality2}
\end{eqnarray}
\normalsize

where the inequality in (\ref{eqn:inequality1}) follows from the fact that $\pi^*(\xi_{out}, \xi_{relay})$ is 
an optimal policy for $(\xi_{out}, \xi_{relay})$, and the inequality in 
(\ref{eqn:inequality2}) follows from the fact that $\pi^*(\xi_{out}+\kappa, \xi_{relay})$ is 
an optimal policy for $(\xi_{out}+\kappa, \xi_{relay})$.

Adding the inequalities (\ref{eqn:inequality1}) and (\ref{eqn:inequality2}) and cancelling the common terms, 
we obtain that 
$\frac{\overline{Q}_{out}^*(\xi_{out}+\kappa,\xi_{relay})}{\overline{U}^*(\xi_{out}+\kappa,\xi_{relay})} 
\leq \frac{\overline{Q}_{out}^*(\xi_{out},\xi_{relay})}{\overline{U}^*(\xi_{out},\xi_{relay})}$.\qed

{\bf Proof of Theorem~\ref{theorem:optimality-condition-required-for-backtracking-average-cost-learning}:}
From (\ref{eqn:optimality-smdp}), we can write:
\begin{eqnarray*}
 \sum_{\underline{w}}g(\underline{w}) v^*(\underline{w}) &=& \sum_{\underline{w}}g(\underline{w}) \bigg( \min_{u, \gamma}\bigg\{ \gamma+  \xi_{out} Q_{out}(u,\gamma,w_u) \nonumber\\
&& +\xi_{relay}-\lambda^{*}u \bigg\} \bigg)+\sum_{\underline{w}^{'} \in \mathcal{W}^B}g(\underline{w}^{'}) v^*(\underline{w}^{'}) \nonumber\\ \nonumber\\
\end{eqnarray*}
Cancelling $\sum_{\underline{w}}g(\underline{w}) v^*(\underline{w})$ from both sides, we obtain the desired result.

{\bf Proof of Theorem~\ref{theorem:complexity-reduction-smdp-policy-iteration}:}
Note that in (\ref{eqn:smdp-policy-improvement}), if the minimum is achieved by more than one pair of $(u, \gamma)$, 
then any one of them 
can be considered to be the optimal action. Let us use the convention that among all minimizers the pair $(u,\gamma)$ 
with minimum $u$ will be considered 
as the optimal action, and if there are more than one such minimizing pair with same values of $u$, then the pair 
with smallest value of $\gamma$ will be considered. 
We recall that $\mathcal{S}=\{P_1,P_2,\cdots,P_M \}$.  
Let us denote, under policy $\mu_{k+1}$, 
the probability that the optimal control is $(u,\gamma)$ and 
the shadowing is $w$ at the $u$-th location, by $b_k(u,\gamma, w)$. 
Then,

\footnotesize
\begin{eqnarray}
 b_k(u,\gamma,w)&=& \Pi_{r=A+1}^{u-1} \mathbb{P} \bigg(\min_{\gamma^{'} \in \mathcal{S}} (\gamma^{'}+ \xi_{out} Q_{out}(r,\gamma^{'},W_r))-\lambda_k r \nonumber\\
&& >\gamma+\xi_{out} Q_{out}(u,\gamma,w)-\lambda_k u \bigg) \times p_W(w) \nonumber\\
&& \times \Pi_{r=u+1}^{A+B} \mathbb{P} \bigg(\min_{\gamma^{'} \in \mathcal{S}} (\gamma^{'}+ \xi_{out} Q_{out}(r,\gamma^{'},W_r))-\lambda_k r \nonumber\\ 
&& \geq \gamma+\xi_{out} Q_{out}(u,\gamma,w)-\lambda_k u\bigg) \nonumber\\
&& \times \mathbb{I}\bigg\{\gamma=\argmin\{P_1,P_2,\cdots,P_M\}: \nonumber\\
&& \gamma+Q_{out}(u,\gamma,w) \nonumber\\
&& =\min_{\gamma^{'}}(\gamma^{'}+\xi_{out} Q_{out}(u,\gamma^{'},w))\bigg\}  \label{eqn:smdp-probability-calculation}
\end{eqnarray}
\normalsize
Now, we can write,

\footnotesize
\begin{eqnarray}
&& \sum_{\underline{w}} g(\underline{w}) \bigg(\mu^{(2)}_k(\underline{w})+\xi_{out} Q_{out}(\mu^{(1)}_k(\underline{w}),\mu^{(2)}_k(\underline{w}),w_{\mu^{(1)}_k(\underline{w})})\bigg) \nonumber\\
&=& \sum_{u=A+1}^{A+B}\sum_{j=1}^{M} \sum_{w \in \mathcal{W}} b_{k-1}(u,P_j,w) \bigg(P_j+\xi_{out} Q_{out}(u,P_j,w)\bigg) \nonumber\\
\label{eqn:smdp-numerator-simpler}
\end{eqnarray}
\normalsize
and 

\footnotesize
\begin{eqnarray}
 \sum_{\underline{w}} g(\underline{w}) \mu^{(1)}_k(\underline{w})
&=& \sum_{u=A+1}^{A+B}\sum_{j=1}^{M} \sum_{w \in \mathcal{W}} b_{k-1}(u,P_j,w) u  \nonumber\\
&=& \sum_{u=A+1}^{A+B} u \sum_{j=1}^{M}\sum_{w \in \mathcal{W}} b_{k-1}(u,P_j,w) \label{eqn:smdp-denominator-simpler}
\end{eqnarray}
\normalsize

Now, for each $(u,\gamma,w)$, $b_{k-1}(u,\gamma,w)$ (in (\ref{eqn:smdp-probability-calculation})) can be computed in 
$O(BM|\mathcal{W}|)$ operations. 
Hence, total number of operations required to compute $b_{k-1}(u,\gamma,w)$ for all $u,\gamma,w$ is $O(B^2 M^2 |\mathcal{W}|^2)$. 
Now, only $O(BM|\mathcal{W}|)$ operations are required in (\ref{eqn:smdp-numerator-simpler}) and (\ref{eqn:smdp-denominator-simpler}). 
Hence, the number of computations required in each iteration is $O(B^2 M^2 |\mathcal{W}|^2)$.

Note that, the policy improvement step is not explicitly required in the policy iteration. This is because in the 
policy evaluation step, $\lambda_k$ is sufficient to compute $b_k(u, \gamma, w)$ for all $u,\gamma,w$ 
and thereby to compute $\lambda_{k+1}$. 
Hence, we need not store the policy in each iteration.\qed

{\bf Proof of Lemma~\ref{lemma:what_is_the_heuristic_doing}:} 
 Let us denote the HeuExploreLim policy by $\mu_h$ and any other stationary, deterministic 
policy by $\mu$. Let us denote the sequence of link costs 
incurred in the deployment process (for a semi-infinite line with given shadowing over all possible links) 
under policy $\mu_h$ by $c_{\mu_h,1}, c_{\mu_h,2}, \cdots$ and the corresponding 
link lengths by $u_{\mu_h,1}, u_{\mu_h,2}, \cdots$. Let us denote, under policy $\mu_h$, 
the shadowing observed at the $i$-th location (where $A+1 \leq i \leq A+B$) in the measurement process 
for the placement of the $l$-th node,   
by $w_{i,l}$. Now, let us couple the deployment processes 
under policies $\mu$ and $\mu_h$ in the following way. Suppose that, under policy $\mu$, 
the shadowing observed at the $i$-th location 
for the placement of the $l$-th node is again $w_{i,l}$ (this is valid since shadowing is i.i.d across links). 
Clearly, $\frac{c_{\mu_h,j}}{u_{\mu_h,j}} \leq \frac{c_{\mu,j}}{u_{\mu,j}}$. Hence, by the strong law of large 
numbers, $\mathbb{E}_{\mu_h}\bigg( \frac{C_{\mu_h}}{U_{\mu_h}} \bigg) \leq \mathbb{E}_{\mu}\bigg( \frac{C_{\mu}}{U_{\mu}} \bigg)$, 
since $\bigg( \frac{C_{\mu_h,j}}{U_{\mu_h,j}} \bigg)$ is i.i.d. across $j$ due to i.i.d. shadowing across links.

\vspace{-3mm}
\section{Comparison between Explore-Forward and Pure As-You-Go Approaches}
\label{appendix:comparison_average_cost_backtracking_no_backtracking}
\vspace{-1mm}

\begin{figure}[!t]
\centering
\includegraphics[scale=0.24]{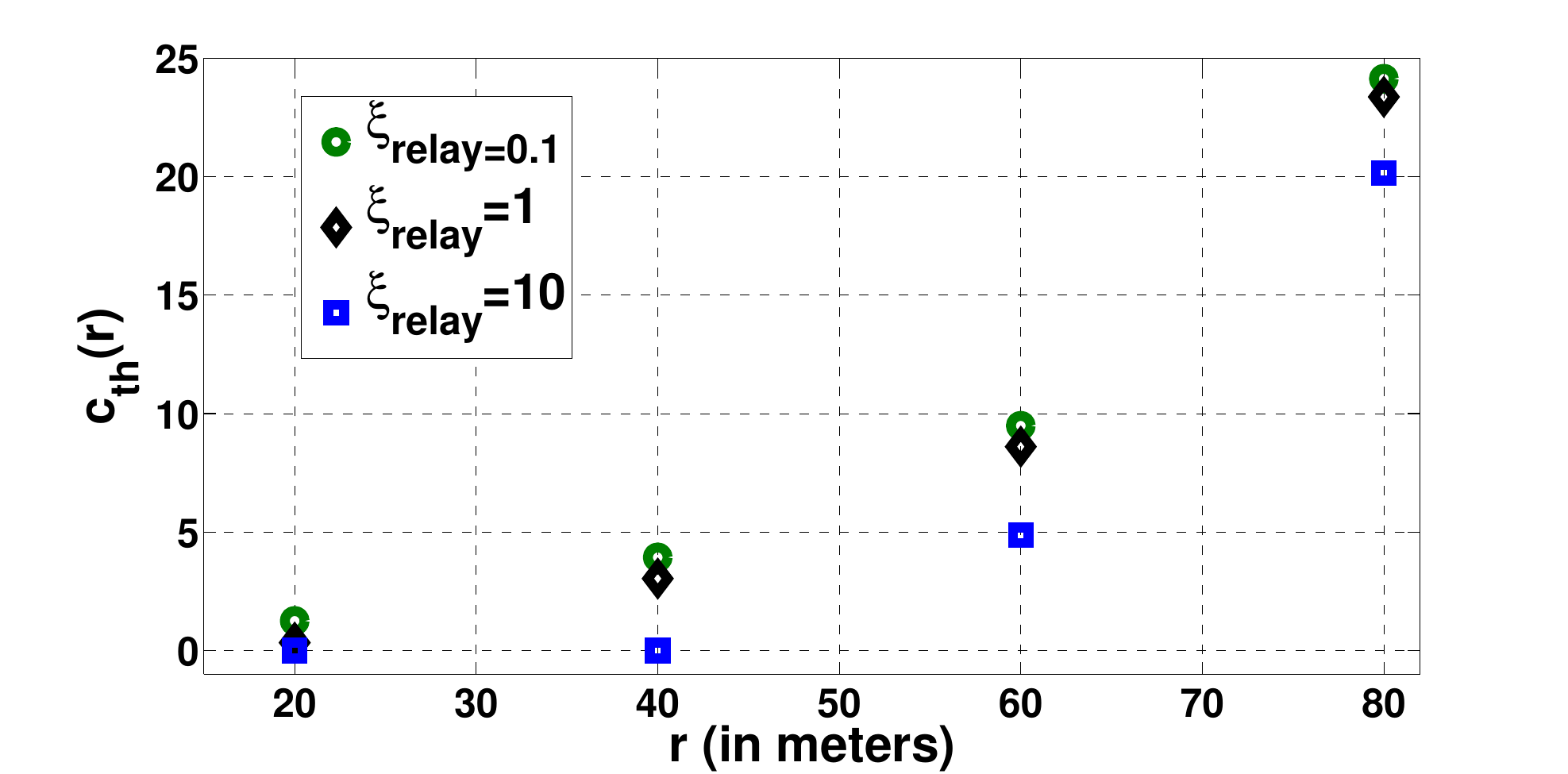}
\vspace{-0mm}
\caption{Pure as-you-go deployment; variation of $c_{th}(r)$ with $r$ for $\xi_{out}=100$ and various values of $\xi_{relay}$.}
\label{fig:threshold_vs_distance_various_relay_cost}
\vspace{-3mm}
\end{figure}
\vspace{-2mm}
\begin{figure}[!t]
\centering
\includegraphics[scale=0.24]{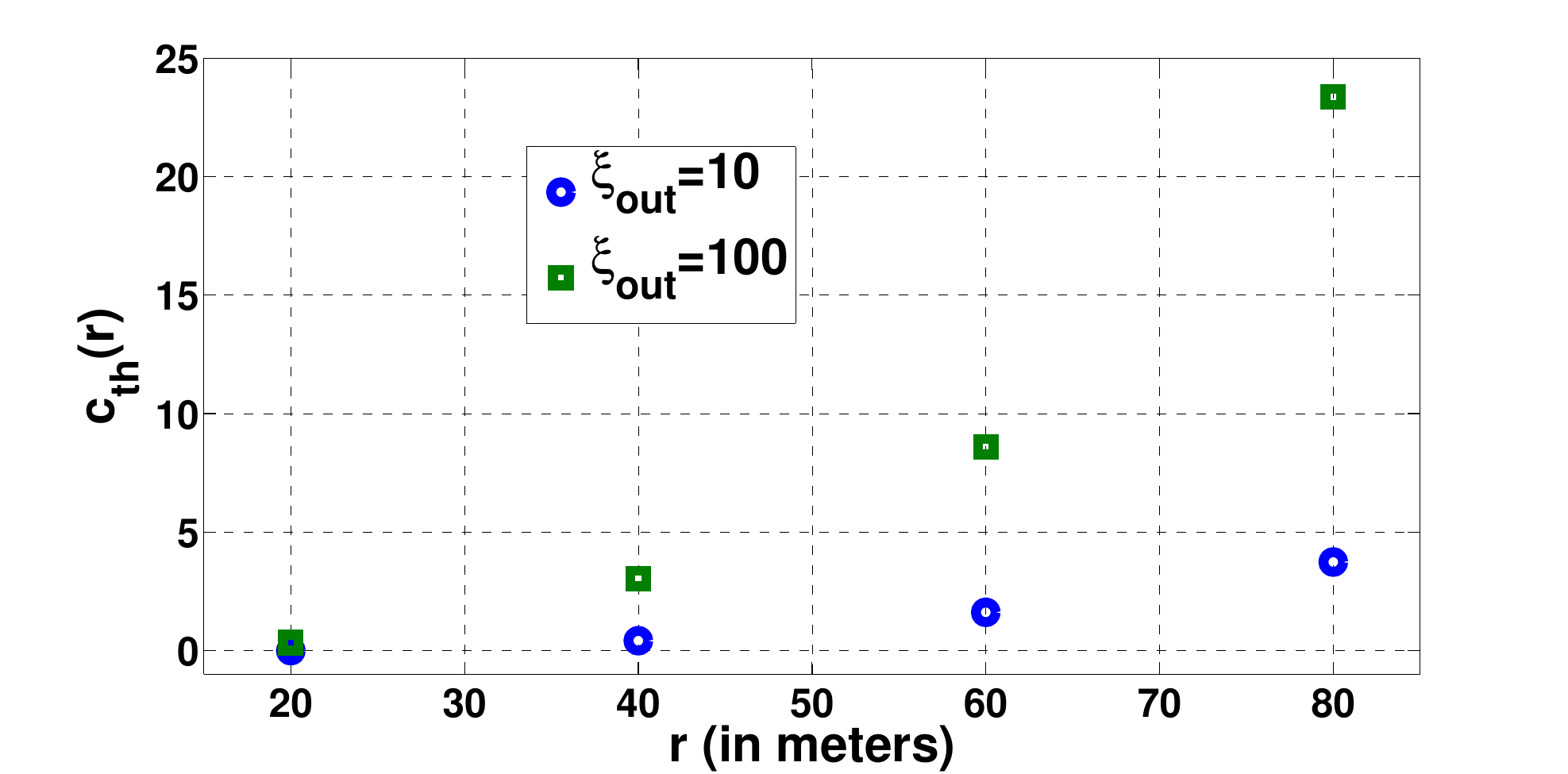}
\vspace{-0mm}
\caption{Pure as-you-go deployment; variation of $c_{th}(r)$ with $r$ for $\xi_{relay}=1$ and various values of $\xi_{out}$.}
\label{fig:threshold_vs_distance_various_outage_cost}
\vspace{-3mm}
\end{figure}

\begin{figure*}[t]
\begin{minipage}[r]{0.48\linewidth}
\subfigure{
\includegraphics[width=\linewidth, height=3.5cm]{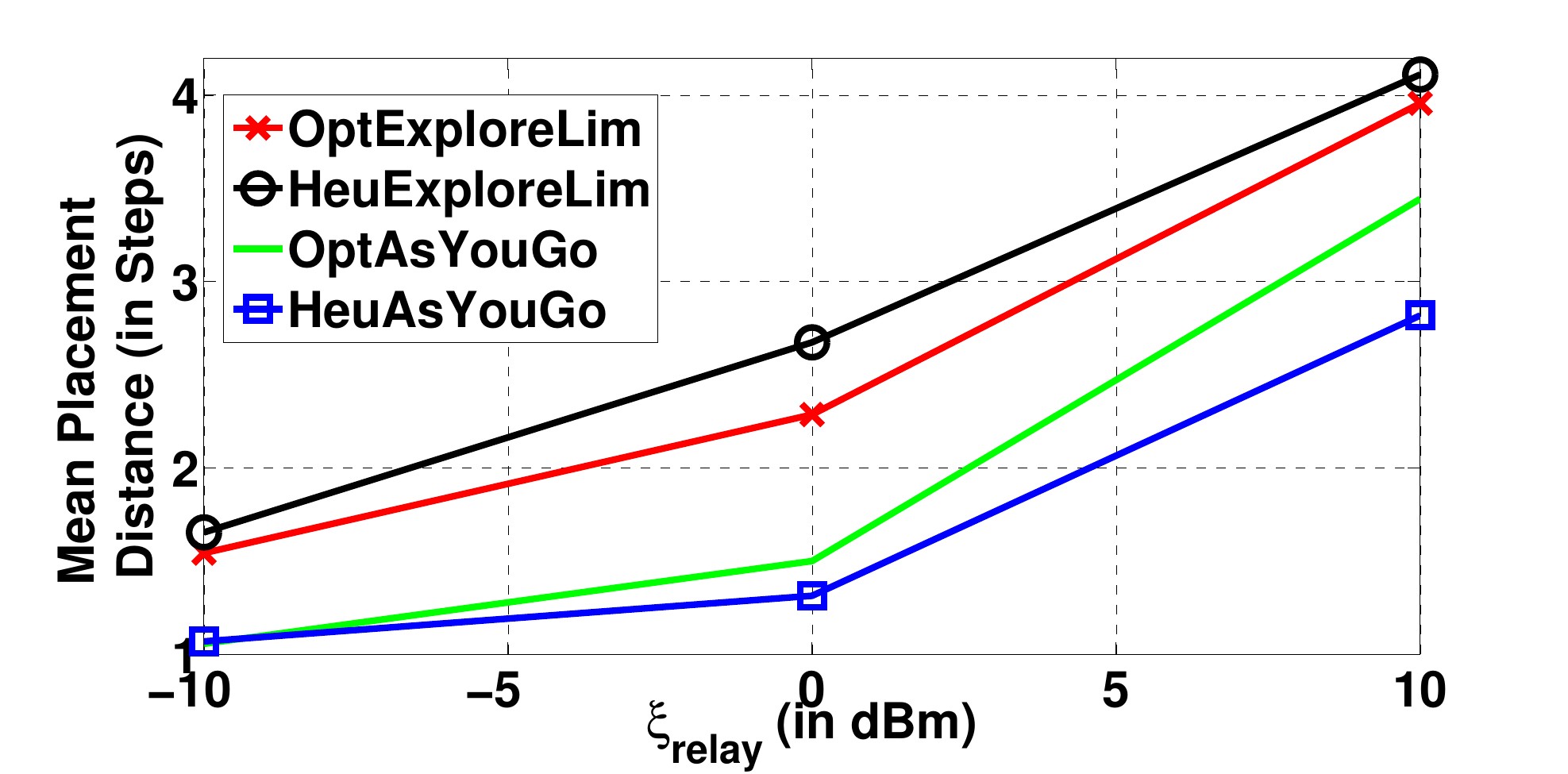}
\includegraphics[width=\linewidth, height=3.5cm]{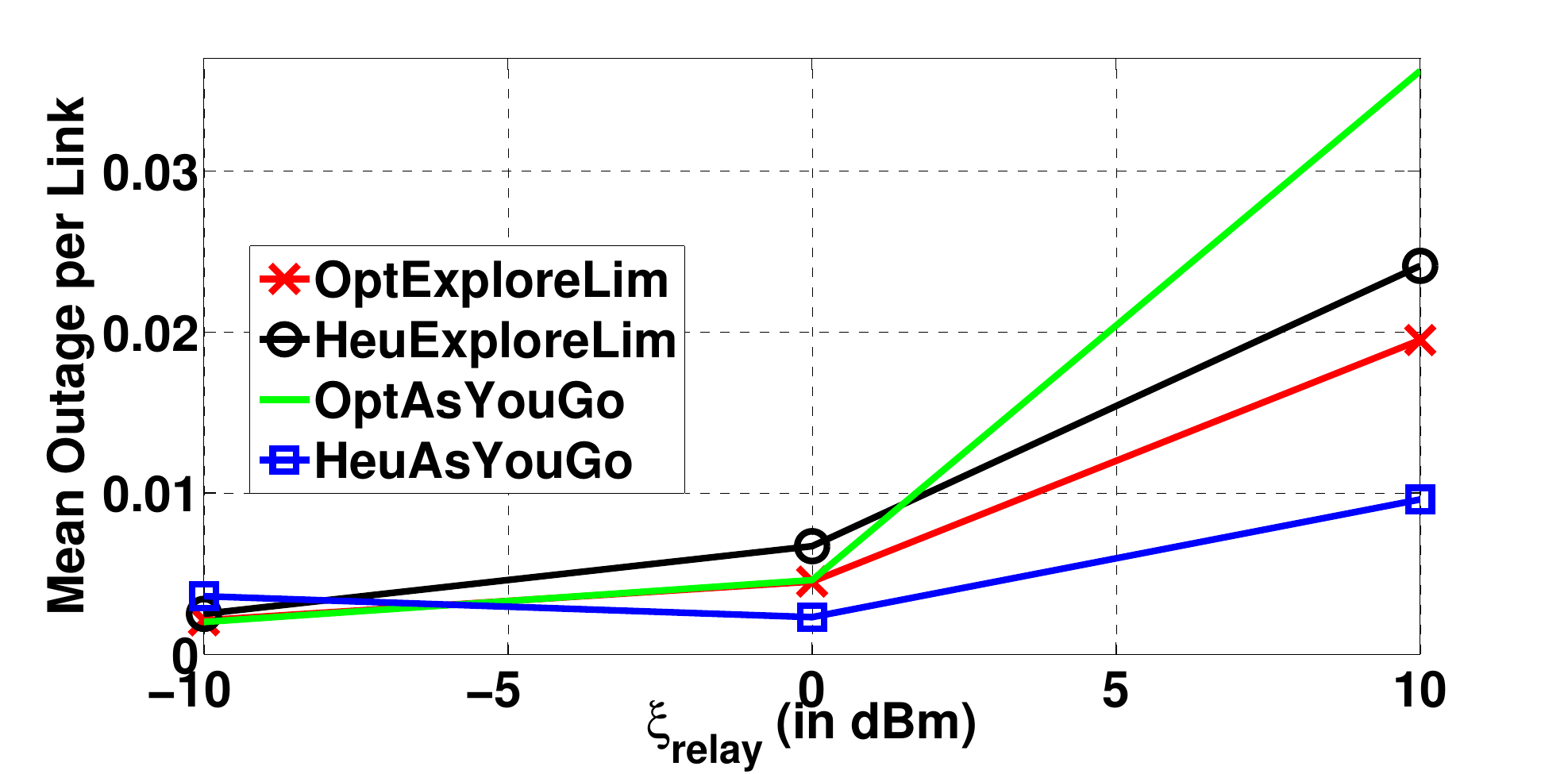}
\label{fig:cost-model-based}}
\end{minipage}  \hfill
\end{figure*}
\vspace{-0mm}
\begin{figure*}[t]
\begin{minipage}[c]{0.48\linewidth}
\subfigure{
\includegraphics[width=\linewidth, height=3.5cm]{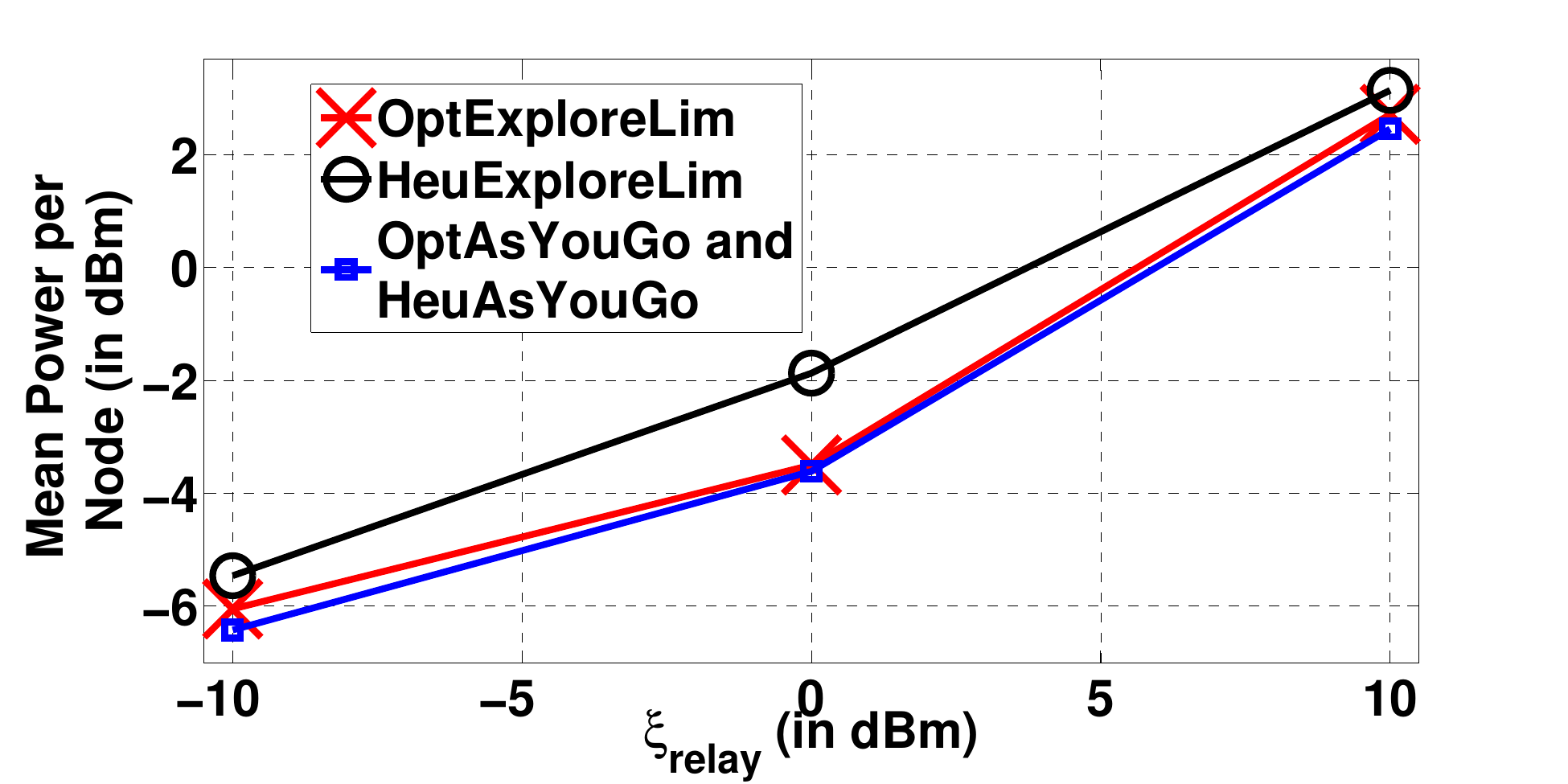}
\includegraphics[width=\linewidth, height=3.5cm]{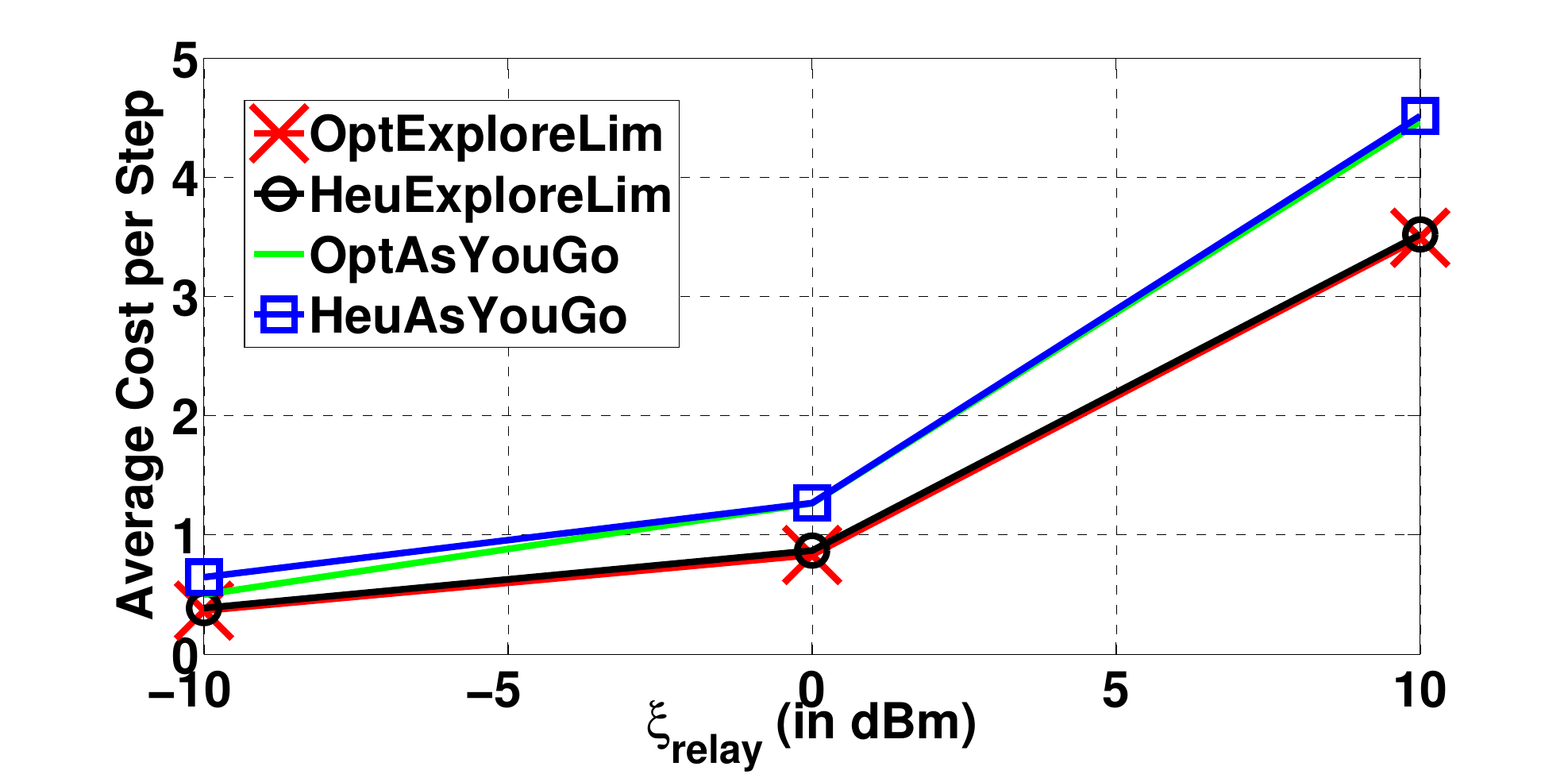}
\label{fig:outage-model-based}}
\end{minipage} \hfill
\caption{Results for $\xi_{out}=100$: mean cost per step, mean power per link, mean outage per link and 
mean placement distance (steps) vs. $\xi_{relay}$ for the four algorithms: 
OptExploreLim, OptAsYouGo, HeuExploreLim, and HeuAsYouGo. Unit of $\xi_{relay}$ is actually mW, but in this figure 
it is shown in dBm. $\xi_{relay}$, when expressed in dBm, is equal to $10 \log_{10}(\xi_{relay})$. 
In the Power plot, the HeuAsYouGo plot overlaps the OptAsYouGo plot, since 
the node power in the HeuAsYouGo algorithm was taken to be the same as the mean node power with the OptAsYouGo algorithm.}
\label{fig:cost-power-outage-distance-model-based_vs_xi_relay}
\vspace{-0mm}
\end{figure*}

\textbf{Proof of Theorem~\ref{theorem:comparison-backtracking-no-backtracking}}
Note that for the average cost problem with pure as-you-go, there exists an optimal threshold policy 
(similar to Theorem~\ref{theorem:policy_structure_sum_power_sum_outage_no_backtracking}), since the optimal policy for problem 
(\ref{eqn:sum_power_discounted_no_backtracking}) achieves $\lambda_{ayg}^*$ average cost per step for $\theta$ sufficiently close to $0$. 
So, let one such optimal policy be given by the set of thresholds $\{c_{th}(r)\}_{A+1 \leq r \leq A+B-1}$. 

Now, let us consider the average cost minimization problem with explore-forward. Consider the policy where we 
first measure $w_{A+1},w_{A+2},\cdots,w_{A+B}$ and decide 
to place a relay $u$ steps away from the previous relay (where $A+1 \leq u \leq A+B-1$) if 
$\min_{\gamma \in \mathcal{S}} (\gamma+\xi_{out} Q_{out}(r,\gamma,w_r)) > c_{th}(r)$ for all $r \leq (u-1)$ and 
$\min_{\gamma \in \mathcal{S}} (\gamma+\xi_{out} Q_{out}(u,\gamma,w_u)) \leq c_{th}(u)$. 
We must place if we reach at a distance $(A+B)$ from the previous relay. 
But this is a particular policy for the problem where we gather 
$w_{A+1},w_{A+2},\cdots,w_{A+B}$ and then decide where to place the relay, and clearly the average cost per step 
for this policy is $\lambda_{ayg}^*$ which cannot be less than the optimal average cost $\lambda_{ef}^*$.\qed

\subsection{Optimal Policy Structure for the Pure As-You-Go Approach}
\label{subsection:numerical_policy_structure_sum_power_sum_outage_no_backtracking_average_cost}

The variation of $c_{th}(r)$ (see Section~\ref{subsection:policy_structure_sum_power_sum_outage_no_backtracking} and 
Section~\ref{subsection:average-cost-no-backtracking-limit approach}, we have taken $\theta$ sufficiently close to $0$) 
with $r$, for various values of the relay cost $\xi_{relay}$ and the cost of outage $\xi_{out}$, has been shown 
in Figure~\ref{fig:threshold_vs_distance_various_relay_cost} and Figure~\ref{fig:threshold_vs_distance_various_outage_cost}. 
For a fixed $\xi_{out}$, $c_{th}(r)$ decreases with $\xi_{relay}$; i.e., as the cost of placing 
a relay increases, we place relays less 
frequently. On the other hand, for a fixed $\xi_{relay}$, $c_{th}(r)$ increases with $\xi_{out}$. 
This happens because if the cost of 
outage increases, we cannot tolerate outage and place the relays close to each other. 
Note also that, $c_{th}(r)$ increases in $r$ as stated in 
Algorithm~\ref{algorithm:OptAsYouGo}.

\vspace{-2mm}
\subsection{Comparison Among Various Deployment Algorithms}
\label{subsection:comparison_backtracking_no_backtracking_average_cost}
\vspace{-2mm}

Next, assuming a system model as described in Section~\ref{sec:system_model_and_notation} and assuming the parameter 
values as in 
Section~\ref{subsection:parameter_values}, we computed 
the mean cost per step, mean power per node, mean outage per link and mean placement distance (between successive relays) 
for four deployment algorithms presented so far\footnote{Note that, these computations were done on MATLAB; they did not 
involve any field deployment. Field experimentations were done only to validate the assumptions (such as independent shadowing 
assumption) and to compute the values of the  parameters such as $\eta$ and $\sigma$.}. Some of the results are shown  
in Figure~\ref{fig:cost-power-outage-distance-model-based_vs_xi_relay}. 
In order to make a fair comparison, {\em we used the mean power per node 
for OptAsYouGo as the fixed node transmit power for HeuAsYouGo, and the mean outage per link of OptAsYouGo 
as the pre-fixed target outage for HeuAsYouGo.} 
The following observations are from the plots in Figure~\ref{fig:cost-power-outage-distance-model-based_vs_xi_relay}.

\subsubsection{Mean Placement Distance (see the top left 
panel of Figure~\ref{fig:cost-power-outage-distance-model-based_vs_xi_relay})}
Pure as-you-go algorithms (OptAsYouGo, HeuAsYouGo) 
place relays sooner than the algorithms that explore forward  
(OptExploreLim, HeuExploreLim) before placing a relay (see Figure~\ref{fig:cost-power-outage-distance-model-based_vs_xi_relay}). 
This is as expected, since pure as-you-go algorithms 
do not have the advantage of 
exploring over several locations and then picking the best. A pure as-you-go approach tends to be cautious, 
and therefore tries to avoid a high outage by placing relays frequently. 
As $\xi_{relay}$ (cost of a relay) increases, relays will be placed less frequently 
(according to Theorem~\ref{theorem:outage_decreasing_with_xio_placement_rate_decreasing_with_xir}). 

\subsubsection{Mean Outage per Link (see the top right 
panel of Figure~\ref{fig:cost-power-outage-distance-model-based_vs_xi_relay})}
As $\xi_{relay}$ increases, the mean outage per link increases because 
we will place fewer relays with higher inter-relay distances. 
 Pure as-you-go algorithms have link outage 
probability comparable to explore-forward algorithms, but they place relays too frequently. 
We observe that the per-link outage of HeuAsYouGo is different from that of OptAsYouGo. This happens because whenever we 
place a node using HeuAsYouGo, the exact outage target is never met with equality. 
Also, the per-link outage may decrease with $\xi_{relay}$ for HeuAsYouGo. As 
$\xi_{relay}$ increases, the node power and the target outage (chosen from OptAsYouGo) increases 
in such a way that the per-link outage for HeuAsYouGo behaves in this fashion. 

We have also observed that, as $\xi_{out}$, the penalty for outage,  increases, 
the mean outage per link decreases. But that result has not been shown  here.

\subsubsection{Mean Power per Link (see the bottom left 
panel of Figure~\ref{fig:cost-power-outage-distance-model-based_vs_xi_relay})}

     Increasing $\xi_{relay}$ will place relays 
      less frequently, hence the transmit power increases. OptAsYouGo has smaller placement distance compared to 
      OptExploreLim and HeuExploreLim, and hence it uses less power at each hop; we note, however, 
that OptAsYouGo places more relays, and, hence, could still end up using more power per step. 

In the power plot, the HeuAsYouGo plot overlaps the OptAsYouGo plot, 
since the node power in the HeuAsYouGo algorithm was taken to be the same as the
mean node power with the OptAsYouGo algorithm.

We have also seen that increasing $\xi_{out}$ (the cost per unit outage) will lower outage and 
      hence the per-node transmit power increases. 

\subsubsection{Network Cost Per Step (see the bottom right  
panel of Figure~\ref{fig:cost-power-outage-distance-model-based_vs_xi_relay})}

The network cost per step is the optimal average cost per step; 
see (\ref{eqn:unconstrained_problem_average_cost_with_outage_cost}). 
Cost increases with $\xi_{relay}$ (see Figure~\ref{fig:cost-power-outage-distance-model-based_vs_xi_relay}) and $\xi_{out}$. 
 OptAsYouGo has a larger cost than OptExploreLim and HeuExploreLim, owing to shorter links. 
{\em The average cost per step of HeuExploreLim  is very close to OptExploreLim and cost of 
HeuAsYouGo is close to OptAsYouGo, even though the heuristic policies are not optimal.} 
However, we observed that this does not always happen. 
For example, for $\xi_{relay}=0.1$ and $\xi_{out}=1000$, we found that the average cost per step 
for OptAsYouGo and HeuAsYouGo are $1.3485$ and $1.9581$ respectively, and the average cost per step 
for OptExploreLim and HeuExploreLim are $0.9810$ and $1.0537$ respectively. 

\textbf{Discussion:} 

\begin{enumerate}[label=(\roman{*}),noitemsep,nolistsep]
 \item HeuExploreLim and HeuAsYouGo appear to be attractive at the first sight 
because they are intuitive, easy to implement, and they do not require any 
channel model for given $\xi_{out}$ and $\xi_{relay}$. But, they are suboptimal, and 
we do not have any performance guarantee (e.g., the optimality gap w.r.t. the optimal algorithms 
OptExploreLim and OptAsYouGo). Hence, if we know the radio propagation model (e.g., $\eta$ and $\sigma$) exactly, and 
if $\xi_{out}$ and $\xi_{relay}$ are given, 
it is better to compute the optimal policies and then deploy according to them.

\item Note that, the mean number of measurements 
made per step for the pure as-you-go approach is $1$, whereas it is $\frac{B}{\mathbb{E}(U)}$ 
under the explore-forward approach, where 
$\mathbb{E}(U)$ is the mean distance between successive relays. From the numerical results 
presented in this section, we find that, under the explore-forward approach, 
the mean number of measurements required will be at most $3$, and can be even less than $2$ depending on the situation. 
For applications that do not require rapid deployment, such 
as deployment in a large forest for monitoring purpose, this many measurements is affordable. {\em Hence, for the learning 
algorithms, we   consider only explore-forward approach.}

\item  More importantly, in practice 
the propagation environment will not be known, and, in order to solve 
the problem defined in (\ref{eqn:constrained_problem_average_cost_with_outage_cost}), we need to choose 
$\xi_{out}^*$ and $\xi_{relay}^*$ while deploying 
(as explained in Theorem~\ref{theorem:how-to-choose-optimal-Lagrange-multiplier}), if possible. 
But we cannot choose this pair if we do not have a prior knowledge of the propagation environment. 
Poor choice of $\xi_{out}$ and $\xi_{relay}$ might lead to violation of the constraints 
in the constrained problem defined in (\ref{eqn:constrained_problem_average_cost_with_outage_cost}), or might 
result in a higher mean power per step compared to the optimal mean power per step under the constraints. 
Hence, we need to adapt $\xi_{out}$ and $\xi_{relay}$ as deployment 
progresses. The adaptive algorithms use the structure of the 
optimal policy OptExploreLim.

\end{enumerate}

\section{OptExploreLimLearning: Learning with Explore-Forward, for Given 
$\xi_{out}$ and $\xi_{relay}$}\label{appendix:learning_backtracking_given_xio_xir}

\textbf{Proof of Theorem~\ref{theorem:learning_backtracking_given_xio_xir_general_step_size}:}

Let us denote the shadowing random variable in the link between the potential locations 
 located at distances $i \delta$ and $j \delta$ from the sink node by $W_{i,j}$. 
 The sample space $\Omega$ associated with the deployment process is the collection of all $\omega$ (each $\omega$ 
 corresponds to a fixed realization $\{w_{i,j}: i \geq 0, j \geq 0, i>j, A+1 \leq i-j \leq A+B \}$ 
 of all possible shadowing random variables that might be encountered in the measurement process for 
 deployment up to infinity).  
 Let $\mathcal{F}$ be the Borel $\sigma$-algebra on  $\Omega$. Let $S_k=\sum_{i=1}^k U_i$ be the distance 
 (in steps) of the $k$-th relay from the sink ($S_0:=0$), and 
 $\mathcal{F}_k:=\sigma \bigg(\lambda^{(0)}; W_{i,j}:i \geq 0, j \geq 0, i>j, A+1 \leq i-j \leq A+B, 
 i \leq S_{k-1}+A+B, j \leq S_{k-1}+A+B \bigg)$. The sequence of $\sigma$-algebras $\mathcal{F}_k$ is increasing in $k$, and 
 $\mathcal{F}_k$  captures the history of the deployment process 
up to the deployment of the $k$-th relay.

Note that, 
we can rewrite the update equation in Algorithm~\ref{algorithm:learning_backtracking_given_xio_xir_general_step_size} as 
follows: 

\footnotesize
\begin{equation*}
 \lambda^{(k+1)}=\lambda^{(k)} + a_{k+1}\bigg( f(\lambda^{(k)}) + N_{k+1} \bigg)
\end{equation*}
\normalsize
 
where

\footnotesize
\begin{equation*}
f(\lambda)=\mathbb{E_{\underline{W}}} \min_{u,\gamma}\bigg( \gamma+ \xi_{out} Q_{out}(u,\gamma, W_u)+\xi_{relay} -\lambda u \bigg)
 \end{equation*}
\normalsize

and 

\footnotesize
\begin{eqnarray*}
N_{k+1}&=&\min_{u,\gamma} \bigg( \gamma+ \xi_{out} Q_{out}(u,\gamma, W_u)+\xi_{relay} -\lambda^{(k)} u \bigg) - \\
      && \mathbb{E_{\underline{W}}} \min_{u,\gamma}\bigg( \gamma+ \xi_{out} Q_{out}(u,\gamma, W_u)+\xi_{relay} -\lambda^{(k)} u \bigg)
 \end{eqnarray*}
\normalsize

Note that, $( \gamma+ \xi_{out} Q_{out}(u,\gamma, W_u)+\xi_{relay} -\lambda u )$ is a linearly 
decreasing function in $\lambda$. Hence, 
$ \min_{u,\gamma} ( \gamma+ \xi_{out} Q_{out}(u,\gamma, W_u)+\xi_{relay} -\lambda u )$ is a concave, strictly decreasing 
function in $\lambda$. The function $f(\lambda)$ is a nonnegative linear 
combination of concave, strictly decreasing functions of 
$\lambda$.  Hence, $f(\lambda)$ is strictly decreasing, concave function of $\lambda$ 
for $\lambda \in [0,\infty)$. Hence, $f(\lambda)$ is continuous in $\lambda$. Now, $f(0)>0$ and 
$\lim_{\lambda \rightarrow \infty} f(\lambda)=-\infty$. {\em Hence, $f(\lambda)=0$ will have a unique positive solution.}

Also, if we increase $\lambda$ 
by an amount $\Delta$, then we will have $(A+1)\Delta \leq|f(\lambda+\Delta)-f(\lambda)| \leq (A+B) \Delta$. 
{\em Hence, $f(\cdot)$ is Lipschitz continuous with Lipschitz constant $(A+B)$.}

Let us invoke four conditions from 
Chapter~2 of \cite{borkar08stochastic-approximation-book} as follows:

\begin{enumerate}[label=(\roman{*})]
  \item $f(\cdot)$ is a Lipschitz continuous function.
  \item $\sum_{k=1}^{\infty} a_k = \infty$, $\sum_{k=1}^{\infty} a_k^2 < \infty$.
  \item $\{N_k \}_{k \geq 1}$ is a Martingale difference sequence w.r.t the sigma field $\mathcal{F}_k$ and 
        $\mathbb{E}(|N_{k+1}|^2|\mathcal{F}_k) \leq K (1+|\lambda^{(k)}|^2)$ for some $K>0$.
  \item $\sup_{k \geq 1}|\lambda^{(k)}|< \infty$ almost surely. 
\end{enumerate}

By Theorem~2 (in Chapter~2) of 
\cite{borkar08stochastic-approximation-book}, if the four conditions are satisfied, then $\lambda^{(k)}$ will almost surely 
converge to the unique zero of $f(\cdot)$. But, that unique zero is the optimal average cost per step $\lambda^*$ 
which satisfies $f(\lambda^*)=0$ (by Theorem~\ref{theorem:optimality-condition-required-for-backtracking-average-cost-learning}). 
Hence, the problem reduces to checking the conditions (i)-(iv).

Since $f(\lambda)$ is Lipschitz continuous with Lipschitz constant $(A+B)$, condition~(i) 
is satisfied. Condition~(ii) is satisfied by the choice of $a_k$. 

By definition of $N_k$, we have 
$\mathbb{E}_{\underline{W}} (N_{k+1}| \mathcal{F}_k)=\mathbb{E}_{\underline{W}} (N_{k+1}| \lambda^{(k)})=0$ 
(since shadowing is i.i.d. across links, the shadowing values encountered in the process of measurement for 
placing a new node are independent of the shadowing values encountered in the measurement process for 
deploying the previous nodes) which implies that $\{N_{k+1} \}_{k \geq 1}$ is a Martingale difference 
sequence w.r.t. $\mathcal{F}_k$. 
Now, since the conditional second moment is greater than conditional variance almost surely, we have (almost surely): 

\footnotesize
\begin{eqnarray*}
 \mathbb{E}(|N_{k+1}|^2 |\mathcal{F}_k) &\leq& \mathbb{E} \bigg( \bigg( \min_{u,\gamma} ( \gamma+ \xi_{out} Q_{out}(u,\gamma, W_u) \nonumber\\
&&+\xi_{relay} -\lambda^{(k)} u ) \bigg)^2| \mathcal{F}_k \bigg)
\end{eqnarray*}
\normalsize

Now, we know that $\gamma \leq P_M$, $A+1 \leq u \leq A+B$, outage probability is always in $[0,1]$, and $\xi_{out}$ and $\xi_{relay}$ are fixed. Hence, 
$\mathbb{E}(|N_{k+1}|^2 |\mathcal{F}_k)$ can be upper bounded by $K (1+|\lambda^{(k)}|^2)$ for some $K>0$. 
Hence, condition~(iii) is also satisfied. Condition~(iv) is satisfied by the following lemma.

\begin{lemma}\label{lemma:boundedness_lambda_iteration_optexplorelimlearning}
 For the iterates $\{\lambda^{(k)}\}_{k \geq 1}$ in (\ref{eqn:learning_backtracking_given_xio_xir_update_part}), 
 $\sup_{k \geq 1}|\lambda^{(k)}|< \infty$ almost surely.
\end{lemma}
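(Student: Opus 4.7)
The plan is to rewrite the iteration (\ref{eqn:learning_backtracking_given_xio_xir_update_part}) as
\begin{eqnarray*}
\lambda^{(k+1)} = (1 - a_{k+1} u_{k+1})\,\lambda^{(k)} + a_{k+1} u_{k+1} \cdot \frac{\gamma_{k+1} + \xi_{out} Q_{out}^{(k+1,k)} + \xi_{relay}}{u_{k+1}}.
\end{eqnarray*}
Since $\gamma_{k+1} \leq P_M$, $Q_{out}^{(k+1,k)} \in [0,1]$, and $u_{k+1} \in \{A+1, A+2, \ldots, A+B\}$, the ratio on the right lies in the deterministic interval $[0, \bar{\lambda}]$ where $\bar{\lambda} := (P_M + \xi_{out} + \xi_{relay})/(A+1)$. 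These bounds hold for every sample path.

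Next I would exploit the fact that $a_k \downarrow 0$: there is a deterministic integer $K_0$ such that $a_{k+1}(A+B) < 1$ for every $k \geq K_0$. For such $k$, both weights $(1 - a_{k+1} u_{k+1})$ and $a_{k+1} u_{k+1}$ lie in $(0,1)$, so $\lambda^{(k+1)}$ is a genuine convex combination of $\lambda^{(k)}$ and a point in $[0, \bar{\lambda}]$. Define the interval $I := [\min(0, \lambda^{(K_0)}), \max(\bar{\lambda}, \lambda^{(K_0)})]$. Because $I$ contains both $[0, \bar{\lambda}]$ and $\lambda^{(K_0)}$, an easy induction shows $I$ is forward invariant: $\lambda^{(k)} \in I$ for every $k \geq K_0$.

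For $k < K_0$, the recursion consists of finitely many steps with the deterministic coefficients $a_1, \ldots, a_{K_0}$ and with driving terms bounded uniformly in the sample path by $P_M + \xi_{out} + \xi_{relay}$ and $A+B$; applying the triangle inequality $|\lambda^{(k+1)}| \leq |1 - a_{k+1}u_{k+1}|\,|\lambda^{(k)}| + a_{k+1}(P_M + \xi_{out} + \xi_{relay})$ iteratively yields a deterministic constant $C_0$ (depending only on $|\lambda^{(0)}|$, $a_1,\ldots,a_{K_0}$, $A$, $B$, $P_M$, $\xi_{out}$, $\xi_{relay}$) with $|\lambda^{(k)}| \leq C_0$ for $k \leq K_0$. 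In particular $\lambda^{(K_0)}$ is bounded deterministically, and therefore the interval $I$ is contained in a deterministic bounded set. Combining the two ranges gives $\sup_{k \geq 1} |\lambda^{(k)}| \leq \max(C_0, \bar{\lambda} + C_0) < \infty$ pathwise, which immediately implies the required almost-sure statement.

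The only delicate point is handling the pre-$K_0$ regime, where the contraction factor $(1 - a_{k+1} u_{k+1})$ need not be in $(0,1)$ and so the convex-combination argument fails; this is resolved by noting that $\{a_k\}$ is a fixed deterministic sequence while the random inputs $\gamma_k$, $Q_{out}^{(k,k-1)}$, $u_k$ are uniformly bounded by constants independent of the sample path, so the first $K_0$ iterates admit a sample-path-independent bound.
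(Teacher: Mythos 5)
Your proposal is correct and follows essentially the same route as the paper's own proof: split the time axis at a deterministic $K_0$ beyond which $a_{k+1}(A+B)<1$, bound the first $K_0$ iterates by a deterministic constant using the uniform bounds on $\gamma_k$, $Q_{out}^{(k,k-1)}$, $u_k$, and then show that a fixed interval containing both the bounded driving term and $\lambda^{(K_0)}$ is forward invariant because $\lambda^{(k+1)}$ is a convex combination of $\lambda^{(k)}$ and a point in $[0,(P_M+\xi_{out}+\xi_{relay})/(A+1)]$. The paper carries out the same induction with the slightly looser interval $[-d,d]$, $d>P_M+\xi_{out}+\xi_{relay}$; the difference is purely cosmetic.
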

\begin{proof}
 Let us define $K_0$ to be the smallest integer such that $a_k(A+B)<1$ for all $k \geq K_0$ ($K_0$ exists 
since  $a_k \downarrow 0$). 
For any starting value $\lambda^{(0)}$, it is easy to find a positive real number $d$ (depending on 
the value of $\lambda^{(0)}$) 
such that $\lambda^{(k)} \in [-d,d]$ for all $k \leq K_0$; this is easy to see because 
the node transmit power, node outage probability  
and placement distance for each node are bounded quantities. 

Without loss of generality, we can take $d>P_M+\xi_{out}+\xi_{relay}$ where $P_M$ is the maximum transmit power level 
of a node. We already have that $\lambda^{(k)} \in [-d,d]$ for all $k \leq K_0$. 
Now we will show that $\lambda^{(k)} \in [-d,d]$ for all $k \geq K_0$. To this end, let us assume, as our 
induction hypothesis, that $\lambda^{(k)} \in [-d,d]$ for some $k \geq K_0$. If we can show that 
$\lambda^{(k+1)} \in [-d,d]$, we will be done with the proof.

From the update equation (\ref{eqn:learning_backtracking_given_xio_xir_update_part}), we can write 
that (using $(A+B) \geq u_{k+1} \geq 1$ and $0 \leq a_{k+1}u_{k+1} < 1$):

\footnotesize
\begin{eqnarray*}
 \lambda^{(k+1)} &\leq& \lambda^{(k)}+a_{k+1}(P_M+\xi_{out}+\xi_{relay}-\lambda^{(k)}u_{k+1}) \\
&=& (1-a_{k+1}u_{k+1})\lambda^{(k)}+a_{k+1}(P_M+\xi_{out}+\xi_{relay}) \\
&\leq& (1-a_{k+1}u_{k+1})\lambda^{(k)}+a_{k+1}u_{k+1}(P_M+\xi_{out}+\xi_{relay}) \\
                  &\leq& \max \{ \lambda^{(k)}, P_M+\xi_{out}+\xi_{relay} \} \\
                   &\leq&  d
\end{eqnarray*}
\normalsize

On the other hand:

\footnotesize
\begin{eqnarray*}
 \lambda^{(k+1)} & \geq & \lambda^{(k)}+a_{k+1}(0-\lambda^{(k)}u_{k+1}) \\
            &=& (1-a_{k+1}u_{k+1})\lambda^{(k)} \\
            &\geq & -(1-a_{k+1}u_{k+1})d \\
            &\geq& -d
\end{eqnarray*}
\normalsize

Hence, $\lambda^{(k+1)} \in [-d,d]$ and the lemma is proved.
\end{proof}

Now, since conditions~(i)-(iv) are satisfied, 
by Theorem~2, Chapter~2 of \cite{borkar08stochastic-approximation-book}, 
$\lambda^{(k)} \rightarrow \lambda^*$ almost surely.\qed

\textbf{Proof of Corollary~\ref{corollary:learning_backtracking_given_xio_xir_special_step_size}:} 
Suppose that, we choose $a_k=\frac{1}{\sum_{i=1}^k u_i}$ 
in Algorithm~\ref{algorithm:learning_backtracking_given_xio_xir_general_step_size}. Then, $\sum_{k=1}^{\infty}a_k \geq \sum_{k=1}^{\infty} \frac{1}{k(A+B)×}=\infty$ almost 
surely and $\sum_{k=1}^{\infty}a_k^2 \leq \sum_{k=1}^{\infty} \frac{1}{k^2(A+1)^2}< \infty$ almost surely. 

Now, with this step size, 

\footnotesize
\begin{eqnarray*}
 \lambda_1 &=& \lambda^{(0)}+a_1 (\gamma_1+\xi_{out}Q_{out}^{(1,0)}+\xi_{relay}-\lambda^{(0)} u_1) \\
           &=& \lambda^{(0)}+ \frac{1}{u_1} (\gamma_1+\xi_{out}Q_{out}^{(1,0)}+\xi_{relay}-\lambda^{(0)} u_1) \\
           &=& \frac{\gamma_1+\xi_{out}Q_{out}^{(1,0)}+\xi_{relay}}{u_1}
\end{eqnarray*}
\normalsize

and, in general, 

\footnotesize
\begin{eqnarray*}
 \lambda^{(k+1)} &=& \lambda^{(k)}+a_{k+1} (\gamma_{k+1}+\xi_{out}Q_{out}^{(k+1,k)}+\xi_{relay}-\lambda^{(k)} u_{k+1}) \\
           &=& \lambda^{(k)}+\frac{(\gamma_{k+1}+\xi_{out}Q_{out}^{(k+1,k)}+\xi_{relay}-\lambda^{(k)} u_{k+1})}{\sum_{i=1}^{k+1} u_i}  \\
           &=& \frac{\lambda^{(k)} \sum_{i=1}^k u_i +(\gamma_{k+1}+\xi_{out}Q_{out}^{(k+1,k)}+\xi_{relay})}{\sum_{i=1}^{k+1} u_i} \\
           &=&  \frac{ \sum_{i=1}^{k+1} (\gamma_i+\xi_{out}Q_{out}^{(i,i-1)}+\xi_{relay})}{\sum_{i=1}^{k+1} u_i} \\
\end{eqnarray*}
\normalsize

Hence, in (\ref{eqn:learning_backtracking_given_xio_xir_update_part}) 
of Algorithm~\ref{algorithm:learning_backtracking_given_xio_xir_general_step_size}, we can 
replace $\lambda^{(k)}=\frac{\sum_{i=1}^k (\gamma_i+\xi_{out}Q_{out}^{(i,i-1)}+\xi_{relay})}{\sum_{i=1}^k u_i}$, and 
this proves the theorem.

\section{OptExploreLimAdaptiveLearning with Constraint on Outage Probability and Relay Placement Rate}
\label{appendix:learning_backtracking_adaptive_with_outage_cost}

\subsection{\textbf{Proof of Theorem~\ref{theorem:structure_of_mathcal_K_q_n}}}
\label{subsection:proof_of_structure_of_mathcal_K_q_n}

{\em Proof of the first statement:} Let us assume that $\pi^*(\xi_{out}^*,\xi_{relay}^*)$ 
satisfies both constraints in (\ref{eqn:constrained_problem_average_cost_with_outage_cost}) with equality for some 
$\xi_{out}^*>0$, $\xi_{relay}^*>0$, i.e., $\pi^*(\xi_{out}^*,\xi_{relay}^*)$ is an optimal policy 
for problem (\ref{eqn:constrained_problem_average_cost_with_outage_cost}). Now, let us assume that 
there exists $\xi_{out}'\geq 0,\xi_{relay}' \geq 0$ satisfying (i) 
  $(\lambda^*(\xi_{out}',\xi_{relay}'),\xi_{out}',\xi_{relay}') \in \mathcal{K}(\overline{q},\overline{N})$,  
  and (ii) $\frac{\overline{Q}_{out}^*(\xi_{out}',\xi_{relay}')}{\overline{U}^*(\xi_{out}',\xi_{relay}')} < \overline{q}$. 
We will show that this leads to a contradiction. 

Let us consider the problem of minimizing the mean outage per step subject to a constraint 
$\frac{\overline{\Gamma}^*(\xi_{out}^*,\xi_{relay}^*)}{\overline{U}^*(\xi_{out}^*,\xi_{relay}^*)}$ on the 
mean power per step and a constraint 
$\frac{1}{\overline{U}^*(\xi_{out}^*,\xi_{relay}^*)}=\overline{N}$ on the mean number of relays per step. 
Clearly, by Theorem~\ref{theorem:how-to-choose-optimal-Lagrange-multiplier}, 
$\pi^*(\xi_{out}^*,\xi_{relay}^*)$ is an optimal policy for this problem since it satisfies both constraints with equality. 
Note that, $\pi^*(\xi_{out}^*,\xi_{relay}^*)$ has a mean outage per step $\overline{q}$. But, we also see that 
the policy $\pi^*(\xi_{out}',\xi_{relay}')$ has the same mean power per step and a smaller mean number of relays per step 
compared to $\pi^*(\xi_{out}^*,\xi_{relay}^*)$ 
(since $(\lambda^*(\xi_{out}',\xi_{relay}'),\xi_{out}',\xi_{relay}') \in \mathcal{K}(\overline{q},\overline{N})$), 
and has a {\em strictly} smaller mean outage per step 
compared to $\pi^*(\xi_{out}^*,\xi_{relay}^*)$. This leads to a contradiction since 
$\pi^*(\xi_{out}^*,\xi_{relay}^*)$ is an optimal policy for the  problem of 
minimizing the mean outage per step subject to a constraint 
$\frac{\overline{\Gamma}^*(\xi_{out}^*,\xi_{relay}^*)}{\overline{U}^*(\xi_{out}^*,\xi_{relay}^*)}$ on the 
mean power per step and a constraint 
$\frac{1}{\overline{U}^*(\xi_{out}^*,\xi_{relay}^*)}=\overline{N}$ on the mean 
number of relays per step. 

Similarly, we can show a contradiction if, instead of assuming 
$\frac{\overline{Q}_{out}^*(\xi_{out}',\xi_{relay}')}{\overline{U}(\xi_{out}',\xi_{relay}')} < \overline{q}$, we had assumed 
$\frac{1}{\overline{U}(\xi_{out}',\xi_{relay}')} < \overline{N}$.

Hence, the first statement is proved.

{\em Proof of the second statement:} This statement follows from the fact that for any $\xi_{relay}\geq 0$, 
$\pi^*(0,\xi_{relay})$ always places at a distance $(A+B)$ and uses the smallest power $P_1$, thereby 
incurring a mean outage per step equal to 
$\frac{\mathbb{E}_{W}Q_{out}(A+B,P_1,W)}{A+B}$.\qed

\begin{figure*}[!t]
\begin{footnotesize}
\begin{eqnarray}
 \lambda^{(k)} &=&  \lambda^{(k-1)}+ a_k  \min_{u, \gamma } \bigg( \gamma + \xi_{out}^{(k-1)} Q_{out}(u,\gamma,w_u) + \xi_{relay}^{(k-1)}-\lambda^{(k-1)} u \bigg)\nonumber\\
\xi_{out}^{(k)}&=& \xi_{out}^{(k-1)}+ b_k \lim_{\beta \downarrow 0} \frac{\Lambda_{[0,A_2]}\bigg(\xi_{out}^{(k-1)}+ \beta (Q_{out}(u_k,\gamma_k, w_{u_k})-\overline{q}u_k) \bigg)-\xi_{out}^{(k-1)}}{\beta} + o(b_k)\nonumber\\
&=& \xi_{out}^{(k-1)}+ a_k \bigg(  \frac{b_k}{a_k} \bigg( \lim_{\beta \downarrow 0} \frac{\Lambda_{[0,A_2]}\bigg(\xi_{out}^{(k-1)}+ \beta (Q_{out}(u_k,\gamma_k, w_{u_k})-\overline{q}u_k) \bigg)-\xi_{out}^{(k-1)}}{\beta}+ \frac{o(b_k)}{b_k} \bigg) \bigg) \nonumber\\
\xi_{relay}^{(k)}&=& \xi_{relay}^{(k-1)} +b_k \lim_{\beta \downarrow 0} \frac{\Lambda_{[0,A_3]}\bigg(\xi_{relay}^{(k-1)} + \beta (1-\overline{N}u_k) \bigg)-\xi_{relay}^{(k-1)}}{\beta} + o(b_k) \nonumber\\
&=& \xi_{relay}^{(k-1)} +a_k \bigg(  \frac{b_k}{a_k} \bigg( \lim_{\beta \downarrow 0} \frac{\Lambda_{[0,A_3]}\bigg(\xi_{relay}^{(k-1)} + \beta (1-\overline{N}u_k) \bigg)-\xi_{relay}^{(k-1)}}{\beta} + \frac{o(b_k)}{b_k} \bigg) \bigg)
\label{eqn:convergence_learning_backtracking_adaptive_with_outage_cost_first_approximation}
\end{eqnarray}
\end{footnotesize}
\hrule
\end{figure*}

\subsection{\textbf{Proof of Theorem~\ref{theorem:placement_rate_mean_outage_per_step_continuous_in_xio_and_xir}}}
\label{subsection:proof_of_placement_rate_mean_outage_per_step_continuous_in_xio_and_xir}

Denote by $g(r, \gamma), 
r \in \{A+1,A+2,\cdots,A+B\}, \gamma \in \mathcal{S}$ 
the joint distribution of $(U_k, \Gamma_k)$ when $\lambda$ in (\ref{eqn:smdp-optimal-policy_general_form}) 
is replaced by $\lambda^*(\xi_{out}, \xi_{relay})$, i.e., when deployment is done using the OptExploreLim algorithm 
(Algorithm~\ref{algorithm:policy_structure_smdp_backtracking}).

Let us assume that $g(r,\gamma)$ is continuous in both $\xi_{out}$ and $\xi_{relay}$ (we will prove this 
assertion in Lemma~\ref{lemma:grgamma_continuous_in_xio_xir} at the end of the proof of the theorem). 
By Lemma~\ref{lemma:grgamma_continuous_in_xio_xir}, 
the mean placement distance $\overline{U}^*(\xi_{out},\xi_{relay})= \sum_{r=A+1}^{A+B}\sum_{\gamma \in \mathcal{S}} r g (r,\gamma)$ 
is continuous in $\xi_{out}$ and $\xi_{relay}$. Similarly, the mean power per link 
$\overline{\Gamma}^*(\xi_{out},\xi_{relay})=\sum_{r=A+1}^{A+B}\sum_{\gamma \in \mathcal{S}} \gamma g (r,\gamma)$ 
is continuous in $\xi_{out}$ and $\xi_{relay}$. 

Let us denote by $\lambda^*(\xi_{out}, \xi_{relay})$ the optimal average cost per step for the problem in 
(\ref{eqn:unconstrained_problem_average_cost_with_outage_cost}), for given $\xi_{out}$ and $\xi_{relay}$. 
By Renewal-Reward Theorem, 

\footnotesize
\begin{eqnarray*}
\lambda^*(\xi_{out}, \xi_{relay}) = \frac{ \overline{\Gamma}^*(\xi_{out},\xi_{relay})+\xi_{out}\overline{Q}_{out}^*(\xi_{out},\xi_{relay})+\xi_{relay}  }{\overline{U}^*(\xi_{out},\xi_{relay})}
\end{eqnarray*}
\normalsize

Since $\lambda^*(\xi_{out}, \xi_{relay})$ is continuous in $\xi_{out}$ and $\xi_{relay}$ 
(by Theorem~\ref{theorem:smdp-cost-vs-xi}), we conclude that $\overline{Q}_{out}^*(\xi_{out},\xi_{relay})$ 
is continuous in $\xi_{out}$ and $\xi_{relay}$. Hence, $\frac{\overline{\Gamma}^*(\xi_{out},\xi_{relay})}{\overline{U}^*(\xi_{out},\xi_{relay})}$, 
$\frac{\overline{Q}_{out}^*(\xi_{out},\xi_{relay})}{\overline{U}^*(\xi_{out},\xi_{relay})}$ and $\frac{1}{\overline{U}^*(\xi_{out},\xi_{relay})}$ are 
continuous in $\xi_{out}$ and $\xi_{relay}$. Hence, the theorem is proved. \qed

\begin{lemma}\label{lemma:grgamma_continuous_in_xio_xir}
 Under Assumption~\ref{assumption:shadowing_continuous_random_variable}, 
$g(r,\gamma)$ is continuous in $\xi_{out}$ and $\xi_{relay}$.
\end{lemma}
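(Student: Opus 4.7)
The plan is to write $g(r,\gamma)$ as the probability of an event defined by (strict) inequalities on the random vector $\underline{W}$, and then to obtain continuity by a dominated (in fact bounded) convergence argument combined with continuity of $\lambda^*$ from Theorem~\ref{theorem:smdp-cost-vs-xi}. Concretely, setting $\lambda^*=\lambda^*(\xi_{out},\xi_{relay})$, we have
\begin{equation*}
g(r,\gamma)=\mathbb{P}\left(\argmin_{(u,\gamma')}\left(\gamma'+\xi_{out}Q_{out}(u,\gamma',W_u)+\xi_{relay}-\lambda^* u\right)=(r,\gamma)\right),
\end{equation*}
with some fixed tie-breaking rule. The goal is to show that $g(r,\gamma)$ is continuous in the pair $(\xi_{out},\xi_{relay})$.

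The first step I would carry out is to show, using Assumption~\ref{assumption:shadowing_continuous_random_variable}, that the argmin is a.s.\ unique, so the tie-breaking rule is irrelevant on a set of full measure. For any two distinct action pairs $(u_1,\gamma_1)\neq(u_2,\gamma_2)$, the event that their objective values coincide is, when $u_1\neq u_2$, a linear equation in the independent continuous random variables $W_{u_1}$ and $W_{u_2}$ (through the bijective-in-$w$ functions $Q_{out}(u_i,\gamma_i,\cdot)$ coming from (\ref{eqn:channel_model})), hence has probability zero; when $u_1=u_2$ and $\gamma_1\neq\gamma_2$, strict monotonicity of $Q_{out}(u_1,\gamma,\cdot)$ in $w$ makes it an equation that pins $W_{u_1}$ to at most one value, which again has probability zero. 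A union bound over the finitely many pairs of actions then gives a.s.\ uniqueness.

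The second step is the continuity argument. Fix $(\xi_{out}^0,\xi_{relay}^0)$ and any sequence $(\xi_{out}^n,\xi_{relay}^n)\to(\xi_{out}^0,\xi_{relay}^0)$; by Theorem~\ref{theorem:smdp-cost-vs-xi}, $\lambda^*(\xi_{out}^n,\xi_{relay}^n)\to\lambda^*(\xi_{out}^0,\xi_{relay}^0)$. For every realization $\underline{w}$ outside the measure-zero tie set at the limiting parameters, the objective $\gamma'+\xi_{out}^0 Q_{out}(u,\gamma',w_u)+\xi_{relay}^0-\lambda^*(\xi_{out}^0,\xi_{relay}^0)u$ has a strict gap between its minimum and the next best value; since the objective is jointly continuous in $(\xi_{out},\xi_{relay},\lambda)$, is minimized over a \emph{finite} action set, and the parameters converge, the argmin of the $n$-th problem must equal the argmin of the limit problem for all $n$ sufficiently large. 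Thus $\mathbb{I}\{\argmin=(r,\gamma)\}$ converges a.s.\ under the parameter sequence, and the bounded convergence theorem delivers $g_n(r,\gamma)\to g_0(r,\gamma)$.

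The main obstacle is the first step: showing that ties in the argmin have zero probability. Once that is secured, the finiteness of the action space $\{A+1,\ldots,A+B\}\times\mathcal{S}$ trivializes the usual upper/lower semicontinuity subtleties of argmin maps, and continuity of $\lambda^*$ (already in hand) closes the argument.
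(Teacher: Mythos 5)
Your proof is correct and follows essentially the same route as the paper's: both establish that ties in the minimization have probability zero under Assumption~\ref{assumption:shadowing_continuous_random_variable}, use the continuity of $\lambda^*(\xi_{out},\xi_{relay})$ from Theorem~\ref{theorem:smdp-cost-vs-xi} to get almost-sure convergence of the argmin indicator along any parameter sequence, and conclude by bounded/dominated convergence. The paper merely organizes the argument via explicit pairwise comparison sets $\mathcal{E}_{\gamma'}$ and $\mathcal{E}_{u,\gamma'}$ and treats $\xi_{out}$ and $\xi_{relay}$ one at a time, which is only a cosmetic difference from your joint, gap-based formulation.
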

\begin{proof}
Let us fix any $r \in \{A+1,\cdots,A+B\}$ and any $\gamma \in \mathcal{S}$. 
We will show that $g(r,\gamma)$ is continuous in $\xi_{out}$. The continuity of $g(r,\gamma)$ w.r.t. 
$\xi_{relay}$ will follow the same line of arguments. 

Consider any sequence $\{\xi_n\}_{n \geq 1}$ such that $\xi_n \rightarrow \xi_{out}$. Let us 
denote the joint probability distribution of placement distance and node transmit power 
by $g_n(r,\gamma)$, if the cost per unit outage is $\xi_n$ and if OptExploreLim is used in the 
deployment process. We will show that $g_n(r,\gamma) \rightarrow g(r,\gamma)$ as $n \rightarrow \infty$.

Let us define the sets 
$\mathcal{E}_{\gamma'}=\bigg\{\underline{w}: \gamma+\xi_{out}Q_{out}(r,\gamma,w_r) < \gamma'+\xi_{out}Q_{out}(r,\gamma',w_r) \bigg\}$
and $\mathcal{E}_{u,\gamma'}=\bigg\{\underline{w}: \gamma+\xi_{out}Q_{out}(r,\gamma,w_r)
  +\xi_{relay}-\lambda^*(\xi_{out},\xi_{relay}) r  <  \gamma'+\xi_{out}Q_{out}(u,\gamma',w_u)  
  +\xi_{relay}-\lambda^*(\xi_{out},\xi_{relay}) u  \bigg\}$. 
  
  In state $\underline{w}$, the OptExploreLim algorithm (Algorithm~\ref{algorithm:policy_structure_smdp_backtracking}) 
  will place 
the next relay at distance $r$ and decide power level $\gamma$ if $\underline{w} \in \mathcal{E}_{\gamma'}$ 
for all $\gamma' \neq \gamma, \gamma' \in \mathcal{S}$ and if 
$\underline{w} \in \mathcal{E}_{u,\gamma'}$ for all $u \neq r, \gamma' \in \mathcal{S}$.

Let us  define $\mathcal{E}=\cap_{\gamma' \neq \gamma}\mathcal{E}_{\gamma'}  
\cap_{u \neq r, \gamma' \in \mathcal{S}}  \mathcal{E}_{u,\gamma'}$. 

Note that, $g(r,\gamma)=\mathbb{P}(\mathcal{E})=\mathbb{E}(\mathbb{I}_{\mathcal{E}})$, where $\mathbb{I}$ denotes 
the indicator function, and the expectation is over the joint distribution of the shadowing vector $\underline{W}$ 
(shadowing random variables from $B$ locations).

Now, for any $\gamma' \neq \gamma$, 
we have $\mathbb{P}\bigg( \gamma+\xi_{out}Q_{out}(r,\gamma,W_r) = \gamma'+\xi_{out}Q_{out}(r,\gamma',W_r) \bigg)=0$. 
Also, $\mathbb{P}\bigg( \gamma+\xi_{out}Q_{out}(r,\gamma,W_r)
  +\xi_{relay}-\lambda^*(\xi_{out},\xi_{relay}) r  =   \gamma'+\xi_{out}Q_{out}(u,\gamma',W_u)  
  +\xi_{relay}-\lambda^*(\xi_{out},\xi_{relay}) u  \bigg)=0$ if $\gamma' \in \mathcal{S}$, $u \neq r$. 
These two assertions follow from Assumption~\ref{assumption:shadowing_continuous_random_variable} and 
the fact that $Q_{out}(r,\gamma,w)$ is continuous in $w$. 
Hence, we discard these zero probability events in our analysis and safely assume that:

\begin{itemize}
 \item For $\gamma' \neq \gamma$, the complement $\overline{\mathcal{E}_{\gamma'}}$ has the same expression as 
$\mathcal{E}_{\gamma'}$ except that the $<$ sign is replaced by $>$ sign.
\item For $\gamma' \in \mathcal{S}$, $u \neq r$, $\overline{\mathcal{E}_{u,\gamma'}}$ 
has the same expression as 
$\mathcal{E}_{u,\gamma'}$ except that the $<$ sign is replaced by $>$ sign.
\end{itemize}

Now, consider any sequence $\{\xi_n\}_{n \geq 1}$ such that $\xi_n \rightarrow \xi_{out}$. 
Let $\mathcal{E}_{\gamma'}^{(n)}$, $\mathcal{E}_{u,\gamma'}^{(n)}$ and 
$\mathcal{E}^{(n)}$ be the sets obtained where we replace $\xi_{out}$ by $\xi_n$ in the expressions of the sets 
$\mathcal{E}_{\gamma'}$, $\mathcal{E}_{u,\gamma'}$ and $\mathcal{E}$ respectively. 
Clearly, we can make similar claims for  $\mathcal{E}_{\gamma'}^{(n)}$, $\mathcal{E}_{u,\gamma'}^{(n)}$ for 
any $n \geq 1$.

Recall that, $g(r,\gamma)=\mathbb{P}(\mathcal{E})=\mathbb{E}(\mathbb{I}_{\mathcal{E}})$.  
Clearly, if we can show that $\mathbb{E}(\mathbb{I}_{\mathcal{E}^{(n)}}) \rightarrow \mathbb{E}(\mathbb{I}_{\mathcal{E}})$, 
the lemma will be proved.

\begin{claim}\label{claim:convergence_of_indicator_function}
 $\mathbb{I}_{ \mathcal{E}_{u,\gamma'}^{(n)} } \rightarrow \mathbb{I}_{ \mathcal{E}_{u,\gamma'}} $ 
almost surely as $n \rightarrow \infty$, for $u \neq r, \gamma' \in \mathcal{S}$. 
Also, $\mathbb{I}_{ \mathcal{E}_{\gamma'}^{(n)} } \rightarrow \mathbb{I}_{ \mathcal{E}_{\gamma'} } $ 
almost surely as $n \rightarrow \infty$, for $\gamma' \neq \gamma$.
\end{claim}
\begin{proof}
 Suppose that, for some value of $\underline{w}$, 
$\mathbb{I}_{ \mathcal{E}_{u,\gamma'}}(\underline{w})=1$, i.e., 
$\gamma+\xi_{out}Q_{out}(r,\gamma,w_r)
  +\xi_{relay}-\lambda^*(\xi_{out},\xi_{relay}) r  <  \gamma'+\xi_{out}Q_{out}(u,\gamma',w_u)  
  +\xi_{relay}-\lambda^*(\xi_{out},\xi_{relay}) u $. 
Now, by Theorem~\ref{theorem:smdp-cost-vs-xi}, $\lambda^*(\xi_{out},\xi_{relay})$ is  continuous in 
$\xi_{out}$ and $\xi_{relay}$. Hence, there exists an integer $n_0$ sufficiently large such that 
for all $n > n_0$, we have $\gamma+\xi_n Q_{out}(r,\gamma,w_r)
  +\xi_{relay}-\lambda^*(\xi_n,\xi_{relay}) r  <  \gamma'+\xi_n Q_{out}(u,\gamma',w_u)  
  +\xi_{relay}-\lambda^*(\xi_n,\xi_{relay}) u $, i.e., 
$\mathbb{I}_{ \mathcal{E}_{u,\gamma'}^{(n)} } (\underline{w})=1$ for all $n > n_0$. Hence, 
$\mathbb{I}_{ \mathcal{E}_{u,\gamma'}^{(n)} } (\underline{w}) \rightarrow \mathbb{I}_{ \mathcal{E}_{u,\gamma'} } 
(\underline{w}) $ if $\mathbb{I}_{ \mathcal{E}_{u,\gamma'} } 
(\underline{w})=1$. Similar argument works when $\mathbb{I}_{ \mathcal{E}_{u,\gamma'} } (\underline{w})=0$. Hence, 
the first part of the claim is proved.

The second part of the claim is proved in a similar way.
\end{proof}

Note that, $\mathbb{I}_{\mathcal{E}^{(n)}}=\prod_{\gamma' \neq \gamma} \mathbb{I}_{\mathcal{E}_{\gamma'}^{(n)}} 
\prod_{u \neq r, \gamma' \in \mathcal{S}} \mathbb{I}_{\mathcal{E}_{u,\gamma'}^{(n)}}$. 
By Claim~\ref{claim:convergence_of_indicator_function}, $\mathbb{I}_{\mathcal{E}^{(n)}} \rightarrow \mathbb{I}_{\mathcal{E}}$ 
almost surely. Since indicator functions always take values in the set $\{0,1\}$, 
we have $\mathbb{E}(\mathbb{I}_{\mathcal{E}^{(n)}}) \rightarrow \mathbb{E}(\mathbb{I}_{\mathcal{E}})$ by 
Dominated Convergence Theorem. 

Hence, the lemma is proved.
\end{proof}

\subsection{\textbf{Proof of Theorem~\ref{theorem:convergence_learning_backtracking_adaptive_with_outage_cost_algorithm}}}
\label{subsection:convergence_proof_of_optexplorelimadaptive_learning}

Let us denote the shadowing random variable in the link between the potential locations 
 located at distances $i \delta$ and $j \delta$ from the sink node by $W_{i,j}$. 
 The sample space $\Omega$ associated with the deployment process is the collection of all $\omega$ (each $\omega$ 
 corresponds to a fixed realization $\{w_{i,j}: i \geq 0, j \geq 0, i>j, A+1 \leq i-j \leq A+B \}$ 
 of all possible shadowing random variables that might be encountered in the measurement process for 
 deployment up to infinity).  
 Let $\mathcal{F}$ be the Borel $\sigma$-algebra on  $\Omega$. Let $S_k=\sum_{i=1}^k U_i$ be the distance 
 (in steps) of the $k$-th relay from the sink ($S_0:=0$), and 
 $\mathcal{F}_k:=\sigma \bigg((\lambda^{(0)},\xi_{out}^{(0)},\xi_{relay}^{(0)}); W_{i,j}:i \geq 0, j \geq 0, i>j, A+1 \leq i-j \leq A+B, 
 i \leq S_{k-1}+A+B, j \leq S_{k-1}+A+B \bigg)$. The sequence of $\sigma$-algebras $\mathcal{F}_k$ is increasing in $k$, and 
 $\mathcal{F}_k$  captures the history of the deployment process 
up to the deployment of the $k$-th relay.

Let us recall the outline of the proof of 
Theorem~\ref{theorem:convergence_learning_backtracking_adaptive_with_outage_cost_algorithm} in 
Section~\ref{subsection:learning_backtracking_adaptive_with_outage_cost_algorithm}.

\subsubsection{\textbf{Almost sure boundedness of the $\lambda^{(k)}$ iterates}}
\label{subsubsection:proof_of_boundedness_lambda_iteration_optexplorelimadaptivelearning}

\begin{lemma}\label{lemma:boundedness_lambda_iteration_optexplorelimadaptivelearning}
 The iterates $\{\lambda^{(k)}\}_{k \geq 1}$ in (\ref{eqn:learning_backtracking_adaptive_with_outage_cost_update_part}) 
are bounded almost surely.
\end{lemma}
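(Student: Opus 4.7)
The plan is to adapt the induction argument used in the proof of Lemma~\ref{lemma:boundedness_lambda_iteration_optexplorelimlearning}, using as the crucial new ingredient the fact that the projection operation in the slower timescale forces $\xi_{out}^{(k)} \in [0,A_2]$ and $\xi_{relay}^{(k)} \in [0,A_3]$ for every $k \geq 0$. This uniform boundedness of the Lagrange-multiplier iterates implies that the per-step cost $\gamma + \xi_{out}^{(k-1)} Q_{out}(u,\gamma,w_u) + \xi_{relay}^{(k-1)}$ is sandwiched between $0$ and $P_M + A_2 + A_3$, regardless of $k$ or of the realization of shadowing, so the pathwise analysis reduces (up to replacing $\xi_{out}$ and $\xi_{relay}$ by $A_2$ and $A_3$) to the setting we already handled for OptExploreLimLearning.

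First I would pick $K_0$ so large that $a_k (A+B) < 1$ for every $k \geq K_0$; this is possible since $a_k \downarrow 0$. Because only finitely many updates occur up to $K_0$ and each increment is deterministically bounded (as $u_k \in [A+1,A+B]$, $\gamma_k \in \mathcal{S}$ and the single-stage cost is in $[0, P_M + A_2 + A_3]$), one can exhibit a deterministic constant $d_0$ (depending on $\lambda^{(0)}$) such that $\lambda^{(k)} \in [-d_0, d_0]$ for all $k \leq K_0$. Then I would fix $d := \max\{d_0, P_M + A_2 + A_3\}$ and proceed by induction: assuming $\lambda^{(k-1)} \in [-d,d]$ for some $k \geq K_0 + 1$, prove $\lambda^{(k)} \in [-d,d]$.

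For the upper bound, since $Q_{out} \in [0,1]$ and $\xi_{out}^{(k-1)} \leq A_2$, $\xi_{relay}^{(k-1)} \leq A_3$, I would write
\begin{eqnarray*}
\lambda^{(k)} &\leq& \lambda^{(k-1)} + a_k\bigl(P_M + A_2 + A_3 - \lambda^{(k-1)} u_k\bigr)\\
&=& (1 - a_k u_k)\lambda^{(k-1)} + a_k (P_M + A_2 + A_3)\\
&\leq& (1 - a_k u_k)\lambda^{(k-1)} + a_k u_k (P_M + A_2 + A_3),
\end{eqnarray*}
where the last step uses $u_k \geq 1$. Since $a_k u_k \in [0,1)$, this is a convex combination of $\lambda^{(k-1)}$ and $P_M + A_2 + A_3$, and is therefore $\leq \max\{\lambda^{(k-1)}, P_M + A_2 + A_3\} \leq d$. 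For the lower bound, I would use the fact that the minimum in the bracketed expression is taken over nonnegative per-step costs, so the bracket is $\geq -\lambda^{(k-1)} u_k$; hence $\lambda^{(k)} \geq (1 - a_k u_k)\lambda^{(k-1)}$. If $\lambda^{(k-1)} \geq 0$ this is $\geq -d$ trivially, and if $\lambda^{(k-1)} \in [-d,0)$ then $(1 - a_k u_k)\lambda^{(k-1)} \geq \lambda^{(k-1)} \geq -d$ since $(1 - a_k u_k) \in [0,1]$. Thus $\lambda^{(k)} \in [-d,d]$, closing the induction.

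The argument is entirely pathwise, so boundedness holds on every sample point (hence almost surely). I do not foresee a substantive obstacle beyond being careful with the sign analysis in the lower-bound case, which is why the induction hypothesis needs to be stated symmetrically as $|\lambda^{(k)}| \leq d$ rather than as a one-sided bound. One subtle point worth verifying is that the chosen $d$ is independent of the sample path for $k \geq K_0$, so that the sup over $k$ is almost surely finite even though $d_0$ (and hence the constant for $k \leq K_0$) may depend on realizations via $\lambda^{(0)}$; here we use that $\lambda^{(0)}$ is deterministic.
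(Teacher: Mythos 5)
Your proposal is correct and follows essentially the same route as the paper's own proof: the same choice of $K_0$ with $a_k(A+B)<1$, the same use of the projection to bound the per-step cost by $P_M+A_2+A_3$, the same convex-combination argument for the upper bound, and the same $\lambda^{(k)}\geq(1-a_k u_k)\lambda^{(k-1)}$ estimate for the lower bound. The only cosmetic difference is that you split the lower-bound case on the sign of $\lambda^{(k-1)}$ where the paper writes $(1-a_{k}u_{k})\lambda^{(k-1)}\geq-(1-a_{k}u_{k})d\geq-d$ directly; both are valid.
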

\begin{proof}
 Let us define $K_0$ to be the smallest integer such that $a_k(A+B)<1$ for all $k \geq K_0$ ($K_0$ exists 
since $a_k \downarrow 0$). 
For any starting value $\lambda^{(0)}$, it is easy to find a positive real number $d$ (depending on 
the value of $\lambda^{(0)}$) 
such that $\lambda^{(k)} \in [-d,d]$ for all $k \leq K_0$; this is easy to see because 
$\xi_{out}^{(k)} \in [0,A_2]$, $\xi_{relay}^{(k)} \in [0,A_3]$ for all $k$, 
and the node transmit power, node outage probability  
and placement distance for each node are bounded quantities. 

Without loss of generality, we can take $d>P_M+A_2+A_3$ where $P_M$ is the maximum transmit power level 
of a node. We already have that $\lambda^{(k)} \in [-d,d]$ for all $k \leq K_0$. 
Now we will show that $\lambda^{(k)} \in [-d,d]$ for all $k \geq K_0$. To this end, let us assume, as 
our induction hypothesis, that 
$\lambda^{(k)} \in [-d,d]$ for some $k \geq K_0$. If we can show that 
$\lambda^{(k+1)} \in [-d,d]$, we will be done with the proof.

From the update equation (\ref{eqn:learning_backtracking_adaptive_with_outage_cost_update_part}), we can write 
that (using $(A+B) \geq u_{k+1} \geq 1$ and $0 \leq a_{k+1}u_{k+1} < 1$):  

\footnotesize
\begin{eqnarray*}
 \lambda^{(k+1)} &\leq& \lambda^{(k)}+a_{k+1}(P_M+A_2+A_3-\lambda^{(k)}u_{k+1}) \\
&=& (1-a_{k+1}u_{k+1})\lambda^{(k)}+a_{k+1}(P_M+A_2+A_3) \\
&\leq& (1-a_{k+1}u_{k+1})\lambda^{(k)}+a_{k+1}u_{k+1}(P_M+A_2+A_3) \\
                  &\leq& \max \{ \lambda^{(k)}, P_M+A_2+A_3 \} \\
                   &\leq&  d
\end{eqnarray*}
\normalsize

On the other hand:

\footnotesize
\begin{eqnarray*}
 \lambda^{(k+1)} & \geq & \lambda^{(k)}+a_{k+1}(0-\lambda^{(k)}u_{k+1}) \\
            &=& (1-a_{k+1}u_{k+1})\lambda^{(k)} \\
            &\geq & -(1-a_{k+1}u_{k+1})d \\
            &\geq& -d
\end{eqnarray*}
\normalsize

Hence, $\lambda^{(k+1)} \in [-d,d]$ and the lemma is proved.
\end{proof}

\begin{figure*}[!t]
\begin{footnotesize}
\begin{eqnarray}
 \xi_{out}^{(k)} &=& \Lambda_{\mathcal{G}} \bigg( \xi_{out}^{(k-1)}+b_k \bigg( Q_{out}(U_k,\Gamma_k,W_{U_k})-\overline{q}U_k \bigg) \bigg) \nonumber\\
 &=& \Lambda_{\mathcal{G}} \bigg( \xi_{out}^{(k-1)}+b_k \bigg( \underbrace{\overline{Q}_{out}^*(\xi_{out}^{(k-1)}, \xi_{relay}^{(k-1)})-\overline{q}\overline{U}^*(\xi_{out}^{(k-1)}, \xi_{relay}^{(k-1)})}_{:=f_1(\xi_{out}^{(k-1)},\xi_{relay}^{(k-1)})}        \nonumber\\
 && +   \underbrace{\overline{Q}_{out}(\lambda^{(k-1)},\xi_{out}^{(k-1)}, \xi_{relay}^{(k-1)})-\overline{q}\overline{U}(\lambda^{(k-1)},\xi_{out}^{(k-1)}, \xi_{relay}^{(k-1)})-f_1(\xi_{out}^{(k-1)},\xi_{relay}^{(k-1)})}_{:=g_1(\lambda^{(k-1)},\xi_{out}^{(k-1)},\xi_{relay}^{(k-1)})}    \nonumber\\
 &&  +  \underbrace{Q_{out}(U_k,\Gamma_k,W_{U_k})-\overline{q}U_k-\bigg( \overline{Q}_{out}(\lambda^{(k-1)},\xi_{out}^{(k-1)}, \xi_{relay}^{(k-1)})-\overline{q}\overline{U}(\lambda^{(k-1)},\xi_{out}^{(k-1)}, \xi_{relay}^{(k-1)}) \bigg) }_{:=M_1^{(k)}}   \bigg)      \bigg) \nonumber\\
 &=& \Lambda_{\mathcal{G}} \bigg( \xi_{out}^{(k-1)}+b_k \bigg( f_1(\xi_{out}^{(k-1)},\xi_{relay}^{(k-1)})+ g_1(\lambda^{(k-1)},\xi_{out}^{(k-1)},\xi_{relay}^{(k-1)})+ M_{1}^{(k)} \bigg) \bigg) \nonumber\\
\xi_{relay}^{(k)}&=& \Lambda_{\mathcal{G}} \bigg( \xi_{out}^{(k-1)}+b_k \bigg( 1-\overline{N}U_k \bigg) \bigg) \nonumber\\   
&=& \Lambda_{\mathcal{G}} \bigg( \xi_{relay}^{(k-1)}+b_k \bigg( \underbrace{1-\overline{N}\overline{U}^*(\xi_{out}^{(k-1)}, \xi_{relay}^{(k-1)})}_{:=f_2(\xi_{out}^{(k-1)},\xi_{relay}^{(k-1)})}        \nonumber\\
 && +   \underbrace{1-\overline{N}\overline{U}(\lambda^{(k-1)},\xi_{out}^{(k-1)}, \xi_{relay}^{(k-1)})-f_2(\xi_{out}^{(k-1)},\xi_{relay}^{(k-1)})}_{:=g_2(\lambda^{(k-1)},\xi_{out}^{(k-1)},\xi_{relay}^{(k-1)})}    \nonumber\\
 &&  +  \underbrace{ 1-\overline{N}U_k-\bigg( 1-\overline{N}\overline{U}(\lambda^{(k-1)},\xi_{out}^{(k-1)}, \xi_{relay}^{(k-1)})  \bigg) }_{:=M_2^{(k)}}     \bigg)    \bigg) \nonumber\\
&=& \Lambda_{\mathcal{G}} \bigg( \xi_{relay}^{(k-1)}+b_k \bigg( f_2(\xi_{relay}^{(k-1)},\xi_{relay}^{(k-1)})+ g_2(\lambda^{(k-1)},\xi_{relay}^{(k-1)},\xi_{relay}^{(k-1)})+ M_{2}^{(k)} \bigg) \bigg)
\label{eqn:optexplorelimadaptivelearning_slower_timescale_in_kushner_form}
\end{eqnarray}
\end{footnotesize}
\hrule
\end{figure*}

\subsubsection{\textbf{Analyzing the Faster Time-Scale Iteration of $\lambda^{(k)}$}} 
\label{subsubsection:lemma_used_in_convergence_proof_of_optexplorelimadaptive_learning}

Let us denote by $\lambda^*(\xi_{out}, \xi_{relay})$ the optimal average cost per step for the problem in 
(\ref{eqn:unconstrained_problem_average_cost_with_outage_cost}), for given $\xi_{out}$ and $\xi_{relay}$. 

\begin{lemma}\label{lemma:adaptive_learning_faster_timescale_convergence}
\label{lemma:lemma_used_in_convergence_proof_of_optexplorelimadaptive_learning}
 For Algorithm~\ref{algorithm:learning_backtracking_adaptive_with_outage_cost_algorithm}, we have 
 $(\lambda^{(k)},\xi_{out}^{(k)},\xi_{relay}^{(k)}) \rightarrow \{(\lambda^*(\xi_{out},\xi_{relay}),\xi_{out},\xi_{relay}): 
(\xi_{out},\xi_{relay}) \in [0,A_2] \times [0,A_3] \}$ and 
$\lim_{k \rightarrow \infty}|\lambda^{(k)}-\lambda^*(\xi_{out}^{(k)},\xi_{relay}^{(k)})|=0$ almost surely.
\end{lemma}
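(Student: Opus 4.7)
The plan is to apply the two-timescale stochastic approximation framework from Chapter~$6$ of \cite{borkar08stochastic-approximation-book}, exploiting the fact that $\lim_{k \to \infty} b_k/a_k = 0$ so that from the perspective of the $\lambda^{(k)}$ recursion, the slow variables $(\xi_{out}^{(k)},\xi_{relay}^{(k)})$ appear quasi-static, while from the perspective of the slow recursion, $\lambda^{(k)}$ appears equilibrated at $\lambda^*(\xi_{out}^{(k)},\xi_{relay}^{(k)})$. My strategy is to treat the fast recursion as a ``parametrized'' version of the OptExploreLimLearning update in Algorithm~\ref{algorithm:learning_backtracking_given_xio_xir_general_step_size}, and then invoke the tracking lemma for two-timescale algorithms.

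First, I would rewrite the $\lambda^{(k)}$ update as $\lambda^{(k+1)} = \lambda^{(k)} + a_{k+1}\bigl( F(\lambda^{(k)}, \xi_{out}^{(k)}, \xi_{relay}^{(k)}) + M_{k+1} \bigr)$, where
$F(\lambda,\xi_{out},\xi_{relay}) := \mathbb{E}_{\underline{W}} \min_{u,\gamma}\bigl(\gamma + \xi_{out} Q_{out}(u,\gamma,W_u) + \xi_{relay} - \lambda u\bigr)$
and $M_{k+1}$ is the corresponding martingale difference with respect to $\mathcal{F}_k$. As in the proof of Theorem~\ref{theorem:learning_backtracking_given_xio_xir_general_step_size}, for each fixed $(\xi_{out},\xi_{relay})$ the map $\lambda \mapsto F(\lambda,\xi_{out},\xi_{relay})$ is strictly decreasing, concave, and Lipschitz with constant $(A+B)$, with unique zero $\lambda^*(\xi_{out},\xi_{relay})$ (by Theorem~\ref{theorem:optimality-condition-required-for-backtracking-average-cost-learning}). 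Joint Lipschitz continuity of $F$ in $(\lambda, \xi_{out},\xi_{relay})$ on the compact set $[-d,d]\times[0,A_2]\times[0,A_3]$ is immediate because $Q_{out}$ and the minimization are over finite action sets and the cost is linear in the multipliers. The conditional second moment bound on $M_{k+1}$ follows from the almost sure boundedness of $\lambda^{(k)}$ established in Lemma~\ref{lemma:boundedness_lambda_iteration_optexplorelimadaptivelearning} and the bounded single-stage quantities.

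Next I would verify the standard two-timescale hypotheses: (i) both step-size sequences satisfy the Robbins--Monro conditions, (ii) $b_k/a_k \to 0$, (iii) both iterates are a.s.\ bounded (Lemma~\ref{lemma:boundedness_lambda_iteration_optexplorelimadaptivelearning} for $\lambda^{(k)}$, and the projection onto $[0,A_2]\times[0,A_3]$ for the slow variables), and (iv) the increments on the slow timescale are also bounded martingale differences plus bounded bias, hence $o(1)$-perturbations when viewed on the fast timescale. Under these conditions, Lemma~$1$ of Chapter~$6$ in \cite{borkar08stochastic-approximation-book} (the tracking lemma) asserts that the fast recursion asymptotically tracks the globally asymptotically stable equilibrium of the o.d.e.\ $\dot{\lambda}(t) = F(\lambda(t), \xi_{out}, \xi_{relay})$ with $(\xi_{out},\xi_{relay})$ frozen. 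Because $F$ is strictly decreasing in $\lambda$ with a unique root $\lambda^*(\xi_{out},\xi_{relay})$, this root is the unique globally asymptotically stable equilibrium, so $|\lambda^{(k)} - \lambda^*(\xi_{out}^{(k)},\xi_{relay}^{(k)})| \to 0$ almost surely. Continuity of $\lambda^*(\cdot,\cdot)$ from Theorem~\ref{theorem:smdp-cost-vs-xi} then yields that any limit point of $(\lambda^{(k)},\xi_{out}^{(k)},\xi_{relay}^{(k)})$ lies in the graph set $\{(\lambda^*(\xi_{out},\xi_{relay}),\xi_{out},\xi_{relay}) : (\xi_{out},\xi_{relay})\in [0,A_2]\times[0,A_3]\}$, proving the first claim.

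The main technical obstacle is verifying that the fast-timescale ``frozen-slow'' o.d.e.\ is globally asymptotically stable \emph{uniformly} in $(\xi_{out},\xi_{relay}) \in [0,A_2]\times[0,A_3]$, so that the tracking lemma applies with the slow variables drifting slowly inside this compact rectangle. Uniformity follows because the Lipschitz constant of $F(\cdot,\xi_{out},\xi_{relay})$ is $(A+B)$ independent of $(\xi_{out},\xi_{relay})$, and $F$ is jointly continuous in all three arguments on the compact domain, so both the Lyapunov function $V(\lambda;\xi_{out},\xi_{relay}) := (\lambda - \lambda^*(\xi_{out},\xi_{relay}))^2$ and its derivative along the o.d.e.\ admit uniform bounds. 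Once this uniform stability is in hand, the conclusion of the lemma is immediate from the tracking result, and the slow iterates remain in $[0,A_2]\times[0,A_3]$ by construction; the characterization of where they converge to is deferred to the analysis of the slower timescale in the subsequent subsections.
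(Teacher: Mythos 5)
Your proposal is correct and follows essentially the same route as the paper: both rewrite the fast recursion with mean field $F(\lambda,\xi_{out},\xi_{relay})$, use its Lipschitz continuity and the unique globally asymptotically stable equilibrium $\lambda^*(\xi_{out},\xi_{relay})$ (via Theorem~\ref{theorem:optimality-condition-required-for-backtracking-average-cost-learning}), invoke the a.s.\ boundedness from Lemma~\ref{lemma:boundedness_lambda_iteration_optexplorelimadaptivelearning} together with the projection on the slow variables, and conclude by the tracking lemma of Chapter~6 in \cite{borkar08stochastic-approximation-book}. The only cosmetic difference is that the paper handles the projected slow updates explicitly via a first-order expansion of $\Lambda_{[0,A_2]}$ and $\Lambda_{[0,A_3]}$ to exhibit them as $O(b_k/a_k)$ perturbations on the fast timescale, whereas you assert this more briefly.
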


\begin{proof}
We follow the proof of Lemma~$1$, Chapter~$6$ of \cite{borkar08stochastic-approximation-book}.

Using the first order Taylor series expansion of the function $\Lambda_{[0,A_2]}(\cdot)$, and 
using the fact that $\Lambda_{[0,A_2]}(\xi_{out}^{(k-1)})=\xi_{out}^{(k-1)}$ (since $\xi_{out}^{(k-1)} \in [0,A_2]$), 
the update equation (\ref{eqn:learning_backtracking_adaptive_with_outage_cost_update_part}) can be 
rewritten as (\ref{eqn:convergence_learning_backtracking_adaptive_with_outage_cost_first_approximation}).

Consider the update equation for $\xi_{relay}$ in 
(\ref{eqn:convergence_learning_backtracking_adaptive_with_outage_cost_first_approximation}). 
Note that:

\footnotesize
\begin{eqnarray*}
&& \lim_{\beta \downarrow 0} \frac{\Lambda_{[0,A_3]}\bigg(\xi_{relay}^{(k-1)} + \beta (1-\overline{N}u_k) \bigg)-\xi_{relay}^{(k-1)}}{\beta} \\
&=& (1-\overline{N}u_k) \mathbb{I} \{0< \xi_{relay}^{(k-1)} <A_3\}\\
&+& (1-\overline{N}u_k)^+ \mathbb{I} \{\xi_{relay}^{(k-1)} =0 \} \\
&-& (1-\overline{N}u_k)^- \mathbb{I} \{\xi_{relay}^{(k-1)} =A_3 \} \\
\end{eqnarray*}
\normalsize

where $x^+=\max\{x,0\}$ and $x^-=-\min \{x,0\}$. 
A similar expression holds for the  
$\xi_{out}^{(k)}$ update. Since $Q_{out}(\cdot,\cdot,\cdot)$ and $u_k$ are bounded quantities, and since 
$\lim_{k \rightarrow 0}\frac{b_k}{a_k}=0$, we have:

\footnotesize
\begin{eqnarray*}
&& \lim_{k \rightarrow \infty}  \bigg(\frac{b_k}{a_k} \bigg( \lim_{\beta \downarrow 0} \bigg(\Lambda_{[0,A_2]}\bigg(\xi_{out}^{(k-1)}+ \beta (Q_{out}(u_k,\gamma_k, w_{u_k})-\overline{q}u_k) \bigg) \\
&& -\xi_{out}^{(k-1)}\bigg)/ \beta  + \frac{o(b_k)}{b_k} \bigg) \bigg)=0
\end{eqnarray*}
\normalsize

and
\footnotesize
\begin{eqnarray*}
&& \lim_{k \rightarrow \infty} \bigg( \frac{b_k}{a_k} \bigg( \lim_{\beta \downarrow 0} \frac{\Lambda_{[0,A_3]}\bigg(\xi_{relay}^{(k-1)} + \beta (1-\overline{N}u_k) \bigg)-\xi_{relay}^{(k-1)}}{\beta} \\
&& + \frac{o(b_k)}{b_k} \bigg) \bigg)=0
\end{eqnarray*}
\normalsize

Now, note that, the function 
$f(\lambda,\xi_{out},\xi_{relay})=\mathbb{E_{\underline{W}}} \min_{u,\gamma}\bigg( \gamma+ \xi_{out} Q_{out}(u,\gamma, W_u)+\xi_{relay} -\lambda u \bigg)$ 
is Lipschitz continuous in all arguments, and the 
o.d.e. $\dot{\lambda}(t)=f(\lambda(t),\xi_{out},\xi_{relay})$ has a unique 
globally asymptotically stable equilibrium $\lambda^*(\xi_{out},\xi_{relay})$ for any 
$\xi_{out} \geq 0$, $\xi_{relay} \geq 0$ (see the proof of 
Theorem~\ref{theorem:learning_backtracking_given_xio_xir_general_step_size}). The quantity $\lambda^*(\xi_{out},\xi_{relay})$ 
is Lipschitz continuous in $\xi_{out}$ and $\xi_{relay}$. Also by 
Lemma~\ref{lemma:boundedness_lambda_iteration_optexplorelimadaptivelearning} and the 
projection operation in the slower timescale, the iterates are bounded almost surely.

Hence, by a similar argument as in the proof of Lemma~$1$, Chapter~$6$ of \cite{borkar08stochastic-approximation-book}, 
and by using Theorem~\ref{theorem:optimality-condition-required-for-backtracking-average-cost-learning} 
and Theorem~\ref{theorem:learning_backtracking_given_xio_xir_general_step_size}, 
$(\lambda^{(k)},\xi_{out}^{(k)},\xi_{relay}^{(k)})$ 
converges to the internally chain transitive invariant sets of the o.d.e. 
$\dot{\lambda}(t)=f(\lambda(t),\xi_{out}(t),\xi_{relay}(t))$, 
$\dot{\xi}_{out}(t)=0$, $\dot{\xi}_{relay}(t)=0$. Hence, 
$(\lambda^{(k)},\xi_{out}^{(k)},\xi_{relay}^{(k)}) \rightarrow \{(\lambda^*(\xi_{out},\xi_{relay}),\xi_{out},\xi_{relay}): 
(\xi_{out},\xi_{relay}) \in [0,A_2] \times [0,A_3] \}$ and 
$\lim_{k \rightarrow \infty}|\lambda^{(k)}-\lambda^*(\xi_{out}^{(k)},\xi_{relay}^{(k)})|=0$. 
Hence, the lemma is proved.
\end{proof}

{\em Remark:} Lemma~\ref{lemma:adaptive_learning_faster_timescale_convergence} tells us that 
the faster time-scale iterate $\lambda^{(k)}$ closely tracks $\lambda^*(\xi_{out}^{(k)},\xi_{relay}^{(k)})$. But it is 
important to note that this lemma does not guarantee the convergence 
of the slower timescale iterates to a single point in the two-dimensional Euclidean plane.

\subsubsection{\textbf{The slower timescale iteration}}
\label{subsubsection:posing_slower_timescale_as_projected_stochastic_approximation_by_kushner}

Let us recall the notation $\overline{Q}_{out}(\lambda, \xi_{out}, \xi_{relay})$, 
$\overline{U}(\lambda, \xi_{out}, \xi_{relay})$, $\overline{Q}_{out}^*(\xi_{out}, \xi_{relay})$ 
and $\overline{U}^*(\xi_{out}, \xi_{relay})$ as defined in 
Section~\ref{subsec:smdp-policy-structure}. 
Let us also recall the update equation (\ref{eqn:learning_backtracking_adaptive_with_outage_cost_update_part}) 
in Algorithm~\ref{algorithm:learning_backtracking_adaptive_with_outage_cost_algorithm}. 
We will analyze the slower timescale update equations as a 
projected stochastic approximation (see Equation~$5.3.1$ of \cite{kushner-clark78SA-constrained-unconstrained}).

Let us denote by $\mathcal{G}$ the compact subset $[0,A_2] \times [0,A_3]$ of the 
Euclidean space. Clearly, the set $\mathcal{G}$ can be defined by the following set of 
constraints on the variables $\xi_{out}$ and $\xi_{relay}$:
\begin{eqnarray}\label{eqn:constraint_equations_in_the_slower_timescale}
 -\xi_{out} \leq 0, \xi_{out} \leq A_2, -\xi_{relay} \leq 0, \xi_{relay} \leq A_3
\end{eqnarray}

We rewrite the slower timescale update equations in (\ref{eqn:learning_backtracking_adaptive_with_outage_cost_update_part}) 
as (\ref{eqn:optexplorelimadaptivelearning_slower_timescale_in_kushner_form}). Note that, the 
functions $f_1(\xi_{out},\xi_{relay})$, $f_2(\xi_{out},\xi_{relay})$, $g_1(\lambda,\xi_{out},\xi_{relay})$, and 
$g_2(\lambda,\xi_{out},\xi_{relay})$ have been defined in 
(\ref{eqn:optexplorelimadaptivelearning_slower_timescale_in_kushner_form}). 
The quantities $M_{1}^{(k)}$ and $M_{2}^{(k)}$ are two zero mean Martingale difference noise sequences w.r.t.  
$\mathcal{F}_{k-1}$; this can be seen as follows. 
Since shadowing is i.i.d. across links, the
shadowing values encountered in the process of measurement
for placing the $k$-th node are independent of the history of the process up to 
the placement of node $(k-1)$. Hence, $\mathbb{E}_{\underline{W}} \bigg(M_1^{(k)} | \mathcal{F}_{k-1} \bigg)=\mathbb{E}_{\underline{W}} \bigg(M_1^{(k)} | (\lambda^{(k-1)},\xi_{out}^{(k-1)}, \xi_{relay}^{(k-1)}) \bigg)=0$ 
and $\mathbb{E}_{\underline{W}} \bigg(M_2^{(k)} | \mathcal{F}_{k-1} \bigg)=\mathbb{E}_{\underline{W}} \bigg(M_2^{(k)} | (\lambda^{(k-1)},\xi_{out}^{(k-1)}, \xi_{relay}^{(k-1)}) \bigg)=0$.

The update is done as follows. We compute 
$\tilde{\xi}_{relay}^{(k)}=\xi_{relay}^{(k-1)}+b_k \bigg( f_2(\xi_{relay}^{(k-1)},\xi_{relay}^{(k-1)})+ g_2(\lambda^{(k-1)},\xi_{relay}^{(k-1)},\xi_{relay}^{(k-1)})+ M_{2}^{(k)} \bigg)$ 
and compute $\xi_{relay}^{(k)}=\Lambda_{[0,A_3]}(\tilde{\xi}_{relay}^{(k)})$. We compute 
$\xi_{out}^{(k)}$ in a similar fashion. {\em Hence, projection onto the set 
$\mathcal{G}$ is nothing but coordinatewise projection.}

{\em Note that, 
(\ref{eqn:optexplorelimadaptivelearning_slower_timescale_in_kushner_form}) is in the same form as the 
standard projected stochastic approximation 
(Equation~$5.3.1$ of \cite{kushner-clark78SA-constrained-unconstrained}). In order to show that 
the iterates in (\ref{eqn:optexplorelimadaptivelearning_slower_timescale_in_kushner_form}) converge to the right 
set, we will make use of Theorem~$5.3.1$ from \cite{kushner-clark78SA-constrained-unconstrained}. 
To use this theorem, we need to check whether 
(\ref{eqn:optexplorelimadaptivelearning_slower_timescale_in_kushner_form}) satisfies five conditions 
from \cite{kushner-clark78SA-constrained-unconstrained}, i.e., 
$A5.1.3$, $A5.1.4$, $A5.1.5$, $A5.3.1.$ and $A5.3.2$. This is done in the next subsection.} 
\qed

\subsubsection{\bf{\em Checking the five conditions from \cite{kushner-clark78SA-constrained-unconstrained}}}
\label{subsubsection:checking_five_conditions_kushner_optexplorelimadaptivelearning}
Before checking the five conditions, we will present a lemma that will be useful for checking one condition.
\begin{lemma}\label{lemma:placement_rate_mean_outage_per_step_continuous_in_lambda_xio_and_xir}
Suppose that Assumption~\ref{assumption:shadowing_continuous_random_variable} holds. Under the 
decision rule given by (\ref{eqn:smdp-optimal-policy_general_form}),   
the mean power per step $\frac{\overline{\Gamma}(\lambda,\xi_{out},\xi_{relay})}{\overline{U}(\lambda,\xi_{out},\xi_{relay})}$, 
mean number of relays per step $\frac{1}{\overline{U}(\lambda,\xi_{out},\xi_{relay})}$ 
and mean outage per step 
$\frac{\overline{Q}_{out}(\lambda,\xi_{out},\xi_{relay})}{\overline{U}(\lambda,\xi_{out},\xi_{relay})}$ 
are continuous in $\lambda$, $\xi_{out}$ and $\xi_{relay}$.
\end{lemma}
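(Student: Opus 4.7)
The plan is to reduce the lemma to continuity of the joint distribution $g(r,\gamma;\lambda,\xi_{out},\xi_{relay})$ of the inter-node distance and transmit power under the (deterministic) decision rule (\ref{eqn:smdp-optimal-policy_general_form}), and then copy the Dominated-Convergence argument used in the proof of Lemma~\ref{lemma:grgamma_continuous_in_xio_xir}. Once $g$ is continuous in $(\lambda,\xi_{out},\xi_{relay})$, the quantities $\overline{U}(\lambda,\xi_{out},\xi_{relay})=\sum_{r,\gamma} r\,g(r,\gamma;\lambda,\xi_{out},\xi_{relay})$, $\overline{\Gamma}(\lambda,\xi_{out},\xi_{relay})=\sum_{r,\gamma}\gamma\,g(r,\gamma;\lambda,\xi_{out},\xi_{relay})$, and $\overline{Q}_{out}(\lambda,\xi_{out},\xi_{relay})=\sum_{r,\gamma}\mathbb{E}_W[Q_{out}(r,\gamma,W)\mid \text{chose }(r,\gamma)]\,g(r,\gamma;\lambda,\xi_{out},\xi_{relay})$ (the last itself being a finite sum of continuous integrals, by essentially the same argument) become continuous, and the three ratios in the statement are continuous because $\overline{U}\ge A+1>0$ is uniformly bounded away from zero.

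First I would fix $(r,\gamma)\in\{A+1,\ldots,A+B\}\times\mathcal{S}$ and write the event that the rule selects $(r,\gamma)$ as the intersection, over $(u,\gamma')\ne(r,\gamma)$, of the strict inequalities
\[
\mathcal{E}_{u,\gamma'}=\bigl\{\underline{w}:\gamma+\xi_{out}Q_{out}(r,\gamma,w_r)-\lambda r<\gamma'+\xi_{out}Q_{out}(u,\gamma',w_u)-\lambda u\bigr\},
\]
with the constant $\xi_{relay}$ cancelling from both sides. In contrast to Lemma~\ref{lemma:grgamma_continuous_in_xio_xir}, the parameter $\lambda$ here is free (not equal to $\lambda^*(\xi_{out},\xi_{relay})$), which actually simplifies matters: we no longer need Theorem~\ref{theorem:smdp-cost-vs-xi} to obtain continuity of $\lambda^*$, because the dependence on $\lambda$ is already linear and manifestly continuous in the defining inequalities.

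Next I would take any convergent sequence $(\lambda_n,\xi_{out,n},\xi_{relay,n})\to(\lambda,\xi_{out},\xi_{relay})$, form the analogous sets $\mathcal{E}^{(n)}_{u,\gamma'}$, and argue that $\mathbb{I}_{\mathcal{E}^{(n)}_{u,\gamma'}}\to\mathbb{I}_{\mathcal{E}_{u,\gamma'}}$ almost surely. This is exactly the same pointwise argument as in Claim~\ref{claim:convergence_of_indicator_function} once one verifies that the boundary set $\{\underline{w}:\gamma+\xi_{out}Q_{out}(r,\gamma,w_r)-\lambda r=\gamma'+\xi_{out}Q_{out}(u,\gamma',w_u)-\lambda u\}$ has probability zero. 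Under Assumption~\ref{assumption:shadowing_continuous_random_variable} and the strict monotonicity and continuity of $Q_{out}(r,\gamma,\cdot)$ in $w$, this set is, for $u\ne r$, a measure-zero slice of the product density (by Fubini), and for $u=r$ a level set of $w_r\mapsto\xi_{out}(Q_{out}(r,\gamma',w_r)-Q_{out}(r,\gamma,w_r))$, again of marginal measure zero. Then applying the bounded convergence theorem (dominating function $1$) to $\prod_{(u,\gamma')\ne(r,\gamma)}\mathbb{I}_{\mathcal{E}^{(n)}_{u,\gamma'}}$ gives $g(r,\gamma;\lambda_n,\xi_{out,n},\xi_{relay,n})\to g(r,\gamma;\lambda,\xi_{out},\xi_{relay})$, and summation over the finite set of $(r,\gamma)$ completes the argument.

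The main technical obstacle is the boundary-event bookkeeping when one of the parameters is degenerate, e.g.\ $\xi_{out}=0$, in which case the equality reduces to $\gamma-\gamma'=\lambda(r-u)$, a condition not on $\underline{w}$ at all; this must be treated separately by observing that a constant function of $\underline{w}$ is either identically in $\mathcal{E}_{u,\gamma'}$ or not, the lexicographic tie-breaking convention (as in the paragraph following the proof of Theorem~\ref{theorem:complexity-reduction-smdp-policy-iteration}) fixes the outcome, and continuity of the indicator in a neighbourhood still holds because nearby parameter values preserve the strict inequality. Apart from this bookkeeping the proof is a line-by-line adaptation of the proof of Lemma~\ref{lemma:grgamma_continuous_in_xio_xir}, and is essentially a routine extension.
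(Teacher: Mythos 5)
Your proposal is correct in substance and follows the same route as the paper: the paper's own proof of this lemma is a one-line reference to the proof of Theorem~\ref{theorem:placement_rate_mean_outage_per_step_continuous_in_xio_and_xir}, which in turn rests on exactly the indicator-decomposition and dominated-convergence argument of Lemma~\ref{lemma:grgamma_continuous_in_xio_xir} that you adapt. There is one place where your adaptation is not merely cosmetic but actually necessary, and you handle it correctly: in the paper's Theorem~\ref{theorem:placement_rate_mean_outage_per_step_continuous_in_xio_and_xir}, continuity of $\overline{Q}_{out}^*$ is obtained \emph{indirectly}, by combining continuity of $\overline{\Gamma}^*$ and $\overline{U}^*$ with continuity of $\lambda^*(\xi_{out},\xi_{relay})$ (Theorem~\ref{theorem:smdp-cost-vs-xi}) through the Renewal--Reward identity. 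For the present lemma $\lambda$ is a free parameter of the policy in (\ref{eqn:smdp-optimal-policy_general_form}), not the optimal average cost, so that identity cannot be used to back out $\overline{Q}_{out}(\lambda,\xi_{out},\xi_{relay})$; your direct argument, writing $\overline{Q}_{out}$ as $\sum_{r,\gamma}\mathbb{E}\bigl[Q_{out}(r,\gamma,W_r)\,\mathbb{I}_{\mathcal{E}(r,\gamma)}\bigr]$ and applying bounded convergence (with $Q_{out}\le 1$), is the right fix and is slightly cleaner than what a literal transcription of the paper's proof would give. One caveat on your degenerate-case bookkeeping: at an exact deterministic tie (e.g.\ $\xi_{out}=0$ and $\gamma-\lambda r=\gamma'-\lambda u$ with $u\ne r$), the tie-breaking convention fixes the value of $g$ \emph{at} that parameter point, but it does not make the indicators converge from nearby parameters --- for small $\xi_{out}>0$ the argmin is decided by the random comparison of the two outage terms, which has probability strictly between $0$ and $1$, so $g$ (and hence $\overline{U}$, $\overline{\Gamma}$) genuinely jumps there. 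This failure occurs only on a null set of parameter values, and the paper's own Lemma~\ref{lemma:grgamma_continuous_in_xio_xir} silently has the same gap at $\xi_{out}=0$, so it does not distinguish your proof from the paper's; but your claim that ``continuity of the indicator in a neighbourhood still holds'' at such a tie is not accurate as stated.
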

\begin{proof}
  The proof is similar to that of Theorem~\ref{theorem:placement_rate_mean_outage_per_step_continuous_in_xio_and_xir}.
\end{proof}

Now, we will check that conditions $A5.1.3$, $A5.1.4$, $A5.1.5$, $A5.3.1.$ and $A5.3.2$ from 
\cite{kushner-clark78SA-constrained-unconstrained} are satisfied.

{\em Checking Condition~$A5.1.3$:} This condition requires 
that $f_1(\cdot,\cdot)$ and $f_2(\cdot,\cdot)$ are continuous functions. This condition is satisfied 
as a consequence of Theorem~\ref{theorem:placement_rate_mean_outage_per_step_continuous_in_xio_and_xir}.\qed

{\em Checking Condition~$A5.1.4$:} This condition is satisfied 
since $b_k>0$, $b_k \rightarrow 0$ as $k \rightarrow \infty$ and $\sum_{k=1}^{\infty}b_k=\infty$.\qed

{\em Checking Condition~$A5.1.5$:} This 
condition requires that $\lim_{k \rightarrow \infty}g_1(\lambda^{(k-1)},\xi_{out}^{(k-1)},\xi_{relay}^{(k-1)})=0$ and  
$\lim_{k \rightarrow \infty}g_2(\lambda^{(k-1)},\xi_{out}^{(k-1)},\xi_{relay}^{(k-1)})=0$ almost surely, 
and that the sequences $g_1(\lambda^{(k-1)},\xi_{out}^{(k-1)},\xi_{relay}^{(k-1)})$ and 
$g_2(\lambda^{(k-1)},\xi_{out}^{(k-1)},\xi_{relay}^{(k-1)})$ are bounded almost surely.

By Lemma~\ref{lemma:boundedness_lambda_iteration_optexplorelimadaptivelearning}, we can find an interval 
$[-d,d]$ such that 
$(\lambda^{(k)}, \xi_{out}^{(k)}, \xi_{relay}^{(k)})$ lies inside the compact set $[-d,d]\times[0,A_2]\times[0,A_3]$ for 
all $k \geq 1$ almost surely. 

Note that, $\overline{Q}_{out}(\lambda, \xi_{out}, \xi_{relay})$ is continuous in each argument 
(by Lemma~\ref{lemma:placement_rate_mean_outage_per_step_continuous_in_lambda_xio_and_xir}). 
Hence, $\overline{Q}_{out}(\lambda, \xi_{out}, \xi_{relay})$ is uniformly continuous over the 
compact set $[-d,d]\times[0,A_2]\times[0,A_3]$ and similarly $\overline{U}(\lambda, \xi_{out}, \xi_{relay})$ 
is uniformly continuous over the compact set $[-d,d]\times[0,A_2]\times[0,A_3]$. 

Now,  by Lemma~\ref{lemma:lemma_used_in_convergence_proof_of_optexplorelimadaptive_learning}, 
the Euclidean distance between  
$(\lambda^{(k)}, \xi_{out}^{(k)}, \xi_{relay}^{(k)})$ and 
$(\lambda^*(\xi_{out}^{(k)}, \xi_{relay}^{(k)}), \xi_{out}^{(k)}, \xi_{relay}^{(k)})$ converges to $0$ 
almost surely as $k \rightarrow \infty$.  
Hence, by uniform continuity, we conclude that $\lim_{k \rightarrow \infty}|\overline{Q}_{out}(\lambda^{(k)}, \xi_{out}^{(k)}, \xi_{relay}^{(k)})-
\overline{Q}_{out}(\lambda^*(\xi_{out}^{(k)}, \xi_{relay}^{(k)}), \xi_{out}^{(k)}, \xi_{relay}^{(k)})|=0$ 
and $\lim_{k \rightarrow \infty}|\overline{U}(\lambda^{(k)}, \xi_{out}^{(k)}, \xi_{relay}^{(k)})-
\overline{U}(\lambda^*(\xi_{out}^{(k)}, \xi_{relay}^{(k)}), \xi_{out}^{(k)}, \xi_{relay}^{(k)})|=0$ almost surely. 
Hence, $\lim_{k \rightarrow \infty}g_1(\lambda^{(k-1)},\xi_{out}^{(k-1)},\xi_{relay}^{(k-1)})=0$ and 
$\lim_{k \rightarrow \infty}g_2(\lambda^{(k-1)},\xi_{out}^{(k-1)},\xi_{relay}^{(k-1)})=0$ almost surely. 

Also, $g_1(\lambda^{(k)},\xi_{out}^{(k)},\xi_{relay}^{(k)})$ and 
$g_2(\lambda^{(k)},\xi_{out}^{(k)},\xi_{relay}^{(k)})$ are uniformly bounded across $k \geq 1$, since the 
outage probabilities and placement distances are bounded quantities.

{\em Checking Condition~$A5.3.1$:} This condition 
requires that $\mathcal{G}=[0,A_2] \times [0,A_3]$ is the closure of its interior, which is true in our problem. It also requires 
that the L.H.S. of each constraint inequality in (\ref{eqn:constraint_equations_in_the_slower_timescale}) 
is continuously differentiable, which is also true in our problem. 

Note that, the L.H.S. of each constraint inequality in (\ref{eqn:constraint_equations_in_the_slower_timescale}) 
is a function of $\xi_{out}$ and $\xi_{relay}$. 
Condition~$A5.3.1$ of \cite{kushner-clark78SA-constrained-unconstrained} needs that for 
each point on the boundary of $\mathcal{G}$, the gradients of the functions 
(in the L.H.S. of (\ref{eqn:constraint_equations_in_the_slower_timescale})) 
corresponding to the active constraints are 
linearly independent. Note that on each point of the boundary of $\mathcal{G}$, at most two constraints 
can be simultaneously active (see (\ref{eqn:constraint_equations_in_the_slower_timescale})). If there are exactly two 
active constraints, one will be for $\xi_{out}$ and the other one will be for 
$\xi_{relay}$. Clearly, the gradients (with respect to the tuple $(\xi_{out},\xi_{relay})$) 
of the active constraint(s) at any boundary 
point of $\mathcal{G}$ are orthogonal, and hence linearly independent.\qed

\begin{figure*}[!t]
\begin{eqnarray}
\xi_{out}^{(k)} &=&  \Lambda_{\mathcal{G}} \bigg( \xi_{out}^{(k-1)}+b_k \overline{U}^*(\xi_{out}^{(k-1)},\xi_{relay}^{(k-1)}) \frac{ \bigg( f_1(\xi_{out}^{(k-1)},\xi_{relay}^{(k-1)})+ g_1(\lambda^{(k-1)},\xi_{out}^{(k-1)},\xi_{relay}^{(k-1)})+ M_{1}^{(k)} \bigg) }{\overline{U}^*(\xi_{out}^{(k-1)},\xi_{relay}^{(k-1)})} \bigg) \nonumber\\
\xi_{relay}^{(k)} &=&  \Lambda_{\mathcal{G}} \bigg( \xi_{relay}^{(k-1)}+b_k \overline{U}^*(\xi_{out}^{(k-1)},\xi_{relay}^{(k-1)}) \frac{ \bigg( f_2(\xi_{out}^{(k-1)},\xi_{relay}^{(k-1)})+ g_2(\lambda^{(k-1)},\xi_{out}^{(k-1)},\xi_{relay}^{(k-1)})+ M_{2}^{(k)} \bigg) }{\overline{U}^*(\xi_{out}^{(k-1)},\xi_{relay}^{(k-1)})} \bigg)
\label{eqn:optexplorelimadaptivelearning_slower_timescale_in_kushner_form_with_scaled_slower_timescale}
\end{eqnarray}
\hrule
\end{figure*}

{\em Checking Condition~$A5.3.2$:}
Let $m(t):=\sup\{n \geq 1: \sum_{i=1}^n b_i \leq t\}$. We have to show that, there exists a $T>0$ such that 
for any $\epsilon>0$, 
we have $\lim_{n \rightarrow \infty} \mathbb{P}\bigg(\sup_{j \geq n} \max_{t \leq T} |\sum_{i=m(jT)}^{m(jT+t)-1}b_i M_1^{(i)}| >\epsilon \bigg)=0$ 
and $\lim_{n \rightarrow \infty} \mathbb{P}\bigg(\sup_{j \geq n} \max_{t \leq T} |\sum_{i=m(jT)}^{m(jT+t)-1}b_i M_2^{(i)}| >\epsilon \bigg)=0$. 
We will prove the first result, for any $T>0$ and any $\epsilon>0$. 
Let us define the event $E_j:=\{\max_{t \leq T} |\sum_{i=m(jT)}^{m(jT+t)-1}b_i M_1^{(i)}| >\epsilon\}$. Hence,

\footnotesize
\begin{eqnarray*}
&& \lim_{n \rightarrow \infty} \mathbb{P}\bigg(\sup_{j \geq n} \max_{t \leq T} |\sum_{i=m(jT)}^{m(jT+t)-1}b_i M_1^{(i)}| >\epsilon \bigg) \nonumber\\
&=& \lim_{n \rightarrow \infty} \mathbb{P}\bigg(\cup_{j \geq n} E_j \bigg) \nonumber\\
&=& \mathbb{P}\bigg(\cap_{n \geq 1} \cup_{j \geq n} E_j \bigg) \nonumber\\
&=& \mathbb{P}\bigg( \limsup_{n \rightarrow \infty} E_n \bigg)
\end{eqnarray*}
\normalsize

where the second equality follows from the continuity of probability.

Now, $\mathbb{P}(E_n) \leq \frac{\mathbb{E}|\sum_{i=m(nT)}^{m(nT+T)-1}b_i M_1^{(i)}|^2}{\epsilon^2}$ 
(by Doob's inequality for Martingales). Since, $|M_1^{(i)}| \leq C$ for some $C>0$ 
(since outage probability and placement distance are two bounded quantities; see 
the expression for $M_1^{(i)}$ in (\ref{eqn:optexplorelimadaptivelearning_slower_timescale_in_kushner_form})) and since 
$\mathbb{E}(M_1^{(i)}M_1^{(j)})=0$ for $i \neq j$, 
the above quantity can be 
upper-bounded by $\mathbb{P}(E_n) \leq 
\frac{C^2 \sum_{i=m(nT)}^{m(nT+T)-1}b_i^2}{\epsilon^2}$. Hence, 
$\sum_{n=1}^{\infty}\mathbb{P}(E_n) \leq \frac{C^2 \sum_{i=1}^{\infty}b_i^2}{\epsilon^2}< \infty$. Hence, by Borel-Cantelli 
lemma, $\mathbb{P}\bigg( \limsup_{n \rightarrow \infty} E_n \bigg)=0$, which completes checking 
Condition~$A5.3.2$ of \cite{kushner-clark78SA-constrained-unconstrained}.\qed

\subsubsection{{\bf Finishing the Proof of 
Theorem~\ref{theorem:convergence_learning_backtracking_adaptive_with_outage_cost_algorithm}}}
\label{subsubsection:finishing_two_timescale_adaptive_learning_with_outage_convergence_proof}

Now, we will invoke Theorem~$5.3.1$ from \cite{kushner-clark78SA-constrained-unconstrained} to complete the proof. 

Let us rewrite (\ref{eqn:optexplorelimadaptivelearning_slower_timescale_in_kushner_form}) as 
(\ref{eqn:optexplorelimadaptivelearning_slower_timescale_in_kushner_form_with_scaled_slower_timescale}). 
Note that, $A+1 \leq \overline{U}^*(\xi_{out}^{(k-1)}, \xi_{relay}^{(k-1)}) \leq A+B$. Hence, if we use step size  
$b_k \overline{U}^*(\xi_{out}^{(k-1)}, \xi_{relay}^{(k-1)})$ and use the modified functions like 
$\frac{  f_1(\xi_{out}^{(k-1)},\xi_{relay}^{(k-1)})  }{\overline{U}^*(\xi_{out}^{(k-1)},\xi_{relay}^{(k-1)})}$ as 
in (\ref{eqn:optexplorelimadaptivelearning_slower_timescale_in_kushner_form_with_scaled_slower_timescale}), the 
conditions checked in the previous subsection will still hold. This is evident from the fact that, 
once we know $\xi_{out}^{(k-1)}$ and 
$\xi_{relay}^{(k-1)}$, $\overline{U}^*(\xi_{out}^{(k-1)},\xi_{relay}^{(k-1)})$ becomes 
a deterministic quantity, and that the randomness in the computation of the new iterates 
$\xi_{out}^{(k)}$ and $\xi_{relay}^{(k)}$ comes from the random shadowing in the links measured in the process of deploying 
the $k$-th node. Hence, $\frac{M_{1}^{(k)}}{\overline{U}^*(\xi_{out}^{(k-1)},\xi_{relay}^{(k-1)})}$ and 
$\frac{M_{2}^{(k)}}{\overline{U}^*(\xi_{out}^{(k-1)},\xi_{relay}^{(k-1)})}$ are also Martingale difference sequences. 
It is easy to check conditions $A5.1.3$, $A5.1.5$ and $A5.3.2$ for 
(\ref{eqn:optexplorelimadaptivelearning_slower_timescale_in_kushner_form_with_scaled_slower_timescale}), 
and the condition in $A5.1.4$ is satisfied almost surely.

Hence, from now on, let us consider the slower timescale iteration 
(\ref{eqn:optexplorelimadaptivelearning_slower_timescale_in_kushner_form_with_scaled_slower_timescale}).

For the function 
$h(\xi_{out},\xi_{relay}):=\bigg(\frac{f_1(\xi_{out},\xi_{relay})}{\overline{U}^*(\xi_{out},\xi_{relay})},\frac{f_2(\xi_{out},\xi_{relay})}{\overline{U}^*(\xi_{out},\xi_{relay})} \bigg) 
=\bigg(\frac{\overline{Q}_{out}^*(\xi_{out},\xi_{relay})}{\overline{U}^*(\xi_{out},\xi_{relay})}-\overline{q},\frac{1}{\overline{U}^*(\xi_{out},\xi_{relay})}-\overline{N} \bigg)$,  
let us define the map:

\footnotesize
\begin{eqnarray}
&& \overline{\Lambda}_{\mathcal{G}}(h(\xi_{out},\xi_{relay})) \nonumber\\
&=& \lim_{0<\beta \rightarrow 0} \frac{\Lambda_{\mathcal{G}}\bigg((\xi_{out},\xi_{relay})+\beta h(\xi_{out},\xi_{relay}))\bigg)-(\xi_{out},\xi_{relay})}{\beta} \nonumber\\
\label{eqn:definition_of_Lambda_G}
\end{eqnarray}
\normalsize

We want to show that the iterates $(\xi_{out}^{(k)}, \xi_{relay}^{(k)})$ will converge almost surely to 
the set of stationary points of the o.d.e.  
$(\dot{\xi}_{out}(t), \dot{\xi}_{relay}(t))=\overline{\Lambda}_{\mathcal{G}}\bigg(\frac{f_1(\xi_{out}(t),\xi_{relay}(t))}{\overline{U}^*(\xi_{out}(t),\xi_{relay}(t))},\frac{f_2(\xi_{out}(t),\xi_{relay}(t))}{\overline{U}^*(\xi_{out}(t),\xi_{relay}(t))} \bigg)$. 
This will follow from Theorem~$5.3.1$ from \cite{kushner-clark78SA-constrained-unconstrained}, if we can 
show that $\bigg(-\frac{f_1(\xi_{out},\xi_{relay})}{\overline{U}^*(\xi_{out},\xi_{relay})},-\frac{f_2(\xi_{out},\xi_{relay})}{\overline{U}^*(\xi_{out},\xi_{relay})} \bigg)$ is the gradient of a 
continuously differentiable function.

Let us denote, by $\overline{\Gamma}_{\pi}$, $\overline{U}_{\pi}$ and 
$\overline{Q}_{out,\pi}$, the mean power per link, mean placement distance per link and mean 
outage per link respectively, under any given stationary deployment policy $\pi$. 
Let us define the function 

\footnotesize
\begin{eqnarray}
 G(\xi_{out},\xi_{relay})& := & \inf_{\pi} \bigg( \frac{\overline{\Gamma}_{\pi}}{\overline{U}_{\pi}}+ \xi_{out} (\frac{\overline{Q}_{out,\pi}}{\overline{U}_{\pi}}-\overline{q}) + \xi_{relay}(\frac{1}{\overline{U}_{\pi}}-\overline{N})  \bigg) \nonumber\\
 && \label{eqn:expression_for_G_xi_out_xi_relay}
\end{eqnarray}
\normalsize

\begin{lemma}\label{lemma:continuously_differentiable_adaptive_learning_convergence_proof}
 $G(\xi_{out},\xi_{relay})$ is continuously differentiable and its gradient is 
$\bigg(\frac{f_1(\xi_{out},\xi_{relay})}{\overline{U}^*(\xi_{out},\xi_{relay})},\frac{f_2(\xi_{out},\xi_{relay})}{\overline{U}^*(\xi_{out},\xi_{relay})} \bigg)$. 
\end{lemma}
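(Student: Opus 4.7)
My plan is to exploit the fact that $G(\xi_{out},\xi_{relay})$ is an infimum over policies of an affine function of $(\xi_{out},\xi_{relay})$, and then apply the envelope theorem (Danskin) together with the continuity result from Theorem~\ref{theorem:placement_rate_mean_outage_per_step_continuous_in_xio_and_xir}.

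First I would observe that for any fixed stationary policy $\pi$, the quantity
\[
L_\pi(\xi_{out},\xi_{relay}) := \frac{\overline{\Gamma}_{\pi}}{\overline{U}_{\pi}}+ \xi_{out} \Big(\frac{\overline{Q}_{out,\pi}}{\overline{U}_{\pi}}-\overline{q}\Big) + \xi_{relay}\Big(\frac{1}{\overline{U}_{\pi}}-\overline{N}\Big)
\]
is affine in $(\xi_{out},\xi_{relay})$. Hence $G=\inf_\pi L_\pi$ is concave, and therefore continuous on the interior of $[0,\infty)^2$. For any $(\xi_{out},\xi_{relay})$, the OptExploreLim policy $\pi^*(\xi_{out},\xi_{relay})$ attains the infimum, so $G(\xi_{out},\xi_{relay})=L_{\pi^*}(\xi_{out},\xi_{relay})$. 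Since for every other $(\xi_{out}',\xi_{relay}')$ one has $G(\xi_{out}',\xi_{relay}')\le L_{\pi^*(\xi_{out},\xi_{relay})}(\xi_{out}',\xi_{relay}')$, the affine function $L_{\pi^*(\xi_{out},\xi_{relay})}$ is a supporting hyperplane of $G$ at the point $(\xi_{out},\xi_{relay})$, which shows that
\[
\Big(\tfrac{\overline{Q}_{out}^*(\xi_{out},\xi_{relay})}{\overline{U}^*(\xi_{out},\xi_{relay})}-\overline{q},\;\tfrac{1}{\overline{U}^*(\xi_{out},\xi_{relay})}-\overline{N}\Big)=\Big(\tfrac{f_1(\xi_{out},\xi_{relay})}{\overline{U}^*(\xi_{out},\xi_{relay})},\;\tfrac{f_2(\xi_{out},\xi_{relay})}{\overline{U}^*(\xi_{out},\xi_{relay})}\Big)
\]
belongs to the superdifferential $\partial G(\xi_{out},\xi_{relay})$.

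Next I would upgrade this subgradient identity to full differentiability. A standard fact about concave functions on a finite-dimensional space is that if a subgradient selection $s(\xi)\in\partial G(\xi)$ is continuous on an open set, then $\partial G(\xi)$ must be the singleton $\{s(\xi)\}$ there, so $G$ is continuously differentiable with $\nabla G=s$. Theorem~\ref{theorem:placement_rate_mean_outage_per_step_continuous_in_xio_and_xir} (which uses Assumption~\ref{assumption:shadowing_continuous_random_variable}) gives exactly the continuity of the selection $\big(\overline{Q}_{out}^*/\overline{U}^*-\overline{q},\,1/\overline{U}^*-\overline{N}\big)$ in $(\xi_{out},\xi_{relay})$. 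Hence $G$ is $C^1$ on the interior, with gradient $(f_1/\overline{U}^*,f_2/\overline{U}^*)$.

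The only delicate step I foresee is justifying that the displayed pair really is a superdifferential element, i.e., that the policy $\pi^*(\xi_{out},\xi_{relay})$ is still admissible (and hence not worse than optimal) for the minimization defining $G(\xi_{out}',\xi_{relay}')$ when we perturb the multipliers. This is immediate because the class over which we take the infimum in (\ref{eqn:expression_for_G_xi_out_xi_relay}) is the class of all stationary deployment policies and does not depend on $(\xi_{out},\xi_{relay})$. Boundary behaviour at $\xi_{out}=0$ or $\xi_{relay}=0$ needs no extra argument since we only need differentiability on the interior to apply Theorem~$5.3.1$ of \cite{kushner-clark78SA-constrained-unconstrained} via the o.d.e.\ analysis that follows.
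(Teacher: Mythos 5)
Your proposal is correct, and it reaches the same conclusion by a recognizably related but cleaner route than the paper. The shared core is the envelope/Danskin idea: since $\pi^*(\xi_{out},\xi_{relay})$ attains the infimum defining $G$ and remains admissible (but generally suboptimal) at perturbed multipliers, the affine function $L_{\pi^*(\xi_{out},\xi_{relay})}$ lies above $G$ and touches it at $(\xi_{out},\xi_{relay})$; the paper exploits exactly this in coordinate directions to get the two one-sided inequalities $\frac{\partial G}{\partial \xi_{out}+}\leq \frac{\overline{Q}_{out}^*}{\overline{U}^*}-\overline{q}\leq \frac{\partial G}{\partial \xi_{out}-}$, while you package it as a supergradient statement in two dimensions. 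Where the two arguments genuinely diverge is in upgrading this to differentiability. The paper works one coordinate at a time: coordinatewise concavity makes the sup-of-subgradients map monotone, hence with at most countably many discontinuities, so the partial derivative exists off a countable set; continuity of $\overline{Q}_{out}^*/\overline{U}^*$ (Theorem~\ref{theorem:placement_rate_mean_outage_per_step_continuous_in_xio_and_xir}) then squeezes the left and right derivatives at the exceptional points, and finally Apostol's theorem (continuous partials imply differentiability) is invoked. You instead use the standard convex-analysis fact that a continuous selection of the superdifferential of a concave function on an open set forces the superdifferential to be a singleton (this follows from monotonicity of the superdifferential: for $v\in\partial G(x_0)$ and $d=v-s(x_0)$, monotonicity along $x_0+td$ plus continuity of $s$ at $t\downarrow 0$ gives $\|d\|^2\leq 0$), which yields $C^1$ regularity in one stroke and avoids both the countability argument and the appeal to Apostol. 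Both routes consume the same external input, namely the continuity of $\overline{Q}_{out}^*/\overline{U}^*$ and $1/\overline{U}^*$ from Theorem~\ref{theorem:placement_rate_mean_outage_per_step_continuous_in_xio_and_xir}. One small caveat applies equally to both proofs: the projected o.d.e.\ analysis uses the gradient on all of $\mathcal{G}=[0,A_2]\times[0,A_3]$ including its boundary, so strictly one should note that $G$ extends as a finite concave function to an open neighbourhood of $\mathcal{G}$ (the per-link quantities $\overline{\Gamma}_\pi$, $\overline{Q}_{out,\pi}$, $\overline{U}_\pi$ are uniformly bounded, so the infimum of the affine functions $L_\pi$ is finite everywhere), or work with one-sided derivatives at the boundary; your remark that only the interior matters is slightly too quick, but this is not a gap specific to your argument.
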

\begin{proof}
The proof of Lemma~\ref{lemma:continuously_differentiable_adaptive_learning_convergence_proof} will be provided later in this section.
\end{proof}

Now, $\bigg(-\frac{f_1(\xi_{out},\xi_{relay})}{\overline{U}^*(\xi_{out},\xi_{relay})},-\frac{f_2(\xi_{out},\xi_{relay})}{\overline{U}^*(\xi_{out},\xi_{relay})} \bigg)$ 
is the gradient of a continuously differentiable function $-G(\xi_{out},\xi_{relay})$. 
Hence, by Theorem~$5.3.1$ from \cite{kushner-clark78SA-constrained-unconstrained}, the 
iterates $(\xi_{out}^{(k)},\xi_{relay}^{(k)})$ will almost surely converge to 
the set of stationary points of the o.d.e. 
$(\dot{\xi}_{out}(t), \dot{\xi}_{relay}(t))=\overline{\Lambda}_{\mathcal{G}}\bigg(\frac{f_1(\xi_{out}(t),\xi_{relay}(t))}{\overline{U}^*(\xi_{out}(t),\xi_{relay}(t))},\frac{f_2(\xi_{out}(t),\xi_{relay}(t))}{\overline{U}^*(\xi_{out}(t),\xi_{relay}(t))} \bigg)$.

\begin{lemma}\label{lemma:stationary_point_is_optimal_point_adaptive_learning_proof}
If $(\xi_{out},\xi_{relay}) \in [0,A_2] \times [0,A_3]$ is a zero of 
$\overline{\Lambda}_{\mathcal{G}}\bigg(\frac{f_1(\xi_{out},\xi_{relay})}{\overline{U}^*(\xi_{out},\xi_{relay})},\frac{f_2(\xi_{out},\xi_{relay})}{\overline{U}^*(\xi_{out},\xi_{relay})} \bigg)$, 
then $(\lambda^*(\xi_{out},\xi_{relay}),\xi_{out},\xi_{relay}) \in \mathcal{K}(\overline{q},\overline{N})$, 
provided that $A_2$ and $A_3$ are chosen properly. 
\end{lemma}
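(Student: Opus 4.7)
The plan is to interpret the projected ODE as projected gradient ascent of the concave Lagrangian dual function $G(\xi_{out}, \xi_{relay})$ defined in (\ref{eqn:expression_for_G_xi_out_xi_relay}), then to classify the stationary points through their associated KKT conditions, and finally to use the ``proper choice'' of $A_2, A_3$ to exclude the outer boundaries $\xi_{out}=A_2$ and $\xi_{relay}=A_3$ where KKT conditions alone would not imply feasibility.

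First I would invoke Lemma~\ref{lemma:continuously_differentiable_adaptive_learning_convergence_proof} to write $\nabla G = \bigl(f_1/\overline{U}^*,\, f_2/\overline{U}^*\bigr)$ on $\mathcal{G}$, and note that $G$ is concave as a pointwise infimum (over $\pi$) of functions that are affine in $(\xi_{out},\xi_{relay})$. By the definition of $\overline{\Lambda}_{\mathcal{G}}$ in (\ref{eqn:definition_of_Lambda_G}), a point $(\xi_{out}^s,\xi_{relay}^s)\in\mathcal{G}$ is stationary for the projected ODE if and only if it is a first-order KKT point of $\max_{\mathcal{G}} G$, which by concavity is equivalent to being a global maximizer of $G$ over $\mathcal{G}$. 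Under Assumption~\ref{assumption:existence_of_xio_xir} together with Theorem~\ref{theorem:how-to-choose-optimal-Lagrange-multiplier}, there exists a pair in $\mathcal{K}(\overline{q},\overline{N})$; complementary slackness for that pair gives $G=\gamma^*$, so $\max_{\mathcal{G}} G = \gamma^*$ as soon as $A_2,A_3$ are chosen large enough that some element of $\mathcal{K}(\overline{q},\overline{N})$ lies in $\mathcal{G}$.

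Next I would analyze an arbitrary stationary point $(\xi_{out}^s,\xi_{relay}^s)$ by complementary slackness. On the admissible faces (i.e., excluding $\xi_{out}^s=A_2$ and $\xi_{relay}^s=A_3$), the KKT conditions yield $\xi_{out}^s f_1(\xi_{out}^s,\xi_{relay}^s)+\xi_{relay}^s f_2(\xi_{out}^s,\xi_{relay}^s)=0$ together with $f_1\le 0$ and $f_2\le 0$, i.e., feasibility of $\pi^*(\xi_{out}^s,\xi_{relay}^s)$ for (\ref{eqn:constrained_problem_average_cost_with_outage_cost}). Evaluating $G$ via its definition and the fact that $\pi^*(\xi_{out}^s,\xi_{relay}^s)$ attains the infimum then gives
\begin{equation*}
\gamma^* \;=\; G(\xi_{out}^s,\xi_{relay}^s) \;=\; \frac{\overline{\Gamma}^*(\xi_{out}^s,\xi_{relay}^s)}{\overline{U}^*(\xi_{out}^s,\xi_{relay}^s)} + \xi_{out}^s\frac{f_1}{\overline{U}^*} + \xi_{relay}^s\frac{f_2}{\overline{U}^*} \;=\; \frac{\overline{\Gamma}^*(\xi_{out}^s,\xi_{relay}^s)}{\overline{U}^*(\xi_{out}^s,\xi_{relay}^s)},
\end{equation*}
which together with $f_1\le 0$ and $f_2\le 0$ places $(\lambda^*(\xi_{out}^s,\xi_{relay}^s),\xi_{out}^s,\xi_{relay}^s)$ in $\mathcal{K}(\overline{q},\overline{N})$.

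The main obstacle is ruling out the outer boundaries $\xi_{out}^s=A_2$ and $\xi_{relay}^s=A_3$, where the KKT condition only forces $f_1\ge 0$ (resp.\ $f_2\ge 0$) and hence need not correspond to a feasible policy. This is where the careful specification of $A_2,A_3$ enters: with $A_2$ so large that, under $\xi_{out}=A_2$, the OptExploreLim policy selects $P_M$ at every location with probability at least $1-\kappa$ and $\overline{Q}_{out}^*(A_2,0)/\overline{U}^*(A_2,0)\le \overline{q}$, the monotonicity of mean outage per step in $\xi_{out}$ (Theorem~\ref{theorem:outage_decreasing_with_xio_placement_rate_decreasing_with_xir}) combined with the continuity asserted in Theorem~\ref{theorem:placement_rate_mean_outage_per_step_continuous_in_xio_and_xir} yields $f_1(A_2,\xi_{relay})<0$ uniformly for $\xi_{relay}\in[0,A_3]$, strictly contradicting $f_1\ge 0$. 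An analogous argument using the condition $\overline{U}^*(\xi_{out},A_3)>1/\overline{N}$ for all $\xi_{out}\in[0,A_2]$ excludes $\xi_{relay}^s=A_3$. The hard part will be extending the outage monotonicity uniformly over the $\xi_{relay}$ interval (and analogously for the $\xi_{out}$ interval when treating $\xi_{relay}^s=A_3$); this is precisely what forces the delicate choice of $A_2$ and $A_3$ promised in the discussion following Theorem~\ref{theorem:convergence_learning_backtracking_adaptive_with_outage_cost_algorithm}.
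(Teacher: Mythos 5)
Your overall framing --- projected gradient ascent of the concave dual function $G$, stationarity $\Leftrightarrow$ KKT $\Leftrightarrow$ global maximality, and then complementary slackness on the faces where it holds --- is a legitimate and in places slicker route than the paper's, which instead runs a separate contradiction argument (via auxiliary constrained problems and Theorem~\ref{theorem:how-to-choose-optimal-Lagrange-multiplier}) for each face and corner of $[0,A_2]\times[0,A_3]$. Your treatment of the interior, of the faces $\xi_{out}=0$ and $\xi_{relay}=0$, and of the face $\xi_{relay}=A_3$ (excluded because $\overline{U}^*(\xi_{out},A_3)>1/\overline{N}$ forces $f_2<0$) matches what the paper needs. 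One caveat: your claim that $\max_{\mathcal{G}}G=\gamma^*$ needs complementary slackness at some point of $\mathcal{K}(\overline{q},\overline{N})\cap\mathcal{G}$, which is slightly more than Assumption~\ref{assumption:existence_of_xio_xir} literally states; this should be flagged, though the paper is similarly informal on this point.

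The genuine gap is your handling of the face $\xi_{out}=A_2$. You assert that the choice of $A_2$ together with Theorem~\ref{theorem:outage_decreasing_with_xio_placement_rate_decreasing_with_xir} and Theorem~\ref{theorem:placement_rate_mean_outage_per_step_continuous_in_xio_and_xir} yields $f_1(A_2,\xi_{relay})<0$ \emph{uniformly} for $\xi_{relay}\in[0,A_3]$, thereby excluding the face. This does not follow. The monotonicity result is in $\xi_{out}$ for \emph{fixed} $\xi_{relay}$, and the anchor condition $\overline{Q}_{out}^*(A_2,0)/\overline{U}^*(A_2,0)\le\overline{q}$ only controls the corner $\xi_{relay}=0$; as $\xi_{relay}$ grows the optimal policy places relays farther apart and the mean outage per step generally \emph{increases}, so $f_1(A_2,\xi_{relay})$ can become positive. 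Indeed the paper does not exclude stationary points on this face at all: it shows that a stationary point there must satisfy $f_1(A_2,\xi_{relay})\ge 0$ and $f_2(A_2,\xi_{relay})=0$, and then rules out the strict case $f_1>0$ by a contradiction with the optimality of $\pi^*(A_2,\xi_{relay})$ for the auxiliary problem of minimizing mean outage per step subject to the induced power and relay-rate constraints --- this is precisely where the condition $\mathbb{P}(\argmin_{\gamma}(\gamma+A_2Q_{out}(u,\gamma,W))=P_M)>1-\kappa$ is used, to guarantee $\overline{\Gamma}^*(A_2,\xi_{relay})/\overline{U}^*(A_2,\xi_{relay})\ge\gamma^*$. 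The surviving case $f_1=0,\,f_2=0$ then lands in $\mathcal{K}(\overline{q},\overline{N})$. You need to replace your uniform-sign claim with an argument of this type (or fold the face into your complementary-slackness computation after first establishing $f_1=0$ there); as written, the step would fail.
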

\begin{proof}
 The proof of Lemma~\ref{lemma:stationary_point_is_optimal_point_adaptive_learning_proof} will be provided later in this section. 
 One way of choosing $A_2$ and $A_3$ has been described 
 before the proof of this lemma.
\end{proof}

{\em We have already shown that $(\xi_{out}^{(k)},\xi_{relay}^{(k)})$ converges to the set of $(\xi_{out},\xi_{relay})$ pairs for which 
$\overline{\Lambda}_{\mathcal{G}}\bigg(\frac{f_1(\xi_{out},\xi_{relay})}{\overline{U}^*(\xi_{out},\xi_{relay})},\frac{f_2(\xi_{out},\xi_{relay})}{\overline{U}^*(\xi_{out},\xi_{relay})} \bigg)=(0,0)$. 
Hence, by Lemma~\ref{lemma:adaptive_learning_faster_timescale_convergence} and 
Lemma~\ref{lemma:stationary_point_is_optimal_point_adaptive_learning_proof}, 
$(\lambda^{(k)},\xi_{out}^{(k)},\xi_{relay}^{(k)}) \rightarrow \mathcal{K}(\overline{q},\overline{N})$ 
almost surely, which completes the proof of 
Theorem~\ref{theorem:convergence_learning_backtracking_adaptive_with_outage_cost_algorithm}.} 

Now we will prove Lemma~\ref{lemma:continuously_differentiable_adaptive_learning_convergence_proof} and 
Lemma~\ref{lemma:stationary_point_is_optimal_point_adaptive_learning_proof}. 
Before we prove Lemma~\ref{lemma:stationary_point_is_optimal_point_adaptive_learning_proof}, we 
will explain how $A_2$ and $A_3$ have to be chosen.

\textbf{Proof of Lemma~\ref{lemma:continuously_differentiable_adaptive_learning_convergence_proof}}
Suppose that, for a given $(\xi_{out},\xi_{relay})$, the partial derivative 
 $\frac{\partial G}{\partial \xi_{out}}$ exists. We will first show that this partial derivative 
 is equal to 
 $\frac{f_1(\xi_{out},\xi_{relay})}{\overline{U}^*(\xi_{out},\xi_{relay})}=\frac{\overline{Q}_{out}^*(\xi_{out},\xi_{relay})}{\overline{U}^*(\xi_{out},\xi_{relay})}-\overline{q}$. 
Note that, the right partial derivative w.r.t. $\xi_{out}$ (if it exists) is:
 
 \begin{eqnarray*}
  \frac{\partial G}{\partial \xi_{out}+}=\lim_{0<\Delta \rightarrow 0}\frac{ G(\xi_{out}+\Delta,\xi_{relay})-G(\xi_{out},\xi_{relay}) }{\Delta}
 \end{eqnarray*}
 
 Now, the optimal policy $\pi^*(\xi_{out},\xi_{relay})$ 
 for the unconstrained problem in (\ref{eqn:unconstrained_problem_average_cost_with_outage_cost}) 
 will also  minimize the expression for 
 $G(\xi_{out},\xi_{relay})$ in (\ref{eqn:expression_for_G_xi_out_xi_relay}). But, 
 the policy $\pi^*(\xi_{out},\xi_{relay})$ will be 
 suboptimal for the pair $(\xi_{out}+\Delta, \xi_{relay})$. 
 Hence, we have:
 
 \footnotesize
 \begin{eqnarray*}
 && G(\xi_{out}+\Delta,\xi_{relay}) \\
 &=& \inf_{\pi} \bigg( \frac{\overline{\Gamma}_{\pi}}{\overline{U}_{\pi}}+ (\xi_{out}+\Delta) (\frac{\overline{Q}_{out,\pi}}{\overline{U}_{\pi}}-\overline{q}) + \xi_{relay}(\frac{1}{\overline{U}_{\pi}}-\overline{N})  \bigg) \\ 
  & \leq & \bigg( \frac{\overline{\Gamma}^*(\xi_{out},\xi_{relay})}{\overline{U}^*(\xi_{out},\xi_{relay})}+ (\xi_{out}+\Delta) (\frac{\overline{Q}^*(\xi_{out},\xi_{relay})}{\overline{U}^*(\xi_{out},\xi_{relay})}-\overline{q}) \\
  && + \xi_{relay}(\frac{1}{\overline{U}^*(\xi_{out},\xi_{relay})}-\overline{N})  \bigg) \\ 
 &=& G(\xi_{out},\xi_{relay})+ \Delta \bigg(\frac{\overline{Q}_{out}^*(\xi_{out},\xi_{relay})}{\overline{U}^*(\xi_{out},\xi_{relay})}-\overline{q} \bigg)
 \end{eqnarray*}
\normalsize

which implies that, 

 \begin{eqnarray*}
  \frac{\partial G}{\partial \xi_{out}+} \leq \bigg(\frac{\overline{Q}_{out}^*(\xi_{out},\xi_{relay})}{\overline{U}^*(\xi_{out},\xi_{relay})}-\overline{q} \bigg)
 \end{eqnarray*}
 
In a similar manner, by using the fact that 
$\pi^*(\xi_{out},\xi_{relay})$ is suboptimal for the pair $(\xi_{out}-\Delta, \xi_{relay})$, we can claim that 

 \begin{eqnarray*}
  \frac{\partial G}{\partial \xi_{out}-} \geq \bigg(\frac{\overline{Q}_{out}^*(\xi_{out},\xi_{relay})}{\overline{U}^*(\xi_{out},\xi_{relay})}-\overline{q}\bigg)
 \end{eqnarray*}
 
Since we have assumed that $\frac{\partial G}{\partial \xi_{out}}$ exists, we must have 
$\frac{\partial G}{\partial \xi_{out}+} = \frac{\partial G}{\partial \xi_{out}-}$, 
which proves that the partial derivative w.r.t. $\xi_{out}$ will be equal to 
 $(\frac{\overline{Q}_{out}^*(\xi_{out},\xi_{relay})}{\overline{U}^*(\xi_{out},\xi_{relay})}-\overline{q})$.
 
 We now turn to the existence of $\frac{\partial G}{\partial \xi_{out}}$. 
  Note that, since $G(\xi_{out},\xi_{relay})$ is the minimum of a family of affine functions of 
 $\xi_{out}$ and $\xi_{relay}$, $G(\xi_{out},\xi_{relay})$ is concave and hence coordinatewise 
 concave. Hence, for any given $\xi_{relay}$, there are 
 only at most countably many values of $\xi_{out}$ where $\frac{\partial G}{\partial \xi_{out}}$ does not exist. To see this, 
 let us define the function $H(\xi_{out},\xi_{relay})$ to be the supremum of the subgradients of 
 $G(\xi_{out},\xi_{relay})$  with respect to $\xi_{out}$ (keeping $\xi_{relay}$ fixed), at a point $(\xi_{out},\xi_{relay})$. 
 Since $G(\xi_{out},\xi_{relay})$ is concave, $H(\xi_{out},\xi_{relay})$ will be decreasing in $\xi_{out}$. 
 But any monotone real-valued function has an at most countable number of discontinuities 
 (see \cite{rudin76principles-of-mathematical-analysis}, Theorem~$4.30$). Hence, for a given $\xi_{relay}$, 
 the function $H(\xi_{out},\xi_{relay})$ is discontinuous for an at most countable number of values of 
 $\xi_{out}$, and consequently $\frac{\partial G}{\partial \xi_{out}}$ 
   exists everywhere except for an at most countable set of values of $\xi_{out}$.  

 For a given $\xi_{relay}$, let $\xi_{out}'$ be one such value where $\frac{\partial G}{\partial \xi_{out}}$ 
 does not exist. Then, there exists a sequence $\{ \zeta_n \}_{n \geq 1} \downarrow 0$ such that 
 $ \frac{\partial G}{ \partial \xi_{out}}$ exists at each $\xi_{out}=\xi_{out}'+\zeta_n$. This follows 
 from the fact that for any $\zeta>0$, we can find one $\xi_{out} \in (\xi_{out}',\xi_{out}'+\zeta)$ where 
 $ \frac{\partial G}{ \partial \xi_{out}}$ exists, otherwise the number of points where $ \frac{\partial G}{ \partial \xi_{out}}$ 
 does not exist will become uncountable. Similarly, there exists a sequence $\{ \kappa_n \}_{n \geq 1} \downarrow 0$ such that 
 $ \frac{\partial G}{ \partial \xi_{out}}$ exists at each $\xi_{out}=\xi_{out}'-\kappa_n$.

 Note that, by concavity, 
 $\lim_{n \rightarrow \infty} \frac{\partial G}{ \partial \xi_{out}}|_{\xi_{out}'-\kappa_n} 
 \geq \frac{\partial G}{ \partial \xi_{out}-}|_{\xi_{out}'}  \geq \frac{\partial G}{ \partial \xi_{out}+}|_{\xi_{out}'} 
 \geq \lim_{n \rightarrow \infty} \frac{\partial G}{ \partial \xi_{out}}|_{\xi_{out}'+\zeta_n} $. 
 The last term in this chain of inequalities is equal to 
  $\lim_{n \rightarrow \infty}(\frac{\overline{Q}_{out}^*(\xi_{out}'+\zeta_n,\xi_{relay})}{\overline{U}^*(\xi_{out}'+\zeta_n,\xi_{relay})}-\overline{q}) 
 =(\frac{\overline{Q}_{out}^*(\xi_{out}',\xi_{relay})}{\overline{U}^*(\xi_{out}',\xi_{relay})}-\overline{q}) $ by 
 the arguments in the beginning of this proof and by the continuity results in 
Theorem~\ref{theorem:placement_rate_mean_outage_per_step_continuous_in_xio_and_xir}. Same arguments 
hold for the first term in the chain of inequalities. Hence, 
$\frac{\partial G}{ \partial \xi_{out}-}|_{\xi_{out}'}  =\frac{\partial G}{ \partial \xi_{out}+}|_{\xi_{out}'}=
\frac{\partial G}{ \partial \xi_{out}}|_{\xi_{out}'}=(\frac{\overline{Q}_{out}^*(\xi_{out}',\xi_{relay})}{\overline{U}^*(\xi_{out}',\xi_{relay})}-\overline{q})$.

In a similar way, we can show that 
$\frac{\partial G}{\partial \xi_{relay}}=(\frac{1}{\overline{U}^*(\xi_{out},\xi_{relay})}-\overline{N})$. 

Now we see that both of the partial derivatives of $G$ exist at all points and the partial derivatives are continuous 
in both $\xi_{out}$ and $\xi_{relay}$ (by 
Theorem~\ref{theorem:placement_rate_mean_outage_per_step_continuous_in_xio_and_xir}). Hence, by Theorem~$12.11$ of 
\cite{apostol81mathematical-analysis}, $G(\cdot,\cdot)$ is differentiable. 
Hence, the lemma is proved.\qed

\textbf{Choice of $A_2$ and $A_3$:} Let us consider the scenario where   
$\overline{\Lambda}_{\mathcal{G}}\bigg(\frac{f_1(\xi_{out},\xi_{relay})}{\overline{U}^*(\xi_{out},\xi_{relay})},\frac{f_2(\xi_{out},\xi_{relay})}{\overline{U}^*(\xi_{out},\xi_{relay})} \bigg)$ 
has a zero $(\xi_{out}',\xi_{relay}')$ (on the boundary of $\mathcal{G}$) such that 
$(\lambda^*(\xi_{out}',\xi_{relay}'),\xi_{out}',\xi_{relay}') \not \in \mathcal{K}(\overline{q},\overline{N})$. In this case, 
if $(\lambda^{(k)},\xi_{out}^{(k)},\xi_{relay}^{(k)}) \rightarrow (\lambda^*(\xi_{out}',\xi_{relay}'),\xi_{out}',\xi_{relay}')$ 
(depending on the sample path of the iterates in the OptExploreLimAdaptiveLearning algorithm),  
then we cannot expect the desired performance from the OptExploreLimAdaptiveLearning algorithm. 
To alleviate this problem, we need to choose 
$A_2$ and $A_3$ in a proper way. One method of choosing $A_2$ and $A_3$ is given below.

We will first explain how $A_2$ has to be chosen. 
Note that, for any given link of length $u$ and shadowing realization $w$, 
$\argmin_{\gamma \in \mathcal{S}}(\gamma+\xi_{out} Q_{out}(u,\gamma,w))=P_M$ if we choose 
$\xi_{out}$  sufficiently large.  
We use this fact in the 
choice of  $A_2$. 
The number $A_2$ has to be chosen so large 
that under $\xi_{out}=A_2$ and for all $A+1 \leq u \leq A+B$, we will have 
$\mathbb{P}(\argmin_{\gamma \in \mathcal{S}}(\gamma+A_2 Q_{out}(u,\gamma,W))=P_M)>1-\kappa$ 
for some small enough $\kappa>0$. Such a choice of $A_2$ ensures that 
(i) the mean power per link (under policy $\pi^*(A_2,\xi_{relay})$),  
$\overline{\Gamma}^*(A_2,\xi_{relay}) \geq (1-\kappa) P_M +\kappa P_1$ (which is close enough to $P_M$), which, 
further, ensures that (ii) $\frac{\overline{\Gamma}^*(A_2,\xi_{relay})}{1/ \overline{N}}$ is greater than or equal to  
the optimal mean power per step for problem (\ref{eqn:constrained_problem_average_cost_with_outage_cost}). The 
second claim is easy to see, since 
$\overline{\Gamma}^*(A_2,\xi_{relay}) \geq (1-\kappa) P_M +\kappa P_1 \geq \overline{\Gamma}^*(\xi_{out}^*,\xi_{relay}^*)$, 
and since $\overline{U}^*(\xi_{out}^*,\xi_{relay}^*) \geq \frac{1}{\overline{N}}$ 
(recall Assumption~\ref{assumption:existence_of_xio_xir} about the existence 
of $\xi_{out}^*$ and $\xi_{relay}^*$). Note that, the choice of $\kappa$ depends on 
$(\overline{q},\overline{N})$ and the radio propagation parameters, and, hence, must be made carefully so 
that the condition is satisfied. 
In the proof of Lemma~\ref{lemma:stationary_point_is_optimal_point_adaptive_learning_proof}, we will see that 
this condition ensures that for any stationary point of the form 
$\xi_{out}=A_2,\xi_{relay} \in (0,A_3)$, we have 
$\frac{\overline{Q}_{out}^*(A_2,\xi_{relay})}{\overline{U}^*(A_2,\xi_{relay})}=\overline{q}$ 
and $\frac{1}{\overline{U}^*(A_2,\xi_{relay})}=\overline{N}$, and, 
consequently, the point $(A_2,\xi_{relay})$ will be in 
$\mathcal{K}(\overline{q},\overline{N})$.

The choice of $A_2$ must satisfy another condition. We need to choose $A_2$ so large that 
$\frac{\overline{Q}_{out}^*(A_2,0)}{\overline{U}^*(A_2,0)} \leq \overline{q}$. 
Note that, if $(\overline{q},\overline{N})$ is a feasible constraint pair, then a constraint $\overline{q}$ on the 
mean outage per step alone (if we drop the constraint on the relay placement rate) 
is also feasible. Let us consider the problem of minimizing the mean power per step subject 
to a constraint $\overline{q}$ on the mean outage per step. Then, we will choose $\xi_{relay}=0$. The mean outage per 
step under policy $\pi^*(\xi_{out},0)$ will still decrease as $\xi_{out}$ increases 
(by Theorem~\ref{theorem:outage_decreasing_with_xio_placement_rate_decreasing_with_xir}). 
Hence, we can choose an $A_2$ which satisfies this condition. 
This condition will be used in showing that if $(A_2,0)$ is a stationary point 
of the o.d.e., then $(\lambda^*(A_2,0),A_2,0) \in \mathcal{K}(\overline{q},\overline{N})$.

$A_2$ has to be chosen (according to the two criteria mentioned above) via prior computation, 
using the prior knowledge of the propagation environment; 
if we know the range of values of 
radio propagation parameters (e.g., $\eta$ and $\sigma$), we can compute what value of $A_2$ will 
satisfy the criteria under all possible radio propagation parameters.

Once $A_2$ is chosen, we need to choose $A_3$. The number $A_3$ has to be chosen so large that 
for any $\xi_{out} \in [0,A_2]$, we will have 
$\overline{U}^*(\xi_{out},A_3) > \frac{1}{\overline{N}}$ (provided that $\frac{1}{\overline{N}}<A+B$). This is possible 
and obvious from the structure of OptExploreLim (Algorithm~\ref{algorithm:policy_structure_smdp_backtracking}); 
by choosing $\xi_{relay}$ large enough, we can achieve a mean placement distance equal to $(A+B)$, provided that 
$\xi_{out} \in [0,A_2]$. For example, if we choose $A_3 = 100 (A+B) (P_M+A_2)$, then:
\begin{eqnarray*}
&&\lambda^*(\xi_{out},A_3)\\
&=& \frac{\overline{\Gamma}^*(\xi_{out},A_3)+\xi_{out}\overline{Q}_{out}^*(\xi_{out},A_3)+A_3}{ \overline{U}^*(\xi_{out},A_3) } \\
& \geq & \frac{A_3}{A+B}=100 (P_M+A_2)
\end{eqnarray*}
and $\pi^*(\xi_{out},A_3)$  will always place at a distance of $(A+B)$. 
This choice of $A_3$ ensures that the policy $\pi^*(\xi_{out},A_3)$ satisfies the 
constraint on the relay placement rate with strict inequality, and hence no point 
of the form $(\xi_{out},A_3)$ is a stationary point of the o.d.e.

The numbers $A_2$ and $A_3$ have to be chosen so large that there exists at least one 
$(\xi_{out}',\xi_{relay}') \in [0,A_2] \times [0,A_3]$ such that 
$(\lambda^*(\xi_{out}',\xi_{relay}'),\xi_{out}',\xi_{relay}') \in \mathcal{K}(\overline{q},\overline{N})$.\qed

\textbf{Proof of Lemma~\ref{lemma:stationary_point_is_optimal_point_adaptive_learning_proof}:}
Suppose that $(\xi_{out},\xi_{relay}) \in [0,A_2] \times [0,A_3]$ is a zero of 
$\overline{\Lambda}_{\mathcal{G}}\bigg(\frac{f_1(\xi_{out},\xi_{relay})}{\overline{U}^*(\xi_{out},\xi_{relay})},\frac{f_2(\xi_{out},\xi_{relay})}{\overline{U}^*(\xi_{out},\xi_{relay})} \bigg)$. 

Note that, 
$\overline{\Lambda}_{\mathcal{G}}\bigg(\frac{f_1(\xi_{out},\xi_{relay})}{\overline{U}^*(\xi_{out},\xi_{relay})},\frac{f_2(\xi_{out},\xi_{relay})}{\overline{U}^*(\xi_{out},\xi_{relay})} \bigg)$ 
is equal to $(\frac{f_1(\xi_{out},\xi_{relay})}{\overline{U}^*(\xi_{out},\xi_{relay})},\frac{f_2(\xi_{out},\xi_{relay})}{\overline{U}^*(\xi_{out},\xi_{relay})} \bigg)$ 
if $(\xi_{out},\xi_{relay})$ lies in the interior of $[0,A_2] \times [0,A_3]$. 
Thus, for any stationary point $(\xi_{out},\xi_{relay}) \in (0,A_2) \times (0,A_3)$, the optimal policy 
$\pi^*(\xi_{out},\xi_{relay})$ meets both constraints in 
(\ref{eqn:constrained_problem_average_cost_with_outage_cost}) with R.H.S.=0. For such a stationary point, 
$(\lambda^*(\xi_{out},\xi_{relay}), \xi_{out},\xi_{relay}) \in \mathcal{K}(\overline{q},\overline{N})$ 
(by Theorem~\ref{theorem:how-to-choose-optimal-Lagrange-multiplier}).

 A point $(\xi_{out},\xi_{relay})$ on the boundary has $\xi_{out}=0$ 
 or $\xi_{out}=A_2$ or $\xi_{relay}=0$ or $\xi_{relay}=A_3$. 
 
 Let us recall  Assumption~\ref{assumption:existence_of_xio_xir} and the definition of 
 $\overline{\Lambda}_{\mathcal{G}}\bigg(\frac{f_1(\xi_{out},\xi_{relay})}{\overline{U}^*(\xi_{out},\xi_{relay})},\frac{f_2(\xi_{out},\xi_{relay})}{\overline{U}^*(\xi_{out},\xi_{relay})} \bigg)$ 
 (equation (\ref{eqn:definition_of_Lambda_G})). 
The first component of this vector-valued function at $\xi_{out}=0$ is equal to 
$\frac{f_1(0,\xi_{relay})}{\overline{U}^*(0,\xi_{relay})}$  if $f_1(0,\xi_{relay}) \geq 0$, and $0$ otherwise. We can make similar 
observations at $\xi_{out}=A_2$, $\xi_{relay}=0$ and $\xi_{relay}=A_3$.

If $\xi_{relay}=A_3$, then by the choice of $A_3$ as suggested in 
the OptExploreLimAdaptiveLearning algorithm 
(Algorithm~\ref{algorithm:learning_backtracking_adaptive_with_outage_cost_algorithm}), we will 
have $f_2(\xi_{out},A_3)<0$ (since $\overline{U}^*(\xi_{out},A_3) > \frac{1}{\overline{N}}$). 
This implies that no point on $\xi_{relay}=A_3$ can be a zero of 
$\overline{\Lambda}_{\mathcal{G}}\bigg(\frac{f_1(\xi_{out},\xi_{relay})}{\overline{U}^*(\xi_{out},\xi_{relay})},\frac{f_2(\xi_{out},\xi_{relay})}{\overline{U}^*(\xi_{out},\xi_{relay})} \bigg)$, 
since the second component of this function will be 
$\frac{f_2(\xi_{out},A_3)}{\overline{U}^*(\xi_{out},A_3)}$ which is strictly negative.

Suppose that that there is a zero of 
$\overline{\Lambda}_{\mathcal{G}}\bigg(\frac{f_1(\xi_{out},\xi_{relay})}{\overline{U}^*(\xi_{out},\xi_{relay})},\frac{f_2(\xi_{out},\xi_{relay})}{\overline{U}^*(\xi_{out},\xi_{relay})} \bigg)$ 
of the form 
$\xi_{out}=A_2,\xi_{relay} \in (0,A_3)$. Then 
$\overline{\Lambda}_{\mathcal{G}}\bigg(\frac{f_1(\xi_{out},\xi_{relay})}{\overline{U}^*(\xi_{out},\xi_{relay})},\frac{f_2(\xi_{out},\xi_{relay})}{\overline{U}^*(\xi_{out},\xi_{relay})} \bigg)$ 
will be zero if and only if $f_1(A_2,\xi_{relay}) \geq 0 $ and $f_2(A_2,\xi_{relay})=0$. 
If $f_1(A_2,\xi_{relay}) = 0 $ and $f_2(A_2,\xi_{relay})=0$, then 
$(\lambda^*(A_2,\xi_{relay}),A_2,\xi_{relay})$ will belong to $\mathcal{K}(\overline{q},\overline{N})$ 
(by Theorem~\ref{theorem:how-to-choose-optimal-Lagrange-multiplier}), since the corresponding optimal 
policy $\pi^{*}(A_2,\xi_{relay})$ will satisfy both constraints in (\ref{eqn:constrained_problem_average_cost_with_outage_cost}) 
with equality. Now, we will show that, if 
$A_2$ is chosen appropriately as explained before, the case $f_1(A_2,\xi_{relay}) > 0 $ and $f_2(A_2,\xi_{relay})=0$ will never arise. 
Suppose that $f_1(A_2,\xi_{relay}) > 0 $ and $f_2(A_2,\xi_{relay})=0$ for some $\xi_{relay} \in (0,A_3)$. 
Consider a new problem of minimizing the mean 
outage per step, subject to a constraint $\frac{\overline{\Gamma}^*(A_2,\xi_{relay})}{\overline{U}^*(A_2,\xi_{relay})}$ on the 
mean power per step and a constraint $\frac{1}{\overline{U}^*(A_2,\xi_{relay})}=\overline{N}$ 
on the mean number of relays per step. 
By Theorem~\ref{theorem:how-to-choose-optimal-Lagrange-multiplier}, 
$\pi^{*}(A_2,\xi_{relay})$ is the optimal policy for this new problem, since it satisfies both 
constraints with equality. But the policy $\pi^*(\xi_{out}^*,\xi_{relay}^*)$  has the following properties: (i) 
$\frac{1}{\overline{U}^*(A_2,\xi_{relay})}=\overline{N} \geq \frac{1}{\overline{U}^*(\xi_{out}^*,\xi_{relay}^*)}$  
(see Assumption~\ref{assumption:existence_of_xio_xir} 
in Section~\ref{section:learning_backtracking_adaptive_with_outage_cost}), i.e., 
$\pi^*(\xi_{out}^*,\xi_{relay}^*)$ has a smaller relay placement rate compared 
to $\pi^*(A_2,\xi_{relay})$ 
(since $\pi^*(\xi_{out}^*,\xi_{relay}^*)$ satisfies the constraint $\overline{N}$ on the mean number of relays per step), (ii) 
$\frac{\overline{\Gamma}^*(A_2,\xi_{relay})}{\overline{U}^*(A_2,\xi_{relay})} \geq 
\frac{(1-\kappa)P_M+\kappa P_1}{\frac{1}{\overline{N}}} \geq \frac{\overline{\Gamma}^*(\xi_{out}^*,\xi_{relay}^*)}{\overline{U}^*(\xi_{out}^*,\xi_{relay}^*)}$, i.e., 
$\pi^*(\xi_{out}^*,\xi_{relay}^*)$ has a smaller mean power per step compared 
to $\pi^*(A_2,\xi_{relay})$ (by the choice of $A_2$, see the previous discussion on the choice of $A_2$), 
and (iii) $\frac{\overline{Q}_{out}^*(A_2,\xi_{relay})}{\overline{U}^*(A_2,\xi_{relay})} >\overline{q} \geq \frac{\overline{Q}_{out}^*(\xi_{out}^*,\xi_{relay}^*)}{\overline{U}^*(\xi_{out}^*,\xi_{relay}^*)}$, 
i.e., $\pi^*(\xi_{out}^*,\xi_{relay}^*)$ has a strictly smaller mean outage per step compared 
to $\pi^*(A_2,\xi_{relay})$ 
(since $\pi^*(\xi_{out}^*,\xi_{relay}^*)$ satisfies the constraint $\overline{q}$ on the mean outage per step 
and since $f_1(A_2,\xi_{relay})>0$. 
This leads to a contradiction since it violates the optimality of the policy $\pi^*(A_2,\xi_{relay})$ 
for the new problem. Hence, 
$\overline{\Lambda}_{\mathcal{G}}\bigg(\frac{f_1(\xi_{out},\xi_{relay})}{\overline{U}^*(\xi_{out},\xi_{relay})},\frac{f_2(\xi_{out},\xi_{relay})}{\overline{U}^*(\xi_{out},\xi_{relay})} \bigg)$ 
cannot have a zero of the form $\xi_{out}=A_2$, $\xi_{relay} \in (0,A_3)$ 
such that $f_1(A_2,\xi_{relay})>0$ and $f_2(A_2,\xi_{relay})=0$.

Now consider any stationary point of the form $\xi_{out} \in (0,A_2), \xi_{relay}=0$. 
Then we must have $f_1(\xi_{out},0)=0$ and $f_2(\xi_{out},0) \leq 0$. Now, consider a new problem of 
minimizing the mean power per step subject to a constraint $\overline{q}$ on the mean outage per step (with no 
constraint on the relay placement rate); 
an optimal policy for this problem is $\pi^*(\xi_{out},0)$ 
(by Theorem~\ref{theorem:how-to-choose-optimal-Lagrange-multiplier}, 
since $\pi^*(\xi_{out},0)$ satisfies the outage constraint with equality). 
Since (\ref{eqn:constrained_problem_average_cost_with_outage_cost}) 
has one more constraint, it will have a larger mean power per step, i.e., 
$\frac{\overline{\Gamma}^*(\xi_{out},0)}{\overline{U}^*(\xi_{out},0)} \leq 
\frac{\overline{\Gamma}^*(\xi_{out}^*,\xi_{relay}^*)}{\overline{U}^*(\xi_{out}^*,\xi_{relay}^*)}$ 
(recall Assumption~\ref{assumption:existence_of_xio_xir} about the existence 
of $\xi_{out}^*$ and $\xi_{relay}^*$). If they are equal, then 
$(\xi_{out},0)$ will be an optimal pair for (\ref{eqn:constrained_problem_average_cost_with_outage_cost}) and 
$(\lambda^*(\xi_{out},0),\xi_{out},0)$ will be in $\mathcal{K}(\overline{q},\overline{N})$. If 
$\frac{\overline{\Gamma}^*(\xi_{out},0)}{\overline{U}^*(\xi_{out},0)} < 
\frac{\overline{\Gamma}^*(\xi_{out}^*,\xi_{relay}^*)}{\overline{U}^*(\xi_{out}^*,\xi_{relay}^*)}$, then the optimality 
of $\pi^*(\xi_{out}^*,\xi_{relay}^*)$ for the problem (\ref{eqn:constrained_problem_average_cost_with_outage_cost}) 
will be violated, since $\pi^*(A_2,0)$ will produce a strictly smaller mean power per step while  
meeting the outage constraint with equality 
(since $f_1(\xi_{out},0)=0$) and the relay placement rate constraint (since $f_2(\xi_{out},0) \leq 0$). 

We can take care of stationary points of the form $\xi_{relay} \in (0,A_3), \xi_{out}=0$ in a similar way.

If $(0,0)$ is a stationary point, then $\pi^*(0,0)$ satisfies both constraints. Also, 
$\pi^*(0,0)$ places at distance $(A+B)$ steps and uses the minimum power level for all links. Then 
$\pi^*(0,0)$ is optimal for our original problem (\ref{eqn:unconstrained_problem_average_cost_with_outage_cost}).

At $(A_2,0)$, we will have a stationary point if and only if $f_1(A_2,0) \geq 0$ and $f_2(A_2,0) \leq 0$. 
If $f_1(A_2,0) \geq 0$ and $f_2(A_2,0) = 0$, then we can make similar claims as in the 
$\xi_{out}=A_2$ and $\xi_{relay} \in (0,A_3)$ case. If $f_1(A_2,0) = 0$ and $f_2(A_2,0) < 0$, 
then we can make similar claims as in the 
$\xi_{out} \in (0,A_2)$ and $\xi_{relay} =0$ case. By the choice of $A_2$, $\pi^*(A_2,0)$ satisfies the 
outage constraint $\overline{q}$. Hence, the case $f_1(A_2,0) > 0$ will not arise.

Hence, the lemma is proved.\qed

\subsection{\textbf{Proof of Theorem~\ref{theorem:expected_average_cost_performance_of_optexplorelimadaptivelearning}}}
\label{subsection:proof_of_expected_average_cost_performance_of_optexplorelimadaptivelearning}

We will only prove that $\limsup_{x \rightarrow \infty} \frac{\mathbb{E}_{\pi_{oelal}}\sum_{i=1}^{N_x}\Gamma_i}{x} \leq \gamma^*$ almost surely.

Let us denote the shadowing random variable in the link between the potential locations 
 located at distances $i \delta$ and $j \delta$ from the sink node by $W_{i,j}$. 
 The sample space $\Omega$ associated with the deployment process is the collection of all $\omega$ (each $\omega$ 
 corresponds to a fixed realization $\{w_{i,j}: i \geq 0, j \geq 0, i>j, A+1 \leq i-j \leq A+B \}$ 
 of all possible shadowing random variables that might be encountered in the measurement process for 
 deployment up to infinity).  
 Let $\mathcal{F}$ be the Borel $\sigma$-algebra on  $\Omega$. Let $S_k=\sum_{i=1}^k U_i$ be the distance 
 (in steps) of the $k$-th relay from the sink ($S_0:=0$), and 
 $\mathcal{F}_k:=\sigma \bigg((\lambda^{(0)},\xi_{out}^{(0)},\xi_{relay}^{(0)}); W_{i,j}:i \geq 0, j \geq 0, i>j, A+1 \leq i-j \leq A+B, 
 i \leq S_{k-1}+A+B, j \leq S_{k-1}+A+B \bigg)$. The sequence of $\sigma$-algebras $\mathcal{F}_k$ is increasing in $k$, and 
 $\mathcal{F}_k$  captures the history of the deployment process 
up to the deployment of the $k$-th relay. 

Let us fix an $\epsilon>0$.

Let us recall (from Section~\ref{subsection:learning_backtracking_adaptive_with_outage_cost_algorithm}) 
the definition of the set 
$\hat{\mathcal{K}}(\overline{q},\overline{N}):=\mathcal{K}(\overline{q},\overline{N}) \cap ([0, (P_M+A_2+A_3)] \times [0,A_2] \times [0,A_3])$.

Now, by Lemma~\ref{lemma:placement_rate_mean_outage_per_step_continuous_in_lambda_xio_and_xir} 
(see Appendix~\ref{appendix:learning_backtracking_adaptive_with_outage_cost}, 
 Section~\ref{subsubsection:checking_five_conditions_kushner_optexplorelimadaptivelearning}),  
the quantities $\overline{\Gamma}(\lambda,\xi_{out},\xi_{relay})$, 
$\overline{Q}_{out}(\lambda,\xi_{out},\xi_{relay})$ and $\overline{U}(\lambda,\xi_{out},\xi_{relay})$ 
(recall the notation from Section~\ref{subsec:smdp-policy-structure}) are continuous 
in $(\lambda,\xi_{out},\xi_{relay})$.  
Hence, the ratios $\frac{\overline{\Gamma}(\lambda,\xi_{out},\xi_{relay})}{\overline{U}(\lambda,\xi_{out},\xi_{relay})}$, 
$\frac{\overline{Q}_{out}(\lambda,\xi_{out},\xi_{relay})}{ \overline{U}(\lambda,\xi_{out},\xi_{relay})}$ and 
$\frac{1}{\overline{U}(\lambda,\xi_{out},\xi_{relay})}$ are 
uniformly continuous over the compact set $[0, 2(P_M+A_2+A_3)] \times [0,A_2] \times [0,A_3]$. 
Hence, for any given $\epsilon>0$, we can find a $\delta_{\epsilon}>0$ such that 
if $(\lambda,\xi_{out},\xi_{relay})$ belongs to a $\delta_{\epsilon}$-neighbourhood of 
$\hat{\mathcal{K}}(\overline{q},\overline{N})$, then $(\lambda,\xi_{out},\xi_{relay})$ also belongs to the set 
$\hat{\mathcal{K}}_{\epsilon}(\overline{q},\overline{N})$ where:

\footnotesize
\begin{eqnarray*}
\hat{\mathcal{K}}_{\epsilon}(\overline{q},\overline{N}) &=& \bigg\{(\lambda,\xi_{out},\xi_{relay}): \\
 && \frac{\overline{\Gamma}(\lambda,\xi_{out},\xi_{relay})}{\overline{U}(\lambda,\xi_{out},\xi_{relay})} \in [\gamma^*-\epsilon, \gamma^*+\epsilon] \\
 && \frac{\overline{Q}_{out}(\lambda,\xi_{out},\xi_{relay})}{ \overline{U}(\lambda,\xi_{out},\xi_{relay})} \leq \overline{q}+ \epsilon, \\
 && \frac{1}{\overline{U}(\lambda,\xi_{out},\xi_{relay})} \leq \overline{N}+\epsilon, \\
 && 0 \leq \lambda \leq 2(P_M+A_2+A_3), \\
 && 0 \leq \xi_{out} \leq A_2, 0 \leq \xi_{relay} \leq A_3 \bigg\}
\end{eqnarray*}
\normalsize

But, by Theorem~\ref{theorem:convergence_learning_backtracking_adaptive_with_outage_cost_algorithm}, 
$(\lambda^{(k)},\xi_{out}^{(k)},\xi_{relay}^{(k)}) \rightarrow \hat{\mathcal{K}}(\overline{q},\overline{N})$ almost surely. 
Hence, there exists an integer-valued random variable $T$ such that (i) $(\lambda^{(k)},\xi_{out}^{(k)},\xi_{relay}^{(k)})$ 
belongs to a $\delta_{\epsilon}$ neighbourhood of 
$\hat{\mathcal{K}}(\overline{q},\overline{N})$ for all $k \geq T$, and (ii) $\mathbb{P}(T<\infty)=1$. In other words, 
for a sample path $\omega$ 
(for $\omega$ lying in a set of probability $1$), there exists $T(\omega)<\infty$ 
such that $(\lambda^{(k)},\xi_{out}^{(k)},\xi_{relay}^{(k)})$ 
belongs to a $\delta_{\epsilon}$ neighbourhood of 
$\hat{\mathcal{K}}(\overline{q},\overline{N})$ for all $k \geq T(\omega)$. Hence, for a sample path $\omega$ 
(for $\omega$ lying in a set of probability $1$), there exists $T(\omega)<\infty$ 
such that $(\lambda^{(k)},\xi_{out}^{(k)},\xi_{relay}^{(k)}) \in \hat{\mathcal{K}}_{\epsilon}(\overline{q},\overline{N})$ 
for all $k \geq T(\omega)$.

Using the boundedness of $\Gamma_i$ in the first equality, we obtain:

\footnotesize
\begin{eqnarray}
 &&\limsup_{x \rightarrow \infty} \frac{\mathbb{E}_{\pi_{oelal}}\sum_{i=1}^{N_x}\Gamma_i}{x} \nonumber\\
 &=&\limsup_{x \rightarrow \infty} \frac{\mathbb{E}_{\pi_{oelal}}\sum_{i=1}^{N_x+1}\Gamma_i}{x} \nonumber\\
& \leq &  \limsup_{x \rightarrow \infty} \frac{\mathbb{E}_{\pi_{oelal}} \bigg( \mathbb{I}(T <N_x+1) \sum_{i=1}^{T}\Gamma_i \bigg) }{x} \nonumber\\
&& + \limsup_{x \rightarrow \infty} \frac{\mathbb{E}_{\pi_{oelal}} \bigg( \mathbb{I}(T < N_x+1) \sum_{i=T+1}^{N_x+1}\Gamma_i \bigg) }{x} \nonumber\\
&& + \limsup_{x \rightarrow \infty} \frac{\mathbb{E}_{\pi_{oelal}} \bigg( \mathbb{I}(T \geq N_x+1) \sum_{i=1}^{N_x+1}\Gamma_i \bigg)}{x}\nonumber\\
& \leq & \mathbb{E}_{\pi_{oelal}} \limsup_{x \rightarrow \infty} \frac{ \mathbb{I}(T <N_x+1) \sum_{i=1}^{T}\Gamma_i }{x} \nonumber\\
&& +  \limsup_{x \rightarrow \infty} \frac{\mathbb{E}_{\pi_{oelal}} \bigg( \mathbb{I}(T < N_x+1) \sum_{i=T+1}^{N_x+1}\Gamma_i \bigg) }{x} \nonumber\\
&& + \mathbb{E}_{\pi_{oelal}} \limsup_{x \rightarrow \infty} \frac{\mathbb{I}(T \geq N_x+1) \sum_{i=1}^{N_x+1}\Gamma_i }{x}\nonumber\\
&=&   \limsup_{x \rightarrow \infty} \frac{\mathbb{E}_{\pi_{oelal}} \bigg( \mathbb{I}(T < N_x+1) \sum_{i=T+1}^{N_x+1}\Gamma_i \bigg) }{x} \nonumber\\
& = & \limsup_{x \rightarrow \infty} \bigg( \frac{\mathbb{E}_{\pi_{oelal}}\bigg( \mathbb{I}(T <N_x+1) \sum_{i=T+1}^{N_x+1}\Gamma_i \bigg) }{\mathbb{E}_{\pi_{oelal}}\sum_{i=T+1}^{N_x+1}U_i } \nonumber\\
&& \times \mathbb{E}_{\pi_{oelal}}\bigg(\frac{\sum_{i=T+1}^{N_x+1} U_i }{x} \bigg) \bigg) \nonumber\\
\label{eqn:equation_for_expected_average_cost_proof} 
\end{eqnarray}
\normalsize

Here the second inequality follows from Fatou's lemma. The second equality follows from the facts that 
$0 \leq \limsup_{x \rightarrow \infty} \frac{\sum_{i=1}^{T}\Gamma_i \mathbb{I}(T <N_x+1)}{x} \leq 
\limsup_{x \rightarrow \infty} \frac{\sum_{i=1}^{T}\Gamma_i }{x}
\leq \limsup_{x \rightarrow \infty} \frac{P_M T }{x}=0$ almost surely and 
$0 \leq \limsup_{x \rightarrow \infty} \frac{\sum_{i=1}^{N_x+1}\Gamma_i \mathbb{I}(T \geq N_x+1)}{x} \leq \frac{P_M}{A+1} \limsup_{x \rightarrow \infty} \mathbb{I}(T \geq N_x+1) =0$ almost 
surely (since $\mathbb{P}(T<\infty)=1$ and $\lim_{x \rightarrow \infty}N_x=\infty$ almost surely).

Now, 

\footnotesize
\begin{eqnarray*}
&& \limsup_{x \rightarrow \infty} \mathbb{E}_{\pi_{oelal}}\bigg(\frac{\sum_{i=T+1}^{N_x+1} U_i }{x} \bigg) \\
& \leq & \limsup_{x \rightarrow \infty} \mathbb{E}_{\pi_{oelal}}\frac{\sum_{i=1}^{N_x+1} U_i}{x} \\
& \leq &  \mathbb{E}_{\pi_{oelal}} \limsup_{x \rightarrow \infty} \frac{\sum_{i=1}^{N_x+1} U_i}{x} \\
   &=& 1
 \end{eqnarray*}
\normalsize

Here the second inequality follows from  Fatou's lemma, and the equality follows from the fact that $\lim_{x \rightarrow \infty} \frac{\sum_{i=1}^{N_x+1} U_i}{x}=1$ almost 
surely.

Hence, from (\ref{eqn:equation_for_expected_average_cost_proof}),

\footnotesize
\begin{eqnarray}
 &&\limsup_{x \rightarrow \infty} \frac{\mathbb{E}_{\pi_{oelal}}\sum_{i=1}^{N_x}\Gamma_i}{x} \nonumber\\
& \leq & \limsup_{x \rightarrow \infty}  \frac{\mathbb{E}_{\pi_{oelal}} \bigg( \mathbb{I}(T <N_x+1) \sum_{i=T+1}^{N_x+1}\Gamma_i \bigg)}{\mathbb{E}_{\pi_{oelal}}\sum_{i=T+1}^{N_x+1}U_i }\nonumber\\  
& = & \limsup_{x \rightarrow \infty}  \frac{\mathbb{E}_{\pi_{oelal}}\sum_{i=T+1}^{N_x+1}\Gamma_i }{\mathbb{E}_{\pi_{oelal}}\sum_{i=T+1}^{N_x+1}U_i } 
\label{eqn:equation_1_for_expected_average_cost_proof} 
 \end{eqnarray}
\normalsize

Let us denote by $\mathbb{E}_{\pi_{oelal},t}(\cdot)$ the conditional expectation under policy $\pi_{oelal}$ given that $T=t$. 
Now,

\footnotesize
\begin{eqnarray*}
&& \mathbb{E}_{\pi_{oelal}}\sum_{i=T+1}^{N_x+1}\Gamma_i  \nonumber\\
& = & \mathbb{E}_{\pi_{oelal}}\sum_{i=T+1}^{\infty}\Gamma_i  \mathbb{I}(i \leq N_x+1) \nonumber\\
& = & \sum_{t=1}^{\infty}\mathbb{P}_{\pi_{oelal}}(T=t) \nonumber\\
&& \times \mathbb{E}_{\pi_{oelal}} \bigg( \sum_{i=t+1}^{\infty}\Gamma_i  \mathbb{I}(i \leq N_x+1) \bigg| T=t \bigg)\nonumber\\
& = & \sum_{t=1}^{\infty}\mathbb{P}_{\pi_{oelal}}(T=t) \nonumber\\
&& \times \mathbb{E}_{\pi_{oelal}} \bigg( \sum_{i=t+1}^{\infty}\Gamma_i \mathbb{I}(N_x \geq i-1) \bigg| T=t \bigg)\nonumber\\
& = & \sum_{t=1}^{\infty}\mathbb{P}_{\pi_{oelal}}(T=t)\mathbb{E}_{\pi_{oelal},t} \bigg( \sum_{i=t+1}^{\infty}\Gamma_i \mathbb{I}(N_x \geq i-1)  \bigg)\nonumber\\
& = & \sum_{t=1}^{\infty}\mathbb{P}_{\pi_{oelal}}(T=t)    \sum_{i=t+1}^{\infty} \mathbb{E}_{\pi_{oelal},t} \bigg( \Gamma_i \mathbb{I}(N_x \geq i-1)  \bigg)  \nonumber\\
& = & \sum_{t=1}^{\infty}\mathbb{P}_{\pi_{oelal}}(T=t)  \nonumber\\
&& \times  \sum_{i=t+1}^{\infty} \mathbb{E}_{\pi_{oelal},t} \bigg( \mathbb{E}_{\pi_{oelal},t} \bigg( \Gamma_i \mathbb{I}(N_x \geq i-1 ) \bigg| \mathcal{F}_{i-1}  \bigg)  \bigg)  \nonumber\\
\end{eqnarray*}
\normalsize

\footnotesize
\begin{eqnarray}
& = & \sum_{t=1}^{\infty}\mathbb{P}_{\pi_{oelal}}(T=t)  \nonumber\\
&& \times  \sum_{i=t+1}^{\infty} \mathbb{E}_{\pi_{oelal},t} \bigg( \mathbb{I}(N_x \geq i-1) \mathbb{E}_{\pi_{oelal},t} \bigg( \Gamma_i  \bigg| \mathcal{F}_{i-1}  \bigg)  \bigg)  \nonumber\\
& \leq & (\gamma^*+\epsilon) \sum_{t=1}^{\infty}\mathbb{P}_{\pi_{oelal}}(T=t) \times \nonumber\\
&&  \sum_{i=t+1}^{\infty} \mathbb{E}_{\pi_{oelal},t} \bigg( \mathbb{I}(N_x \geq i-1)  \mathbb{E}_{\pi_{oelal},t} \bigg( U_i  \bigg| \mathcal{F}_{i-1}  \bigg)  \bigg)  \nonumber\\
\label{eqn:equation_2_for_expected_average_cost_proof} 
 \end{eqnarray}
\normalsize

where the fifth equality follows from the Monotone Convergence Theorem, and 
the last equality follows from the fact that the random variable  
$\mathbb{I}(N_x \geq i-1)=\mathbb{I}(\sum_{k=1}^{i-1}U_k \leq x)$ is measurable with respect to 
$\mathcal{F}_{i-1}$. The last inequality follows from that fact that 
$\frac{\mathbb{E}_{\pi_{oelal},t} \bigg( \Gamma_i  \bigg| \mathcal{F}_{i-1}  \bigg)}{\mathbb{E}_{\pi_{oelal},t} \bigg( U_i  \bigg| \mathcal{F}_{i-1}  \bigg)} \leq \gamma^*+\epsilon$ 
almost surely  
for $i>t$, given that $T=t$ (since $(\lambda^{(i-1)},\xi_{out}^{(i-1)},\xi_{relay}^{(i-1)}) \in \hat{\mathcal{K}}_{\epsilon} 
(\overline{q},\overline{N})$ for all $i-1 \geq T$).

On the other hand, we can show that:

\footnotesize
\begin{eqnarray}
&& \mathbb{E}_{\pi_{oelal}}\sum_{i=T+1}^{N_x+1} U_i  \nonumber\\
& = & \sum_{t=1}^{\infty}\mathbb{P}_{\pi_{oelal}}(T=t) \times \nonumber\\
&&   \sum_{i=t+1}^{\infty} \mathbb{E}_{\pi_{oelal},t} \bigg( \mathbb{I}(N_x \geq i-1) \mathbb{E}_{\pi_{oelal},t} \bigg( U_i  \bigg| \mathcal{F}_{i-1}  \bigg)  \bigg) \nonumber\\
\label{eqn:equation_3_for_expected_average_cost_proof} 
 \end{eqnarray}
\normalsize

From (\ref{eqn:equation_1_for_expected_average_cost_proof}), 
(\ref{eqn:equation_2_for_expected_average_cost_proof}) and 
(\ref{eqn:equation_3_for_expected_average_cost_proof}), we obtain that 
$\limsup_{x \rightarrow \infty} \frac{\mathbb{E}_{\pi_{oelal}}\sum_{i=1}^{N_x}\Gamma_i}{x} \leq \gamma^*+ \epsilon$. 
Since $\epsilon>0$ is arbitrary, we have 
$\limsup_{x \rightarrow \infty} \frac{\mathbb{E}_{\pi_{oelal}}\sum_{i=1}^{N_x}\Gamma_i}{x} \leq \gamma^*$. But 
$\gamma^*$ is the optimal mean power per step for problem 
(\ref{eqn:constrained_problem_average_cost_with_outage_cost}). Hence, 
$\limsup_{x \rightarrow \infty} \frac{\mathbb{E}_{\pi_{oelal}}\sum_{i=1}^{N_x}\Gamma_i}{x} = \gamma^*$.

In a similar manner, we can show that 
$\limsup_{x \rightarrow \infty} \frac{\mathbb{E}_{\pi_{oelal}}\sum_{i=1}^{N_x}Q_{out}^{(i,i-1)}}{x} \leq \overline{q}$ and 
$\limsup_{x \rightarrow \infty} \frac{\mathbb{E}_{\pi_{oelal}} N_x}{x} \leq \overline{N}$.\qed

\end{document}